\newcommand\pfloc{https://github.com/plclub/cbpv-effects-coeffects/}
\newif\ifanonymous
\newif\ifextended
\newif\ifartifact
\newif\ifproducts
\title{Effects and Coeffects in Call-By-Push-Value \ifextended (Extended Version)
 \fi}
\author{Cassia Torczon}
\affiliation{
\institution{University of Pennsylvania}
\city{Philadelphia}
\country{USA}
}
\email{ctorczon@seas.upenn.edu}
\author{Emmanuel {Su\'arez Acevedo}}
\affiliation{
\institution{University of Pennsylvania}
\city{Philadelphia}
\country{USA}
}
\email{emsu@seas.upenn.edu}
\author{Shubh Agrawal}
\affiliation{
\institution{University of Michigan}
\city{Ann Arbor}
\country{USA}
}
\email{shbhgrwl@umich.edu}
\author{Joey Velez-Ginorio}
\affiliation{
\institution{University of Pennsylvania}
\city{Philadelphia}
\country{USA}
}
\email{joeyv@seas.upenn.edu}
\author{Stephanie Weirich}
\affiliation{
\institution{University of Pennsylvania}
\city{Philadelphia}
\country{USA}
}
\email{sweirich@seas.upenn.edu}
\newcommand\coeffectcolor[0]{ACMBlue}
\newcommand\effectcolor[0]{ACMRed}
\begin{document}

%

\ifartifact
\newcommand*\link[3]{%
  {#3}\protect\footnote{\texttt{{#1}:{#2}}\!}%
}
\else
\newcommand*\link[3]{%
  \href{\pfloc tree/main/#1}{{#3}\!\!}
}
\fi


\newcommand{\ottdrule}[4][]{{\displaystyle\frac{\begin{array}{l}#2\end{array}}{#3}\quad\ottdrulename{#4}}}
\newcommand{\ottusedrule}[1]{\[#1\]}
\newcommand{\ottpremise}[1]{ #1 \\}
\newenvironment{ottdefnblock}[3][]{ \framebox{\mbox{#2}} \quad #3 \\[0pt]}{}
\newenvironment{ottfundefnblock}[3][]{ \framebox{\mbox{#2}} \quad #3 \\[0pt]\begin{displaymath}\begin{array}{l}}{\end{array}\end{displaymath}}
\newcommand{\ottfunclause}[2]{ #1 \equiv #2 \\}
\newcommand{\ottnt}[1]{\mathit{#1}}
\newcommand{\ottmv}[1]{\mathit{#1}}
\newcommand{\ottkw}[1]{\mathbf{#1}}
\newcommand{\ottsym}[1]{#1}
\newcommand{\ottcom}[1]{\text{#1}}
\newcommand{\ottdrulename}[1]{\textsc{#1}}
\newcommand{\ottcomplu}[5]{\overline{#1}^{\,#2\in #3 #4 #5}}
\newcommand{\ottcompu}[3]{\overline{#1}^{\,#2<#3}}
\newcommand{\ottcomp}[2]{\overline{#1}^{\,#2}}
\newcommand{\ottgrammartabular}[1]{\begin{supertabular}{llcllllll}#1\end{supertabular}}
\newcommand{\ottmetavartabular}[1]{\begin{supertabular}{ll}#1\end{supertabular}}
\newcommand{\ottrulehead}[3]{$#1$ & & $#2$ & & & \multicolumn{2}{l}{#3}}
\newcommand{\ottprodline}[6]{& & $#1$ & $#2$ & $#3 #4$ & $#5$ & $#6$}
\newcommand{\ottfirstprodline}[6]{\ottprodline{#1}{#2}{#3}{#4}{#5}{#6}}
\newcommand{\ottlongprodline}[2]{& & $#1$ & \multicolumn{4}{l}{$#2$}}
\newcommand{\ottfirstlongprodline}[2]{\ottlongprodline{#1}{#2}}
\newcommand{\ottbindspecprodline}[6]{\ottprodline{#1}{#2}{#3}{#4}{#5}{#6}}
\newcommand{\ottprodnewline}{\\}
\newcommand{\ottinterrule}{\\[5.0mm]}
\newcommand{\ottafterlastrule}{\\}
\newcommand{\ottmetavars}{
\ottmetavartabular{
 $ \ottmv{var} ,\, \ottmv{x} ,\, \ottmv{y} ,\, \ottmv{z} $ & \ottcom{variables} \\
 $ \ottmv{s} $ & \ottcom{strings} \\
 $ \ottmv{index} ,\, \ottmv{i} ,\, \ottmv{k} $ &  \\
}}

\newcommand{\ottusage}{
\ottrulehead{\ottnt{usage}  ,\ \textcolor{\coeffectcolor}{q}}{::=}{}}

\newcommand{\otteff}{
\ottrulehead{\ottnt{eff}  ,\ \textcolor{\effectcolor}{\phi}  ,\ \psi}{::=}{}}

\newcommand{\ottvaltype}{
\ottrulehead{\ottnt{valtype}  ,\ \ottnt{A}}{::=}{\ottcom{value types}}\ottprodnewline
\ottfirstprodline{|}{\ottkw{unit}}{}{}{}{\ottcom{unit for pai\effectcolor values}}\ottprodnewline
\ottprodline{|}{\ottnt{A_{{\mathrm{1}}}}  \ottsym{+}  \ottnt{A_{{\mathrm{2}}}}}{}{}{}{\ottcom{sum}}\ottprodnewline
\ottprodline{|}{ \ottnt{A_{{\mathrm{1}}}} \times \ottnt{A_{{\mathrm{2}}}} }{}{}{}{\ottcom{pai\effectcolor values}}\ottprodnewline
\ottprodline{|}{ \ottnt{A_{{\mathrm{1}}}} \mathop{\&} \ottnt{A_{{\mathrm{2}}}} }{}{}{}{\ottcom{pai\effectcolor values}}\ottprodnewline
\ottprodline{|}{\ottkw{U} \, \ottnt{B}}{}{}{}{\ottcom{thunked computation}}\ottprodnewline
\ottprodline{|}{\ottsym{(}  \ottnt{A}  \ottsym{)}} {\textsf{S}}{}{}{}}

\newcommand{\ottcomptype}{
\ottrulehead{\ottnt{comptype}  ,\ \ottnt{B}}{::=}{\ottcom{computation types}}\ottprodnewline
\ottfirstprodline{|}{\ottnt{A}  \to  \ottnt{B}}{}{}{}{\ottcom{parameterized computation}}\ottprodnewline
\ottprodline{|}{ \top }{}{}{}{\ottcom{unit for pai\effectcolor computation}}\ottprodnewline
\ottprodline{|}{ \ottnt{B_{{\mathrm{1}}}}   \mathop{\&}   \ottnt{B_{{\mathrm{2}}}} }{}{}{}{\ottcom{paired computation}}\ottprodnewline
\ottprodline{|}{ \ottnt{B_{{\mathrm{1}}}}  \times  \ottnt{B_{{\mathrm{2}}}} }{}{}{}{\ottcom{paired computation}}\ottprodnewline
\ottprodline{|}{\ottkw{F} \, \ottnt{A}}{}{}{}{\ottcom{computation monad}}\ottprodnewline
\ottprodline{|}{\ottsym{(}  \ottnt{B}  \ottsym{)}} {\textsf{S}}{}{}{}}

\newcommand{\ottcomputation}{
\ottrulehead{\ottnt{computation}  ,\ \ottnt{M}  ,\ \ottnt{N}  ,\ \colorbox{\effectcolor}{m}  ,\ \colorbox{\effectcolor}{n}}{::=}{\ottcom{computation terms}}\ottprodnewline
\ottfirstprodline{|}{ \lambda  \ottmv{x} . \ottnt{M} }{}{}{}{\ottcom{value-abstraction}}\ottprodnewline
\ottprodline{|}{\ottnt{M} \, \ottnt{V}}{}{}{}{\ottcom{hole filling}}\ottprodnewline
\ottprodline{|}{\textcolor{\coeffectcolor}{q} \, \ottnt{M} \, \ottnt{V}}{}{}{}{\ottcom{hole filling}}\ottprodnewline
\ottprodline{|}{ \ottnt{V}  ;  \ottnt{N} }{}{}{}{\ottcom{elimination for value unit}}\ottprodnewline
\ottprodline{|}{ \ottkw{let}\; ( \ottmv{x_{{\mathrm{1}}}} ,  \ottmv{x_{{\mathrm{2}}}} ) =  \ottnt{V} \; \ottkw{in}\;  \ottnt{N} }{}{}{}{\ottcom{elimination for value pairs}}\ottprodnewline
\ottprodline{|}{ \ottkw{case}_{\textcolor{\coeffectcolor}{ \textcolor{\coeffectcolor}{q} } } \;  \ottnt{V} \; \ottkw{of}\;( \ottmv{x_{{\mathrm{1}}}} , \ottmv{x_{{\mathrm{2}}}} )\; \rightarrow\;  \ottnt{N} }{}{}{}{\ottcom{elimination for value pairs with coeffects}}\ottprodnewline
\ottprodline{|}{\ottkw{case} \, \ottnt{V} \, \ottkw{of} \, \ottkw{inl} \, \ottmv{x_{{\mathrm{1}}}}  \to  \ottnt{M_{{\mathrm{1}}}}  \mathsf{;} \, \ottkw{inr} \, \ottmv{x_{{\mathrm{2}}}}  \to  \ottnt{M_{{\mathrm{2}}}}}{}{}{}{\ottcom{elim for value sums}}\ottprodnewline
\ottprodline{|}{ \ottkw{case}_{\textcolor{\coeffectcolor}{ \textcolor{\coeffectcolor}{q} } }\;  \ottnt{V} \; \ottkw{of}\; \ottkw{inl} \; \ottmv{x_{{\mathrm{1}}}}  \rightarrow\;  \ottnt{M_{{\mathrm{1}}}}  ;  \ottkw{inr} \; \ottmv{x_{{\mathrm{2}}}}  \rightarrow\;  \ottnt{M_{{\mathrm{2}}}} }{}{}{}{\ottcom{elim for value sums with coeffects}}\ottprodnewline
\ottprodline{|}{ \langle\rangle }{}{}{}{\ottcom{intro for comp unit}}\ottprodnewline
\ottprodline{|}{ \langle  \ottnt{M_{{\mathrm{1}}}} , \ottnt{M_{{\mathrm{2}}}}  \rangle }{}{}{}{\ottcom{intro for comp pairs}}\ottprodnewline
\ottprodline{|}{ \ottnt{M}  . 1 }{}{}{}{\ottcom{elim for comp pair}}\ottprodnewline
\ottprodline{|}{ \ottnt{M}  . 2 }{}{}{}{\ottcom{elim for comp pair}}\ottprodnewline
\ottprodline{|}{\colorbox{\effectcolor}{V}  \ottsym{!}}{}{}{}{\ottcom{thunk forcing}}\ottprodnewline
\ottprodline{|}{\ottkw{return} \, \ottnt{V}}{}{}{}{\ottcom{returned value}}\ottprodnewline
\ottprodline{|}{\ottmv{x}  \leftarrow  \ottnt{M} \, \ottkw{in} \, \ottnt{N}}{}{}{}{\ottcom{monadic bind}}\ottprodnewline
\ottprodline{|}{\ottsym{(}  \ottnt{M_{{\mathrm{1}}}}  \ottsym{,}  \ottnt{M_{{\mathrm{2}}}}  \ottsym{)}}{}{}{}{\ottcom{intro for comp tensor}}\ottprodnewline
\ottprodline{|}{ \ottkw{case}_{\textcolor{\coeffectcolor}{ \textcolor{\coeffectcolor}{q} } }\;  \ottnt{M} \; \ottkw{of}\;( \ottmv{x_{{\mathrm{1}}}} , \ottmv{x_{{\mathrm{2}}}} )\; \rightarrow\;  \ottnt{N} }{}{}{}{\ottcom{elim for comp tensor}}\ottprodnewline
\ottprodline{|}{\ottsym{(}  \ottnt{M}  \ottsym{)}} {\textsf{S}}{}{}{}}

\newcommand{\ottvalue}{
\ottrulehead{\ottnt{value}  ,\ \ottnt{V}  ,\ \colorbox{\effectcolor}{V}}{::=}{\ottcom{value terms}}\ottprodnewline
\ottfirstprodline{|}{\ottmv{x}}{}{}{}{}\ottprodnewline
\ottprodline{|}{\ottsym{()}}{}{}{}{\ottcom{value unit}}\ottprodnewline
\ottprodline{|}{\ottsym{\{}  \ottnt{M}  \ottsym{\}}}{}{}{}{\ottcom{thunked computation}}\ottprodnewline
\ottprodline{|}{\ottsym{(}  \ottnt{V_{{\mathrm{1}}}}  \ottsym{,}  \ottnt{V_{{\mathrm{2}}}}  \ottsym{)}}{}{}{}{\ottcom{intro for value pair}}\ottprodnewline
\ottprodline{|}{ \langle  \ottnt{V_{{\mathrm{1}}}}  ,  \ottnt{V_{{\mathrm{2}}}}  \rangle }{}{}{}{\ottcom{intro for value with-pair}}\ottprodnewline
\ottprodline{|}{\ottkw{inl} \, \ottnt{V}}{}{}{}{\ottcom{intro for value sum}}\ottprodnewline
\ottprodline{|}{\ottkw{inr} \, \ottnt{V}}{}{}{}{\ottcom{intro for value sum}}\ottprodnewline
\ottprodline{|}{\ottnt{V}  \ottsym{.}  \ottsym{1}}{}{}{}{\ottcom{elim for val with}}\ottprodnewline
\ottprodline{|}{\ottnt{V}  \ottsym{.}  \ottsym{2}}{}{}{}{\ottcom{elim for val with}}\ottprodnewline
\ottprodline{|}{\ottsym{(}  \ottnt{V}  \ottsym{)}} {\textsf{S}}{}{}{}}

\newcommand{\ottcombinding}{
\ottrulehead{\ottnt{combinding}  ,\ \ottnt{b}}{::=}{}\ottprodnewline
\ottfirstprodline{|}{ \ottmv{x} ^{ \textcolor{\coeffectcolor}{ \textcolor{\coeffectcolor}{q} } } =  \ottnt{e} }{}{}{}{}\ottprodnewline
\ottprodline{|}{\ottnt{b_{{\mathrm{1}}}}  \ottsym{,} \, ... \, \ottsym{,}  \ottnt{b_{\ottmv{k}}}}{}{}{}{}}

\newcommand{\otttyp}{
\ottrulehead{\ottnt{typ}  ,\ \tau}{::=}{\ottcom{lambda-calculus types}}\ottprodnewline
\ottfirstprodline{|}{\tau_{{\mathrm{1}}}  \to  \tau_{{\mathrm{2}}}}{}{}{}{\ottcom{function type}}\ottprodnewline
\ottprodline{|}{\ottkw{unit}}{}{}{}{\ottcom{multiplicative unit}}\ottprodnewline
\ottprodline{|}{ \mathbf{unit} }{}{}{}{}\ottprodnewline
\ottprodline{|}{ \mathbb{1} }{}{}{}{\ottcom{multiplicative unit}}\ottprodnewline
\ottprodline{|}{ \tau_{{\mathrm{1}}}  \otimes  \tau_{{\mathrm{2}}} }{}{}{}{\ottcom{multiplicative conjunction (tensor, strict)}}\ottprodnewline
\ottprodline{|}{ \tau_{{\mathrm{1}}}  \mathop{\&}   \tau_{{\mathrm{2}}} }{}{}{}{\ottcom{additive conjunction, with (nonstrict)}}\ottprodnewline
\ottprodline{|}{\tau_{{\mathrm{1}}}  \ottsym{+}  \tau_{{\mathrm{2}}}}{}{}{}{\ottcom{sum type}}}

\newcommand{\ottexp}{
\ottrulehead{\ottnt{exp}  ,\ \ottnt{e}}{::=}{\ottcom{lambda-calculus expressions}}\ottprodnewline
\ottfirstprodline{|}{\ottmv{x}}{}{}{}{\ottcom{variable}}\ottprodnewline
\ottprodline{|}{ \lambda  \ottmv{x} . \ottnt{e} }{}{}{}{\ottcom{abstraction}}\ottprodnewline
\ottprodline{|}{\ottnt{e_{{\mathrm{1}}}} \, \ottnt{e_{{\mathrm{2}}}}}{}{}{}{\ottcom{application}}\ottprodnewline
\ottprodline{|}{ \ottnt{e_{{\mathrm{1}}}} ^{ \textcolor{\coeffectcolor}{q} }  \ottnt{e_{{\mathrm{2}}}} }{}{}{}{\ottcom{application}}\ottprodnewline
\ottprodline{|}{\ottsym{()}}{}{}{}{\ottcom{unit value}}\ottprodnewline
\ottprodline{|}{ \texttt{unit} }{}{}{}{}\ottprodnewline
\ottprodline{|}{ \ottnt{e_{{\mathrm{1}}}}  ;  \ottnt{e_{{\mathrm{2}}}} }{}{}{}{\ottcom{sequencing}}\ottprodnewline
\ottprodline{|}{ \ottnt{e_{{\mathrm{1}}}} ; \ottnt{e_{{\mathrm{2}}}} }{}{}{}{\ottcom{alt-sequencing}}\ottprodnewline
\ottprodline{|}{\ottsym{(}  \ottnt{e_{{\mathrm{1}}}}  \ottsym{,}  \ottnt{e_{{\mathrm{2}}}}  \ottsym{)}}{}{}{}{\ottcom{strict product}}\ottprodnewline
\ottprodline{|}{ \ottkw{let}\; ( \ottmv{x_{{\mathrm{1}}}} ,  \ottmv{x_{{\mathrm{2}}}} ) =  \ottnt{e_{{\mathrm{1}}}} \; \ottkw{in}\;  \ottnt{e_{{\mathrm{2}}}} }{}{}{}{\ottcom{pattern matching}}\ottprodnewline
\ottprodline{|}{ \ottkw{let}_{ \textcolor{\coeffectcolor}{q} }\; ( \ottmv{x_{{\mathrm{1}}}} ,  \ottmv{x_{{\mathrm{2}}}} ) =  \ottnt{e_{{\mathrm{1}}}} \; \ottkw{in}\;  \ottnt{e_{{\mathrm{2}}}} }{}{}{}{\ottcom{graded pattern matching}}\ottprodnewline
\ottprodline{|}{ \langle  \ottnt{e_{{\mathrm{1}}}} , \ottnt{e_{{\mathrm{2}}}}  \rangle }{}{}{}{\ottcom{nonstrict product}}\ottprodnewline
\ottprodline{|}{\ottnt{e}  \ottsym{.}  \ottsym{1}}{}{}{}{\ottcom{first projection}}\ottprodnewline
\ottprodline{|}{\ottnt{e}  \ottsym{.}  \ottsym{2}}{}{}{}{\ottcom{second projection}}\ottprodnewline
\ottprodline{|}{\ottkw{inl} \, \ottnt{e}}{}{}{}{}\ottprodnewline
\ottprodline{|}{\ottkw{inr} \, \ottnt{e}}{}{}{}{}\ottprodnewline
\ottprodline{|}{ \ottkw{case}\;  \ottnt{e} \; \ottkw{of}\; \ottkw{inl}\;  \ottmv{x_{{\mathrm{1}}}}  \rightarrow  \ottnt{e_{{\mathrm{1}}}}  \ottkw{;} \ottkw{inr}\;  \ottmv{x_{{\mathrm{2}}}}  \rightarrow  \ottnt{e_{{\mathrm{2}}}} }{}{}{}{}\ottprodnewline
\ottprodline{|}{ \ottkw{case}_{\textcolor{\coeffectcolor}{ \textcolor{\coeffectcolor}{q} } }\;  \ottnt{e} \; \ottkw{of}\;\ottkw{inl}\; \ottmv{x_{{\mathrm{1}}}}  \rightarrow\;  \ottnt{e_{{\mathrm{1}}}}  ; \ottkw{inr}\; \ottmv{x_{{\mathrm{2}}}}  \rightarrow\;  \ottnt{e_{{\mathrm{2}}}} }{}{}{}{}}

\newcommand{\ottval}{
\ottrulehead{\ottnt{val}}{::=}{\ottcom{lambda calculus values}}\ottprodnewline
\ottfirstprodline{|}{ \lambda  \ottmv{x} . \ottnt{e} }{}{}{}{\ottcom{abstraction}}\ottprodnewline
\ottprodline{|}{\ottsym{()}}{}{}{}{\ottcom{unit value}}\ottprodnewline
\ottprodline{|}{\ottsym{(}  \ottnt{val_{{\mathrm{1}}}}  \ottsym{,}  \ottnt{val_{{\mathrm{2}}}}  \ottsym{)}}{}{}{}{\ottcom{strict product}}\ottprodnewline
\ottprodline{|}{ \langle  \ottnt{e_{{\mathrm{1}}}} , \ottnt{e_{{\mathrm{2}}}}  \rangle }{}{}{}{\ottcom{nonstrict product}}\ottprodnewline
\ottprodline{|}{\ottkw{inl} \, \ottnt{val}}{}{}{}{}\ottprodnewline
\ottprodline{|}{\ottkw{inr} \, \ottnt{val}}{}{}{}{}}

\newcommand{\ottcontext}{
\ottrulehead{\ottnt{context}  ,\ \Gamma}{::=}{\ottcom{typing context}}\ottprodnewline
\ottfirstprodline{|}{\varnothing}{}{}{}{}\ottprodnewline
\ottprodline{|}{ \cdot }{}{}{}{}\ottprodnewline
\ottprodline{|}{\ottmv{x}  \ottsym{:}  \ottnt{A}}{}{}{}{}\ottprodnewline
\ottprodline{|}{\ottmv{x}  \ottsym{:}  \tau}{}{}{}{}\ottprodnewline
\ottprodline{|}{ \ottmv{x}  : [  \tau  ]_{\textcolor{\coeffectcolor}{ \textcolor{\coeffectcolor}{q} } } }{}{}{}{}\ottprodnewline
\ottprodline{|}{ \ottmv{x}  :^{\textcolor{\effectcolor}{ \textcolor{\effectcolor}{\phi} } }  \tau }{}{}{}{}\ottprodnewline
\ottprodline{|}{ \Gamma_{{\mathrm{1}}}  ,  \Gamma_{{\mathrm{2}}} }{}{}{}{}}

\newcommand{\ottgamma}{
\ottrulehead{\textcolor{\coeffectcolor}{\gamma}}{::=}{\ottcom{usage list}}\ottprodnewline
\ottfirstprodline{|}{\emptyset}{}{}{}{}\ottprodnewline
\ottprodline{|}{ \textcolor{\coeffectcolor}{ \textcolor{\coeffectcolor}{\gamma}_{{\mathrm{1}}} \mathop{,} \textcolor{\coeffectcolor}{q} } }{}{}{}{}\ottprodnewline
\ottprodline{|}{ \textcolor{\coeffectcolor}{ \textcolor{\coeffectcolor}{\gamma}_{{\mathrm{1}}} \mathop{,} \textcolor{\coeffectcolor}{\gamma}_{{\mathrm{2}}} } }{}{}{}{}\ottprodnewline
\ottprodline{|}{ \textcolor{\coeffectcolor}{\overline{0}_1} }{}{}{}{}\ottprodnewline
\ottprodline{|}{ \textcolor{\coeffectcolor}{\overline{0}_2} }{}{}{}{}\ottprodnewline
\ottprodline{|}{ \textcolor{\coeffectcolor}{\overline{0} } }{}{}{}{}\ottprodnewline
\ottprodline{|}{ \overline{1} }{}{}{}{}\ottprodnewline
\ottprodline{|}{ \textcolor{\coeffectcolor}{ \textcolor{\coeffectcolor}{q} \cdot \textcolor{\coeffectcolor}{\gamma} } }{}{}{}{}\ottprodnewline
\ottprodline{|}{ \textcolor{\coeffectcolor}{ \textcolor{\coeffectcolor}{\gamma}_{{\mathrm{1}}} \ottsym{+} \textcolor{\coeffectcolor}{\gamma}_{{\mathrm{2}}} } }{}{}{}{}\ottprodnewline
\ottprodline{|}{ \textcolor{\coeffectcolor}{\gamma} }{}{}{}{}}

\newcommand{\ottPsi}{
\ottrulehead{\Psi}{::=}{\ottcom{annotated contexts}}\ottprodnewline
\ottfirstprodline{|}{ \textcolor{\coeffectcolor}{ \textcolor{\coeffectcolor}{\gamma} }\! \cdot \! \Gamma }{}{}{}{}\ottprodnewline
\ottprodline{|}{ \llbracket { \Gamma } \rrbracket_{\textsc{n} } }{}{}{}{\ottcom{translation from comonadic lang}}\ottprodnewline
\ottprodline{|}{ \llbracket { \Gamma } \rrbracket_{\textsc{v} } }{}{}{}{\ottcom{translation from comonadic lang}}\ottprodnewline
\ottprodline{|}{ \Psi_{{\mathrm{1}}}   \mathop{,}   \Psi_{{\mathrm{2}}} }{}{}{}{}\ottprodnewline
\ottprodline{|}{\ottsym{(}  \Psi  \ottsym{)}}{}{}{}{}\ottprodnewline
\ottprodline{|}{ \ottmv{x}  :^{\textcolor{\coeffectcolor}{ \textcolor{\coeffectcolor}{q} } }  \tau }{}{}{}{}\ottprodnewline
\ottprodline{|}{ \ottmv{x}  :^{\textcolor{\coeffectcolor}{ \textcolor{\coeffectcolor}{q} } }  \ottnt{A} }{}{}{}{}\ottprodnewline
\ottprodline{|}{\emptyset}{}{}{}{}\ottprodnewline
\ottprodline{|}{ \Psi }{}{}{}{}}

\newcommand{\ottrho}{
\ottrulehead{\rho}{::=}{\ottcom{environments}}\ottprodnewline
\ottfirstprodline{|}{\emptyset}{}{}{}{}\ottprodnewline
\ottprodline{|}{ \cdot }{}{}{}{}\ottprodnewline
\ottprodline{|}{\ottmv{x}  \mapsto  \ottnt{W}}{}{}{}{}\ottprodnewline
\ottprodline{|}{\ottmv{x}  \mapsto  \ottnt{e}}{}{}{}{}\ottprodnewline
\ottprodline{|}{ \rho   \mathop{,}   \rho' }{}{}{}{}}

\newcommand{\ottmu}{
\ottrulehead{\mu}{::=}{\ottcom{weighted environments}}\ottprodnewline
\ottfirstprodline{|}{ \textcolor{\coeffectcolor}{ \textcolor{\coeffectcolor}{\gamma} }\! \cdot \! \rho }{}{}{}{}\ottprodnewline
\ottprodline{|}{ \ottmv{x}   \mapsto ^{\textcolor{\coeffectcolor}{ \textcolor{\coeffectcolor}{q} } }  \ottnt{V} }{}{}{}{}\ottprodnewline
\ottprodline{|}{ \ottmv{x}   \mapsto ^{\textcolor{\coeffectcolor}{ \textcolor{\coeffectcolor}{q} } }  \ottnt{W} }{}{}{}{}\ottprodnewline
\ottprodline{|}{ \ottmv{x}   \mapsto ^{\textcolor{\coeffectcolor}{ \textcolor{\coeffectcolor}{q} } }  \ottnt{e} }{}{}{}{}\ottprodnewline
\ottprodline{|}{ \ottmv{x}   \mapsto ^{\textcolor{\coeffectcolor}{ \textcolor{\coeffectcolor}{q} } }  \ottnt{val} }{}{}{}{}\ottprodnewline
\ottprodline{|}{ \mu_{{\mathrm{1}}}   \mathop{,} \;  \mu_{{\mathrm{2}}} }{}{}{}{}\ottprodnewline
\ottprodline{|}{\emptyset}{}{}{}{}\ottprodnewline
\ottprodline{|}{\ottsym{(}  \mu  \ottsym{)}}{}{}{}{}\ottprodnewline
\ottprodline{|}{ \mu }{}{}{}{}}

\newcommand{\ottW}{
\ottrulehead{\ottnt{W}}{::=}{\ottcom{closed CBPV values}}\ottprodnewline
\ottfirstprodline{|}{\ottsym{()}}{}{}{}{}\ottprodnewline
\ottprodline{|}{\ottsym{(}  \ottnt{W_{{\mathrm{1}}}}  \ottsym{,}  \ottnt{W_{{\mathrm{2}}}}  \ottsym{)}}{}{}{}{}\ottprodnewline
\ottprodline{|}{ \mathbf{clo}( \rho ,  \ottnt{M} ) }{}{}{}{\ottcom{closure}}\ottprodnewline
\ottprodline{|}{ \mathbf{clo}( \textcolor{\coeffectcolor}{\gamma} ,  \rho ,  \ottnt{M} ) }{}{}{}{\ottcom{closure}}\ottprodnewline
\ottprodline{|}{ \mathbf{dclo}( \textcolor{\coeffectcolor}{\gamma} ,  \rho ,  \textcolor{\coeffectcolor}{q} ,  \ottnt{M} ) }{}{}{}{\ottcom{dist closure}}\ottprodnewline
\ottprodline{|}{ \mathbf{clo}( \mu ,  \ottnt{M} ) }{}{}{}{\ottcom{closure}}\ottprodnewline
\ottprodline{|}{ \mathbf{clo}( \rho , \{  \ottnt{M}  \} ) }{}{}{}{\ottcom{closure}}\ottprodnewline
\ottprodline{|}{ \mathbf{clo}( \mu , \{  \ottnt{M}  \} ) }{}{}{}{\ottcom{closure}}\ottprodnewline
\ottprodline{|}{ \mathbf{clo}( \mu , \langle  \ottnt{V_{{\mathrm{1}}}}  ,  \ottnt{V_{{\mathrm{2}}}}  \rangle ) }{}{}{}{}\ottprodnewline
\ottprodline{|}{\ottkw{inl} \, \ottnt{W}}{}{}{}{}\ottprodnewline
\ottprodline{|}{\ottkw{inr} \, \ottnt{W}}{}{}{}{}}

\newcommand{\ottT}{
\ottrulehead{\ottnt{T}}{::=}{\ottcom{closed and terminal CBPV computations}}\ottprodnewline
\ottfirstprodline{|}{\ottkw{return} \, \ottnt{W}}{}{}{}{}\ottprodnewline
\ottprodline{|}{\ottsym{<>}}{}{}{}{}\ottprodnewline
\ottprodline{|}{ \mathbf{clo}(  \rho ,  \ottnt{M}  ) }{}{}{}{\ottcom{$\ottnt{M}$ must be an abstraction, with-pair, or unit}}\ottprodnewline
\ottprodline{|}{ \mathbf{clo}(  \mu ,  \ottnt{M}  ) }{}{}{}{\ottcom{$\ottnt{M}$ must be an abstraction, with-pair, or unit}}\ottprodnewline
\ottprodline{|}{\ottsym{(}  \ottnt{W_{{\mathrm{1}}}}  \ottsym{,}  \ottnt{W_{{\mathrm{2}}}}  \ottsym{)}}{}{}{}{}}

\newcommand{\ottvalset}{
\ottrulehead{\ottnt{valset}}{::=}{\ottcom{sets of values}}}

\newcommand{\ottcompset}{
\ottrulehead{\ottnt{compset}}{::=}{\ottcom{sets of terminal computations}}}

\newcommand{\ottvalclosure}{
\ottrulehead{\ottnt{valclosure}}{::=}{\ottcom{sets of values}}}

\newcommand{\ottcompclosure}{
\ottrulehead{\ottnt{compclosure}}{::=}{\ottcom{sets of computations}}}

\newcommand{\ottgrammar}{\ottgrammartabular{
\ottusage\ottinterrule
\otteff\ottinterrule
\ottvaltype\ottinterrule
\ottcomptype\ottinterrule
\ottcomputation\ottinterrule
\ottvalue\ottinterrule
\ottcombinding\ottinterrule
\otttyp\ottinterrule
\ottexp\ottinterrule
\ottval\ottinterrule
\ottcontext\ottinterrule
\ottgamma\ottinterrule
\ottPsi\ottinterrule
\ottrho\ottinterrule
\ottmu\ottinterrule
\ottW\ottinterrule
\ottT\ottinterrule
\ottvalset\ottinterrule
\ottcompset\ottinterrule
\ottvalclosure\ottinterrule
\ottcompclosure\ottafterlastrule
}}

\newcommand{\ottdrulecbvXXstepXXappXXfun}[1]{\ottdrule[#1]{%
\ottpremise{\ottnt{e}  \rightarrow_{\mathit{v} }  \ottnt{e'}}%
}{
\ottnt{e} \, \ottnt{e_{{\mathrm{2}}}}  \rightarrow_{\mathit{v} }  \ottnt{e'} \, \ottnt{e_{{\mathrm{2}}}}}{%
{\ottdrulename{cbv\_step\_app\_fun}}{}%
}}

\newcommand{\ottdrulecbvXXstepXXappXXarg}[1]{\ottdrule[#1]{%
\ottpremise{\ottnt{e}  \rightarrow_{\mathit{v} }  \ottnt{e'}}%
}{
\ottnt{val} \, \ottnt{e}  \rightarrow_{\mathit{v} }  \ottnt{val} \, \ottnt{e'}}{%
{\ottdrulename{cbv\_step\_app\_arg}}{}%
}}

\newcommand{\ottdrulecbvXXstepXXappXXabs}[1]{\ottdrule[#1]{%
}{
\ottsym{(}   \lambda  \ottmv{x} . \ottnt{e}   \ottsym{)} \, \ottnt{val}  \rightarrow_{\mathit{v} }  \ottnt{e}  \ottsym{[}  \ottmv{x}  \leadsto  \ottnt{val}  \ottsym{]}}{%
{\ottdrulename{cbv\_step\_app\_abs}}{}%
}}

\newcommand{\ottdrulecbvXXstepXXprodl}[1]{\ottdrule[#1]{%
\ottpremise{\ottnt{e}  \rightarrow_{\mathit{v} }  \ottnt{e'}}%
}{
\ottsym{(}  \ottnt{e}  \ottsym{,}  \ottnt{e_{{\mathrm{2}}}}  \ottsym{)}  \rightarrow_{\mathit{v} }  \ottsym{(}  \ottnt{e'}  \ottsym{,}  \ottnt{e_{{\mathrm{2}}}}  \ottsym{)}}{%
{\ottdrulename{cbv\_step\_prodl}}{}%
}}

\newcommand{\ottdrulecbvXXstepXXprodr}[1]{\ottdrule[#1]{%
\ottpremise{\ottnt{e}  \rightarrow_{\mathit{v} }  \ottnt{e'}}%
}{
\ottsym{(}  \ottnt{val}  \ottsym{,}  \ottnt{e}  \ottsym{)}  \rightarrow_{\mathit{v} }  \ottsym{(}  \ottnt{val}  \ottsym{,}  \ottnt{e'}  \ottsym{)}}{%
{\ottdrulename{cbv\_step\_prodr}}{}%
}}

\newcommand{\ottdrulecbvXXstepXXsplit}[1]{\ottdrule[#1]{%
}{
 \ottkw{let}\; ( \ottmv{x_{{\mathrm{1}}}} ,  \ottmv{x_{{\mathrm{2}}}} ) =  \ottsym{(}  \ottnt{val_{{\mathrm{1}}}}  \ottsym{,}  \ottnt{val_{{\mathrm{2}}}}  \ottsym{)} \; \ottkw{in}\;  \ottnt{e}   \rightarrow_{\mathit{v} }  \ottsym{(}  \ottnt{e}  \ottsym{[}  \ottmv{x_{{\mathrm{1}}}}  \leadsto  \ottnt{val_{{\mathrm{1}}}}  \ottsym{]}  \ottsym{[}  \ottmv{x_{{\mathrm{2}}}}  \leadsto  \ottnt{val_{{\mathrm{2}}}}  \ottsym{]}  \ottsym{)}}{%
{\ottdrulename{cbv\_step\_split}}{}%
}}

\newcommand{\ottdrulecbvXXstepXXsequence}[1]{\ottdrule[#1]{%
}{
 \ottsym{()}  ;  \ottnt{e}   \rightarrow_{\mathit{v} }  \ottnt{e}}{%
{\ottdrulename{cbv\_step\_sequence}}{}%
}}

\newcommand{\ottdrulecbvXXstepXXfst}[1]{\ottdrule[#1]{%
\ottpremise{\ottnt{e}  \rightarrow_{\mathit{v} }  \ottnt{e'}}%
}{
\ottnt{e}  \ottsym{.}  \ottsym{1}  \rightarrow_{\mathit{v} }  \ottnt{e'}  \ottsym{.}  \ottsym{1}}{%
{\ottdrulename{cbv\_step\_fst}}{}%
}}

\newcommand{\ottdrulecbvXXstepXXsnd}[1]{\ottdrule[#1]{%
\ottpremise{\ottnt{e}  \rightarrow_{\mathit{v} }  \ottnt{e'}}%
}{
\ottnt{e}  \ottsym{.}  \ottsym{2}  \rightarrow_{\mathit{v} }  \ottnt{e'}  \ottsym{.}  \ottsym{2}}{%
{\ottdrulename{cbv\_step\_snd}}{}%
}}

\newcommand{\ottdrulecbvXXstepXXpairXXfst}[1]{\ottdrule[#1]{%
}{
 \langle  \ottnt{e_{{\mathrm{1}}}} , \ottnt{e_{{\mathrm{2}}}}  \rangle   \ottsym{.}  \ottsym{1}  \rightarrow_{\mathit{v} }  \ottnt{e_{{\mathrm{1}}}}}{%
{\ottdrulename{cbv\_step\_pair\_fst}}{}%
}}

\newcommand{\ottdrulecbvXXstepXXpairXXsnd}[1]{\ottdrule[#1]{%
}{
 \langle  \ottnt{e_{{\mathrm{1}}}} , \ottnt{e_{{\mathrm{2}}}}  \rangle   \ottsym{.}  \ottsym{2}  \rightarrow_{\mathit{v} }  \ottnt{e_{{\mathrm{2}}}}}{%
{\ottdrulename{cbv\_step\_pair\_snd}}{}%
}}

\newcommand{\ottdrulecbvXXstepXXinl}[1]{\ottdrule[#1]{%
\ottpremise{\ottnt{e}  \rightarrow_{\mathit{v} }  \ottnt{e'}}%
}{
\ottkw{inl} \, \ottnt{e}  \rightarrow_{\mathit{v} }  \ottkw{inl} \, \ottnt{e'}}{%
{\ottdrulename{cbv\_step\_inl}}{}%
}}

\newcommand{\ottdrulecbvXXstepXXinr}[1]{\ottdrule[#1]{%
\ottpremise{\ottnt{e}  \rightarrow_{\mathit{v} }  \ottnt{e'}}%
}{
\ottkw{inr} \, \ottnt{e}  \rightarrow_{\mathit{v} }  \ottkw{inr} \, \ottnt{e'}}{%
{\ottdrulename{cbv\_step\_inr}}{}%
}}

\newcommand{\ottdrulecbvXXstepXXcasel}[1]{\ottdrule[#1]{%
}{
 \ottkw{case}\;  \ottsym{(}  \ottkw{inl} \, \ottnt{val}  \ottsym{)} \; \ottkw{of}\; \ottkw{inl}\;  \ottmv{x_{{\mathrm{1}}}}  \rightarrow  \ottnt{e_{{\mathrm{1}}}}  \ottkw{;} \ottkw{inr}\;  \ottmv{x_{{\mathrm{2}}}}  \rightarrow  \ottnt{e_{{\mathrm{2}}}}   \rightarrow_{\mathit{v} }  \ottnt{e_{{\mathrm{1}}}}  \ottsym{[}  \ottmv{x_{{\mathrm{1}}}}  \leadsto  \ottnt{val}  \ottsym{]}}{%
{\ottdrulename{cbv\_step\_casel}}{}%
}}

\newcommand{\ottdrulecbvXXstepXXcaser}[1]{\ottdrule[#1]{%
}{
 \ottkw{case}\;  \ottsym{(}  \ottkw{inr} \, \ottnt{val}  \ottsym{)} \; \ottkw{of}\; \ottkw{inl}\;  \ottmv{x_{{\mathrm{1}}}}  \rightarrow  \ottnt{e_{{\mathrm{1}}}}  \ottkw{;} \ottkw{inr}\;  \ottmv{x_{{\mathrm{2}}}}  \rightarrow  \ottnt{e_{{\mathrm{2}}}}   \rightarrow_{\mathit{v} }  \ottnt{e_{{\mathrm{2}}}}  \ottsym{[}  \ottmv{x_{{\mathrm{2}}}}  \leadsto  \ottnt{val}  \ottsym{]}}{%
{\ottdrulename{cbv\_step\_caser}}{}%
}}

\newcommand{\ottdrulecbvXXstepXXtick}[1]{\ottdrule[#1]{%
}{
 \textcolor{\effectcolor}{ \ottkw{tick} }   \rightarrow_{\mathit{v} }  \ottsym{()}}{%
{\ottdrulename{cbv\_step\_tick}}{}%
}}

\newcommand{\ottdrulecbvXXstepXXreturn}[1]{\ottdrule[#1]{%
\ottpremise{\ottnt{e}  \rightarrow_{\mathit{v} }  \ottnt{e'}}%
}{
\ottkw{return} \, \ottnt{e}  \rightarrow_{\mathit{v} }  \ottkw{return} \, \ottnt{e'}}{%
{\ottdrulename{cbv\_step\_return}}{}%
}}

\newcommand{\ottdrulecbvXXstepXXbind}[1]{\ottdrule[#1]{%
\ottpremise{\ottnt{e}  \rightarrow_{\mathit{v} }  \ottnt{e'}}%
}{
\ottkw{bind} \, \ottmv{x}  \ottsym{=}  \ottnt{e} \, \ottkw{in} \, \ottnt{e_{{\mathrm{1}}}}  \rightarrow_{\mathit{v} }  \ottkw{bind} \, \ottmv{x}  \ottsym{=}  \ottnt{e'} \, \ottkw{in} \, \ottnt{e_{{\mathrm{1}}}}}{%
{\ottdrulename{cbv\_step\_bind}}{}%
}}

\newcommand{\ottdrulecbvXXstepXXbindXXret}[1]{\ottdrule[#1]{%
}{
\ottkw{bind} \, \ottmv{x}  \ottsym{=}  \ottkw{return} \, \ottnt{val} \, \ottkw{in} \, \ottnt{e}  \rightarrow_{\mathit{v} }  \ottnt{e}  \ottsym{[}  \ottmv{x}  \leadsto  \ottnt{val}  \ottsym{]}}{%
{\ottdrulename{cbv\_step\_bind\_ret}}{}%
}}

\newcommand{\ottdefnstepXXcbv}[1]{\begin{ottdefnblock}[#1]{$\ottnt{e}  \rightarrow_{\mathit{v} }  \ottnt{e'}$}{}
\ottusedrule{\ottdrulecbvXXstepXXappXXfun{}}
\ottusedrule{\ottdrulecbvXXstepXXappXXarg{}}
\ottusedrule{\ottdrulecbvXXstepXXappXXabs{}}
\ottusedrule{\ottdrulecbvXXstepXXprodl{}}
\ottusedrule{\ottdrulecbvXXstepXXprodr{}}
\ottusedrule{\ottdrulecbvXXstepXXsplit{}}
\ottusedrule{\ottdrulecbvXXstepXXsequence{}}
\ottusedrule{\ottdrulecbvXXstepXXfst{}}
\ottusedrule{\ottdrulecbvXXstepXXsnd{}}
\ottusedrule{\ottdrulecbvXXstepXXpairXXfst{}}
\ottusedrule{\ottdrulecbvXXstepXXpairXXsnd{}}
\ottusedrule{\ottdrulecbvXXstepXXinl{}}
\ottusedrule{\ottdrulecbvXXstepXXinr{}}
\ottusedrule{\ottdrulecbvXXstepXXcasel{}}
\ottusedrule{\ottdrulecbvXXstepXXcaser{}}
\ottusedrule{\ottdrulecbvXXstepXXtick{}}
\ottusedrule{\ottdrulecbvXXstepXXreturn{}}
\ottusedrule{\ottdrulecbvXXstepXXbind{}}
\ottusedrule{\ottdrulecbvXXstepXXbindXXret{}}
\end{ottdefnblock}}

\newcommand{\ottdefnsJSmallStepCBV}{
\ottdefnstepXXcbv{}}

\newcommand{\ottdrulestepXXapp}[1]{\ottdrule[#1]{%
\ottpremise{\ottnt{M}  \longrightarrow  \ottnt{M'}}%
}{
\ottnt{M} \, \ottnt{V}  \longrightarrow  \ottnt{M'} \, \ottnt{V}}{%
{\ottdrulename{step\_app}}{}%
}}

\newcommand{\ottdrulestepXXappXXabs}[1]{\ottdrule[#1]{%
}{
\ottsym{(}   \lambda  \ottmv{x} . \ottnt{M}   \ottsym{)} \, \ottnt{V}  \longrightarrow  \ottnt{M}  \ottsym{[}  \ottmv{x}  \leadsto  \ottnt{V}  \ottsym{]}}{%
{\ottdrulename{step\_app\_abs}}{}%
}}

\newcommand{\ottdrulestepXXforceXXthunk}[1]{\ottdrule[#1]{%
}{
\ottsym{\{}  \ottnt{M}  \ottsym{\}}  \ottsym{!}  \longrightarrow  \ottnt{M}}{%
{\ottdrulename{step\_force\_thunk}}{}%
}}

\newcommand{\ottdrulestepXXletin}[1]{\ottdrule[#1]{%
\ottpremise{\ottnt{M}  \longrightarrow  \ottnt{M'}}%
}{
\ottmv{x}  \leftarrow  \ottnt{M} \, \ottkw{in} \, \ottnt{N}  \longrightarrow  \ottmv{x}  \leftarrow  \ottnt{M'} \, \ottkw{in} \, \ottnt{N}}{%
{\ottdrulename{step\_letin}}{}%
}}

\newcommand{\ottdrulestepXXletinXXret}[1]{\ottdrule[#1]{%
}{
\ottmv{x}  \leftarrow  \ottkw{return} \, \ottnt{V} \, \ottkw{in} \, \ottnt{N}  \longrightarrow  \colorbox{\effectcolor}{n}  \ottsym{[}  \ottmv{x}  \leadsto  \ottnt{V}  \ottsym{]}}{%
{\ottdrulename{step\_letin\_ret}}{}%
}}

\newcommand{\ottdrulestepXXsplit}[1]{\ottdrule[#1]{%
}{
 \ottkw{let}\; ( \ottmv{x_{{\mathrm{1}}}} ,  \ottmv{x_{{\mathrm{2}}}} ) =  \ottsym{(}  \ottnt{V_{{\mathrm{1}}}}  \ottsym{,}  \ottnt{V_{{\mathrm{2}}}}  \ottsym{)} \; \ottkw{in}\;  \ottnt{N}   \longrightarrow  \ottsym{(}  \colorbox{\effectcolor}{n}  \ottsym{[}  \ottmv{x_{{\mathrm{1}}}}  \leadsto  \ottnt{V_{{\mathrm{1}}}}  \ottsym{]}  \ottsym{[}  \ottmv{x_{{\mathrm{2}}}}  \leadsto  \ottnt{V_{{\mathrm{2}}}}  \ottsym{]}  \ottsym{)}}{%
{\ottdrulename{step\_split}}{}%
}}

\newcommand{\ottdrulestepXXsequence}[1]{\ottdrule[#1]{%
}{
 \ottsym{()}  ;  \ottnt{N}   \longrightarrow  \colorbox{\effectcolor}{n}}{%
{\ottdrulename{step\_sequence}}{}%
}}

\newcommand{\ottdrulestepXXfst}[1]{\ottdrule[#1]{%
\ottpremise{\ottnt{M}  \longrightarrow  \ottnt{M'}}%
}{
 \ottnt{M}  . 1   \longrightarrow   \ottnt{M'}  . 1 }{%
{\ottdrulename{step\_fst}}{}%
}}

\newcommand{\ottdrulestepXXsnd}[1]{\ottdrule[#1]{%
\ottpremise{\ottnt{M}  \longrightarrow  \ottnt{M'}}%
}{
 \ottnt{M}  . 2   \longrightarrow   \ottnt{M'}  . 2 }{%
{\ottdrulename{step\_snd}}{}%
}}

\newcommand{\ottdrulestepXXcpairXXfst}[1]{\ottdrule[#1]{%
}{
  \langle  \ottnt{M_{{\mathrm{1}}}} , \ottnt{M_{{\mathrm{2}}}}  \rangle   . 1   \longrightarrow  \ottnt{M_{{\mathrm{1}}}}}{%
{\ottdrulename{step\_cpair\_fst}}{}%
}}

\newcommand{\ottdrulestepXXcpairXXsnd}[1]{\ottdrule[#1]{%
}{
  \langle  \ottnt{M_{{\mathrm{1}}}} , \ottnt{M_{{\mathrm{2}}}}  \rangle   . 2   \longrightarrow  \ottnt{M_{{\mathrm{2}}}}}{%
{\ottdrulename{step\_cpair\_snd}}{}%
}}

\newcommand{\ottdrulestepXXcasel}[1]{\ottdrule[#1]{%
}{
\ottkw{case} \, \ottsym{(}  \ottkw{inl} \, \ottnt{V}  \ottsym{)} \, \ottkw{of} \, \ottkw{inl} \, \ottmv{x_{{\mathrm{1}}}}  \to  \ottnt{M_{{\mathrm{1}}}}  \mathsf{;} \, \ottkw{inr} \, \ottmv{x_{{\mathrm{2}}}}  \to  \ottnt{M_{{\mathrm{2}}}}  \longrightarrow  \ottnt{M_{{\mathrm{1}}}}  \ottsym{[}  \ottmv{x_{{\mathrm{1}}}}  \leadsto  \ottnt{V}  \ottsym{]}}{%
{\ottdrulename{step\_casel}}{}%
}}

\newcommand{\ottdrulestepXXcaser}[1]{\ottdrule[#1]{%
}{
\ottkw{case} \, \ottsym{(}  \ottkw{inr} \, \ottnt{V}  \ottsym{)} \, \ottkw{of} \, \ottkw{inl} \, \ottmv{x_{{\mathrm{1}}}}  \to  \ottnt{M_{{\mathrm{1}}}}  \mathsf{;} \, \ottkw{inr} \, \ottmv{x_{{\mathrm{2}}}}  \to  \ottnt{M_{{\mathrm{2}}}}  \longrightarrow  \ottnt{M_{{\mathrm{2}}}}  \ottsym{[}  \ottmv{x_{{\mathrm{2}}}}  \leadsto  \ottnt{V}  \ottsym{]}}{%
{\ottdrulename{step\_caser}}{}%
}}

\newcommand{\ottdrulestepXXletinXXretq}[1]{\ottdrule[#1]{%
}{
\ottmv{x}  \leftarrow   \ottkw{return} _{\textcolor{\coeffectcolor}{ \textcolor{\coeffectcolor}{q} } }\;  \ottnt{V}  \, \ottkw{in} \, \ottnt{N}  \longrightarrow  \colorbox{\effectcolor}{n}  \ottsym{[}  \ottmv{x}  \leadsto  \ottnt{V}  \ottsym{]}}{%
{\ottdrulename{step\_letin\_retq}}{}%
}}

\newcommand{\ottdrulestepXXtick}[1]{\ottdrule[#1]{%
}{
 \textcolor{\effectcolor}{ \ottkw{tick} }   \longrightarrow  \ottkw{return} \, \ottsym{()}}{%
{\ottdrulename{step\_tick}}{}%
}}

\newcommand{\ottdefnstep}[1]{\begin{ottdefnblock}[#1]{$\ottnt{M}  \longrightarrow  \ottnt{M'}$}{\ottcom{small-step operational semantics (substitution)}}
\ottusedrule{\ottdrulestepXXapp{}}
\ottusedrule{\ottdrulestepXXappXXabs{}}
\ottusedrule{\ottdrulestepXXforceXXthunk{}}
\ottusedrule{\ottdrulestepXXletin{}}
\ottusedrule{\ottdrulestepXXletinXXret{}}
\ottusedrule{\ottdrulestepXXsplit{}}
\ottusedrule{\ottdrulestepXXsequence{}}
\ottusedrule{\ottdrulestepXXfst{}}
\ottusedrule{\ottdrulestepXXsnd{}}
\ottusedrule{\ottdrulestepXXcpairXXfst{}}
\ottusedrule{\ottdrulestepXXcpairXXsnd{}}
\ottusedrule{\ottdrulestepXXcasel{}}
\ottusedrule{\ottdrulestepXXcaser{}}
\ottusedrule{\ottdrulestepXXletinXXretq{}}
\ottusedrule{\ottdrulestepXXtick{}}
\end{ottdefnblock}}

\newcommand{\ottdefnsJSmallStepCBPV}{
\ottdefnstep{}}

\newcommand{\ottdruleevalXXvalXXvar}[1]{\ottdrule[#1]{%
\ottpremise{\ottmv{x}  \mapsto  \ottnt{W} \, \in \, \rho}%
}{
 \rho  \vdash  \ottmv{x} \ \Downarrow\  \ottnt{W} }{%
{\ottdrulename{eval\_val\_var}}{}%
}}

\newcommand{\ottdruleevalXXvalXXunit}[1]{\ottdrule[#1]{%
}{
 \rho  \vdash  \ottsym{()} \ \Downarrow\  \ottsym{()} }{%
{\ottdrulename{eval\_val\_unit}}{}%
}}

\newcommand{\ottdruleevalXXvalXXthunk}[1]{\ottdrule[#1]{%
}{
 \rho  \vdash  \ottsym{\{}  \ottnt{M}  \ottsym{\}} \ \Downarrow\   \mathbf{clo}( \rho , \{  \ottnt{M}  \} )  }{%
{\ottdrulename{eval\_val\_thunk}}{}%
}}

\newcommand{\ottdruleevalXXvalXXvpair}[1]{\ottdrule[#1]{%
\ottpremise{ \rho  \vdash  \ottnt{V_{{\mathrm{1}}}} \ \Downarrow\  \ottnt{W_{{\mathrm{1}}}} }%
\ottpremise{ \rho  \vdash  \ottnt{V_{{\mathrm{2}}}} \ \Downarrow\  \ottnt{W_{{\mathrm{2}}}} }%
}{
 \rho  \vdash  \ottsym{(}  \ottnt{V_{{\mathrm{1}}}}  \ottsym{,}  \ottnt{V_{{\mathrm{2}}}}  \ottsym{)} \ \Downarrow\  \ottsym{(}  \ottnt{W_{{\mathrm{1}}}}  \ottsym{,}  \ottnt{W_{{\mathrm{2}}}}  \ottsym{)} }{%
{\ottdrulename{eval\_val\_vpair}}{}%
}}

\newcommand{\ottdruleevalXXvalXXinl}[1]{\ottdrule[#1]{%
\ottpremise{ \rho  \vdash  \ottnt{V} \ \Downarrow\  \ottnt{W} }%
}{
 \rho  \vdash  \ottkw{inl} \, \ottnt{V} \ \Downarrow\  \ottkw{inl} \, \ottnt{W} }{%
{\ottdrulename{eval\_val\_inl}}{}%
}}

\newcommand{\ottdruleevalXXvalXXinr}[1]{\ottdrule[#1]{%
\ottpremise{ \rho  \vdash  \ottnt{V} \ \Downarrow\  \ottnt{W} }%
}{
 \rho  \vdash  \ottkw{inr} \, \ottnt{V} \ \Downarrow\  \ottkw{inr} \, \ottnt{W} }{%
{\ottdrulename{eval\_val\_inr}}{}%
}}

\newcommand{\ottdefnevalXXval}[1]{\begin{ottdefnblock}[#1]{$ \rho  \vdash  \ottnt{V} \ \Downarrow\  \ottnt{W} $}{\ottcom{environment-based semantics for CBPV (large-step)}}
\ottusedrule{\ottdruleevalXXvalXXvar{}}
\ottusedrule{\ottdruleevalXXvalXXunit{}}
\ottusedrule{\ottdruleevalXXvalXXthunk{}}
\ottusedrule{\ottdruleevalXXvalXXvpair{}}
\ottusedrule{\ottdruleevalXXvalXXinl{}}
\ottusedrule{\ottdruleevalXXvalXXinr{}}
\end{ottdefnblock}}

\newcommand{\ottdruleevalXXcompXXabs}[1]{\ottdrule[#1]{%
}{
 \rho  \vdash   \lambda  \ottmv{x} . \ottnt{M}  \ \Downarrow\   \mathbf{clo}(  \rho ,   \lambda  \ottmv{x} . \ottnt{M}   )  }{%
{\ottdrulename{eval\_comp\_abs}}{}%
}}

\newcommand{\ottdruleevalXXcompXXcpair}[1]{\ottdrule[#1]{%
}{
 \rho  \vdash   \langle  \ottnt{M_{{\mathrm{1}}}} , \ottnt{M_{{\mathrm{2}}}}  \rangle  \ \Downarrow\   \mathbf{clo}(  \rho ,   \langle  \ottnt{M_{{\mathrm{1}}}} , \ottnt{M_{{\mathrm{2}}}}  \rangle   )  }{%
{\ottdrulename{eval\_comp\_cpair}}{}%
}}

\newcommand{\ottdruleevalXXcompXXunit}[1]{\ottdrule[#1]{%
}{
 \rho  \vdash   \langle\rangle  \ \Downarrow\   \mathbf{clo}(  \rho ,   \langle\rangle   )  }{%
{\ottdrulename{eval\_comp\_unit}}{}%
}}

\newcommand{\ottdruleevalXXcompXXappXXabs}[1]{\ottdrule[#1]{%
\ottpremise{ \rho  \vdash  \ottnt{M} \ \Downarrow\   \mathbf{clo}(  \rho' ,   \lambda  \ottmv{x} . \ottnt{M'}   )  }%
\ottpremise{ \rho  \vdash  \ottnt{V} \ \Downarrow\  \ottnt{W} }%
\ottpremise{  \rho'   \mathop{,}   \ottmv{x}  \mapsto  \ottnt{W}   \vdash  \ottnt{M'} \ \Downarrow\  \ottnt{T} }%
}{
 \rho  \vdash  \ottnt{M} \, \ottnt{V} \ \Downarrow\  \ottnt{T} }{%
{\ottdrulename{eval\_comp\_app\_abs}}{}%
}}

\newcommand{\ottdruleevalXXcompXXforceXXthunk}[1]{\ottdrule[#1]{%
\ottpremise{ \rho  \vdash  \ottnt{V} \ \Downarrow\   \mathbf{clo}( \rho' ,  \ottnt{M} )  }%
\ottpremise{ \rho'  \vdash  \ottnt{M} \ \Downarrow\  \ottnt{T} }%
}{
 \rho  \vdash  \ottnt{V}  \ottsym{!} \ \Downarrow\  \ottnt{T} }{%
{\ottdrulename{eval\_comp\_force\_thunk}}{}%
}}

\newcommand{\ottdruleevalXXcompXXreturn}[1]{\ottdrule[#1]{%
\ottpremise{ \rho  \vdash  \ottnt{V} \ \Downarrow\  \ottnt{W} }%
}{
 \rho  \vdash  \ottkw{return} \, \ottnt{V} \ \Downarrow\  \ottkw{return} \, \ottnt{W} }{%
{\ottdrulename{eval\_comp\_return}}{}%
}}

\newcommand{\ottdruleevalXXcompXXletinXXret}[1]{\ottdrule[#1]{%
\ottpremise{ \rho  \vdash  \ottnt{M} \ \Downarrow\  \ottkw{return} \, \ottnt{W} }%
\ottpremise{  \rho   \mathop{,}   \ottmv{x}  \mapsto  \ottnt{W}   \vdash  \ottnt{N} \ \Downarrow\  \ottnt{T} }%
}{
 \rho  \vdash  \ottmv{x}  \leftarrow  \ottnt{M} \, \ottkw{in} \, \ottnt{N} \ \Downarrow\  \ottnt{T} }{%
{\ottdrulename{eval\_comp\_letin\_ret}}{}%
}}

\newcommand{\ottdruleevalXXcompXXsplit}[1]{\ottdrule[#1]{%
\ottpremise{ \rho  \vdash  \ottnt{V} \ \Downarrow\  \ottsym{(}  \ottnt{W_{{\mathrm{1}}}}  \ottsym{,}  \ottnt{W_{{\mathrm{2}}}}  \ottsym{)} }%
\ottpremise{   \rho   \mathop{,}   \ottmv{x_{{\mathrm{1}}}}  \mapsto  \ottnt{W_{{\mathrm{1}}}}    \mathop{,}   \ottmv{x_{{\mathrm{2}}}}  \mapsto  \ottnt{W_{{\mathrm{2}}}}   \vdash  \ottnt{N} \ \Downarrow\  \ottnt{T} }%
}{
 \rho  \vdash   \ottkw{let}\; ( \ottmv{x_{{\mathrm{1}}}} ,  \ottmv{x_{{\mathrm{2}}}} ) =  \ottnt{V} \; \ottkw{in}\;  \ottnt{N}  \ \Downarrow\  \ottnt{T} }{%
{\ottdrulename{eval\_comp\_split}}{}%
}}

\newcommand{\ottdruleevalXXcompXXsequence}[1]{\ottdrule[#1]{%
\ottpremise{ \rho  \vdash  \ottnt{V} \ \Downarrow\  \ottsym{()} }%
\ottpremise{ \rho  \vdash  \ottnt{M} \ \Downarrow\  \ottnt{T} }%
}{
 \rho  \vdash   \ottnt{V}  ;  \ottnt{M}  \ \Downarrow\  \ottnt{T} }{%
{\ottdrulename{eval\_comp\_sequence}}{}%
}}

\newcommand{\ottdruleevalXXcompXXfst}[1]{\ottdrule[#1]{%
\ottpremise{ \rho  \vdash  \ottnt{M} \ \Downarrow\   \mathbf{clo}(  \rho' ,   \langle  \ottnt{M_{{\mathrm{1}}}} , \ottnt{M_{{\mathrm{2}}}}  \rangle   )  }%
\ottpremise{ \rho'  \vdash  \ottnt{M_{{\mathrm{1}}}} \ \Downarrow\  \ottnt{T} }%
}{
 \rho  \vdash   \ottnt{M}  . 1  \ \Downarrow\  \ottnt{T} }{%
{\ottdrulename{eval\_comp\_fst}}{}%
}}

\newcommand{\ottdruleevalXXcompXXsnd}[1]{\ottdrule[#1]{%
\ottpremise{ \rho  \vdash  \ottnt{M} \ \Downarrow\   \mathbf{clo}(  \rho' ,   \langle  \ottnt{M_{{\mathrm{1}}}} , \ottnt{M_{{\mathrm{2}}}}  \rangle   )  }%
\ottpremise{ \rho'  \vdash  \ottnt{M_{{\mathrm{2}}}} \ \Downarrow\  \ottnt{T} }%
}{
 \rho  \vdash   \ottnt{M}  . 2  \ \Downarrow\  \ottnt{T} }{%
{\ottdrulename{eval\_comp\_snd}}{}%
}}

\newcommand{\ottdruleevalXXcompXXcaseXXinl}[1]{\ottdrule[#1]{%
\ottpremise{ \rho  \vdash  \ottnt{V} \ \Downarrow\  \ottkw{inl} \, \ottnt{W} }%
\ottpremise{  \rho   \mathop{,}   \ottmv{x_{{\mathrm{1}}}}  \mapsto  \ottnt{W}   \vdash  \ottnt{M_{{\mathrm{1}}}} \ \Downarrow\  \ottnt{T} }%
}{
 \rho  \vdash  \ottkw{case} \, \ottnt{V} \, \ottkw{of} \, \ottkw{inl} \, \ottmv{x_{{\mathrm{1}}}}  \to  \ottnt{M_{{\mathrm{1}}}}  \mathsf{;} \, \ottkw{inr} \, \ottmv{x_{{\mathrm{2}}}}  \to  \ottnt{M_{{\mathrm{2}}}} \ \Downarrow\  \ottnt{T} }{%
{\ottdrulename{eval\_comp\_case\_inl}}{}%
}}

\newcommand{\ottdruleevalXXcompXXcaseXXinr}[1]{\ottdrule[#1]{%
\ottpremise{ \rho  \vdash  \ottnt{V} \ \Downarrow\  \ottkw{inr} \, \ottnt{W} }%
\ottpremise{  \rho   \mathop{,}   \ottmv{x_{{\mathrm{2}}}}  \mapsto  \ottnt{W}   \vdash  \ottnt{M_{{\mathrm{2}}}} \ \Downarrow\  \ottnt{T} }%
}{
 \rho  \vdash  \ottkw{case} \, \ottnt{V} \, \ottkw{of} \, \ottkw{inl} \, \ottmv{x_{{\mathrm{1}}}}  \to  \ottnt{M_{{\mathrm{1}}}}  \mathsf{;} \, \ottkw{inr} \, \ottmv{x_{{\mathrm{2}}}}  \to  \ottnt{M_{{\mathrm{2}}}} \ \Downarrow\  \ottnt{T} }{%
{\ottdrulename{eval\_comp\_case\_inr}}{}%
}}

\newcommand{\ottdefnevalXXcomp}[1]{\begin{ottdefnblock}[#1]{$ \rho  \vdash  \ottnt{M} \ \Downarrow\  \ottnt{T} $}{\ottcom{environment-based semantics for CBPV (large-step)}}
\ottusedrule{\ottdruleevalXXcompXXabs{}}
\ottusedrule{\ottdruleevalXXcompXXcpair{}}
\ottusedrule{\ottdruleevalXXcompXXunit{}}
\ottusedrule{\ottdruleevalXXcompXXappXXabs{}}
\ottusedrule{\ottdruleevalXXcompXXforceXXthunk{}}
\ottusedrule{\ottdruleevalXXcompXXreturn{}}
\ottusedrule{\ottdruleevalXXcompXXletinXXret{}}
\ottusedrule{\ottdruleevalXXcompXXsplit{}}
\ottusedrule{\ottdruleevalXXcompXXsequence{}}
\ottusedrule{\ottdruleevalXXcompXXfst{}}
\ottusedrule{\ottdruleevalXXcompXXsnd{}}
\ottusedrule{\ottdruleevalXXcompXXcaseXXinl{}}
\ottusedrule{\ottdruleevalXXcompXXcaseXXinr{}}
\end{ottdefnblock}}

\newcommand{\ottdefnsJEnv}{
\ottdefnevalXXval{}\ottdefnevalXXcomp{}}

\newcommand{\ottdruleevalXXeffXXcompXXabs}[1]{\ottdrule[#1]{%
}{
 \rho  \vdash_{\mathit{eff} }   \lambda  \ottmv{x} . \ottnt{M}   \Downarrow   \mathbf{clo}(  \rho ,   \lambda  \ottmv{x} . \ottnt{M}   )   \mathop{\#} \textcolor{\effectcolor}{  \textcolor{\effectcolor}{\varepsilon}  } }{%
{\ottdrulename{eval\_eff\_comp\_abs}}{}%
}}

\newcommand{\ottdruleevalXXeffXXcompXXreturn}[1]{\ottdrule[#1]{%
\ottpremise{ \rho  \vdash  \ottnt{V} \ \Downarrow\  \ottnt{W} }%
}{
 \rho  \vdash_{\mathit{eff} }  \ottkw{return} \, \ottnt{V}  \Downarrow  \ottkw{return} \, \ottnt{W}  \mathop{\#} \textcolor{\effectcolor}{  \textcolor{\effectcolor}{\varepsilon}  } }{%
{\ottdrulename{eval\_eff\_comp\_return}}{}%
}}

\newcommand{\ottdruleevalXXeffXXcompXXcpair}[1]{\ottdrule[#1]{%
}{
 \rho  \vdash_{\mathit{eff} }   \langle  \ottnt{M_{{\mathrm{1}}}} , \ottnt{M_{{\mathrm{2}}}}  \rangle   \Downarrow   \mathbf{clo}(  \rho ,   \langle  \ottnt{M_{{\mathrm{1}}}} , \ottnt{M_{{\mathrm{2}}}}  \rangle   )   \mathop{\#} \textcolor{\effectcolor}{  \textcolor{\effectcolor}{\varepsilon}  } }{%
{\ottdrulename{eval\_eff\_comp\_cpair}}{}%
}}

\newcommand{\ottdruleevalXXeffXXcompXXunit}[1]{\ottdrule[#1]{%
}{
 \rho  \vdash_{\mathit{eff} }   \langle\rangle   \Downarrow   \mathbf{clo}(  \rho ,   \langle\rangle   )   \mathop{\#} \textcolor{\effectcolor}{  \textcolor{\effectcolor}{\varepsilon}  } }{%
{\ottdrulename{eval\_eff\_comp\_unit}}{}%
}}

\newcommand{\ottdruleevalXXeffXXcompXXappXXabs}[1]{\ottdrule[#1]{%
\ottpremise{ \rho  \vdash_{\mathit{eff} }  \ottnt{M}  \Downarrow   \mathbf{clo}(  \rho' ,   \lambda  \ottmv{x} . \ottnt{M'}   )   \mathop{\#} \textcolor{\effectcolor}{ \textcolor{\effectcolor}{\phi}_{{\mathrm{1}}} } }%
\ottpremise{ \rho  \vdash  \ottnt{V} \ \Downarrow\  \ottnt{W} }%
\ottpremise{  \rho'   \mathop{,}   \ottmv{x}  \mapsto  \ottnt{W}   \vdash_{\mathit{eff} }  \ottnt{M'}  \Downarrow  \ottnt{T}  \mathop{\#} \textcolor{\effectcolor}{ \textcolor{\effectcolor}{\phi}_{{\mathrm{2}}} } }%
}{
 \rho  \vdash_{\mathit{eff} }  \ottnt{M} \, \ottnt{V}  \Downarrow  \ottnt{T}  \mathop{\#} \textcolor{\effectcolor}{  \textcolor{\effectcolor}{ \textcolor{\effectcolor}{\phi}_{{\mathrm{1}}}  \cdot  \textcolor{\effectcolor}{\phi}_{{\mathrm{2}}} }  } }{%
{\ottdrulename{eval\_eff\_comp\_app\_abs}}{}%
}}

\newcommand{\ottdruleevalXXeffXXcompXXforceXXthunk}[1]{\ottdrule[#1]{%
\ottpremise{ \rho  \vdash  \ottnt{V} \ \Downarrow\   \mathbf{clo}( \rho' , \{  \ottnt{M}  \} )  }%
\ottpremise{ \rho'  \vdash_{\mathit{eff} }  \ottnt{M}  \Downarrow  \ottnt{T}  \mathop{\#} \textcolor{\effectcolor}{ \textcolor{\effectcolor}{\phi} } }%
}{
 \rho  \vdash_{\mathit{eff} }  \ottnt{V}  \ottsym{!}  \Downarrow  \ottnt{T}  \mathop{\#} \textcolor{\effectcolor}{ \textcolor{\effectcolor}{\phi} } }{%
{\ottdrulename{eval\_eff\_comp\_force\_thunk}}{}%
}}

\newcommand{\ottdruleevalXXeffXXcompXXletinXXret}[1]{\ottdrule[#1]{%
\ottpremise{ \rho  \vdash_{\mathit{eff} }  \ottnt{M}  \Downarrow  \ottkw{return} \, \ottnt{W}  \mathop{\#} \textcolor{\effectcolor}{ \textcolor{\effectcolor}{\phi}_{{\mathrm{1}}} } }%
\ottpremise{  \rho   \mathop{,}   \ottmv{x}  \mapsto  \ottnt{W}   \vdash_{\mathit{eff} }  \ottnt{N}  \Downarrow  \ottnt{T}  \mathop{\#} \textcolor{\effectcolor}{ \textcolor{\effectcolor}{\phi}_{{\mathrm{2}}} } }%
}{
 \rho  \vdash_{\mathit{eff} }  \ottmv{x}  \leftarrow  \ottnt{M} \, \ottkw{in} \, \ottnt{N}  \Downarrow  \ottnt{T}  \mathop{\#} \textcolor{\effectcolor}{  \textcolor{\effectcolor}{ \textcolor{\effectcolor}{\phi}_{{\mathrm{1}}}  \cdot  \textcolor{\effectcolor}{\phi}_{{\mathrm{2}}} }  } }{%
{\ottdrulename{eval\_eff\_comp\_letin\_ret}}{}%
}}

\newcommand{\ottdruleevalXXeffXXcompXXsplit}[1]{\ottdrule[#1]{%
\ottpremise{ \rho  \vdash  \ottnt{V} \ \Downarrow\  \ottsym{(}  \ottnt{W_{{\mathrm{1}}}}  \ottsym{,}  \ottnt{W_{{\mathrm{2}}}}  \ottsym{)} }%
\ottpremise{   \rho   \mathop{,}   \ottmv{x_{{\mathrm{1}}}}  \mapsto  \ottnt{W_{{\mathrm{1}}}}    \mathop{,}   \ottmv{x_{{\mathrm{2}}}}  \mapsto  \ottnt{W_{{\mathrm{2}}}}   \vdash_{\mathit{eff} }  \ottnt{N}  \Downarrow  \ottnt{T}  \mathop{\#} \textcolor{\effectcolor}{ \textcolor{\effectcolor}{\phi} } }%
}{
 \rho  \vdash_{\mathit{eff} }   \ottkw{let}\; ( \ottmv{x_{{\mathrm{1}}}} ,  \ottmv{x_{{\mathrm{2}}}} ) =  \ottnt{V} \; \ottkw{in}\;  \ottnt{N}   \Downarrow  \ottnt{T}  \mathop{\#} \textcolor{\effectcolor}{ \textcolor{\effectcolor}{\phi} } }{%
{\ottdrulename{eval\_eff\_comp\_split}}{}%
}}

\newcommand{\ottdruleevalXXeffXXcompXXsequence}[1]{\ottdrule[#1]{%
\ottpremise{ \rho  \vdash  \ottnt{V} \ \Downarrow\  \ottsym{()} }%
\ottpremise{ \rho  \vdash_{\mathit{eff} }  \ottnt{N}  \Downarrow  \ottnt{T}  \mathop{\#} \textcolor{\effectcolor}{ \textcolor{\effectcolor}{\phi} } }%
}{
 \rho  \vdash_{\mathit{eff} }   \ottnt{V}  ;  \ottnt{N}   \Downarrow  \ottnt{T}  \mathop{\#} \textcolor{\effectcolor}{ \textcolor{\effectcolor}{\phi} } }{%
{\ottdrulename{eval\_eff\_comp\_sequence}}{}%
}}

\newcommand{\ottdruleevalXXeffXXcompXXfst}[1]{\ottdrule[#1]{%
\ottpremise{ \rho  \vdash_{\mathit{eff} }  \ottnt{M}  \Downarrow   \mathbf{clo}(  \rho' ,   \langle  \ottnt{M_{{\mathrm{1}}}} , \ottnt{M_{{\mathrm{2}}}}  \rangle   )   \mathop{\#} \textcolor{\effectcolor}{ \textcolor{\effectcolor}{\phi}_{{\mathrm{1}}} } }%
\ottpremise{ \rho'  \vdash_{\mathit{eff} }  \ottnt{M_{{\mathrm{1}}}}  \Downarrow  \ottnt{T}  \mathop{\#} \textcolor{\effectcolor}{ \textcolor{\effectcolor}{\phi}_{{\mathrm{2}}} } }%
}{
 \rho  \vdash_{\mathit{eff} }   \ottnt{M}  . 1   \Downarrow  \ottnt{T}  \mathop{\#} \textcolor{\effectcolor}{  \textcolor{\effectcolor}{ \textcolor{\effectcolor}{\phi}_{{\mathrm{1}}}  \cdot  \textcolor{\effectcolor}{\phi}_{{\mathrm{2}}} }  } }{%
{\ottdrulename{eval\_eff\_comp\_fst}}{}%
}}

\newcommand{\ottdruleevalXXeffXXcompXXsnd}[1]{\ottdrule[#1]{%
\ottpremise{ \rho  \vdash_{\mathit{eff} }  \ottnt{M}  \Downarrow   \mathbf{clo}(  \rho' ,   \langle  \ottnt{M_{{\mathrm{1}}}} , \ottnt{M_{{\mathrm{2}}}}  \rangle   )   \mathop{\#} \textcolor{\effectcolor}{ \textcolor{\effectcolor}{\phi}_{{\mathrm{1}}} } }%
\ottpremise{ \rho'  \vdash_{\mathit{eff} }  \ottnt{M_{{\mathrm{2}}}}  \Downarrow  \ottnt{T}  \mathop{\#} \textcolor{\effectcolor}{ \textcolor{\effectcolor}{\phi}_{{\mathrm{2}}} } }%
}{
 \rho  \vdash_{\mathit{eff} }   \ottnt{M}  . 2   \Downarrow  \ottnt{T}  \mathop{\#} \textcolor{\effectcolor}{  \textcolor{\effectcolor}{ \textcolor{\effectcolor}{\phi}_{{\mathrm{1}}}  \cdot  \textcolor{\effectcolor}{\phi}_{{\mathrm{2}}} }  } }{%
{\ottdrulename{eval\_eff\_comp\_snd}}{}%
}}

\newcommand{\ottdruleevalXXeffXXcompXXcaseXXinl}[1]{\ottdrule[#1]{%
\ottpremise{ \rho  \vdash  \ottnt{V} \ \Downarrow\  \ottkw{inl} \, \ottnt{W} }%
\ottpremise{  \rho   \mathop{,}   \ottmv{x_{{\mathrm{1}}}}  \mapsto  \ottnt{W}   \vdash_{\mathit{eff} }  \ottnt{M_{{\mathrm{1}}}}  \Downarrow  \ottnt{T}  \mathop{\#} \textcolor{\effectcolor}{ \textcolor{\effectcolor}{\phi} } }%
}{
 \rho  \vdash_{\mathit{eff} }  \ottkw{case} \, \ottnt{V} \, \ottkw{of} \, \ottkw{inl} \, \ottmv{x_{{\mathrm{1}}}}  \to  \ottnt{M_{{\mathrm{1}}}}  \mathsf{;} \, \ottkw{inr} \, \ottmv{x_{{\mathrm{2}}}}  \to  \ottnt{M_{{\mathrm{2}}}}  \Downarrow  \ottnt{T}  \mathop{\#} \textcolor{\effectcolor}{ \textcolor{\effectcolor}{\phi} } }{%
{\ottdrulename{eval\_eff\_comp\_case\_inl}}{}%
}}

\newcommand{\ottdruleevalXXeffXXcompXXcaseXXinr}[1]{\ottdrule[#1]{%
\ottpremise{ \rho  \vdash  \ottnt{V} \ \Downarrow\  \ottkw{inr} \, \ottnt{W} }%
\ottpremise{  \rho   \mathop{,}   \ottmv{x_{{\mathrm{2}}}}  \mapsto  \ottnt{W}   \vdash_{\mathit{eff} }  \ottnt{M_{{\mathrm{2}}}}  \Downarrow  \ottnt{T}  \mathop{\#} \textcolor{\effectcolor}{ \textcolor{\effectcolor}{\phi} } }%
}{
 \rho  \vdash_{\mathit{eff} }  \ottkw{case} \, \ottnt{V} \, \ottkw{of} \, \ottkw{inl} \, \ottmv{x_{{\mathrm{1}}}}  \to  \ottnt{M_{{\mathrm{1}}}}  \mathsf{;} \, \ottkw{inr} \, \ottmv{x_{{\mathrm{2}}}}  \to  \ottnt{M_{{\mathrm{2}}}}  \Downarrow  \ottnt{T}  \mathop{\#} \textcolor{\effectcolor}{ \textcolor{\effectcolor}{\phi} } }{%
{\ottdrulename{eval\_eff\_comp\_case\_inr}}{}%
}}

\newcommand{\ottdruleevalXXeffXXcompXXtick}[1]{\ottdrule[#1]{%
}{
 \rho  \vdash_{\mathit{eff} }   \textcolor{\effectcolor}{ \ottkw{tick} }   \Downarrow  \ottkw{return} \, \ottsym{()}  \mathop{\#} \textcolor{\effectcolor}{  \textcolor{\effectcolor}{\ottkw{Tick} }  } }{%
{\ottdrulename{eval\_eff\_comp\_tick}}{}%
}}

\newcommand{\ottdefnevalXXeffXXcomp}[1]{\begin{ottdefnblock}[#1]{$ \rho  \vdash_{\mathit{eff} }  \ottnt{M}  \Downarrow  \ottnt{T}  \mathop{\#} \textcolor{\effectcolor}{ \textcolor{\effectcolor}{\phi} } $}{\ottcom{environment-based semantics for CBPV (large-step)}}
\ottusedrule{\ottdruleevalXXeffXXcompXXabs{}}
\ottusedrule{\ottdruleevalXXeffXXcompXXreturn{}}
\ottusedrule{\ottdruleevalXXeffXXcompXXcpair{}}
\ottusedrule{\ottdruleevalXXeffXXcompXXunit{}}
\ottusedrule{\ottdruleevalXXeffXXcompXXappXXabs{}}
\ottusedrule{\ottdruleevalXXeffXXcompXXforceXXthunk{}}
\ottusedrule{\ottdruleevalXXeffXXcompXXletinXXret{}}
\ottusedrule{\ottdruleevalXXeffXXcompXXsplit{}}
\ottusedrule{\ottdruleevalXXeffXXcompXXsequence{}}
\ottusedrule{\ottdruleevalXXeffXXcompXXfst{}}
\ottusedrule{\ottdruleevalXXeffXXcompXXsnd{}}
\ottusedrule{\ottdruleevalXXeffXXcompXXcaseXXinl{}}
\ottusedrule{\ottdruleevalXXeffXXcompXXcaseXXinr{}}
\ottusedrule{\ottdruleevalXXeffXXcompXXtick{}}
\end{ottdefnblock}}

\newcommand{\ottdefnsJEffEnv}{
\ottdefnevalXXeffXXcomp{}}

\newcommand{\ottdruleevalXXcoeffXXvalXXvar}[1]{\ottdrule[#1]{%
}{
     \textcolor{\coeffectcolor}{  \textcolor{\coeffectcolor}{\overline{0}_1}  }\! \cdot \! \rho_{{\mathrm{1}}}    \mathop{,} \;   \ottmv{x}   \mapsto ^{\textcolor{\coeffectcolor}{  \textcolor{\coeffectcolor}{1}  } }  \ottnt{W}      \mathop{,} \;   \textcolor{\coeffectcolor}{  \textcolor{\coeffectcolor}{\overline{0}_2}  }\! \cdot \! \rho_{{\mathrm{2}}}    \vdash_{\mathit{coeff} }  \ottmv{x}  \Downarrow  \ottnt{W} }{%
{\ottdrulename{eval\_coeff\_val\_var}}{}%
}}

\newcommand{\ottdruleevalXXcoeffXXvalXXunit}[1]{\ottdrule[#1]{%
}{
  \textcolor{\coeffectcolor}{  \textcolor{\coeffectcolor}{\overline{0} }  }\! \cdot \! \rho   \vdash_{\mathit{coeff} }  \ottsym{()}  \Downarrow  \ottsym{()} }{%
{\ottdrulename{eval\_coeff\_val\_unit}}{}%
}}

\newcommand{\ottdruleevalXXcoeffXXvalXXthunk}[1]{\ottdrule[#1]{%
}{
  \textcolor{\coeffectcolor}{ \textcolor{\coeffectcolor}{\gamma} }\! \cdot \! \rho   \vdash_{\mathit{coeff} }  \ottsym{\{}  \ottnt{M}  \ottsym{\}}  \Downarrow   \mathbf{clo}(  \textcolor{\coeffectcolor}{ \textcolor{\coeffectcolor}{\gamma} }\! \cdot \! \rho  , \{  \ottnt{M}  \} )  }{%
{\ottdrulename{eval\_coeff\_val\_thunk}}{}%
}}

\newcommand{\ottdruleevalXXcoeffXXvalXXvpair}[1]{\ottdrule[#1]{%
\ottpremise{  \textcolor{\coeffectcolor}{ \textcolor{\coeffectcolor}{\gamma}_{{\mathrm{1}}} }\! \cdot \! \rho   \vdash_{\mathit{coeff} }  \ottnt{V_{{\mathrm{1}}}}  \Downarrow  \ottnt{W_{{\mathrm{1}}}} }%
\ottpremise{  \textcolor{\coeffectcolor}{ \textcolor{\coeffectcolor}{\gamma}_{{\mathrm{2}}} }\! \cdot \! \rho   \vdash_{\mathit{coeff} }  \ottnt{V_{{\mathrm{2}}}}  \Downarrow  \ottnt{W_{{\mathrm{2}}}} }%
}{
  \textcolor{\coeffectcolor}{  \textcolor{\coeffectcolor}{ \textcolor{\coeffectcolor}{\gamma}_{{\mathrm{1}}} \ottsym{+} \textcolor{\coeffectcolor}{\gamma}_{{\mathrm{2}}} }  }\! \cdot \! \rho   \vdash_{\mathit{coeff} }  \ottsym{(}  \ottnt{V_{{\mathrm{1}}}}  \ottsym{,}  \ottnt{V_{{\mathrm{2}}}}  \ottsym{)}  \Downarrow  \ottsym{(}  \ottnt{W_{{\mathrm{1}}}}  \ottsym{,}  \ottnt{W_{{\mathrm{2}}}}  \ottsym{)} }{%
{\ottdrulename{eval\_coeff\_val\_vpair}}{}%
}}

\newcommand{\ottdruleevalXXcoeffXXvalXXinl}[1]{\ottdrule[#1]{%
\ottpremise{  \textcolor{\coeffectcolor}{ \textcolor{\coeffectcolor}{\gamma} }\! \cdot \! \rho   \vdash_{\mathit{coeff} }  \ottnt{V}  \Downarrow  \ottnt{W} }%
}{
  \textcolor{\coeffectcolor}{ \textcolor{\coeffectcolor}{\gamma} }\! \cdot \! \rho   \vdash_{\mathit{coeff} }  \ottkw{inl} \, \ottnt{V}  \Downarrow  \ottkw{inl} \, \ottnt{W} }{%
{\ottdrulename{eval\_coeff\_val\_inl}}{}%
}}

\newcommand{\ottdruleevalXXcoeffXXvalXXinr}[1]{\ottdrule[#1]{%
\ottpremise{  \textcolor{\coeffectcolor}{ \textcolor{\coeffectcolor}{\gamma} }\! \cdot \! \rho   \vdash_{\mathit{coeff} }  \ottnt{V}  \Downarrow  \ottnt{W} }%
}{
  \textcolor{\coeffectcolor}{ \textcolor{\coeffectcolor}{\gamma} }\! \cdot \! \rho   \vdash_{\mathit{coeff} }  \ottkw{inr} \, \ottnt{V}  \Downarrow  \ottkw{inr} \, \ottnt{W} }{%
{\ottdrulename{eval\_coeff\_val\_inr}}{}%
}}

\newcommand{\ottdruleevalXXcoeffXXvalXXvwith}[1]{\ottdrule[#1]{%
}{
  \textcolor{\coeffectcolor}{ \textcolor{\coeffectcolor}{\gamma} }\! \cdot \! \rho   \vdash_{\mathit{coeff} }   \langle  \ottnt{V_{{\mathrm{1}}}}  ,  \ottnt{V_{{\mathrm{2}}}}  \rangle   \Downarrow   \mathbf{clo}(  \textcolor{\coeffectcolor}{ \textcolor{\coeffectcolor}{\gamma} }\! \cdot \! \rho  , \langle  \ottnt{V_{{\mathrm{1}}}}  ,  \ottnt{V_{{\mathrm{2}}}}  \rangle )  }{%
{\ottdrulename{eval\_coeff\_val\_vwith}}{}%
}}

\newcommand{\ottdruleevalXXcoeffXXvalXXvfst}[1]{\ottdrule[#1]{%
\ottpremise{  \textcolor{\coeffectcolor}{ \textcolor{\coeffectcolor}{\gamma} }\! \cdot \! \rho   \vdash_{\mathit{coeff} }  \ottnt{V}  \Downarrow   \mathbf{clo}(  \textcolor{\coeffectcolor}{ \textcolor{\coeffectcolor}{\gamma}' }\! \cdot \! \rho'  , \langle  \ottnt{V_{{\mathrm{1}}}}  ,  \ottnt{V_{{\mathrm{2}}}}  \rangle )  }%
\ottpremise{  \textcolor{\coeffectcolor}{ \textcolor{\coeffectcolor}{\gamma}' }\! \cdot \! \rho'   \vdash_{\mathit{coeff} }  \ottnt{V_{{\mathrm{1}}}}  \Downarrow  \ottnt{W} }%
}{
  \textcolor{\coeffectcolor}{ \textcolor{\coeffectcolor}{\gamma} }\! \cdot \! \rho   \vdash_{\mathit{coeff} }  \ottnt{V}  \ottsym{.}  \ottsym{1}  \Downarrow  \ottnt{W} }{%
{\ottdrulename{eval\_coeff\_val\_vfst}}{}%
}}

\newcommand{\ottdruleevalXXcoeffXXvalXXvsnd}[1]{\ottdrule[#1]{%
\ottpremise{  \textcolor{\coeffectcolor}{ \textcolor{\coeffectcolor}{\gamma} }\! \cdot \! \rho   \vdash_{\mathit{coeff} }  \ottnt{V}  \Downarrow   \mathbf{clo}(  \textcolor{\coeffectcolor}{ \textcolor{\coeffectcolor}{\gamma}' }\! \cdot \! \rho'  , \langle  \ottnt{V_{{\mathrm{1}}}}  ,  \ottnt{V_{{\mathrm{2}}}}  \rangle )  }%
\ottpremise{  \textcolor{\coeffectcolor}{ \textcolor{\coeffectcolor}{\gamma}' }\! \cdot \! \rho'   \vdash_{\mathit{coeff} }  \ottnt{V_{{\mathrm{2}}}}  \Downarrow  \ottnt{W} }%
}{
  \textcolor{\coeffectcolor}{ \textcolor{\coeffectcolor}{\gamma} }\! \cdot \! \rho   \vdash_{\mathit{coeff} }  \ottnt{V}  \ottsym{.}  \ottsym{2}  \Downarrow  \ottnt{W} }{%
{\ottdrulename{eval\_coeff\_val\_vsnd}}{}%
}}

\newcommand{\ottdruleevalXXcoeffXXvalXXvsub}[1]{\ottdrule[#1]{%
\ottpremise{  \textcolor{\coeffectcolor}{ \textcolor{\coeffectcolor}{\gamma}_{{\mathrm{1}}} }\! \cdot \! \rho   \vdash_{\mathit{coeff} }  \ottnt{V}  \Downarrow  \ottnt{W} }%
\ottpremise{ \textcolor{\coeffectcolor}{ \textcolor{\coeffectcolor}{\gamma}_{{\mathrm{2}}} }\; \textcolor{\coeffectcolor}{\mathop{\leq_{\mathit{co} } } } \; \textcolor{\coeffectcolor}{ \textcolor{\coeffectcolor}{\gamma}_{{\mathrm{1}}} } }%
}{
  \textcolor{\coeffectcolor}{ \textcolor{\coeffectcolor}{\gamma}_{{\mathrm{2}}} }\! \cdot \! \rho   \vdash_{\mathit{coeff} }  \ottnt{V}  \Downarrow  \ottnt{W} }{%
{\ottdrulename{eval\_coeff\_val\_vsub}}{}%
}}

\newcommand{\ottdefnevalXXcoeffXXval}[1]{\begin{ottdefnblock}[#1]{$ \mu  \vdash_{\mathit{coeff} }  \ottnt{V}  \Downarrow  \ottnt{W} $}{\ottcom{environment-based coeffect semantics for CBPV (large-step)}}
\ottusedrule{\ottdruleevalXXcoeffXXvalXXvar{}}
\ottusedrule{\ottdruleevalXXcoeffXXvalXXunit{}}
\ottusedrule{\ottdruleevalXXcoeffXXvalXXthunk{}}
\ottusedrule{\ottdruleevalXXcoeffXXvalXXvpair{}}
\ottusedrule{\ottdruleevalXXcoeffXXvalXXinl{}}
\ottusedrule{\ottdruleevalXXcoeffXXvalXXinr{}}
\ottusedrule{\ottdruleevalXXcoeffXXvalXXvwith{}}
\ottusedrule{\ottdruleevalXXcoeffXXvalXXvfst{}}
\ottusedrule{\ottdruleevalXXcoeffXXvalXXvsnd{}}
\ottusedrule{\ottdruleevalXXcoeffXXvalXXvsub{}}
\end{ottdefnblock}}

\newcommand{\ottdruleevalXXcoeffXXcompXXabs}[1]{\ottdrule[#1]{%
\ottpremise{ \textcolor{\coeffectcolor}{ \textcolor{\coeffectcolor}{q}' }\;  \textcolor{\coeffectcolor}{\mathop{\leq_{\mathit{co} } } } \; \textcolor{\coeffectcolor}{ \textcolor{\coeffectcolor}{q} } }%
}{
  \textcolor{\coeffectcolor}{ \textcolor{\coeffectcolor}{\gamma} }\! \cdot \! \rho   \vdash_{\mathit{coeff} }   \lambda  \ottmv{x} ^{\textcolor{\coeffectcolor}{ \textcolor{\coeffectcolor}{q} } }. \ottnt{M}   \Downarrow   \mathbf{clo}(   \textcolor{\coeffectcolor}{ \textcolor{\coeffectcolor}{\gamma} }\! \cdot \! \rho  ,   \lambda  \ottmv{x} ^{\textcolor{\coeffectcolor}{ \textcolor{\coeffectcolor}{q}' } }. \ottnt{M}   )  }{%
{\ottdrulename{eval\_coeff\_comp\_abs}}{}%
}}

\newcommand{\ottdruleevalXXcoeffXXcompXXcpair}[1]{\ottdrule[#1]{%
}{
  \textcolor{\coeffectcolor}{ \textcolor{\coeffectcolor}{\gamma} }\! \cdot \! \rho   \vdash_{\mathit{coeff} }   \langle  \ottnt{M_{{\mathrm{1}}}} , \ottnt{M_{{\mathrm{2}}}}  \rangle   \Downarrow   \mathbf{clo}(   \textcolor{\coeffectcolor}{ \textcolor{\coeffectcolor}{\gamma} }\! \cdot \! \rho  ,   \langle  \ottnt{M_{{\mathrm{1}}}} , \ottnt{M_{{\mathrm{2}}}}  \rangle   )  }{%
{\ottdrulename{eval\_coeff\_comp\_cpair}}{}%
}}

\newcommand{\ottdruleevalXXcoeffXXcompXXcunit}[1]{\ottdrule[#1]{%
}{
  \textcolor{\coeffectcolor}{  \textcolor{\coeffectcolor}{\overline{0} }  }\! \cdot \! \rho   \vdash_{\mathit{coeff} }   \langle\rangle   \Downarrow  \ottsym{<>} }{%
{\ottdrulename{eval\_coeff\_comp\_cunit}}{}%
}}

\newcommand{\ottdruleevalXXcoeffXXcompXXappXXabs}[1]{\ottdrule[#1]{%
\ottpremise{  \textcolor{\coeffectcolor}{ \textcolor{\coeffectcolor}{\gamma}_{{\mathrm{1}}} }\! \cdot \! \rho   \vdash_{\mathit{coeff} }  \ottnt{M}  \Downarrow   \mathbf{clo}(   \textcolor{\coeffectcolor}{ \textcolor{\coeffectcolor}{\gamma}' }\! \cdot \! \rho'  ,   \lambda  \ottmv{x} ^{\textcolor{\coeffectcolor}{ \textcolor{\coeffectcolor}{q} } }. \ottnt{M'}   )  }%
\ottpremise{  \textcolor{\coeffectcolor}{ \textcolor{\coeffectcolor}{\gamma}_{{\mathrm{2}}} }\! \cdot \! \rho   \vdash_{\mathit{coeff} }  \ottnt{V}  \Downarrow  \ottnt{W} }%
\ottpremise{    \textcolor{\coeffectcolor}{ \textcolor{\coeffectcolor}{\gamma}' }\! \cdot \! \rho'     \mathop{,} \;    \ottmv{x}   \mapsto ^{\textcolor{\coeffectcolor}{ \textcolor{\coeffectcolor}{q} } }  \ottnt{W}     \vdash_{\mathit{coeff} }  \ottnt{M'}  \Downarrow  \ottnt{T} }%
\ottpremise{ \textcolor{\coeffectcolor}{ \textcolor{\coeffectcolor}{\gamma} } \equiv \textcolor{\coeffectcolor}{  \textcolor{\coeffectcolor}{ \textcolor{\coeffectcolor}{\gamma}_{{\mathrm{1}}} \ottsym{+}  \textcolor{\coeffectcolor}{ \textcolor{\coeffectcolor}{q} \cdot \textcolor{\coeffectcolor}{\gamma}_{{\mathrm{2}}} }  }  } }%
}{
  \textcolor{\coeffectcolor}{ \textcolor{\coeffectcolor}{\gamma} }\! \cdot \! \rho   \vdash_{\mathit{coeff} }  \ottnt{M} \, \ottnt{V}  \Downarrow  \ottnt{T} }{%
{\ottdrulename{eval\_coeff\_comp\_app\_abs}}{}%
}}

\newcommand{\ottdruleevalXXcoeffXXcompXXsplit}[1]{\ottdrule[#1]{%
\ottpremise{  \textcolor{\coeffectcolor}{ \textcolor{\coeffectcolor}{\gamma}_{{\mathrm{1}}} }\! \cdot \! \rho   \vdash_{\mathit{coeff} }  \ottnt{V}  \Downarrow  \ottsym{(}  \ottnt{W_{{\mathrm{1}}}}  \ottsym{,}  \ottnt{W_{{\mathrm{2}}}}  \ottsym{)} }%
\ottpremise{     \textcolor{\coeffectcolor}{ \textcolor{\coeffectcolor}{\gamma}_{{\mathrm{2}}} }\! \cdot \! \rho    \mathop{,} \;   \ottmv{x_{{\mathrm{1}}}}   \mapsto ^{\textcolor{\coeffectcolor}{ \textcolor{\coeffectcolor}{q} } }  \ottnt{W_{{\mathrm{1}}}}      \mathop{,} \;   \ottmv{x_{{\mathrm{2}}}}   \mapsto ^{\textcolor{\coeffectcolor}{ \textcolor{\coeffectcolor}{q} } }  \ottnt{W_{{\mathrm{2}}}}    \vdash_{\mathit{coeff} }  \ottnt{N}  \Downarrow  \ottnt{T} }%
\ottpremise{ \textcolor{\coeffectcolor}{ \textcolor{\coeffectcolor}{\gamma} } \equiv \textcolor{\coeffectcolor}{  \textcolor{\coeffectcolor}{  \textcolor{\coeffectcolor}{ \textcolor{\coeffectcolor}{q} \cdot \textcolor{\coeffectcolor}{\gamma}_{{\mathrm{1}}} }  \ottsym{+} \textcolor{\coeffectcolor}{\gamma}_{{\mathrm{2}}} }  } }%
}{
  \textcolor{\coeffectcolor}{ \textcolor{\coeffectcolor}{\gamma} }\! \cdot \! \rho   \vdash_{\mathit{coeff} }   \ottkw{case}_{\textcolor{\coeffectcolor}{ \textcolor{\coeffectcolor}{q} } } \;  \ottnt{V} \; \ottkw{of}\;( \ottmv{x_{{\mathrm{1}}}} , \ottmv{x_{{\mathrm{2}}}} )\; \rightarrow\;  \ottnt{N}   \Downarrow  \ottnt{T} }{%
{\ottdrulename{eval\_coeff\_comp\_split}}{}%
}}

\newcommand{\ottdruleevalXXcoeffXXcompXXreturn}[1]{\ottdrule[#1]{%
\ottpremise{  \textcolor{\coeffectcolor}{ \textcolor{\coeffectcolor}{\gamma} }\! \cdot \! \rho   \vdash_{\mathit{coeff} }  \ottnt{V}  \Downarrow  \ottnt{W} }%
}{
  \textcolor{\coeffectcolor}{  \textcolor{\coeffectcolor}{ \textcolor{\coeffectcolor}{q} \cdot \textcolor{\coeffectcolor}{\gamma} }  }\! \cdot \! \rho   \vdash_{\mathit{coeff} }   \ottkw{return} _{\textcolor{\coeffectcolor}{ \textcolor{\coeffectcolor}{q} } }\;  \ottnt{V}   \Downarrow   \ottkw{return} _{\textcolor{\coeffectcolor}{ \textcolor{\coeffectcolor}{q} } }  \ottnt{W}  }{%
{\ottdrulename{eval\_coeff\_comp\_return}}{}%
}}

\newcommand{\ottdruleevalXXcoeffXXcompXXletinXXret}[1]{\ottdrule[#1]{%
\ottpremise{  \textcolor{\coeffectcolor}{ \textcolor{\coeffectcolor}{\gamma}_{{\mathrm{1}}} }\! \cdot \! \rho   \vdash_{\mathit{coeff} }  \ottnt{M}  \Downarrow   \ottkw{return} _{\textcolor{\coeffectcolor}{ \textcolor{\coeffectcolor}{q}_{{\mathrm{1}}} } }  \ottnt{W}  }%
\ottpremise{   \textcolor{\coeffectcolor}{ \textcolor{\coeffectcolor}{\gamma}_{{\mathrm{2}}} }\! \cdot \! \rho    \mathop{,} \;   \ottmv{x}   \mapsto ^{\textcolor{\coeffectcolor}{  \textcolor{\coeffectcolor}{ \textcolor{\coeffectcolor}{q}_{{\mathrm{1}}}   \cdot   \textcolor{\coeffectcolor}{q}_{{\mathrm{2}}} }  } }  \ottnt{W}    \vdash_{\mathit{coeff} }  \ottnt{N}  \Downarrow  \ottnt{T} }%
}{
  \textcolor{\coeffectcolor}{  \textcolor{\coeffectcolor}{  \textcolor{\coeffectcolor}{ \textcolor{\coeffectcolor}{q}_{{\mathrm{2}}} \cdot \textcolor{\coeffectcolor}{\gamma}_{{\mathrm{1}}} }  \ottsym{+} \textcolor{\coeffectcolor}{\gamma}_{{\mathrm{2}}} }  }\! \cdot \! \rho   \vdash_{\mathit{coeff} }   \ottmv{x}  \leftarrow^{\textcolor{\coeffectcolor}{ \textcolor{\coeffectcolor}{q}_{{\mathrm{2}}} } }  \ottnt{M} \ \ottkw{in}\  \ottnt{N}   \Downarrow  \ottnt{T} }{%
{\ottdrulename{eval\_coeff\_comp\_letin\_ret}}{}%
}}

\newcommand{\ottdruleevalXXcoeffXXcompXXforceXXthunk}[1]{\ottdrule[#1]{%
\ottpremise{  \textcolor{\coeffectcolor}{ \textcolor{\coeffectcolor}{\gamma} }\! \cdot \! \rho   \vdash_{\mathit{coeff} }  \ottnt{V}  \Downarrow   \mathbf{clo}(  \textcolor{\coeffectcolor}{ \textcolor{\coeffectcolor}{\gamma}' }\! \cdot \! \rho'  , \{  \ottnt{M}  \} )  }%
\ottpremise{  \textcolor{\coeffectcolor}{ \textcolor{\coeffectcolor}{\gamma}' }\! \cdot \! \rho'   \vdash_{\mathit{coeff} }  \ottnt{M}  \Downarrow  \ottnt{T} }%
}{
  \textcolor{\coeffectcolor}{ \textcolor{\coeffectcolor}{\gamma} }\! \cdot \! \rho   \vdash_{\mathit{coeff} }  \ottnt{V}  \ottsym{!}  \Downarrow  \ottnt{T} }{%
{\ottdrulename{eval\_coeff\_comp\_force\_thunk}}{}%
}}

\newcommand{\ottdruleevalXXcoeffXXcompXXfst}[1]{\ottdrule[#1]{%
\ottpremise{  \textcolor{\coeffectcolor}{ \textcolor{\coeffectcolor}{\gamma} }\! \cdot \! \rho   \vdash_{\mathit{coeff} }  \ottnt{M}  \Downarrow   \mathbf{clo}(   \textcolor{\coeffectcolor}{ \textcolor{\coeffectcolor}{\gamma}' }\! \cdot \! \rho'  ,   \langle  \ottnt{M_{{\mathrm{1}}}} , \ottnt{M_{{\mathrm{2}}}}  \rangle   )  }%
\ottpremise{  \textcolor{\coeffectcolor}{ \textcolor{\coeffectcolor}{\gamma}' }\! \cdot \! \rho'   \vdash_{\mathit{coeff} }  \ottnt{M_{{\mathrm{1}}}}  \Downarrow  \ottnt{T} }%
}{
  \textcolor{\coeffectcolor}{ \textcolor{\coeffectcolor}{\gamma} }\! \cdot \! \rho   \vdash_{\mathit{coeff} }   \ottnt{M}  . 1   \Downarrow  \ottnt{T} }{%
{\ottdrulename{eval\_coeff\_comp\_fst}}{}%
}}

\newcommand{\ottdruleevalXXcoeffXXcompXXsnd}[1]{\ottdrule[#1]{%
\ottpremise{  \textcolor{\coeffectcolor}{ \textcolor{\coeffectcolor}{\gamma} }\! \cdot \! \rho   \vdash_{\mathit{coeff} }  \ottnt{M}  \Downarrow   \mathbf{clo}(   \textcolor{\coeffectcolor}{ \textcolor{\coeffectcolor}{\gamma}' }\! \cdot \! \rho'  ,   \langle  \ottnt{M_{{\mathrm{1}}}} , \ottnt{M_{{\mathrm{2}}}}  \rangle   )  }%
\ottpremise{  \textcolor{\coeffectcolor}{ \textcolor{\coeffectcolor}{\gamma}' }\! \cdot \! \rho'   \vdash_{\mathit{coeff} }  \ottnt{M_{{\mathrm{2}}}}  \Downarrow  \ottnt{T} }%
}{
  \textcolor{\coeffectcolor}{ \textcolor{\coeffectcolor}{\gamma} }\! \cdot \! \rho   \vdash_{\mathit{coeff} }   \ottnt{M}  . 2   \Downarrow  \ottnt{T} }{%
{\ottdrulename{eval\_coeff\_comp\_snd}}{}%
}}

\newcommand{\ottdruleevalXXcoeffXXcompXXsequence}[1]{\ottdrule[#1]{%
\ottpremise{  \textcolor{\coeffectcolor}{ \textcolor{\coeffectcolor}{\gamma}_{{\mathrm{1}}} }\! \cdot \! \rho   \vdash_{\mathit{coeff} }  \ottnt{V}  \Downarrow  \ottsym{()} }%
\ottpremise{  \textcolor{\coeffectcolor}{ \textcolor{\coeffectcolor}{\gamma}_{{\mathrm{2}}} }\! \cdot \! \rho   \vdash_{\mathit{coeff} }  \ottnt{N}  \Downarrow  \ottnt{T} }%
\ottpremise{ \textcolor{\coeffectcolor}{ \textcolor{\coeffectcolor}{\gamma} } \equiv \textcolor{\coeffectcolor}{  \textcolor{\coeffectcolor}{ \textcolor{\coeffectcolor}{\gamma}_{{\mathrm{1}}} \ottsym{+} \textcolor{\coeffectcolor}{\gamma}_{{\mathrm{2}}} }  } }%
}{
  \textcolor{\coeffectcolor}{ \textcolor{\coeffectcolor}{\gamma} }\! \cdot \! \rho   \vdash_{\mathit{coeff} }   \ottnt{V}  ;  \ottnt{N}   \Downarrow  \ottnt{T} }{%
{\ottdrulename{eval\_coeff\_comp\_sequence}}{}%
}}

\newcommand{\ottdruleevalXXcoeffXXcompXXcaseXXinl}[1]{\ottdrule[#1]{%
\ottpremise{  \textcolor{\coeffectcolor}{ \textcolor{\coeffectcolor}{\gamma}_{{\mathrm{1}}} }\! \cdot \! \rho   \vdash_{\mathit{coeff} }  \ottnt{V}  \Downarrow  \ottkw{inl} \, \ottnt{W} }%
\ottpremise{   \textcolor{\coeffectcolor}{ \textcolor{\coeffectcolor}{\gamma}_{{\mathrm{2}}} }\! \cdot \! \rho    \mathop{,} \;   \ottmv{x_{{\mathrm{1}}}}   \mapsto ^{\textcolor{\coeffectcolor}{ \textcolor{\coeffectcolor}{q} } }  \ottnt{W}    \vdash_{\mathit{coeff} }  \ottnt{M_{{\mathrm{1}}}}  \Downarrow  \ottnt{T} }%
\ottpremise{ \textcolor{\coeffectcolor}{ \textcolor{\coeffectcolor}{\gamma} } \equiv \textcolor{\coeffectcolor}{  \textcolor{\coeffectcolor}{  \textcolor{\coeffectcolor}{ \textcolor{\coeffectcolor}{q} \cdot \textcolor{\coeffectcolor}{\gamma}_{{\mathrm{1}}} }  \ottsym{+} \textcolor{\coeffectcolor}{\gamma}_{{\mathrm{2}}} }  } }%
\ottpremise{ \textcolor{\coeffectcolor}{ \textcolor{\coeffectcolor}{q} }\;  \textcolor{\coeffectcolor}{\mathop{\leq_{\mathit{co} } } } \; \textcolor{\coeffectcolor}{  \textcolor{\coeffectcolor}{1}  } }%
}{
  \textcolor{\coeffectcolor}{ \textcolor{\coeffectcolor}{\gamma} }\! \cdot \! \rho   \vdash_{\mathit{coeff} }   \ottkw{case}_{\textcolor{\coeffectcolor}{ \textcolor{\coeffectcolor}{q} } }\;  \ottnt{V} \; \ottkw{of}\; \ottkw{inl} \; \ottmv{x_{{\mathrm{1}}}}  \rightarrow\;  \ottnt{M_{{\mathrm{1}}}}  ;  \ottkw{inr} \; \ottmv{x_{{\mathrm{2}}}}  \rightarrow\;  \ottnt{M_{{\mathrm{2}}}}   \Downarrow  \ottnt{T} }{%
{\ottdrulename{eval\_coeff\_comp\_case\_inl}}{}%
}}

\newcommand{\ottdruleevalXXcoeffXXcompXXcaseXXinr}[1]{\ottdrule[#1]{%
\ottpremise{  \textcolor{\coeffectcolor}{ \textcolor{\coeffectcolor}{\gamma}_{{\mathrm{1}}} }\! \cdot \! \rho   \vdash_{\mathit{coeff} }  \ottnt{V}  \Downarrow  \ottkw{inr} \, \ottnt{W} }%
\ottpremise{   \textcolor{\coeffectcolor}{ \textcolor{\coeffectcolor}{\gamma}_{{\mathrm{2}}} }\! \cdot \! \rho    \mathop{,} \;   \ottmv{x_{{\mathrm{2}}}}   \mapsto ^{\textcolor{\coeffectcolor}{ \textcolor{\coeffectcolor}{q} } }  \ottnt{W}    \vdash_{\mathit{coeff} }  \ottnt{M_{{\mathrm{2}}}}  \Downarrow  \ottnt{T} }%
\ottpremise{ \textcolor{\coeffectcolor}{ \textcolor{\coeffectcolor}{\gamma} } \equiv \textcolor{\coeffectcolor}{  \textcolor{\coeffectcolor}{  \textcolor{\coeffectcolor}{ \textcolor{\coeffectcolor}{q} \cdot \textcolor{\coeffectcolor}{\gamma}_{{\mathrm{1}}} }  \ottsym{+} \textcolor{\coeffectcolor}{\gamma}_{{\mathrm{2}}} }  } }%
\ottpremise{ \textcolor{\coeffectcolor}{ \textcolor{\coeffectcolor}{q} }\;  \textcolor{\coeffectcolor}{\mathop{\leq_{\mathit{co} } } } \; \textcolor{\coeffectcolor}{  \textcolor{\coeffectcolor}{1}  } }%
}{
  \textcolor{\coeffectcolor}{ \textcolor{\coeffectcolor}{\gamma} }\! \cdot \! \rho   \vdash_{\mathit{coeff} }   \ottkw{case}_{\textcolor{\coeffectcolor}{ \textcolor{\coeffectcolor}{q} } }\;  \ottnt{V} \; \ottkw{of}\; \ottkw{inl} \; \ottmv{x_{{\mathrm{1}}}}  \rightarrow\;  \ottnt{M_{{\mathrm{1}}}}  ;  \ottkw{inr} \; \ottmv{x_{{\mathrm{2}}}}  \rightarrow\;  \ottnt{M_{{\mathrm{2}}}}   \Downarrow  \ottnt{T} }{%
{\ottdrulename{eval\_coeff\_comp\_case\_inr}}{}%
}}

\newcommand{\ottdruleevalXXcoeffXXcompXXctensor}[1]{\ottdrule[#1]{%
\ottpremise{ \textcolor{\coeffectcolor}{ \textcolor{\coeffectcolor}{\gamma} } \equiv \textcolor{\coeffectcolor}{  \textcolor{\coeffectcolor}{ \textcolor{\coeffectcolor}{\gamma}_{{\mathrm{1}}} \ottsym{+} \textcolor{\coeffectcolor}{\gamma}_{{\mathrm{2}}} }  } }%
}{
  \textcolor{\coeffectcolor}{ \textcolor{\coeffectcolor}{\gamma} }\! \cdot \! \rho   \vdash_{\mathit{coeff} }  \ottsym{(}  \ottnt{M_{{\mathrm{1}}}}  \ottsym{,}  \ottnt{M_{{\mathrm{2}}}}  \ottsym{)}  \Downarrow  \ottsym{(}   \mathbf{clo}(  \textcolor{\coeffectcolor}{ \textcolor{\coeffectcolor}{\gamma}_{{\mathrm{1}}} }\! \cdot \! \rho  , \{  \ottnt{M_{{\mathrm{1}}}}  \} )   \ottsym{,}   \mathbf{clo}(  \textcolor{\coeffectcolor}{ \textcolor{\coeffectcolor}{\gamma}_{{\mathrm{2}}} }\! \cdot \! \rho  , \{  \ottnt{M_{{\mathrm{2}}}}  \} )   \ottsym{)} }{%
{\ottdrulename{eval\_coeff\_comp\_ctensor}}{}%
}}

\newcommand{\ottdruleevalXXcoeffXXcompXXcsplit}[1]{\ottdrule[#1]{%
\ottpremise{  \textcolor{\coeffectcolor}{ \textcolor{\coeffectcolor}{\gamma}_{{\mathrm{1}}} }\! \cdot \! \rho   \vdash_{\mathit{coeff} }  \ottsym{(}  \ottnt{M_{{\mathrm{1}}}}  \ottsym{,}  \ottnt{M_{{\mathrm{2}}}}  \ottsym{)}  \Downarrow  \ottsym{(}  \ottnt{W_{{\mathrm{1}}}}  \ottsym{,}  \ottnt{W_{{\mathrm{2}}}}  \ottsym{)} }%
\ottpremise{     \textcolor{\coeffectcolor}{ \textcolor{\coeffectcolor}{\gamma}_{{\mathrm{2}}} }\! \cdot \! \rho    \mathop{,} \;   \ottmv{x_{{\mathrm{1}}}}   \mapsto ^{\textcolor{\coeffectcolor}{ \textcolor{\coeffectcolor}{q} } }  \ottnt{W_{{\mathrm{1}}}}      \mathop{,} \;   \ottmv{x_{{\mathrm{2}}}}   \mapsto ^{\textcolor{\coeffectcolor}{ \textcolor{\coeffectcolor}{q} } }  \ottnt{W_{{\mathrm{2}}}}    \vdash_{\mathit{coeff} }  \ottnt{N}  \Downarrow  \ottnt{T} }%
\ottpremise{ \textcolor{\coeffectcolor}{ \textcolor{\coeffectcolor}{\gamma} } \equiv \textcolor{\coeffectcolor}{  \textcolor{\coeffectcolor}{  \textcolor{\coeffectcolor}{ \textcolor{\coeffectcolor}{q} \cdot \textcolor{\coeffectcolor}{\gamma}_{{\mathrm{1}}} }  \ottsym{+} \textcolor{\coeffectcolor}{\gamma}_{{\mathrm{2}}} }  } }%
\ottpremise{ \textcolor{\coeffectcolor}{ \textcolor{\coeffectcolor}{q} }\;  \textcolor{\coeffectcolor}{\mathop{\leq_{\mathit{co} } } } \; \textcolor{\coeffectcolor}{  \textcolor{\coeffectcolor}{1}  } }%
}{
  \textcolor{\coeffectcolor}{ \textcolor{\coeffectcolor}{\gamma} }\! \cdot \! \rho   \vdash_{\mathit{coeff} }   \ottkw{case}_{\textcolor{\coeffectcolor}{ \textcolor{\coeffectcolor}{q} } }\;  \ottnt{M} \; \ottkw{of}\;( \ottmv{x_{{\mathrm{1}}}} , \ottmv{x_{{\mathrm{2}}}} )\; \rightarrow\;  \ottnt{N}   \Downarrow  \ottnt{T} }{%
{\ottdrulename{eval\_coeff\_comp\_csplit}}{}%
}}

\newcommand{\ottdruleevalXXcoeffXXcompXXcsub}[1]{\ottdrule[#1]{%
\ottpremise{  \textcolor{\coeffectcolor}{ \textcolor{\coeffectcolor}{\gamma}' }\! \cdot \! \rho   \vdash_{\mathit{coeff} }  \ottnt{M}  \Downarrow  \ottnt{T} }%
\ottpremise{ \textcolor{\coeffectcolor}{ \textcolor{\coeffectcolor}{\gamma} }\; \textcolor{\coeffectcolor}{\mathop{\leq_{\mathit{co} } } } \; \textcolor{\coeffectcolor}{ \textcolor{\coeffectcolor}{\gamma}' } }%
}{
  \textcolor{\coeffectcolor}{ \textcolor{\coeffectcolor}{\gamma} }\! \cdot \! \rho   \vdash_{\mathit{coeff} }  \ottnt{M}  \Downarrow  \ottnt{T} }{%
{\ottdrulename{eval\_coeff\_comp\_csub}}{}%
}}

\newcommand{\ottdefnevalXXcoeffXXcomp}[1]{\begin{ottdefnblock}[#1]{$ \mu  \vdash_{\mathit{coeff} }  \ottnt{M}  \Downarrow  \ottnt{T} $}{\ottcom{environment-based coeffect semantics for CBPV (large-step)}}
\ottusedrule{\ottdruleevalXXcoeffXXcompXXabs{}}
\ottusedrule{\ottdruleevalXXcoeffXXcompXXcpair{}}
\ottusedrule{\ottdruleevalXXcoeffXXcompXXcunit{}}
\ottusedrule{\ottdruleevalXXcoeffXXcompXXappXXabs{}}
\ottusedrule{\ottdruleevalXXcoeffXXcompXXsplit{}}
\ottusedrule{\ottdruleevalXXcoeffXXcompXXreturn{}}
\ottusedrule{\ottdruleevalXXcoeffXXcompXXletinXXret{}}
\ottusedrule{\ottdruleevalXXcoeffXXcompXXforceXXthunk{}}
\ottusedrule{\ottdruleevalXXcoeffXXcompXXfst{}}
\ottusedrule{\ottdruleevalXXcoeffXXcompXXsnd{}}
\ottusedrule{\ottdruleevalXXcoeffXXcompXXsequence{}}
\ottusedrule{\ottdruleevalXXcoeffXXcompXXcaseXXinl{}}
\ottusedrule{\ottdruleevalXXcoeffXXcompXXcaseXXinr{}}
\ottusedrule{\ottdruleevalXXcoeffXXcompXXctensor{}}
\ottusedrule{\ottdruleevalXXcoeffXXcompXXcsplit{}}
\ottusedrule{\ottdruleevalXXcoeffXXcompXXcsub{}}
\end{ottdefnblock}}

\newcommand{\ottdruleevalXXlinXXvalXXvar}[1]{\ottdrule[#1]{%
}{
     \textcolor{\coeffectcolor}{  \textcolor{\coeffectcolor}{\overline{0} }  }\! \cdot \! \rho_{{\mathrm{1}}}    \mathop{,} \;   \ottmv{x}   \mapsto ^{\textcolor{\coeffectcolor}{  \textcolor{\coeffectcolor}{1}  } }  \ottnt{W}      \mathop{,} \;   \textcolor{\coeffectcolor}{  \textcolor{\coeffectcolor}{\overline{0} }  }\! \cdot \! \rho_{{\mathrm{2}}}    \vdash_{\mathit{lin} }  \ottmv{x}  \Downarrow  \ottnt{W} }{%
{\ottdrulename{eval\_lin\_val\_var}}{}%
}}

\newcommand{\ottdruleevalXXlinXXvalXXunit}[1]{\ottdrule[#1]{%
}{
  \textcolor{\coeffectcolor}{  \textcolor{\coeffectcolor}{\overline{0} }  }\! \cdot \! \rho   \vdash_{\mathit{lin} }  \ottsym{()}  \Downarrow  \ottsym{()} }{%
{\ottdrulename{eval\_lin\_val\_unit}}{}%
}}

\newcommand{\ottdruleevalXXlinXXvalXXthunk}[1]{\ottdrule[#1]{%
}{
  \textcolor{\coeffectcolor}{ \textcolor{\coeffectcolor}{\gamma} }\! \cdot \! \rho   \vdash_{\mathit{lin} }  \ottsym{\{}  \ottnt{M}  \ottsym{\}}  \Downarrow   \mathbf{clo}(  \textcolor{\coeffectcolor}{ \textcolor{\coeffectcolor}{\gamma} }\! \cdot \! \rho  , \{  \ottnt{M}  \} )  }{%
{\ottdrulename{eval\_lin\_val\_thunk}}{}%
}}

\newcommand{\ottdruleevalXXlinXXvalXXvpair}[1]{\ottdrule[#1]{%
\ottpremise{  \textcolor{\coeffectcolor}{ \textcolor{\coeffectcolor}{\gamma}_{{\mathrm{1}}} }\! \cdot \! \rho   \vdash_{\mathit{lin} }  \ottnt{V_{{\mathrm{1}}}}  \Downarrow  \ottnt{W_{{\mathrm{1}}}} }%
\ottpremise{  \textcolor{\coeffectcolor}{ \textcolor{\coeffectcolor}{\gamma}_{{\mathrm{2}}} }\! \cdot \! \rho   \vdash_{\mathit{lin} }  \ottnt{V_{{\mathrm{2}}}}  \Downarrow  \ottnt{W_{{\mathrm{2}}}} }%
\ottpremise{ \textcolor{\coeffectcolor}{ \textcolor{\coeffectcolor}{\gamma} } \equiv \textcolor{\coeffectcolor}{  \textcolor{\coeffectcolor}{ \textcolor{\coeffectcolor}{\gamma}_{{\mathrm{1}}} \ottsym{+} \textcolor{\coeffectcolor}{\gamma}_{{\mathrm{2}}} }  } }%
}{
  \textcolor{\coeffectcolor}{ \textcolor{\coeffectcolor}{\gamma} }\! \cdot \! \rho   \vdash_{\mathit{lin} }  \ottsym{(}  \ottnt{V_{{\mathrm{1}}}}  \ottsym{,}  \ottnt{V_{{\mathrm{2}}}}  \ottsym{)}  \Downarrow  \ottsym{(}  \ottnt{W_{{\mathrm{1}}}}  \ottsym{,}  \ottnt{W_{{\mathrm{2}}}}  \ottsym{)} }{%
{\ottdrulename{eval\_lin\_val\_vpair}}{}%
}}

\newcommand{\ottdruleevalXXlinXXvalXXinl}[1]{\ottdrule[#1]{%
\ottpremise{  \textcolor{\coeffectcolor}{ \textcolor{\coeffectcolor}{\gamma} }\! \cdot \! \rho   \vdash_{\mathit{lin} }  \ottnt{V}  \Downarrow  \ottnt{W} }%
}{
  \textcolor{\coeffectcolor}{ \textcolor{\coeffectcolor}{\gamma} }\! \cdot \! \rho   \vdash_{\mathit{lin} }  \ottkw{inl} \, \ottnt{V}  \Downarrow  \ottkw{inl} \, \ottnt{W} }{%
{\ottdrulename{eval\_lin\_val\_inl}}{}%
}}

\newcommand{\ottdruleevalXXlinXXvalXXinr}[1]{\ottdrule[#1]{%
\ottpremise{  \textcolor{\coeffectcolor}{ \textcolor{\coeffectcolor}{\gamma} }\! \cdot \! \rho   \vdash_{\mathit{lin} }  \ottnt{V}  \Downarrow  \ottnt{W} }%
}{
  \textcolor{\coeffectcolor}{ \textcolor{\coeffectcolor}{\gamma} }\! \cdot \! \rho   \vdash_{\mathit{lin} }  \ottkw{inr} \, \ottnt{V}  \Downarrow  \ottkw{inr} \, \ottnt{W} }{%
{\ottdrulename{eval\_lin\_val\_inr}}{}%
}}

\newcommand{\ottdruleevalXXlinXXvalXXvwith}[1]{\ottdrule[#1]{%
}{
  \textcolor{\coeffectcolor}{ \textcolor{\coeffectcolor}{\gamma} }\! \cdot \! \rho   \vdash_{\mathit{lin} }   \langle  \ottnt{V_{{\mathrm{1}}}}  ,  \ottnt{V_{{\mathrm{2}}}}  \rangle   \Downarrow   \mathbf{clo}(  \textcolor{\coeffectcolor}{ \textcolor{\coeffectcolor}{\gamma} }\! \cdot \! \rho  , \langle  \ottnt{V_{{\mathrm{1}}}}  ,  \ottnt{V_{{\mathrm{2}}}}  \rangle )  }{%
{\ottdrulename{eval\_lin\_val\_vwith}}{}%
}}

\newcommand{\ottdruleevalXXlinXXvalXXvfst}[1]{\ottdrule[#1]{%
\ottpremise{  \textcolor{\coeffectcolor}{ \textcolor{\coeffectcolor}{\gamma} }\! \cdot \! \rho   \vdash_{\mathit{lin} }  \ottnt{V}  \Downarrow   \mathbf{clo}(  \textcolor{\coeffectcolor}{ \textcolor{\coeffectcolor}{\gamma}' }\! \cdot \! \rho'  , \langle  \ottnt{V_{{\mathrm{1}}}}  ,  \ottnt{V_{{\mathrm{2}}}}  \rangle )  }%
\ottpremise{  \textcolor{\coeffectcolor}{ \textcolor{\coeffectcolor}{\gamma}' }\! \cdot \! \rho'   \vdash_{\mathit{lin} }  \ottnt{V_{{\mathrm{1}}}}  \Downarrow  \ottnt{W} }%
}{
  \textcolor{\coeffectcolor}{ \textcolor{\coeffectcolor}{\gamma} }\! \cdot \! \rho   \vdash_{\mathit{lin} }  \ottnt{V}  \ottsym{.}  \ottsym{1}  \Downarrow  \ottnt{W} }{%
{\ottdrulename{eval\_lin\_val\_vfst}}{}%
}}

\newcommand{\ottdruleevalXXlinXXvalXXvsnd}[1]{\ottdrule[#1]{%
\ottpremise{  \textcolor{\coeffectcolor}{ \textcolor{\coeffectcolor}{\gamma} }\! \cdot \! \rho   \vdash_{\mathit{lin} }  \ottnt{V}  \Downarrow   \mathbf{clo}(  \textcolor{\coeffectcolor}{ \textcolor{\coeffectcolor}{\gamma}' }\! \cdot \! \rho'  , \langle  \ottnt{V_{{\mathrm{1}}}}  ,  \ottnt{V_{{\mathrm{2}}}}  \rangle )  }%
\ottpremise{  \textcolor{\coeffectcolor}{ \textcolor{\coeffectcolor}{\gamma}' }\! \cdot \! \rho'   \vdash_{\mathit{lin} }  \ottnt{V_{{\mathrm{2}}}}  \Downarrow  \ottnt{W} }%
}{
  \textcolor{\coeffectcolor}{ \textcolor{\coeffectcolor}{\gamma} }\! \cdot \! \rho   \vdash_{\mathit{lin} }  \ottnt{V}  \ottsym{.}  \ottsym{2}  \Downarrow  \ottnt{W} }{%
{\ottdrulename{eval\_lin\_val\_vsnd}}{}%
}}

\newcommand{\ottdruleevalXXlinXXvalXXsub}[1]{\ottdrule[#1]{%
\ottpremise{  \textcolor{\coeffectcolor}{ \textcolor{\coeffectcolor}{\gamma}' }\! \cdot \! \rho   \vdash_{\mathit{lin} }  \ottnt{V}  \Downarrow  \ottnt{W} }%
\ottpremise{ \textcolor{\coeffectcolor}{ \textcolor{\coeffectcolor}{\gamma} }\; \textcolor{\coeffectcolor}{\mathop{\leq_{\mathit{co} } } } \; \textcolor{\coeffectcolor}{ \textcolor{\coeffectcolor}{\gamma}' } }%
}{
  \textcolor{\coeffectcolor}{ \textcolor{\coeffectcolor}{\gamma} }\! \cdot \! \rho   \vdash_{\mathit{lin} }  \ottnt{V}  \Downarrow  \ottnt{W} }{%
{\ottdrulename{eval\_lin\_val\_sub}}{}%
}}

\newcommand{\ottdefnevalXXlinXXval}[1]{\begin{ottdefnblock}[#1]{$ \mu  \vdash_{\mathit{lin} }  \ottnt{V}  \Downarrow  \ottnt{W} $}{\ottcom{environment-based resource counting semantics for CBPV (large-step)}}
\ottusedrule{\ottdruleevalXXlinXXvalXXvar{}}
\ottusedrule{\ottdruleevalXXlinXXvalXXunit{}}
\ottusedrule{\ottdruleevalXXlinXXvalXXthunk{}}
\ottusedrule{\ottdruleevalXXlinXXvalXXvpair{}}
\ottusedrule{\ottdruleevalXXlinXXvalXXinl{}}
\ottusedrule{\ottdruleevalXXlinXXvalXXinr{}}
\ottusedrule{\ottdruleevalXXlinXXvalXXvwith{}}
\ottusedrule{\ottdruleevalXXlinXXvalXXvfst{}}
\ottusedrule{\ottdruleevalXXlinXXvalXXvsnd{}}
\ottusedrule{\ottdruleevalXXlinXXvalXXsub{}}
\end{ottdefnblock}}

\newcommand{\ottdruleevalXXlinXXcompXXabs}[1]{\ottdrule[#1]{%
\ottpremise{ \textcolor{\coeffectcolor}{ \textcolor{\coeffectcolor}{q}' }\;  \textcolor{\coeffectcolor}{\mathop{\leq_{\mathit{co} } } } \; \textcolor{\coeffectcolor}{ \textcolor{\coeffectcolor}{q} } }%
}{
  \textcolor{\coeffectcolor}{ \textcolor{\coeffectcolor}{\gamma} }\! \cdot \! \rho   \vdash_{\mathit{lin} }   \lambda  \ottmv{x} ^{\textcolor{\coeffectcolor}{ \textcolor{\coeffectcolor}{q} } }. \ottnt{M}   \Downarrow   \mathbf{clo}(   \textcolor{\coeffectcolor}{ \textcolor{\coeffectcolor}{\gamma} }\! \cdot \! \rho  ,   \lambda  \ottmv{x} ^{\textcolor{\coeffectcolor}{ \textcolor{\coeffectcolor}{q}' } }. \ottnt{M}   )  }{%
{\ottdrulename{eval\_lin\_comp\_abs}}{}%
}}

\newcommand{\ottdruleevalXXlinXXcompXXcpair}[1]{\ottdrule[#1]{%
}{
  \textcolor{\coeffectcolor}{ \textcolor{\coeffectcolor}{\gamma} }\! \cdot \! \rho   \vdash_{\mathit{lin} }   \langle  \ottnt{M_{{\mathrm{1}}}} , \ottnt{M_{{\mathrm{2}}}}  \rangle   \Downarrow   \mathbf{clo}(   \textcolor{\coeffectcolor}{ \textcolor{\coeffectcolor}{\gamma} }\! \cdot \! \rho  ,   \langle  \ottnt{M_{{\mathrm{1}}}} , \ottnt{M_{{\mathrm{2}}}}  \rangle   )  }{%
{\ottdrulename{eval\_lin\_comp\_cpair}}{}%
}}

\newcommand{\ottdruleevalXXlinXXcompXXcunit}[1]{\ottdrule[#1]{%
}{
  \textcolor{\coeffectcolor}{  \textcolor{\coeffectcolor}{\overline{0} }  }\! \cdot \! \rho   \vdash_{\mathit{lin} }   \langle\rangle   \Downarrow  \ottsym{<>} }{%
{\ottdrulename{eval\_lin\_comp\_cunit}}{}%
}}

\newcommand{\ottdruleevalXXlinXXcompXXappXXabs}[1]{\ottdrule[#1]{%
\ottpremise{  \textcolor{\coeffectcolor}{ \textcolor{\coeffectcolor}{\gamma}_{{\mathrm{1}}} }\! \cdot \! \rho   \vdash_{\mathit{lin} }  \ottnt{M}  \Downarrow   \mathbf{clo}(   \textcolor{\coeffectcolor}{ \textcolor{\coeffectcolor}{\gamma}' }\! \cdot \! \rho'  ,   \lambda  \ottmv{x} ^{\textcolor{\coeffectcolor}{ \textcolor{\coeffectcolor}{q} } }. \ottnt{M'}   )  }%
\ottpremise{  \textcolor{\coeffectcolor}{ \textcolor{\coeffectcolor}{\gamma}_{{\mathrm{2}}} }\! \cdot \! \rho   \vdash_{\mathit{lin} }  \ottnt{V}  \Downarrow  \ottnt{W} }%
\ottpremise{  \ottsym{(}   \textcolor{\coeffectcolor}{ \textcolor{\coeffectcolor}{\gamma}' }\! \cdot \! \rho'   \ottsym{)}   \mathop{,} \;  \ottsym{(}   \ottmv{x}   \mapsto ^{\textcolor{\coeffectcolor}{ \textcolor{\coeffectcolor}{q} } }  \ottnt{W}   \ottsym{)}   \vdash_{\mathit{lin} }  \ottnt{M'}  \Downarrow  \ottnt{T} }%
\ottpremise{ \textcolor{\coeffectcolor}{ \textcolor{\coeffectcolor}{\gamma} } \equiv \textcolor{\coeffectcolor}{  \textcolor{\coeffectcolor}{ \textcolor{\coeffectcolor}{\gamma}_{{\mathrm{1}}} \ottsym{+}  \textcolor{\coeffectcolor}{ \textcolor{\coeffectcolor}{q} \cdot \textcolor{\coeffectcolor}{\gamma}_{{\mathrm{2}}} }  }  } }%
\ottpremise{ \textcolor{\coeffectcolor}{ \textcolor{\coeffectcolor}{q} }  \neq  \textcolor{\coeffectcolor}{  \textcolor{\coeffectcolor}{0}  } }%
}{
  \textcolor{\coeffectcolor}{ \textcolor{\coeffectcolor}{\gamma} }\! \cdot \! \rho   \vdash_{\mathit{lin} }  \ottnt{M} \, \ottnt{V}  \Downarrow  \ottnt{T} }{%
{\ottdrulename{eval\_lin\_comp\_app\_abs}}{}%
}}

\newcommand{\ottdruleevalXXlinXXcompXXforceXXthunk}[1]{\ottdrule[#1]{%
\ottpremise{  \textcolor{\coeffectcolor}{ \textcolor{\coeffectcolor}{\gamma} }\! \cdot \! \rho   \vdash_{\mathit{lin} }  \ottnt{V}  \Downarrow   \mathbf{clo}(  \textcolor{\coeffectcolor}{ \textcolor{\coeffectcolor}{\gamma}' }\! \cdot \! \rho'  , \{  \ottnt{M}  \} )  }%
\ottpremise{  \textcolor{\coeffectcolor}{ \textcolor{\coeffectcolor}{\gamma}' }\! \cdot \! \rho'   \vdash_{\mathit{lin} }  \ottnt{M}  \Downarrow  \ottnt{T} }%
}{
  \textcolor{\coeffectcolor}{ \textcolor{\coeffectcolor}{\gamma} }\! \cdot \! \rho   \vdash_{\mathit{lin} }  \ottnt{V}  \ottsym{!}  \Downarrow  \ottnt{T} }{%
{\ottdrulename{eval\_lin\_comp\_force\_thunk}}{}%
}}

\newcommand{\ottdruleevalXXlinXXcompXXreturn}[1]{\ottdrule[#1]{%
\ottpremise{  \textcolor{\coeffectcolor}{ \textcolor{\coeffectcolor}{\gamma}' }\! \cdot \! \rho   \vdash_{\mathit{lin} }  \ottnt{V}  \Downarrow  \ottnt{W} }%
\ottpremise{ \textcolor{\coeffectcolor}{ \textcolor{\coeffectcolor}{\gamma} } \equiv \textcolor{\coeffectcolor}{  \textcolor{\coeffectcolor}{ \textcolor{\coeffectcolor}{q} \cdot \textcolor{\coeffectcolor}{\gamma}' }  } }%
\ottpremise{ \textcolor{\coeffectcolor}{ \textcolor{\coeffectcolor}{q} }  \neq  \textcolor{\coeffectcolor}{  \textcolor{\coeffectcolor}{0}  } }%
}{
  \textcolor{\coeffectcolor}{ \textcolor{\coeffectcolor}{\gamma} }\! \cdot \! \rho   \vdash_{\mathit{lin} }   \ottkw{return} _{\textcolor{\coeffectcolor}{ \textcolor{\coeffectcolor}{q} } }\;  \ottnt{V}   \Downarrow   \ottkw{return} _{\textcolor{\coeffectcolor}{ \textcolor{\coeffectcolor}{q} } }  \ottnt{W}  }{%
{\ottdrulename{eval\_lin\_comp\_return}}{}%
}}

\newcommand{\ottdruleevalXXlinXXcompXXletinXXret}[1]{\ottdrule[#1]{%
\ottpremise{ \textcolor{\coeffectcolor}{ \textcolor{\coeffectcolor}{q}' }\; = \; \textcolor{\coeffectcolor}{  \textcolor{\coeffectcolor}{q}_{{\mathrm{2}}} \ \|\ \textcolor{\coeffectcolor}{1}  } }%
\ottpremise{  \textcolor{\coeffectcolor}{ \textcolor{\coeffectcolor}{\gamma}_{{\mathrm{1}}} }\! \cdot \! \rho   \vdash_{\mathit{lin} }  \ottnt{M}  \Downarrow   \ottkw{return} _{\textcolor{\coeffectcolor}{ \textcolor{\coeffectcolor}{q}_{{\mathrm{1}}} } }  \ottnt{W}  }%
\ottpremise{   \textcolor{\coeffectcolor}{ \textcolor{\coeffectcolor}{\gamma}_{{\mathrm{2}}} }\! \cdot \! \rho    \mathop{,} \;   \ottmv{x}   \mapsto ^{\textcolor{\coeffectcolor}{  \textcolor{\coeffectcolor}{ \textcolor{\coeffectcolor}{q}_{{\mathrm{1}}}   \cdot   \textcolor{\coeffectcolor}{q}' }  } }  \ottnt{W}    \vdash_{\mathit{lin} }  \ottnt{N}  \Downarrow  \ottnt{T} }%
}{
  \textcolor{\coeffectcolor}{  \textcolor{\coeffectcolor}{  \textcolor{\coeffectcolor}{ \textcolor{\coeffectcolor}{q}' \cdot \textcolor{\coeffectcolor}{\gamma}_{{\mathrm{1}}} }  \ottsym{+} \textcolor{\coeffectcolor}{\gamma}_{{\mathrm{2}}} }  }\! \cdot \! \rho   \vdash_{\mathit{lin} }   \ottmv{x}  \leftarrow^{\textcolor{\coeffectcolor}{ \textcolor{\coeffectcolor}{q}_{{\mathrm{2}}} } }  \ottnt{M} \ \ottkw{in}\  \ottnt{N}   \Downarrow  \ottnt{T} }{%
{\ottdrulename{eval\_lin\_comp\_letin\_ret}}{}%
}}

\newcommand{\ottdruleevalXXlinXXcompXXsplit}[1]{\ottdrule[#1]{%
\ottpremise{  \textcolor{\coeffectcolor}{ \textcolor{\coeffectcolor}{\gamma}_{{\mathrm{1}}} }\! \cdot \! \rho   \vdash_{\mathit{lin} }  \ottnt{V}  \Downarrow  \ottsym{(}  \ottnt{W_{{\mathrm{1}}}}  \ottsym{,}  \ottnt{W_{{\mathrm{2}}}}  \ottsym{)} }%
\ottpremise{     \textcolor{\coeffectcolor}{ \textcolor{\coeffectcolor}{\gamma}_{{\mathrm{2}}} }\! \cdot \! \rho    \mathop{,} \;   \ottmv{x_{{\mathrm{1}}}}   \mapsto ^{\textcolor{\coeffectcolor}{ \textcolor{\coeffectcolor}{q} } }  \ottnt{W_{{\mathrm{1}}}}      \mathop{,} \;   \ottmv{x_{{\mathrm{2}}}}   \mapsto ^{\textcolor{\coeffectcolor}{ \textcolor{\coeffectcolor}{q} } }  \ottnt{W_{{\mathrm{2}}}}    \vdash_{\mathit{lin} }  \ottnt{N}  \Downarrow  \ottnt{T} }%
\ottpremise{ \textcolor{\coeffectcolor}{ \textcolor{\coeffectcolor}{\gamma} } \equiv \textcolor{\coeffectcolor}{  \textcolor{\coeffectcolor}{  \textcolor{\coeffectcolor}{ \textcolor{\coeffectcolor}{q} \cdot \textcolor{\coeffectcolor}{\gamma}_{{\mathrm{1}}} }  \ottsym{+} \textcolor{\coeffectcolor}{\gamma}_{{\mathrm{2}}} }  } }%
\ottpremise{ \textcolor{\coeffectcolor}{ \textcolor{\coeffectcolor}{q} }  \neq  \textcolor{\coeffectcolor}{  \textcolor{\coeffectcolor}{0}  } }%
}{
  \textcolor{\coeffectcolor}{ \textcolor{\coeffectcolor}{\gamma} }\! \cdot \! \rho   \vdash_{\mathit{lin} }   \ottkw{case}_{\textcolor{\coeffectcolor}{ \textcolor{\coeffectcolor}{q} } } \;  \ottnt{V} \; \ottkw{of}\;( \ottmv{x_{{\mathrm{1}}}} , \ottmv{x_{{\mathrm{2}}}} )\; \rightarrow\;  \ottnt{N}   \Downarrow  \ottnt{T} }{%
{\ottdrulename{eval\_lin\_comp\_split}}{}%
}}

\newcommand{\ottdruleevalXXlinXXcompXXfst}[1]{\ottdrule[#1]{%
\ottpremise{  \textcolor{\coeffectcolor}{ \textcolor{\coeffectcolor}{\gamma} }\! \cdot \! \rho   \vdash_{\mathit{lin} }  \ottnt{M}  \Downarrow   \mathbf{clo}(   \textcolor{\coeffectcolor}{ \textcolor{\coeffectcolor}{\gamma}' }\! \cdot \! \rho'  ,   \langle  \ottnt{M_{{\mathrm{1}}}} , \ottnt{M_{{\mathrm{2}}}}  \rangle   )  }%
\ottpremise{  \textcolor{\coeffectcolor}{ \textcolor{\coeffectcolor}{\gamma}' }\! \cdot \! \rho'   \vdash_{\mathit{lin} }  \ottnt{M_{{\mathrm{1}}}}  \Downarrow  \ottnt{T} }%
}{
  \textcolor{\coeffectcolor}{ \textcolor{\coeffectcolor}{\gamma} }\! \cdot \! \rho   \vdash_{\mathit{lin} }   \ottnt{M}  . 1   \Downarrow  \ottnt{T} }{%
{\ottdrulename{eval\_lin\_comp\_fst}}{}%
}}

\newcommand{\ottdruleevalXXlinXXcompXXsnd}[1]{\ottdrule[#1]{%
\ottpremise{  \textcolor{\coeffectcolor}{ \textcolor{\coeffectcolor}{\gamma} }\! \cdot \! \rho   \vdash_{\mathit{lin} }  \ottnt{M}  \Downarrow   \mathbf{clo}(   \textcolor{\coeffectcolor}{ \textcolor{\coeffectcolor}{\gamma}' }\! \cdot \! \rho'  ,   \langle  \ottnt{M_{{\mathrm{1}}}} , \ottnt{M_{{\mathrm{2}}}}  \rangle   )  }%
\ottpremise{  \textcolor{\coeffectcolor}{ \textcolor{\coeffectcolor}{\gamma}' }\! \cdot \! \rho'   \vdash_{\mathit{lin} }  \ottnt{M_{{\mathrm{2}}}}  \Downarrow  \ottnt{T} }%
}{
  \textcolor{\coeffectcolor}{ \textcolor{\coeffectcolor}{\gamma} }\! \cdot \! \rho   \vdash_{\mathit{lin} }   \ottnt{M}  . 2   \Downarrow  \ottnt{T} }{%
{\ottdrulename{eval\_lin\_comp\_snd}}{}%
}}

\newcommand{\ottdruleevalXXlinXXcompXXsequence}[1]{\ottdrule[#1]{%
\ottpremise{  \textcolor{\coeffectcolor}{ \textcolor{\coeffectcolor}{\gamma}_{{\mathrm{1}}} }\! \cdot \! \rho   \vdash_{\mathit{lin} }  \ottnt{V}  \Downarrow  \ottsym{()} }%
\ottpremise{  \textcolor{\coeffectcolor}{ \textcolor{\coeffectcolor}{\gamma}_{{\mathrm{2}}} }\! \cdot \! \rho   \vdash_{\mathit{lin} }  \ottnt{N}  \Downarrow  \ottnt{T} }%
\ottpremise{ \textcolor{\coeffectcolor}{ \textcolor{\coeffectcolor}{\gamma} } \equiv \textcolor{\coeffectcolor}{  \textcolor{\coeffectcolor}{ \textcolor{\coeffectcolor}{\gamma}_{{\mathrm{1}}} \ottsym{+} \textcolor{\coeffectcolor}{\gamma}_{{\mathrm{2}}} }  } }%
}{
  \textcolor{\coeffectcolor}{ \textcolor{\coeffectcolor}{\gamma} }\! \cdot \! \rho   \vdash_{\mathit{lin} }   \ottnt{V}  ;  \ottnt{N}   \Downarrow  \ottnt{T} }{%
{\ottdrulename{eval\_lin\_comp\_sequence}}{}%
}}

\newcommand{\ottdruleevalXXlinXXcompXXcaseXXinl}[1]{\ottdrule[#1]{%
\ottpremise{  \textcolor{\coeffectcolor}{ \textcolor{\coeffectcolor}{\gamma}_{{\mathrm{1}}} }\! \cdot \! \rho   \vdash_{\mathit{lin} }  \ottnt{V}  \Downarrow  \ottkw{inl} \, \ottnt{W} }%
\ottpremise{   \textcolor{\coeffectcolor}{ \textcolor{\coeffectcolor}{\gamma}_{{\mathrm{2}}} }\! \cdot \! \rho    \mathop{,} \;   \ottmv{x_{{\mathrm{1}}}}   \mapsto ^{\textcolor{\coeffectcolor}{ \textcolor{\coeffectcolor}{q} } }  \ottnt{W}    \vdash_{\mathit{lin} }  \ottnt{M_{{\mathrm{1}}}}  \Downarrow  \ottnt{T} }%
\ottpremise{ \textcolor{\coeffectcolor}{ \textcolor{\coeffectcolor}{\gamma} } \equiv \textcolor{\coeffectcolor}{  \textcolor{\coeffectcolor}{  \textcolor{\coeffectcolor}{ \textcolor{\coeffectcolor}{q} \cdot \textcolor{\coeffectcolor}{\gamma}_{{\mathrm{1}}} }  \ottsym{+} \textcolor{\coeffectcolor}{\gamma}_{{\mathrm{2}}} }  } }%
\ottpremise{ \textcolor{\coeffectcolor}{ \textcolor{\coeffectcolor}{q} }\;  \textcolor{\coeffectcolor}{\mathop{\leq_{\mathit{co} } } } \; \textcolor{\coeffectcolor}{  \textcolor{\coeffectcolor}{1}  } }%
}{
  \textcolor{\coeffectcolor}{ \textcolor{\coeffectcolor}{\gamma} }\! \cdot \! \rho   \vdash_{\mathit{lin} }   \ottkw{case}_{\textcolor{\coeffectcolor}{ \textcolor{\coeffectcolor}{q} } }\;  \ottnt{V} \; \ottkw{of}\; \ottkw{inl} \; \ottmv{x_{{\mathrm{1}}}}  \rightarrow\;  \ottnt{M_{{\mathrm{1}}}}  ;  \ottkw{inr} \; \ottmv{x_{{\mathrm{2}}}}  \rightarrow\;  \ottnt{M_{{\mathrm{2}}}}   \Downarrow  \ottnt{T} }{%
{\ottdrulename{eval\_lin\_comp\_case\_inl}}{}%
}}

\newcommand{\ottdruleevalXXlinXXcompXXcaseXXinr}[1]{\ottdrule[#1]{%
\ottpremise{  \textcolor{\coeffectcolor}{ \textcolor{\coeffectcolor}{\gamma}_{{\mathrm{1}}} }\! \cdot \! \rho   \vdash_{\mathit{lin} }  \ottnt{V}  \Downarrow  \ottkw{inr} \, \ottnt{W} }%
\ottpremise{   \textcolor{\coeffectcolor}{ \textcolor{\coeffectcolor}{\gamma}_{{\mathrm{2}}} }\! \cdot \! \rho    \mathop{,} \;   \ottmv{x_{{\mathrm{2}}}}   \mapsto ^{\textcolor{\coeffectcolor}{ \textcolor{\coeffectcolor}{q} } }  \ottnt{W}    \vdash_{\mathit{lin} }  \ottnt{M_{{\mathrm{2}}}}  \Downarrow  \ottnt{T} }%
\ottpremise{ \textcolor{\coeffectcolor}{ \textcolor{\coeffectcolor}{\gamma} } \equiv \textcolor{\coeffectcolor}{  \textcolor{\coeffectcolor}{  \textcolor{\coeffectcolor}{ \textcolor{\coeffectcolor}{q} \cdot \textcolor{\coeffectcolor}{\gamma}_{{\mathrm{1}}} }  \ottsym{+} \textcolor{\coeffectcolor}{\gamma}_{{\mathrm{2}}} }  } }%
\ottpremise{ \textcolor{\coeffectcolor}{ \textcolor{\coeffectcolor}{q} }\;  \textcolor{\coeffectcolor}{\mathop{\leq_{\mathit{co} } } } \; \textcolor{\coeffectcolor}{  \textcolor{\coeffectcolor}{1}  } }%
}{
  \textcolor{\coeffectcolor}{ \textcolor{\coeffectcolor}{\gamma} }\! \cdot \! \rho   \vdash_{\mathit{lin} }   \ottkw{case}_{\textcolor{\coeffectcolor}{ \textcolor{\coeffectcolor}{q} } }\;  \ottnt{V} \; \ottkw{of}\; \ottkw{inl} \; \ottmv{x_{{\mathrm{1}}}}  \rightarrow\;  \ottnt{M_{{\mathrm{1}}}}  ;  \ottkw{inr} \; \ottmv{x_{{\mathrm{2}}}}  \rightarrow\;  \ottnt{M_{{\mathrm{2}}}}   \Downarrow  \ottnt{T} }{%
{\ottdrulename{eval\_lin\_comp\_case\_inr}}{}%
}}

\newcommand{\ottdruleevalXXlinXXcompXXctensor}[1]{\ottdrule[#1]{%
\ottpremise{ \textcolor{\coeffectcolor}{ \textcolor{\coeffectcolor}{\gamma} } \equiv \textcolor{\coeffectcolor}{  \textcolor{\coeffectcolor}{ \textcolor{\coeffectcolor}{\gamma}_{{\mathrm{1}}} \ottsym{+} \textcolor{\coeffectcolor}{\gamma}_{{\mathrm{2}}} }  } }%
}{
  \textcolor{\coeffectcolor}{ \textcolor{\coeffectcolor}{\gamma} }\! \cdot \! \rho   \vdash_{\mathit{lin} }  \ottsym{(}  \ottnt{M_{{\mathrm{1}}}}  \ottsym{,}  \ottnt{M_{{\mathrm{2}}}}  \ottsym{)}  \Downarrow  \ottsym{(}   \mathbf{clo}(  \textcolor{\coeffectcolor}{ \textcolor{\coeffectcolor}{\gamma}_{{\mathrm{1}}} }\! \cdot \! \rho  ,  \ottnt{M_{{\mathrm{1}}}} )   \ottsym{,}   \mathbf{clo}(  \textcolor{\coeffectcolor}{ \textcolor{\coeffectcolor}{\gamma}_{{\mathrm{2}}} }\! \cdot \! \rho  ,  \ottnt{M_{{\mathrm{2}}}} )   \ottsym{)} }{%
{\ottdrulename{eval\_lin\_comp\_ctensor}}{}%
}}

\newcommand{\ottdruleevalXXlinXXcompXXcsplit}[1]{\ottdrule[#1]{%
\ottpremise{  \textcolor{\coeffectcolor}{ \textcolor{\coeffectcolor}{\gamma}_{{\mathrm{1}}} }\! \cdot \! \rho   \vdash_{\mathit{lin} }  \ottsym{(}  \ottnt{M_{{\mathrm{1}}}}  \ottsym{,}  \ottnt{M_{{\mathrm{2}}}}  \ottsym{)}  \Downarrow  \ottsym{(}  \ottnt{W_{{\mathrm{1}}}}  \ottsym{,}  \ottnt{W_{{\mathrm{2}}}}  \ottsym{)} }%
\ottpremise{     \textcolor{\coeffectcolor}{ \textcolor{\coeffectcolor}{\gamma}_{{\mathrm{2}}} }\! \cdot \! \rho    \mathop{,} \;   \ottmv{x_{{\mathrm{1}}}}   \mapsto ^{\textcolor{\coeffectcolor}{ \textcolor{\coeffectcolor}{q} } }  \ottnt{W_{{\mathrm{1}}}}      \mathop{,} \;   \ottmv{x_{{\mathrm{2}}}}   \mapsto ^{\textcolor{\coeffectcolor}{ \textcolor{\coeffectcolor}{q} } }  \ottnt{W_{{\mathrm{2}}}}    \vdash_{\mathit{lin} }  \ottnt{N}  \Downarrow  \ottnt{T} }%
\ottpremise{ \textcolor{\coeffectcolor}{ \textcolor{\coeffectcolor}{\gamma} }\; \textcolor{\coeffectcolor}{\mathop{\leq_{\mathit{co} } } } \; \textcolor{\coeffectcolor}{  \textcolor{\coeffectcolor}{  \textcolor{\coeffectcolor}{ \textcolor{\coeffectcolor}{q} \cdot \textcolor{\coeffectcolor}{\gamma}_{{\mathrm{1}}} }  \ottsym{+} \textcolor{\coeffectcolor}{\gamma}_{{\mathrm{2}}} }  } }%
\ottpremise{ \textcolor{\coeffectcolor}{ \textcolor{\coeffectcolor}{q} }  \neq  \textcolor{\coeffectcolor}{  \textcolor{\coeffectcolor}{0}  } }%
}{
  \textcolor{\coeffectcolor}{ \textcolor{\coeffectcolor}{\gamma} }\! \cdot \! \rho   \vdash_{\mathit{lin} }   \ottkw{case}_{\textcolor{\coeffectcolor}{ \textcolor{\coeffectcolor}{q} } }\;  \ottnt{M} \; \ottkw{of}\;( \ottmv{x_{{\mathrm{1}}}} , \ottmv{x_{{\mathrm{2}}}} )\; \rightarrow\;  \ottnt{N}   \Downarrow  \ottnt{T} }{%
{\ottdrulename{eval\_lin\_comp\_csplit}}{}%
}}

\newcommand{\ottdruleevalXXlinXXcompXXcsplitXXzero}[1]{\ottdrule[#1]{%
\ottpremise{  \ottsym{(}    \textcolor{\coeffectcolor}{ \textcolor{\coeffectcolor}{\gamma} }\! \cdot \! \rho    \mathop{,} \;   \ottmv{x_{{\mathrm{1}}}}   \mapsto ^{\textcolor{\coeffectcolor}{  \textcolor{\coeffectcolor}{0}  } }  \mathop{\lightning}    \ottsym{)}   \mathop{,} \;   \ottmv{x_{{\mathrm{2}}}}   \mapsto ^{\textcolor{\coeffectcolor}{  \textcolor{\coeffectcolor}{0}  } }  \mathop{\lightning}    \vdash_{\mathit{lin} }  \ottnt{N}  \Downarrow  \ottnt{T} }%
}{
  \textcolor{\coeffectcolor}{ \textcolor{\coeffectcolor}{\gamma} }\! \cdot \! \rho   \vdash_{\mathit{lin} }   \ottkw{case}_{\textcolor{\coeffectcolor}{  \textcolor{\coeffectcolor}{0}  } }\;  \ottnt{M} \; \ottkw{of}\;( \ottmv{x_{{\mathrm{1}}}} , \ottmv{x_{{\mathrm{2}}}} )\; \rightarrow\;  \ottnt{N}   \Downarrow  \ottnt{T} }{%
{\ottdrulename{eval\_lin\_comp\_csplit\_zero}}{}%
}}

\newcommand{\ottdruleevalXXlinXXcompXXappXXabsXXzero}[1]{\ottdrule[#1]{%
\ottpremise{  \textcolor{\coeffectcolor}{ \textcolor{\coeffectcolor}{\gamma} }\! \cdot \! \rho   \vdash_{\mathit{lin} }  \ottnt{M}  \Downarrow   \mathbf{clo}(   \textcolor{\coeffectcolor}{ \textcolor{\coeffectcolor}{\gamma}' }\! \cdot \! \rho'  ,   \lambda  \ottmv{x} ^{\textcolor{\coeffectcolor}{  \textcolor{\coeffectcolor}{0}  } }. \ottnt{M'}   )  }%
\ottpremise{  \ottsym{(}   \textcolor{\coeffectcolor}{ \textcolor{\coeffectcolor}{\gamma}' }\! \cdot \! \rho'   \ottsym{)}   \mathop{,} \;  \ottsym{(}   \ottmv{x}   \mapsto ^{\textcolor{\coeffectcolor}{  \textcolor{\coeffectcolor}{0}  } }  \mathop{\lightning}   \ottsym{)}   \vdash_{\mathit{lin} }  \ottnt{M'}  \Downarrow  \ottnt{T} }%
}{
  \textcolor{\coeffectcolor}{ \textcolor{\coeffectcolor}{\gamma} }\! \cdot \! \rho   \vdash_{\mathit{lin} }  \ottnt{M} \, \ottnt{V}  \Downarrow  \ottnt{T} }{%
{\ottdrulename{eval\_lin\_comp\_app\_abs\_zero}}{}%
}}

\newcommand{\ottdruleevalXXlinXXcompXXsplitXXzero}[1]{\ottdrule[#1]{%
\ottpremise{     \textcolor{\coeffectcolor}{ \textcolor{\coeffectcolor}{\gamma} }\! \cdot \! \rho    \mathop{,} \;   \ottmv{x_{{\mathrm{1}}}}   \mapsto ^{\textcolor{\coeffectcolor}{  \textcolor{\coeffectcolor}{0}  } }  \mathop{\lightning}      \mathop{,} \;   \ottmv{x_{{\mathrm{2}}}}   \mapsto ^{\textcolor{\coeffectcolor}{  \textcolor{\coeffectcolor}{0}  } }  \mathop{\lightning}    \vdash_{\mathit{lin} }  \ottnt{N}  \Downarrow  \ottnt{T} }%
}{
  \textcolor{\coeffectcolor}{ \textcolor{\coeffectcolor}{\gamma} }\! \cdot \! \rho   \vdash_{\mathit{lin} }   \ottkw{case}_{\textcolor{\coeffectcolor}{  \textcolor{\coeffectcolor}{0}  } } \;  \ottnt{V} \; \ottkw{of}\;( \ottmv{x_{{\mathrm{1}}}} , \ottmv{x_{{\mathrm{2}}}} )\; \rightarrow\;  \ottnt{N}   \Downarrow  \ottnt{T} }{%
{\ottdrulename{eval\_lin\_comp\_split\_zero}}{}%
}}

\newcommand{\ottdruleevalXXlinXXcompXXretXXzero}[1]{\ottdrule[#1]{%
}{
  \textcolor{\coeffectcolor}{  \textcolor{\coeffectcolor}{\overline{0} }  }\! \cdot \! \rho   \vdash_{\mathit{lin} }   \ottkw{return} _{\textcolor{\coeffectcolor}{  \textcolor{\coeffectcolor}{0}  } }\;  \ottnt{V}   \Downarrow   \ottkw{return} _{\textcolor{\coeffectcolor}{  \textcolor{\coeffectcolor}{0}  } }  \mathop{\lightning}  }{%
{\ottdrulename{eval\_lin\_comp\_ret\_zero}}{}%
}}

\newcommand{\ottdruleevalXXlinXXcompXXsub}[1]{\ottdrule[#1]{%
\ottpremise{  \textcolor{\coeffectcolor}{ \textcolor{\coeffectcolor}{\gamma}' }\! \cdot \! \rho   \vdash_{\mathit{lin} }  \ottnt{M}  \Downarrow  \ottnt{T} }%
\ottpremise{ \textcolor{\coeffectcolor}{ \textcolor{\coeffectcolor}{\gamma} }\; \textcolor{\coeffectcolor}{\mathop{\leq_{\mathit{co} } } } \; \textcolor{\coeffectcolor}{ \textcolor{\coeffectcolor}{\gamma}' } }%
}{
  \textcolor{\coeffectcolor}{ \textcolor{\coeffectcolor}{\gamma} }\! \cdot \! \rho   \vdash_{\mathit{lin} }  \ottnt{M}  \Downarrow  \ottnt{T} }{%
{\ottdrulename{eval\_lin\_comp\_sub}}{}%
}}

\newcommand{\ottdefnevalXXlinXXcomp}[1]{\begin{ottdefnblock}[#1]{$ \mu  \vdash_{\mathit{lin} }  \ottnt{M}  \Downarrow  \ottnt{T} $}{\ottcom{environment-based resource counting semantics for CBPV (large-step)}}
\ottusedrule{\ottdruleevalXXlinXXcompXXabs{}}
\ottusedrule{\ottdruleevalXXlinXXcompXXcpair{}}
\ottusedrule{\ottdruleevalXXlinXXcompXXcunit{}}
\ottusedrule{\ottdruleevalXXlinXXcompXXappXXabs{}}
\ottusedrule{\ottdruleevalXXlinXXcompXXforceXXthunk{}}
\ottusedrule{\ottdruleevalXXlinXXcompXXreturn{}}
\ottusedrule{\ottdruleevalXXlinXXcompXXletinXXret{}}
\ottusedrule{\ottdruleevalXXlinXXcompXXsplit{}}
\ottusedrule{\ottdruleevalXXlinXXcompXXfst{}}
\ottusedrule{\ottdruleevalXXlinXXcompXXsnd{}}
\ottusedrule{\ottdruleevalXXlinXXcompXXsequence{}}
\ottusedrule{\ottdruleevalXXlinXXcompXXcaseXXinl{}}
\ottusedrule{\ottdruleevalXXlinXXcompXXcaseXXinr{}}
\ottusedrule{\ottdruleevalXXlinXXcompXXctensor{}}
\ottusedrule{\ottdruleevalXXlinXXcompXXcsplit{}}
\ottusedrule{\ottdruleevalXXlinXXcompXXcsplitXXzero{}}
\ottusedrule{\ottdruleevalXXlinXXcompXXappXXabsXXzero{}}
\ottusedrule{\ottdruleevalXXlinXXcompXXsplitXXzero{}}
\ottusedrule{\ottdruleevalXXlinXXcompXXretXXzero{}}
\ottusedrule{\ottdruleevalXXlinXXcompXXsub{}}
\end{ottdefnblock}}

\newcommand{\ottdefnsJInstrEnv}{
\ottdefnevalXXcoeffXXval{}\ottdefnevalXXcoeffXXcomp{}\ottdefnevalXXlinXXval{}\ottdefnevalXXlinXXcomp{}}


\newcommand{\ottdefnstlcXXtyping}[1]{\begin{ottdefnblock}[#1]{$\Gamma  \vdash  \ottnt{e}  \ottsym{:}  \tau$}{}
\end{ottdefnblock}}

\newcommand{\ottdefnsJSTLC}{
\ottdefnstlcXXtyping{}}

\newcommand{\ottdrulecbpvXXvar}[1]{\ottdrule[#1]{%
\ottpremise{\ottmv{x}  \ottsym{:}  \ottnt{A} \, \in \, \Gamma}%
}{
\Gamma  \vdash  \ottmv{x}  \ottsym{:}  \ottnt{A}}{%
{\ottdrulename{cbpv\_var}}{}%
}}

\newcommand{\ottdrulecbpvXXunit}[1]{\ottdrule[#1]{%
}{
\Gamma  \vdash  \ottsym{()}  \ottsym{:}  \ottkw{unit}}{%
{\ottdrulename{cbpv\_unit}}{}%
}}

\newcommand{\ottdrulecbpvXXthunk}[1]{\ottdrule[#1]{%
\ottpremise{\Gamma  \vdash  \ottnt{M}  \ottsym{:}  \ottnt{B}}%
}{
\Gamma  \vdash  \ottsym{\{}  \ottnt{M}  \ottsym{\}}  \ottsym{:}  \ottkw{U} \, \ottnt{B}}{%
{\ottdrulename{cbpv\_thunk}}{}%
}}

\newcommand{\ottdrulecbpvXXpair}[1]{\ottdrule[#1]{%
\ottpremise{\Gamma  \vdash  \ottnt{V_{{\mathrm{1}}}}  \ottsym{:}  \ottnt{A_{{\mathrm{1}}}}}%
\ottpremise{\Gamma  \vdash  \ottnt{V_{{\mathrm{2}}}}  \ottsym{:}  \ottnt{A_{{\mathrm{2}}}}}%
}{
\Gamma  \vdash  \ottsym{(}  \ottnt{V_{{\mathrm{1}}}}  \ottsym{,}  \ottnt{V_{{\mathrm{2}}}}  \ottsym{)}  \ottsym{:}   \ottnt{A_{{\mathrm{1}}}} \times \ottnt{A_{{\mathrm{2}}}} }{%
{\ottdrulename{cbpv\_pair}}{}%
}}

\newcommand{\ottdrulecbpvXXinl}[1]{\ottdrule[#1]{%
\ottpremise{\Gamma  \vdash  \ottnt{V}  \ottsym{:}  \ottnt{A_{{\mathrm{1}}}}}%
}{
\Gamma  \vdash  \ottkw{inl} \, \ottnt{V}  \ottsym{:}  \ottnt{A_{{\mathrm{1}}}}  \ottsym{+}  \ottnt{A_{{\mathrm{2}}}}}{%
{\ottdrulename{cbpv\_inl}}{}%
}}

\newcommand{\ottdrulecbpvXXinr}[1]{\ottdrule[#1]{%
\ottpremise{\Gamma  \vdash  \ottnt{V}  \ottsym{:}  \ottnt{A_{{\mathrm{2}}}}}%
}{
\Gamma  \vdash  \ottkw{inr} \, \ottnt{V}  \ottsym{:}  \ottnt{A_{{\mathrm{1}}}}  \ottsym{+}  \ottnt{A_{{\mathrm{2}}}}}{%
{\ottdrulename{cbpv\_inr}}{}%
}}

\newcommand{\ottdefnvalXXtyping}[1]{\begin{ottdefnblock}[#1]{$\Gamma  \vdash  \ottnt{V}  \ottsym{:}  \ottnt{A}$}{\ottcom{value effect typing rules}}
\ottusedrule{\ottdrulecbpvXXvar{}}
\ottusedrule{\ottdrulecbpvXXunit{}}
\ottusedrule{\ottdrulecbpvXXthunk{}}
\ottusedrule{\ottdrulecbpvXXpair{}}
\ottusedrule{\ottdrulecbpvXXinl{}}
\ottusedrule{\ottdrulecbpvXXinr{}}
\end{ottdefnblock}}

\newcommand{\ottdrulecbpvXXabs}[1]{\ottdrule[#1]{%
\ottpremise{ \Gamma   \mathop{,}   \ottmv{x}  \ottsym{:}  \ottnt{A}   \vdash  \ottnt{M}  \ottsym{:}  \ottnt{B}}%
}{
\Gamma  \vdash   \lambda  \ottmv{x} . \ottnt{M}   \ottsym{:}  \ottnt{A}  \to  \ottnt{B}}{%
{\ottdrulename{cbpv\_abs}}{}%
}}

\newcommand{\ottdrulecbpvXXapp}[1]{\ottdrule[#1]{%
\ottpremise{\Gamma  \vdash  \ottnt{M}  \ottsym{:}  \ottnt{A}  \to  \ottnt{B}}%
\ottpremise{\Gamma  \vdash  \ottnt{V}  \ottsym{:}  \ottnt{A}}%
}{
\Gamma  \vdash  \ottnt{M} \, \ottnt{V}  \ottsym{:}  \ottnt{B}}{%
{\ottdrulename{cbpv\_app}}{}%
}}

\newcommand{\ottdrulecbpvXXforce}[1]{\ottdrule[#1]{%
\ottpremise{\Gamma  \vdash  \ottnt{V}  \ottsym{:}  \ottkw{U} \, \ottnt{B}}%
}{
\Gamma  \vdash  \ottnt{V}  \ottsym{!}  \ottsym{:}  \ottnt{B}}{%
{\ottdrulename{cbpv\_force}}{}%
}}

\newcommand{\ottdrulecbpvXXret}[1]{\ottdrule[#1]{%
\ottpremise{\Gamma  \vdash  \ottnt{V}  \ottsym{:}  \ottnt{A}}%
}{
\Gamma  \vdash  \ottkw{return} \, \ottnt{V}  \ottsym{:}  \ottkw{F} \, \ottnt{A}}{%
{\ottdrulename{cbpv\_ret}}{}%
}}

\newcommand{\ottdrulecbpvXXletin}[1]{\ottdrule[#1]{%
\ottpremise{\Gamma  \vdash  \ottnt{M}  \ottsym{:}  \ottkw{F} \, \ottnt{A}}%
\ottpremise{ \Gamma   \mathop{,}   \ottmv{x}  \ottsym{:}  \ottnt{A}   \vdash  \ottnt{N}  \ottsym{:}  \ottnt{B}}%
}{
\Gamma  \vdash  \ottmv{x}  \leftarrow  \ottnt{M} \, \ottkw{in} \, \ottnt{N}  \ottsym{:}  \ottnt{B}}{%
{\ottdrulename{cbpv\_letin}}{}%
}}

\newcommand{\ottdrulecbpvXXsplit}[1]{\ottdrule[#1]{%
\ottpremise{\Gamma  \vdash  \ottnt{V}  \ottsym{:}   \ottnt{A_{{\mathrm{1}}}} \times \ottnt{A_{{\mathrm{2}}}} }%
\ottpremise{  \Gamma   \mathop{,}   \ottmv{x_{{\mathrm{1}}}}  \ottsym{:}  \ottnt{A_{{\mathrm{1}}}}    \mathop{,}   \ottmv{x_{{\mathrm{2}}}}  \ottsym{:}  \ottnt{A_{{\mathrm{2}}}}   \vdash  \ottnt{N}  \ottsym{:}  \ottnt{B}}%
}{
\Gamma  \vdash   \ottkw{let}\; ( \ottmv{x_{{\mathrm{1}}}} ,  \ottmv{x_{{\mathrm{2}}}} ) =  \ottnt{V} \; \ottkw{in}\;  \ottnt{N}   \ottsym{:}  \ottnt{B}}{%
{\ottdrulename{cbpv\_split}}{}%
}}

\newcommand{\ottdrulecbpvXXsequence}[1]{\ottdrule[#1]{%
\ottpremise{\Gamma  \vdash  \ottnt{V}  \ottsym{:}  \ottkw{unit}}%
\ottpremise{\Gamma  \vdash  \ottnt{N}  \ottsym{:}  \ottnt{B}}%
}{
\Gamma  \vdash   \ottnt{V}  ;  \ottnt{N}   \ottsym{:}  \ottnt{B}}{%
{\ottdrulename{cbpv\_sequence}}{}%
}}

\newcommand{\ottdrulecbpvXXcunit}[1]{\ottdrule[#1]{%
}{
\Gamma  \vdash   \langle\rangle   \ottsym{:}   \top }{%
{\ottdrulename{cbpv\_cunit}}{}%
}}

\newcommand{\ottdrulecbpvXXcpair}[1]{\ottdrule[#1]{%
\ottpremise{\Gamma  \vdash  \ottnt{M_{{\mathrm{1}}}}  \ottsym{:}  \ottnt{B_{{\mathrm{1}}}}}%
\ottpremise{\Gamma  \vdash  \ottnt{M_{{\mathrm{2}}}}  \ottsym{:}  \ottnt{B_{{\mathrm{2}}}}}%
}{
\Gamma  \vdash   \langle  \ottnt{M_{{\mathrm{1}}}} , \ottnt{M_{{\mathrm{2}}}}  \rangle   \ottsym{:}   \ottnt{B_{{\mathrm{1}}}}   \mathop{\&}   \ottnt{B_{{\mathrm{2}}}} }{%
{\ottdrulename{cbpv\_cpair}}{}%
}}

\newcommand{\ottdrulecbpvXXfst}[1]{\ottdrule[#1]{%
\ottpremise{\Gamma  \vdash  \ottnt{M}  \ottsym{:}   \ottnt{B_{{\mathrm{1}}}}   \mathop{\&}   \ottnt{B_{{\mathrm{2}}}} }%
}{
\Gamma  \vdash   \ottnt{M}  . 1   \ottsym{:}  \ottnt{B_{{\mathrm{1}}}}}{%
{\ottdrulename{cbpv\_fst}}{}%
}}

\newcommand{\ottdrulecbpvXXsnd}[1]{\ottdrule[#1]{%
\ottpremise{\Gamma  \vdash  \ottnt{M}  \ottsym{:}   \ottnt{B_{{\mathrm{1}}}}   \mathop{\&}   \ottnt{B_{{\mathrm{2}}}} }%
}{
\Gamma  \vdash   \ottnt{M}  . 2   \ottsym{:}  \ottnt{B_{{\mathrm{2}}}}}{%
{\ottdrulename{cbpv\_snd}}{}%
}}

\newcommand{\ottdrulecbpvXXcase}[1]{\ottdrule[#1]{%
\ottpremise{\Gamma  \vdash  \ottnt{V}  \ottsym{:}  \ottnt{A_{{\mathrm{1}}}}  \ottsym{+}  \ottnt{A_{{\mathrm{2}}}}}%
\ottpremise{ \Gamma   \mathop{,}   \ottmv{x_{{\mathrm{1}}}}  \ottsym{:}  \ottnt{A_{{\mathrm{1}}}}   \vdash  \ottnt{M_{{\mathrm{1}}}}  \ottsym{:}  \ottnt{B}}%
\ottpremise{ \Gamma   \mathop{,}   \ottmv{x_{{\mathrm{2}}}}  \ottsym{:}  \ottnt{A_{{\mathrm{2}}}}   \vdash  \ottnt{M_{{\mathrm{2}}}}  \ottsym{:}  \ottnt{B}}%
}{
\Gamma  \vdash  \ottkw{case} \, \ottnt{V} \, \ottkw{of} \, \ottkw{inl} \, \ottmv{x_{{\mathrm{1}}}}  \to  \ottnt{M_{{\mathrm{1}}}}  \mathsf{;} \, \ottkw{inr} \, \ottmv{x_{{\mathrm{2}}}}  \to  \ottnt{M_{{\mathrm{2}}}}  \ottsym{:}  \ottnt{B}}{%
{\ottdrulename{cbpv\_case}}{}%
}}

\newcommand{\ottdefncompXXtyping}[1]{\begin{ottdefnblock}[#1]{$\Gamma  \vdash  \ottnt{M}  \ottsym{:}  \ottnt{B}$}{\ottcom{computation typing rules}}
\ottusedrule{\ottdrulecbpvXXabs{}}
\ottusedrule{\ottdrulecbpvXXapp{}}
\ottusedrule{\ottdrulecbpvXXforce{}}
\ottusedrule{\ottdrulecbpvXXret{}}
\ottusedrule{\ottdrulecbpvXXletin{}}
\ottusedrule{\ottdrulecbpvXXsplit{}}
\ottusedrule{\ottdrulecbpvXXsequence{}}
\ottusedrule{\ottdrulecbpvXXcunit{}}
\ottusedrule{\ottdrulecbpvXXcpair{}}
\ottusedrule{\ottdrulecbpvXXfst{}}
\ottusedrule{\ottdrulecbpvXXsnd{}}
\ottusedrule{\ottdrulecbpvXXcase{}}
\end{ottdefnblock}}

\newcommand{\ottdefnsJCBPV}{
\ottdefnvalXXtyping{}\ottdefncompXXtyping{}}

\newcommand{\ottdruleeffXXvar}[1]{\ottdrule[#1]{%
\ottpremise{\ottmv{x}  \ottsym{:}  \ottnt{A} \, \in \, \Gamma}%
}{
\Gamma  \vdash_{\mathit{eff} }  \ottmv{x}  \ottsym{:}  \ottnt{A}}{%
{\ottdrulename{eff\_var}}{}%
}}

\newcommand{\ottdruleeffXXunit}[1]{\ottdrule[#1]{%
}{
\Gamma  \vdash_{\mathit{eff} }  \ottsym{()}  \ottsym{:}  \ottkw{unit}}{%
{\ottdrulename{eff\_unit}}{}%
}}

\newcommand{\ottdruleeffXXthunk}[1]{\ottdrule[#1]{%
\ottpremise{ \Gamma   \vdash_{\mathit{eff} }   \ottnt{M} \; :^{\textcolor{\effectcolor}{  \textcolor{\effectcolor}{\phi}  } }\;  \ottnt{B} }%
}{
\Gamma  \vdash_{\mathit{eff} }  \ottsym{\{}  \ottnt{M}  \ottsym{\}}  \ottsym{:}   \ottkw{U}_{\color{\effectcolor}{ \textcolor{\effectcolor}{\phi} } }\;  \ottnt{B} }{%
{\ottdrulename{eff\_thunk}}{}%
}}

\newcommand{\ottdruleeffXXpair}[1]{\ottdrule[#1]{%
\ottpremise{\Gamma  \vdash_{\mathit{eff} }  \ottnt{V_{{\mathrm{1}}}}  \ottsym{:}  \ottnt{A_{{\mathrm{1}}}}}%
\ottpremise{\Gamma  \vdash_{\mathit{eff} }  \ottnt{V_{{\mathrm{2}}}}  \ottsym{:}  \ottnt{A_{{\mathrm{2}}}}}%
}{
\Gamma  \vdash_{\mathit{eff} }  \ottsym{(}  \ottnt{V_{{\mathrm{1}}}}  \ottsym{,}  \ottnt{V_{{\mathrm{2}}}}  \ottsym{)}  \ottsym{:}   \ottnt{A_{{\mathrm{1}}}} \times \ottnt{A_{{\mathrm{2}}}} }{%
{\ottdrulename{eff\_pair}}{}%
}}

\newcommand{\ottdruleeffXXinl}[1]{\ottdrule[#1]{%
\ottpremise{\Gamma  \vdash_{\mathit{eff} }  \ottnt{V}  \ottsym{:}  \ottnt{A_{{\mathrm{1}}}}}%
}{
\Gamma  \vdash_{\mathit{eff} }  \ottkw{inl} \, \ottnt{V}  \ottsym{:}  \ottnt{A_{{\mathrm{1}}}}  \ottsym{+}  \ottnt{A_{{\mathrm{2}}}}}{%
{\ottdrulename{eff\_inl}}{}%
}}

\newcommand{\ottdruleeffXXinr}[1]{\ottdrule[#1]{%
\ottpremise{\Gamma  \vdash_{\mathit{eff} }  \ottnt{V}  \ottsym{:}  \ottnt{A_{{\mathrm{2}}}}}%
}{
\Gamma  \vdash_{\mathit{eff} }  \ottkw{inr} \, \ottnt{V}  \ottsym{:}  \ottnt{A_{{\mathrm{1}}}}  \ottsym{+}  \ottnt{A_{{\mathrm{2}}}}}{%
{\ottdrulename{eff\_inr}}{}%
}}

\newcommand{\ottdefnvalXXeffXXtyping}[1]{\begin{ottdefnblock}[#1]{$\Gamma  \vdash_{\mathit{eff} }  \ottnt{V}  \ottsym{:}  \ottnt{A}$}{\ottcom{value effect typing rules}}
\ottusedrule{\ottdruleeffXXvar{}}
\ottusedrule{\ottdruleeffXXunit{}}
\ottusedrule{\ottdruleeffXXthunk{}}
\ottusedrule{\ottdruleeffXXpair{}}
\ottusedrule{\ottdruleeffXXinl{}}
\ottusedrule{\ottdruleeffXXinr{}}
\end{ottdefnblock}}

\newcommand{\ottdruleeffXXabs}[1]{\ottdrule[#1]{%
\ottpremise{  \Gamma   \mathop{,}   \ottmv{x}  \ottsym{:}  \ottnt{A}    \vdash_{\mathit{eff} }   \ottnt{M} \; :^{\textcolor{\effectcolor}{  \textcolor{\effectcolor}{\phi}  } }\;  \ottnt{B} }%
}{
 \Gamma   \vdash_{\mathit{eff} }    \lambda  \ottmv{x} . \ottnt{M}  \; :^{\textcolor{\effectcolor}{  \textcolor{\effectcolor}{\phi}  } }\;  \ottnt{A}  \to  \ottnt{B} }{%
{\ottdrulename{eff\_abs}}{}%
}}

\newcommand{\ottdruleeffXXapp}[1]{\ottdrule[#1]{%
\ottpremise{ \Gamma   \vdash_{\mathit{eff} }   \ottnt{M} \; :^{\textcolor{\effectcolor}{  \textcolor{\effectcolor}{\phi}  } }\;  \ottnt{A}  \to  \ottnt{B} }%
\ottpremise{\Gamma  \vdash_{\mathit{eff} }  \ottnt{V}  \ottsym{:}  \ottnt{A}}%
}{
 \Gamma   \vdash_{\mathit{eff} }   \ottnt{M} \, \ottnt{V} \; :^{\textcolor{\effectcolor}{  \textcolor{\effectcolor}{\phi}  } }\;  \ottnt{B} }{%
{\ottdrulename{eff\_app}}{}%
}}

\newcommand{\ottdruleeffXXforce}[1]{\ottdrule[#1]{%
\ottpremise{\Gamma  \vdash_{\mathit{eff} }  \ottnt{V}  \ottsym{:}   \ottkw{U}_{\color{\effectcolor}{ \textcolor{\effectcolor}{\phi} } }\;  \ottnt{B} }%
}{
 \Gamma   \vdash_{\mathit{eff} }   \ottnt{V}  \ottsym{!} \; :^{\textcolor{\effectcolor}{  \textcolor{\effectcolor}{\phi}  } }\;  \ottnt{B} }{%
{\ottdrulename{eff\_force}}{}%
}}

\newcommand{\ottdruleeffXXret}[1]{\ottdrule[#1]{%
\ottpremise{\Gamma  \vdash_{\mathit{eff} }  \ottnt{V}  \ottsym{:}  \ottnt{A}}%
}{
 \Gamma   \vdash_{\mathit{eff} }   \ottkw{return} \, \ottnt{V} \; :^{\textcolor{\effectcolor}{   \textcolor{\effectcolor}{\varepsilon}   } }\;  \ottkw{F} \, \ottnt{A} }{%
{\ottdrulename{eff\_ret}}{}%
}}

\newcommand{\ottdruleeffXXletin}[1]{\ottdrule[#1]{%
\ottpremise{ \Gamma   \vdash_{\mathit{eff} }   \ottnt{M} \; :^{\textcolor{\effectcolor}{  \textcolor{\effectcolor}{\phi}_{{\mathrm{1}}}  } }\;  \ottkw{F} \, \ottnt{A} }%
\ottpremise{  \Gamma   \mathop{,}   \ottmv{x}  \ottsym{:}  \ottnt{A}    \vdash_{\mathit{eff} }   \ottnt{N} \; :^{\textcolor{\effectcolor}{  \textcolor{\effectcolor}{\phi}_{{\mathrm{2}}}  } }\;  \ottnt{B} }%
}{
 \Gamma   \vdash_{\mathit{eff} }   \ottmv{x}  \leftarrow  \ottnt{M} \, \ottkw{in} \, \ottnt{N} \; :^{\textcolor{\effectcolor}{   \textcolor{\effectcolor}{ \textcolor{\effectcolor}{\phi}_{{\mathrm{1}}}  \cdot  \textcolor{\effectcolor}{\phi}_{{\mathrm{2}}} }   } }\;  \ottnt{B} }{%
{\ottdrulename{eff\_letin}}{}%
}}

\newcommand{\ottdruleeffXXsplit}[1]{\ottdrule[#1]{%
\ottpremise{\Gamma  \vdash_{\mathit{eff} }  \ottnt{V}  \ottsym{:}   \ottnt{A_{{\mathrm{1}}}} \times \ottnt{A_{{\mathrm{2}}}} }%
\ottpremise{   \Gamma   \mathop{,}   \ottmv{x_{{\mathrm{1}}}}  \ottsym{:}  \ottnt{A_{{\mathrm{1}}}}    \mathop{,}   \ottmv{x_{{\mathrm{2}}}}  \ottsym{:}  \ottnt{A_{{\mathrm{2}}}}    \vdash_{\mathit{eff} }   \ottnt{N} \; :^{\textcolor{\effectcolor}{  \textcolor{\effectcolor}{\phi}  } }\;  \ottnt{B} }%
}{
 \Gamma   \vdash_{\mathit{eff} }    \ottkw{let}\; ( \ottmv{x_{{\mathrm{1}}}} ,  \ottmv{x_{{\mathrm{2}}}} ) =  \ottnt{V} \; \ottkw{in}\;  \ottnt{N}  \; :^{\textcolor{\effectcolor}{  \textcolor{\effectcolor}{\phi}  } }\;  \ottnt{B} }{%
{\ottdrulename{eff\_split}}{}%
}}

\newcommand{\ottdruleeffXXsequence}[1]{\ottdrule[#1]{%
\ottpremise{\Gamma  \vdash_{\mathit{eff} }  \ottnt{V}  \ottsym{:}  \ottkw{unit}}%
\ottpremise{ \Gamma   \vdash_{\mathit{eff} }   \ottnt{N} \; :^{\textcolor{\effectcolor}{  \textcolor{\effectcolor}{\phi}  } }\;  \ottnt{B} }%
}{
 \Gamma   \vdash_{\mathit{eff} }    \ottnt{V}  ;  \ottnt{N}  \; :^{\textcolor{\effectcolor}{  \textcolor{\effectcolor}{\phi}  } }\;  \ottnt{B} }{%
{\ottdrulename{eff\_sequence}}{}%
}}

\newcommand{\ottdruleeffXXcunit}[1]{\ottdrule[#1]{%
\ottpremise{ \textcolor{\effectcolor}{  \textcolor{\effectcolor}{\varepsilon}  \;  \textcolor{\effectcolor}{\mathop{\leq_{\mathit{eff} } } } \;  \textcolor{\effectcolor}{\phi}  } }%
}{
 \Gamma   \vdash_{\mathit{eff} }    \langle\rangle  \; :^{\textcolor{\effectcolor}{  \textcolor{\effectcolor}{\phi}  } }\;   \top  }{%
{\ottdrulename{eff\_cunit}}{}%
}}

\newcommand{\ottdruleeffXXcpair}[1]{\ottdrule[#1]{%
\ottpremise{ \Gamma   \vdash_{\mathit{eff} }   \ottnt{M_{{\mathrm{1}}}} \; :^{\textcolor{\effectcolor}{  \textcolor{\effectcolor}{\phi}  } }\;  \ottnt{B_{{\mathrm{1}}}} }%
\ottpremise{ \Gamma   \vdash_{\mathit{eff} }   \ottnt{M_{{\mathrm{2}}}} \; :^{\textcolor{\effectcolor}{  \textcolor{\effectcolor}{\phi}  } }\;  \ottnt{B_{{\mathrm{2}}}} }%
}{
 \Gamma   \vdash_{\mathit{eff} }    \langle  \ottnt{M_{{\mathrm{1}}}} , \ottnt{M_{{\mathrm{2}}}}  \rangle  \; :^{\textcolor{\effectcolor}{  \textcolor{\effectcolor}{\phi}  } }\;   \ottnt{B_{{\mathrm{1}}}}   \mathop{\&}   \ottnt{B_{{\mathrm{2}}}}  }{%
{\ottdrulename{eff\_cpair}}{}%
}}

\newcommand{\ottdruleeffXXfst}[1]{\ottdrule[#1]{%
\ottpremise{ \Gamma   \vdash_{\mathit{eff} }   \ottnt{M} \; :^{\textcolor{\effectcolor}{  \textcolor{\effectcolor}{\phi}  } }\;   \ottnt{B_{{\mathrm{1}}}}   \mathop{\&}   \ottnt{B_{{\mathrm{2}}}}  }%
}{
 \Gamma   \vdash_{\mathit{eff} }    \ottnt{M}  . 1  \; :^{\textcolor{\effectcolor}{  \textcolor{\effectcolor}{\phi}  } }\;  \ottnt{B_{{\mathrm{1}}}} }{%
{\ottdrulename{eff\_fst}}{}%
}}

\newcommand{\ottdruleeffXXsnd}[1]{\ottdrule[#1]{%
\ottpremise{ \Gamma   \vdash_{\mathit{eff} }   \ottnt{M} \; :^{\textcolor{\effectcolor}{  \textcolor{\effectcolor}{\phi}  } }\;   \ottnt{B_{{\mathrm{1}}}}   \mathop{\&}   \ottnt{B_{{\mathrm{2}}}}  }%
}{
 \Gamma   \vdash_{\mathit{eff} }    \ottnt{M}  . 2  \; :^{\textcolor{\effectcolor}{  \textcolor{\effectcolor}{\phi}  } }\;  \ottnt{B_{{\mathrm{2}}}} }{%
{\ottdrulename{eff\_snd}}{}%
}}

\newcommand{\ottdruleeffXXcase}[1]{\ottdrule[#1]{%
\ottpremise{\Gamma  \vdash_{\mathit{eff} }  \ottnt{V}  \ottsym{:}  \ottnt{A_{{\mathrm{1}}}}  \ottsym{+}  \ottnt{A_{{\mathrm{2}}}}}%
\ottpremise{  \Gamma   \mathop{,}   \ottmv{x_{{\mathrm{1}}}}  \ottsym{:}  \ottnt{A_{{\mathrm{1}}}}    \vdash_{\mathit{eff} }   \ottnt{M_{{\mathrm{1}}}} \; :^{\textcolor{\effectcolor}{  \textcolor{\effectcolor}{\phi}  } }\;  \ottnt{B} }%
\ottpremise{  \Gamma   \mathop{,}   \ottmv{x_{{\mathrm{2}}}}  \ottsym{:}  \ottnt{A_{{\mathrm{2}}}}    \vdash_{\mathit{eff} }   \ottnt{M_{{\mathrm{2}}}} \; :^{\textcolor{\effectcolor}{  \textcolor{\effectcolor}{\phi}  } }\;  \ottnt{B} }%
}{
 \Gamma   \vdash_{\mathit{eff} }   \ottkw{case} \, \ottnt{V} \, \ottkw{of} \, \ottkw{inl} \, \ottmv{x_{{\mathrm{1}}}}  \to  \ottnt{M_{{\mathrm{1}}}}  \mathsf{;} \, \ottkw{inr} \, \ottmv{x_{{\mathrm{2}}}}  \to  \ottnt{M_{{\mathrm{2}}}} \; :^{\textcolor{\effectcolor}{  \textcolor{\effectcolor}{\phi}  } }\;  \ottnt{B} }{%
{\ottdrulename{eff\_case}}{}%
}}

\newcommand{\ottdruleeffXXtick}[1]{\ottdrule[#1]{%
}{
 \Gamma   \vdash_{\mathit{eff} }    \textcolor{\effectcolor}{ \ottkw{tick} }  \; :^{\textcolor{\effectcolor}{   \textcolor{\effectcolor}{\ottkw{Tick} }   } }\;  \ottkw{F} \, \ottkw{unit} }{%
{\ottdrulename{eff\_tick}}{}%
}}

\newcommand{\ottdruleeffXXsub}[1]{\ottdrule[#1]{%
\ottpremise{ \Gamma   \vdash_{\mathit{eff} }   \ottnt{M} \; :^{\textcolor{\effectcolor}{  \textcolor{\effectcolor}{\phi}_{{\mathrm{1}}}  } }\;  \ottnt{B} }%
\ottpremise{ \textcolor{\effectcolor}{ \textcolor{\effectcolor}{\phi}_{{\mathrm{1}}} \;  \textcolor{\effectcolor}{\mathop{\leq_{\mathit{eff} } } } \;  \textcolor{\effectcolor}{\phi}_{{\mathrm{2}}}  } }%
}{
 \Gamma   \vdash_{\mathit{eff} }   \ottnt{M} \; :^{\textcolor{\effectcolor}{  \textcolor{\effectcolor}{\phi}_{{\mathrm{2}}}  } }\;  \ottnt{B} }{%
{\ottdrulename{eff\_sub}}{}%
}}

\newcommand{\ottdefncompXXeffXXtyping}[1]{\begin{ottdefnblock}[#1]{$ \Gamma   \vdash_{\mathit{eff} }   \ottnt{M} \; :^{\textcolor{\effectcolor}{  \textcolor{\effectcolor}{\phi}  } }\;  \ottnt{B} $}{\ottcom{computation effect typing rules}}
\ottusedrule{\ottdruleeffXXabs{}}
\ottusedrule{\ottdruleeffXXapp{}}
\ottusedrule{\ottdruleeffXXforce{}}
\ottusedrule{\ottdruleeffXXret{}}
\ottusedrule{\ottdruleeffXXletin{}}
\ottusedrule{\ottdruleeffXXsplit{}}
\ottusedrule{\ottdruleeffXXsequence{}}
\ottusedrule{\ottdruleeffXXcunit{}}
\ottusedrule{\ottdruleeffXXcpair{}}
\ottusedrule{\ottdruleeffXXfst{}}
\ottusedrule{\ottdruleeffXXsnd{}}
\ottusedrule{\ottdruleeffXXcase{}}
\ottusedrule{\ottdruleeffXXtick{}}
\ottusedrule{\ottdruleeffXXsub{}}
\end{ottdefnblock}}

\newcommand{\ottdrulelamXXeffXXvar}[1]{\ottdrule[#1]{%
\ottpremise{\ottmv{x}  \ottsym{:}  \tau \, \in \, \Gamma}%
}{
 \Gamma   \vdash_{\mathit{eff} }   \ottmv{x}  :^{\textcolor{\effectcolor}{  \textcolor{\effectcolor}{\varepsilon}  } }  \tau }{%
{\ottdrulename{lam\_eff\_var}}{}%
}}

\newcommand{\ottdrulelamXXeffXXabs}[1]{\ottdrule[#1]{%
\ottpremise{  \Gamma   \mathop{,}   \ottmv{x}  \ottsym{:}  \tau_{{\mathrm{1}}}    \vdash_{\mathit{eff} }   \ottnt{e}  :^{\textcolor{\effectcolor}{ \textcolor{\effectcolor}{\phi} } }  \tau_{{\mathrm{2}}} }%
}{
 \Gamma   \vdash_{\mathit{eff} }    \lambda  \ottmv{x} . \ottnt{e}   :^{\textcolor{\effectcolor}{  \textcolor{\effectcolor}{\varepsilon}  } }   \tau_{{\mathrm{1}}}  \stackrel{\textcolor{\effectcolor}{ \textcolor{\effectcolor}{\phi} } }{\rightarrow}  \tau_{{\mathrm{2}}}  }{%
{\ottdrulename{lam\_eff\_abs}}{}%
}}

\newcommand{\ottdrulelamXXeffXXapp}[1]{\ottdrule[#1]{%
\ottpremise{ \Gamma   \vdash_{\mathit{eff} }   \ottnt{e_{{\mathrm{1}}}}  :^{\textcolor{\effectcolor}{ \textcolor{\effectcolor}{\phi}_{{\mathrm{1}}} } }   \tau_{{\mathrm{1}}}  \stackrel{\textcolor{\effectcolor}{ \textcolor{\effectcolor}{\phi}_{{\mathrm{3}}} } }{\rightarrow}  \tau_{{\mathrm{2}}}  }%
\ottpremise{ \Gamma   \vdash_{\mathit{eff} }   \ottnt{e_{{\mathrm{2}}}}  :^{\textcolor{\effectcolor}{ \textcolor{\effectcolor}{\phi}_{{\mathrm{2}}} } }  \tau_{{\mathrm{1}}} }%
}{
 \Gamma   \vdash_{\mathit{eff} }   \ottnt{e_{{\mathrm{1}}}} \, \ottnt{e_{{\mathrm{2}}}}  :^{\textcolor{\effectcolor}{  \textcolor{\effectcolor}{  \textcolor{\effectcolor}{ \textcolor{\effectcolor}{\phi}_{{\mathrm{1}}}  \cdot  \textcolor{\effectcolor}{\phi}_{{\mathrm{2}}} }   \cdot  \textcolor{\effectcolor}{\phi}_{{\mathrm{3}}} }  } }  \tau_{{\mathrm{2}}} }{%
{\ottdrulename{lam\_eff\_app}}{}%
}}

\newcommand{\ottdrulelamXXeffXXunit}[1]{\ottdrule[#1]{%
}{
 \Gamma   \vdash_{\mathit{eff} }   \ottsym{()}  :^{\textcolor{\effectcolor}{  \textcolor{\effectcolor}{\varepsilon}  } }  \ottkw{unit} }{%
{\ottdrulename{lam\_eff\_unit}}{}%
}}

\newcommand{\ottdrulelamXXeffXXsequence}[1]{\ottdrule[#1]{%
\ottpremise{ \Gamma   \vdash_{\mathit{eff} }   \ottnt{e_{{\mathrm{1}}}}  :^{\textcolor{\effectcolor}{ \textcolor{\effectcolor}{\phi}_{{\mathrm{1}}} } }  \ottkw{unit} }%
\ottpremise{ \Gamma   \vdash_{\mathit{eff} }   \ottnt{e_{{\mathrm{2}}}}  :^{\textcolor{\effectcolor}{ \textcolor{\effectcolor}{\phi}_{{\mathrm{2}}} } }  \tau }%
}{
 \Gamma   \vdash_{\mathit{eff} }    \ottnt{e_{{\mathrm{1}}}}  ;  \ottnt{e_{{\mathrm{2}}}}   :^{\textcolor{\effectcolor}{  \textcolor{\effectcolor}{ \textcolor{\effectcolor}{\phi}_{{\mathrm{1}}}  \cdot  \textcolor{\effectcolor}{\phi}_{{\mathrm{2}}} }  } }  \tau }{%
{\ottdrulename{lam\_eff\_sequence}}{}%
}}

\newcommand{\ottdrulelamXXeffXXpair}[1]{\ottdrule[#1]{%
\ottpremise{ \Gamma   \vdash_{\mathit{eff} }   \ottnt{e_{{\mathrm{1}}}}  :^{\textcolor{\effectcolor}{ \textcolor{\effectcolor}{\phi}_{{\mathrm{1}}} } }  \tau_{{\mathrm{1}}} }%
\ottpremise{ \Gamma   \vdash_{\mathit{eff} }   \ottnt{e_{{\mathrm{2}}}}  :^{\textcolor{\effectcolor}{ \textcolor{\effectcolor}{\phi}_{{\mathrm{2}}} } }  \tau_{{\mathrm{2}}} }%
}{
 \Gamma   \vdash_{\mathit{eff} }   \ottsym{(}  \ottnt{e_{{\mathrm{1}}}}  \ottsym{,}  \ottnt{e_{{\mathrm{2}}}}  \ottsym{)}  :^{\textcolor{\effectcolor}{  \textcolor{\effectcolor}{ \textcolor{\effectcolor}{\phi}_{{\mathrm{1}}}  \cdot  \textcolor{\effectcolor}{\phi}_{{\mathrm{2}}} }  } }   \tau_{{\mathrm{1}}}  \otimes  \tau_{{\mathrm{2}}}  }{%
{\ottdrulename{lam\_eff\_pair}}{}%
}}

\newcommand{\ottdrulelamXXeffXXsplit}[1]{\ottdrule[#1]{%
\ottpremise{ \Gamma   \vdash_{\mathit{eff} }   \ottnt{e_{{\mathrm{1}}}}  :^{\textcolor{\effectcolor}{ \textcolor{\effectcolor}{\phi}_{{\mathrm{1}}} } }   \tau_{{\mathrm{1}}}  \otimes  \tau_{{\mathrm{2}}}  }%
\ottpremise{   \Gamma   \mathop{,}   \ottmv{x_{{\mathrm{1}}}}  \ottsym{:}  \tau_{{\mathrm{1}}}    \mathop{,}   \ottmv{x_{{\mathrm{2}}}}  \ottsym{:}  \tau_{{\mathrm{2}}}    \vdash_{\mathit{eff} }   \ottnt{e_{{\mathrm{2}}}}  :^{\textcolor{\effectcolor}{ \textcolor{\effectcolor}{\phi}_{{\mathrm{2}}} } }  \tau }%
}{
 \Gamma   \vdash_{\mathit{eff} }    \ottkw{let}\; ( \ottmv{x_{{\mathrm{1}}}} ,  \ottmv{x_{{\mathrm{2}}}} ) =  \ottnt{e_{{\mathrm{1}}}} \; \ottkw{in}\;  \ottnt{e_{{\mathrm{2}}}}   :^{\textcolor{\effectcolor}{  \textcolor{\effectcolor}{ \textcolor{\effectcolor}{\phi}_{{\mathrm{1}}}  \cdot  \textcolor{\effectcolor}{\phi}_{{\mathrm{2}}} }  } }  \tau }{%
{\ottdrulename{lam\_eff\_split}}{}%
}}

\newcommand{\ottdrulelamXXeffXXinl}[1]{\ottdrule[#1]{%
\ottpremise{ \Gamma   \vdash_{\mathit{eff} }   \ottnt{e}  :^{\textcolor{\effectcolor}{ \textcolor{\effectcolor}{\phi} } }  \tau_{{\mathrm{1}}} }%
}{
 \Gamma   \vdash_{\mathit{eff} }   \ottkw{inl} \, \ottnt{e}  :^{\textcolor{\effectcolor}{ \textcolor{\effectcolor}{\phi} } }  \tau_{{\mathrm{1}}}  \ottsym{+}  \tau_{{\mathrm{2}}} }{%
{\ottdrulename{lam\_eff\_inl}}{}%
}}

\newcommand{\ottdrulelamXXeffXXinr}[1]{\ottdrule[#1]{%
\ottpremise{ \Gamma   \vdash_{\mathit{eff} }   \ottnt{e}  :^{\textcolor{\effectcolor}{ \textcolor{\effectcolor}{\phi} } }  \tau_{{\mathrm{2}}} }%
}{
 \Gamma   \vdash_{\mathit{eff} }   \ottkw{inr} \, \ottnt{e}  :^{\textcolor{\effectcolor}{ \textcolor{\effectcolor}{\phi} } }  \tau_{{\mathrm{1}}}  \ottsym{+}  \tau_{{\mathrm{2}}} }{%
{\ottdrulename{lam\_eff\_inr}}{}%
}}

\newcommand{\ottdrulelamXXeffXXcase}[1]{\ottdrule[#1]{%
\ottpremise{ \Gamma   \vdash_{\mathit{eff} }   \ottnt{e}  :^{\textcolor{\effectcolor}{ \textcolor{\effectcolor}{\phi}_{{\mathrm{1}}} } }  \tau_{{\mathrm{1}}}  \ottsym{+}  \tau_{{\mathrm{2}}} }%
\ottpremise{  \Gamma   \mathop{,}   \ottmv{x}  \ottsym{:}  \tau_{{\mathrm{1}}}    \vdash_{\mathit{eff} }   \ottnt{e_{{\mathrm{1}}}}  :^{\textcolor{\effectcolor}{ \textcolor{\effectcolor}{\phi}_{{\mathrm{2}}} } }  \tau }%
\ottpremise{  \Gamma   \mathop{,}   \ottmv{x}  \ottsym{:}  \tau_{{\mathrm{2}}}    \vdash_{\mathit{eff} }   \ottnt{e_{{\mathrm{2}}}}  :^{\textcolor{\effectcolor}{ \textcolor{\effectcolor}{\phi}_{{\mathrm{2}}} } }  \tau }%
}{
 \Gamma   \vdash_{\mathit{eff} }    \ottkw{case}\;  \ottnt{e} \; \ottkw{of}\; \ottkw{inl}\;  \ottmv{x_{{\mathrm{1}}}}  \rightarrow  \ottnt{e_{{\mathrm{1}}}}  \ottkw{;} \ottkw{inr}\;  \ottmv{x_{{\mathrm{2}}}}  \rightarrow  \ottnt{e_{{\mathrm{2}}}}   :^{\textcolor{\effectcolor}{  \textcolor{\effectcolor}{ \textcolor{\effectcolor}{\phi}_{{\mathrm{1}}}  \cdot  \textcolor{\effectcolor}{\phi}_{{\mathrm{2}}} }  } }  \tau }{%
{\ottdrulename{lam\_eff\_case}}{}%
}}

\newcommand{\ottdrulelamXXeffXXtick}[1]{\ottdrule[#1]{%
}{
 \Gamma   \vdash_{\mathit{eff} }    \textcolor{\effectcolor}{ \ottkw{tick} }   :^{\textcolor{\effectcolor}{  \textcolor{\effectcolor}{\ottkw{Tick} }  } }  \ottkw{unit} }{%
{\ottdrulename{lam\_eff\_tick}}{}%
}}

\newcommand{\ottdrulelamXXeffXXsub}[1]{\ottdrule[#1]{%
\ottpremise{ \Gamma   \vdash_{\mathit{eff} }   \ottnt{e}  :^{\textcolor{\effectcolor}{ \textcolor{\effectcolor}{\phi}_{{\mathrm{1}}} } }  \tau }%
\ottpremise{ \textcolor{\effectcolor}{ \textcolor{\effectcolor}{\phi}_{{\mathrm{1}}} \;  \textcolor{\effectcolor}{\mathop{\leq_{\mathit{eff} } } } \;  \textcolor{\effectcolor}{\phi}_{{\mathrm{2}}}  } }%
}{
 \Gamma   \vdash_{\mathit{eff} }   \ottnt{e}  :^{\textcolor{\effectcolor}{ \textcolor{\effectcolor}{\phi}_{{\mathrm{2}}} } }  \tau }{%
{\ottdrulename{lam\_eff\_sub}}{}%
}}

\newcommand{\ottdrulelamXXeffXXprint}[1]{\ottdrule[#1]{%
}{
 \Gamma   \vdash_{\mathit{eff} }   \ottkw{print} \, \ottmv{s}  :^{\textcolor{\effectcolor}{  \textcolor{\effectcolor}{[  \ottmv{s}  ] }  } }  \ottkw{unit} }{%
{\ottdrulename{lam\_eff\_print}}{}%
}}

\newcommand{\ottdefnlamXXeffXXtyping}[1]{\begin{ottdefnblock}[#1]{$ \Gamma   \vdash_{\mathit{eff} }   \ottnt{e}  :^{\textcolor{\effectcolor}{ \textcolor{\effectcolor}{\phi} } }  \tau $}{\ottcom{Effect calculus from Wadler / Thiemann, CBV semantics}}
\ottusedrule{\ottdrulelamXXeffXXvar{}}
\ottusedrule{\ottdrulelamXXeffXXabs{}}
\ottusedrule{\ottdrulelamXXeffXXapp{}}
\ottusedrule{\ottdrulelamXXeffXXunit{}}
\ottusedrule{\ottdrulelamXXeffXXsequence{}}
\ottusedrule{\ottdrulelamXXeffXXpair{}}
\ottusedrule{\ottdrulelamXXeffXXsplit{}}
\ottusedrule{\ottdrulelamXXeffXXinl{}}
\ottusedrule{\ottdrulelamXXeffXXinr{}}
\ottusedrule{\ottdrulelamXXeffXXcase{}}
\ottusedrule{\ottdrulelamXXeffXXtick{}}
\ottusedrule{\ottdrulelamXXeffXXsub{}}
\ottusedrule{\ottdrulelamXXeffXXprint{}}
\end{ottdefnblock}}

\newcommand{\ottdrulelamXXeffnXXvar}[1]{\ottdrule[#1]{%
\ottpremise{ \ottmv{x}  :^{\textcolor{\effectcolor}{ \textcolor{\effectcolor}{\phi}_{{\mathrm{1}}} } }  \tau  \, \in \, \Gamma}%
\ottpremise{ \textcolor{\effectcolor}{ \textcolor{\effectcolor}{\phi}_{{\mathrm{1}}} \;  \textcolor{\effectcolor}{\mathop{\leq_{\mathit{eff} } } } \;  \textcolor{\effectcolor}{\phi}_{{\mathrm{2}}}  } }%
}{
 \Gamma   \vdash_{\mathit{effn} }   \ottmv{x}  :^{\textcolor{\effectcolor}{ \textcolor{\effectcolor}{\phi}_{{\mathrm{2}}} } }  \tau }{%
{\ottdrulename{lam\_effn\_var}}{}%
}}

\newcommand{\ottdrulelamXXeffnXXabs}[1]{\ottdrule[#1]{%
\ottpremise{  \Gamma   \mathop{,}    \ottmv{x}  :^{\textcolor{\effectcolor}{ \textcolor{\effectcolor}{\phi}_{{\mathrm{1}}} } }  \tau_{{\mathrm{1}}}     \vdash_{\mathit{effn} }   \ottnt{e}  :^{\textcolor{\effectcolor}{ \textcolor{\effectcolor}{\phi}_{{\mathrm{2}}} } }  \tau_{{\mathrm{2}}} }%
}{
 \Gamma   \vdash_{\mathit{effn} }    \lambda  \ottmv{x} . \ottnt{e}   :^{\textcolor{\effectcolor}{ \textcolor{\effectcolor}{\phi} } }   \tau_{{\mathrm{1}}} ^{\textcolor{\effectcolor}{ \textcolor{\effectcolor}{\phi}_{{\mathrm{1}}} } }  \to   \tau_{{\mathrm{2}}} ^{\textcolor{\effectcolor}{ \textcolor{\effectcolor}{\phi}_{{\mathrm{2}}} } }  }{%
{\ottdrulename{lam\_effn\_abs}}{}%
}}

\newcommand{\ottdrulelamXXeffnXXapp}[1]{\ottdrule[#1]{%
\ottpremise{ \Gamma   \vdash_{\mathit{effn} }   \ottnt{e_{{\mathrm{1}}}}  :^{\textcolor{\effectcolor}{ \textcolor{\effectcolor}{\phi}_{{\mathrm{1}}} } }   \tau_{{\mathrm{1}}} ^{\textcolor{\effectcolor}{ \textcolor{\effectcolor}{\phi}_{{\mathrm{3}}} } }  \to   \tau_{{\mathrm{2}}} ^{\textcolor{\effectcolor}{ \textcolor{\effectcolor}{\phi}_{{\mathrm{4}}} } }  }%
\ottpremise{ \Gamma   \vdash_{\mathit{effn} }   \ottnt{e_{{\mathrm{2}}}}  :^{\textcolor{\effectcolor}{ \textcolor{\effectcolor}{\phi}_{{\mathrm{2}}} } }  \tau_{{\mathrm{1}}} }%
\ottpremise{ \textcolor{\effectcolor}{ \textcolor{\effectcolor}{\phi}_{{\mathrm{2}}} \;  \textcolor{\effectcolor}{\mathop{\leq_{\mathit{eff} } } } \;  \textcolor{\effectcolor}{\phi}_{{\mathrm{3}}}  } }%
\ottpremise{ \textcolor{\effectcolor}{  \textcolor{\effectcolor}{ \textcolor{\effectcolor}{\phi}_{{\mathrm{1}}}  \cdot  \textcolor{\effectcolor}{\phi}_{{\mathrm{4}}} }  \;  \textcolor{\effectcolor}{\mathop{\leq_{\mathit{eff} } } } \;  \textcolor{\effectcolor}{\phi}  } }%
}{
 \Gamma   \vdash_{\mathit{effn} }   \ottnt{e_{{\mathrm{1}}}} \, \ottnt{e_{{\mathrm{2}}}}  :^{\textcolor{\effectcolor}{ \textcolor{\effectcolor}{\phi} } }  \tau_{{\mathrm{2}}} }{%
{\ottdrulename{lam\_effn\_app}}{}%
}}

\newcommand{\ottdrulelamXXeffnXXunit}[1]{\ottdrule[#1]{%
}{
 \Gamma   \vdash_{\mathit{effn} }   \ottsym{()}  :^{\textcolor{\effectcolor}{ \textcolor{\effectcolor}{\phi} } }  \ottkw{unit} }{%
{\ottdrulename{lam\_effn\_unit}}{}%
}}

\newcommand{\ottdrulelamXXeffnXXsequence}[1]{\ottdrule[#1]{%
\ottpremise{ \Gamma   \vdash_{\mathit{effn} }   \ottnt{e_{{\mathrm{1}}}}  :^{\textcolor{\effectcolor}{ \textcolor{\effectcolor}{\phi}_{{\mathrm{1}}} } }  \ottkw{unit} }%
\ottpremise{ \Gamma   \vdash_{\mathit{effn} }   \ottnt{e_{{\mathrm{2}}}}  :^{\textcolor{\effectcolor}{ \textcolor{\effectcolor}{\phi}_{{\mathrm{2}}} } }  \tau }%
\ottpremise{ \textcolor{\effectcolor}{  \textcolor{\effectcolor}{ \textcolor{\effectcolor}{\phi}_{{\mathrm{1}}}  \cdot  \textcolor{\effectcolor}{\phi}_{{\mathrm{2}}} }  \;  \textcolor{\effectcolor}{\mathop{\leq_{\mathit{eff} } } } \;  \textcolor{\effectcolor}{\phi}  } }%
}{
 \Gamma   \vdash_{\mathit{effn} }    \ottnt{e_{{\mathrm{1}}}}  ;  \ottnt{e_{{\mathrm{2}}}}   :^{\textcolor{\effectcolor}{ \textcolor{\effectcolor}{\phi} } }  \tau }{%
{\ottdrulename{lam\_effn\_sequence}}{}%
}}

\newcommand{\ottdrulelamXXeffnXXpair}[1]{\ottdrule[#1]{%
\ottpremise{ \Gamma   \vdash_{\mathit{effn} }   \ottnt{e_{{\mathrm{1}}}}  :^{\textcolor{\effectcolor}{ \textcolor{\effectcolor}{\phi} } }  \tau_{{\mathrm{1}}} }%
\ottpremise{ \Gamma   \vdash_{\mathit{effn} }   \ottnt{e_{{\mathrm{2}}}}  :^{\textcolor{\effectcolor}{ \textcolor{\effectcolor}{\phi} } }  \tau_{{\mathrm{2}}} }%
}{
 \Gamma   \vdash_{\mathit{effn} }   \ottsym{(}  \ottnt{e_{{\mathrm{1}}}}  \ottsym{,}  \ottnt{e_{{\mathrm{2}}}}  \ottsym{)}  :^{\textcolor{\effectcolor}{ \textcolor{\effectcolor}{\phi} } }   \tau_{{\mathrm{1}}}  \otimes  \tau_{{\mathrm{2}}}  }{%
{\ottdrulename{lam\_effn\_pair}}{}%
}}

\newcommand{\ottdrulelamXXeffnXXsplit}[1]{\ottdrule[#1]{%
\ottpremise{ \Gamma   \vdash_{\mathit{effn} }   \ottnt{e_{{\mathrm{1}}}}  :^{\textcolor{\effectcolor}{ \textcolor{\effectcolor}{\phi} } }   \tau_{{\mathrm{1}}}  \otimes  \tau_{{\mathrm{2}}}  }%
\ottpremise{   \Gamma   \mathop{,}    \ottmv{x_{{\mathrm{1}}}}  :^{\textcolor{\effectcolor}{  \textcolor{\effectcolor}{\varepsilon}  } }  \tau_{{\mathrm{1}}}     \mathop{,}    \ottmv{x_{{\mathrm{2}}}}  :^{\textcolor{\effectcolor}{  \textcolor{\effectcolor}{\varepsilon}  } }  \tau_{{\mathrm{2}}}     \vdash_{\mathit{effn} }   \ottnt{e_{{\mathrm{2}}}}  :^{\textcolor{\effectcolor}{ \textcolor{\effectcolor}{\phi} } }  \tau }%
}{
 \Gamma   \vdash_{\mathit{effn} }    \ottkw{let}\; ( \ottmv{x_{{\mathrm{1}}}} ,  \ottmv{x_{{\mathrm{2}}}} ) =  \ottnt{e_{{\mathrm{1}}}} \; \ottkw{in}\;  \ottnt{e_{{\mathrm{2}}}}   :^{\textcolor{\effectcolor}{ \textcolor{\effectcolor}{\phi} } }  \tau }{%
{\ottdrulename{lam\_effn\_split}}{}%
}}

\newcommand{\ottdrulelamXXeffnXXwith}[1]{\ottdrule[#1]{%
\ottpremise{ \Gamma   \vdash_{\mathit{effn} }   \ottnt{e_{{\mathrm{1}}}}  :^{\textcolor{\effectcolor}{ \textcolor{\effectcolor}{\phi}_{{\mathrm{1}}} } }  \tau_{{\mathrm{1}}} }%
\ottpremise{ \Gamma   \vdash_{\mathit{effn} }   \ottnt{e_{{\mathrm{2}}}}  :^{\textcolor{\effectcolor}{ \textcolor{\effectcolor}{\phi}_{{\mathrm{2}}} } }  \tau_{{\mathrm{2}}} }%
}{
 \Gamma   \vdash_{\mathit{effn} }   \ottsym{(}  \ottnt{e_{{\mathrm{1}}}}  \ottsym{,}  \ottnt{e_{{\mathrm{2}}}}  \ottsym{)}  :^{\textcolor{\effectcolor}{ \textcolor{\effectcolor}{\phi} } }  \ottsym{(}   \tau_{{\mathrm{1}}} ^{\textcolor{\effectcolor}{ \textcolor{\effectcolor}{\phi}_{{\mathrm{1}}} } }  \mathop{\&}   \tau_{{\mathrm{2}}} ^{\textcolor{\effectcolor}{ \textcolor{\effectcolor}{\phi}_{{\mathrm{2}}} } }   \ottsym{)} }{%
{\ottdrulename{lam\_effn\_with}}{}%
}}

\newcommand{\ottdrulelamXXeffnXXfst}[1]{\ottdrule[#1]{%
\ottpremise{ \Gamma   \vdash_{\mathit{effn} }   \ottnt{e}  :^{\textcolor{\effectcolor}{ \textcolor{\effectcolor}{\phi} } }  \ottsym{(}   \tau_{{\mathrm{1}}} ^{\textcolor{\effectcolor}{ \textcolor{\effectcolor}{\phi}_{{\mathrm{1}}} } }  \mathop{\&}   \tau_{{\mathrm{2}}} ^{\textcolor{\effectcolor}{ \textcolor{\effectcolor}{\phi}_{{\mathrm{2}}} } }   \ottsym{)} }%
\ottpremise{ \textcolor{\effectcolor}{ \textcolor{\effectcolor}{\phi}_{{\mathrm{1}}} \;  \textcolor{\effectcolor}{\mathop{\leq_{\mathit{eff} } } } \;  \textcolor{\effectcolor}{\phi}  } }%
}{
 \Gamma   \vdash_{\mathit{effn} }   \ottnt{e}  \ottsym{.}  \ottsym{1}  :^{\textcolor{\effectcolor}{ \textcolor{\effectcolor}{\phi} } }  \tau_{{\mathrm{1}}} }{%
{\ottdrulename{lam\_effn\_fst}}{}%
}}

\newcommand{\ottdrulelamXXeffnXXsnd}[1]{\ottdrule[#1]{%
\ottpremise{ \Gamma   \vdash_{\mathit{effn} }   \ottnt{e}  :^{\textcolor{\effectcolor}{ \textcolor{\effectcolor}{\phi} } }  \ottsym{(}   \tau_{{\mathrm{1}}} ^{\textcolor{\effectcolor}{ \textcolor{\effectcolor}{\phi}_{{\mathrm{1}}} } }  \mathop{\&}   \tau_{{\mathrm{2}}} ^{\textcolor{\effectcolor}{ \textcolor{\effectcolor}{\phi}_{{\mathrm{2}}} } }   \ottsym{)} }%
\ottpremise{ \textcolor{\effectcolor}{ \textcolor{\effectcolor}{\phi}_{{\mathrm{2}}} \;  \textcolor{\effectcolor}{\mathop{\leq_{\mathit{eff} } } } \;  \textcolor{\effectcolor}{\phi}  } }%
}{
 \Gamma   \vdash_{\mathit{effn} }   \ottnt{e}  \ottsym{.}  \ottsym{2}  :^{\textcolor{\effectcolor}{ \textcolor{\effectcolor}{\phi} } }  \tau_{{\mathrm{1}}} }{%
{\ottdrulename{lam\_effn\_snd}}{}%
}}

\newcommand{\ottdrulelamXXeffnXXinl}[1]{\ottdrule[#1]{%
\ottpremise{ \Gamma   \vdash_{\mathit{effn} }   \ottnt{e}  :^{\textcolor{\effectcolor}{ \textcolor{\effectcolor}{\phi} } }  \tau_{{\mathrm{1}}} }%
}{
 \Gamma   \vdash_{\mathit{effn} }   \ottkw{inl} \, \ottnt{e}  :^{\textcolor{\effectcolor}{ \textcolor{\effectcolor}{\phi} } }  \tau_{{\mathrm{1}}}  \ottsym{+}  \tau_{{\mathrm{2}}} }{%
{\ottdrulename{lam\_effn\_inl}}{}%
}}

\newcommand{\ottdrulelamXXeffnXXinr}[1]{\ottdrule[#1]{%
\ottpremise{ \Gamma   \vdash_{\mathit{effn} }   \ottnt{e}  :^{\textcolor{\effectcolor}{ \textcolor{\effectcolor}{\phi} } }  \tau_{{\mathrm{2}}} }%
}{
 \Gamma   \vdash_{\mathit{effn} }   \ottkw{inr} \, \ottnt{e}  :^{\textcolor{\effectcolor}{ \textcolor{\effectcolor}{\phi} } }  \tau_{{\mathrm{1}}}  \ottsym{+}  \tau_{{\mathrm{2}}} }{%
{\ottdrulename{lam\_effn\_inr}}{}%
}}

\newcommand{\ottdrulelamXXeffnXXcase}[1]{\ottdrule[#1]{%
\ottpremise{ \Gamma   \vdash_{\mathit{effn} }   \ottnt{e}  :^{\textcolor{\effectcolor}{ \textcolor{\effectcolor}{\phi} } }  \tau_{{\mathrm{1}}}  \ottsym{+}  \tau_{{\mathrm{2}}} }%
\ottpremise{  \Gamma   \mathop{,}    \ottmv{x}  :^{\textcolor{\effectcolor}{  \textcolor{\effectcolor}{\varepsilon}  } }  \tau_{{\mathrm{1}}}     \vdash_{\mathit{effn} }   \ottnt{e_{{\mathrm{1}}}}  :^{\textcolor{\effectcolor}{ \textcolor{\effectcolor}{\phi} } }  \tau }%
\ottpremise{  \Gamma   \mathop{,}    \ottmv{x}  :^{\textcolor{\effectcolor}{  \textcolor{\effectcolor}{\varepsilon}  } }  \tau_{{\mathrm{2}}}     \vdash_{\mathit{effn} }   \ottnt{e_{{\mathrm{2}}}}  :^{\textcolor{\effectcolor}{ \textcolor{\effectcolor}{\phi} } }  \tau }%
}{
 \Gamma   \vdash_{\mathit{effn} }    \ottkw{case}\;  \ottnt{e} \; \ottkw{of}\; \ottkw{inl}\;  \ottmv{x_{{\mathrm{1}}}}  \rightarrow  \ottnt{e_{{\mathrm{1}}}}  \ottkw{;} \ottkw{inr}\;  \ottmv{x_{{\mathrm{2}}}}  \rightarrow  \ottnt{e_{{\mathrm{2}}}}   :^{\textcolor{\effectcolor}{ \textcolor{\effectcolor}{\phi} } }  \tau }{%
{\ottdrulename{lam\_effn\_case}}{}%
}}

\newcommand{\ottdrulelamXXeffnXXtick}[1]{\ottdrule[#1]{%
}{
 \Gamma   \vdash_{\mathit{effn} }    \textcolor{\effectcolor}{ \ottkw{tick} }   :^{\textcolor{\effectcolor}{  \textcolor{\effectcolor}{\ottkw{Tick} }  } }  \ottkw{unit} }{%
{\ottdrulename{lam\_effn\_tick}}{}%
}}

\newcommand{\ottdrulelamXXeffnXXsub}[1]{\ottdrule[#1]{%
\ottpremise{ \Gamma   \vdash_{\mathit{effn} }   \ottnt{e}  :^{\textcolor{\effectcolor}{ \textcolor{\effectcolor}{\phi}_{{\mathrm{1}}} } }  \tau }%
\ottpremise{ \textcolor{\effectcolor}{ \textcolor{\effectcolor}{\phi}_{{\mathrm{1}}} \;  \textcolor{\effectcolor}{\mathop{\leq_{\mathit{eff} } } } \;  \textcolor{\effectcolor}{\phi}_{{\mathrm{2}}}  } }%
}{
 \Gamma   \vdash_{\mathit{effn} }   \ottnt{e}  :^{\textcolor{\effectcolor}{ \textcolor{\effectcolor}{\phi}_{{\mathrm{2}}} } }  \tau }{%
{\ottdrulename{lam\_effn\_sub}}{}%
}}

\newcommand{\ottdefnlamXXeffnXXtyping}[1]{\begin{ottdefnblock}[#1]{$ \Gamma   \vdash_{\mathit{effn} }   \ottnt{e}  :^{\textcolor{\effectcolor}{ \textcolor{\effectcolor}{\phi} } }  \tau $}{\ottcom{CBN Effect calculus, not realistic but why not?}}
\ottusedrule{\ottdrulelamXXeffnXXvar{}}
\ottusedrule{\ottdrulelamXXeffnXXabs{}}
\ottusedrule{\ottdrulelamXXeffnXXapp{}}
\ottusedrule{\ottdrulelamXXeffnXXunit{}}
\ottusedrule{\ottdrulelamXXeffnXXsequence{}}
\ottusedrule{\ottdrulelamXXeffnXXpair{}}
\ottusedrule{\ottdrulelamXXeffnXXsplit{}}
\ottusedrule{\ottdrulelamXXeffnXXwith{}}
\ottusedrule{\ottdrulelamXXeffnXXfst{}}
\ottusedrule{\ottdrulelamXXeffnXXsnd{}}
\ottusedrule{\ottdrulelamXXeffnXXinl{}}
\ottusedrule{\ottdrulelamXXeffnXXinr{}}
\ottusedrule{\ottdrulelamXXeffnXXcase{}}
\ottusedrule{\ottdrulelamXXeffnXXtick{}}
\ottusedrule{\ottdrulelamXXeffnXXsub{}}
\end{ottdefnblock}}

\newcommand{\ottdefnsJEff}{
\ottdefnvalXXeffXXtyping{}\ottdefncompXXeffXXtyping{}\ottdefnlamXXeffXXtyping{}\ottdefnlamXXeffnXXtyping{}}

\newcommand{\ottdrulelamXXmonXXtick}[1]{\ottdrule[#1]{%
}{
\Gamma  \vdash_{\mathit{mon} }   \textcolor{\effectcolor}{ \ottkw{tick} }   \ottsym{:}   \ottkw{T}_{\textcolor{\effectcolor}{  \textcolor{\effectcolor}{\ottkw{Tick} }  } }\;  \ottkw{unit} }{%
{\ottdrulename{lam\_mon\_tick}}{}%
}}

\newcommand{\ottdrulelamXXmonXXreturn}[1]{\ottdrule[#1]{%
\ottpremise{\Gamma  \vdash_{\mathit{mon} }  \ottnt{e}  \ottsym{:}  \tau}%
}{
\Gamma  \vdash_{\mathit{mon} }  \ottkw{return} \, \ottnt{e}  \ottsym{:}   \ottkw{T}_{\textcolor{\effectcolor}{  \textcolor{\effectcolor}{\varepsilon}  } }\;  \tau }{%
{\ottdrulename{lam\_mon\_return}}{}%
}}

\newcommand{\ottdrulelamXXmonXXbind}[1]{\ottdrule[#1]{%
\ottpremise{\Gamma  \vdash_{\mathit{mon} }  \ottnt{e_{{\mathrm{1}}}}  \ottsym{:}   \ottkw{T}_{\textcolor{\effectcolor}{ \textcolor{\effectcolor}{\phi}_{{\mathrm{1}}} } }\;  \tau_{{\mathrm{1}}} }%
\ottpremise{ \Gamma   \mathop{,}   \ottmv{x}  \ottsym{:}  \tau_{{\mathrm{1}}}   \vdash_{\mathit{mon} }  \ottnt{e_{{\mathrm{2}}}}  \ottsym{:}   \ottkw{T}_{\textcolor{\effectcolor}{ \textcolor{\effectcolor}{\phi}_{{\mathrm{2}}} } }\;  \tau_{{\mathrm{2}}} }%
}{
\Gamma  \vdash_{\mathit{mon} }  \ottkw{bind} \, \ottmv{x}  \ottsym{=}  \ottnt{e_{{\mathrm{1}}}} \, \ottkw{in} \, \ottnt{e_{{\mathrm{2}}}}  \ottsym{:}   \ottkw{T}_{\textcolor{\effectcolor}{  \textcolor{\effectcolor}{ \textcolor{\effectcolor}{\phi}_{{\mathrm{1}}}  \cdot  \textcolor{\effectcolor}{\phi}_{{\mathrm{2}}} }  } }\;  \tau_{{\mathrm{2}}} }{%
{\ottdrulename{lam\_mon\_bind}}{}%
}}

\newcommand{\ottdrulelamXXmonXXcoerce}[1]{\ottdrule[#1]{%
\ottpremise{\Gamma  \vdash_{\mathit{mon} }  \ottnt{e}  \ottsym{:}   \ottkw{T}_{\textcolor{\effectcolor}{ \textcolor{\effectcolor}{\phi}_{{\mathrm{1}}} } }\;  \tau }%
\ottpremise{ \textcolor{\effectcolor}{ \textcolor{\effectcolor}{\phi}_{{\mathrm{1}}} \;  \textcolor{\effectcolor}{\mathop{\leq_{\mathit{eff} } } } \;  \textcolor{\effectcolor}{\phi}_{{\mathrm{2}}}  } }%
}{
\Gamma  \vdash_{\mathit{mon} }  \ottkw{coerce} \, \ottnt{e}  \ottsym{:}   \ottkw{T}_{\textcolor{\effectcolor}{ \textcolor{\effectcolor}{\phi}_{{\mathrm{2}}} } }\;  \tau }{%
{\ottdrulename{lam\_mon\_coerce}}{}%
}}

\newcommand{\ottdrulelamXXmonXXvar}[1]{\ottdrule[#1]{%
\ottpremise{\ottmv{x}  \ottsym{:}  \tau \, \in \, \Gamma}%
}{
\Gamma  \vdash_{\mathit{mon} }  \ottmv{x}  \ottsym{:}  \tau}{%
{\ottdrulename{lam\_mon\_var}}{}%
}}

\newcommand{\ottdrulelamXXmonXXabs}[1]{\ottdrule[#1]{%
\ottpremise{ \Gamma   \mathop{,}   \ottmv{x}  \ottsym{:}  \tau_{{\mathrm{1}}}   \vdash_{\mathit{mon} }  \ottnt{e}  \ottsym{:}  \tau_{{\mathrm{2}}}}%
}{
\Gamma  \vdash_{\mathit{mon} }   \lambda  \ottmv{x} . \ottnt{e}   \ottsym{:}  \tau_{{\mathrm{1}}}  \to  \tau_{{\mathrm{2}}}}{%
{\ottdrulename{lam\_mon\_abs}}{}%
}}

\newcommand{\ottdrulelamXXmonXXapp}[1]{\ottdrule[#1]{%
\ottpremise{\Gamma  \vdash_{\mathit{mon} }  \ottnt{e_{{\mathrm{1}}}}  \ottsym{:}  \tau_{{\mathrm{1}}}  \to  \tau_{{\mathrm{2}}}}%
\ottpremise{\Gamma  \vdash_{\mathit{mon} }  \ottnt{e_{{\mathrm{2}}}}  \ottsym{:}  \tau_{{\mathrm{1}}}}%
}{
\Gamma  \vdash_{\mathit{mon} }  \ottnt{e_{{\mathrm{1}}}} \, \ottnt{e_{{\mathrm{2}}}}  \ottsym{:}  \tau_{{\mathrm{2}}}}{%
{\ottdrulename{lam\_mon\_app}}{}%
}}

\newcommand{\ottdrulelamXXmonXXunit}[1]{\ottdrule[#1]{%
}{
\Gamma  \vdash_{\mathit{mon} }  \ottsym{()}  \ottsym{:}  \ottkw{unit}}{%
{\ottdrulename{lam\_mon\_unit}}{}%
}}

\newcommand{\ottdrulelamXXmonXXsequence}[1]{\ottdrule[#1]{%
\ottpremise{\Gamma  \vdash_{\mathit{mon} }  \ottnt{e_{{\mathrm{1}}}}  \ottsym{:}  \ottkw{unit}}%
\ottpremise{\Gamma  \vdash_{\mathit{mon} }  \ottnt{e_{{\mathrm{2}}}}  \ottsym{:}  \tau}%
}{
\Gamma  \vdash_{\mathit{mon} }   \ottnt{e_{{\mathrm{1}}}}  ;  \ottnt{e_{{\mathrm{2}}}}   \ottsym{:}  \tau}{%
{\ottdrulename{lam\_mon\_sequence}}{}%
}}

\newcommand{\ottdrulelamXXmonXXpair}[1]{\ottdrule[#1]{%
\ottpremise{\Gamma  \vdash_{\mathit{mon} }  \ottnt{e_{{\mathrm{1}}}}  \ottsym{:}  \tau_{{\mathrm{1}}}}%
\ottpremise{\Gamma  \vdash_{\mathit{mon} }  \ottnt{e_{{\mathrm{2}}}}  \ottsym{:}  \tau_{{\mathrm{2}}}}%
}{
\Gamma  \vdash_{\mathit{mon} }  \ottsym{(}  \ottnt{e_{{\mathrm{1}}}}  \ottsym{,}  \ottnt{e_{{\mathrm{2}}}}  \ottsym{)}  \ottsym{:}   \tau_{{\mathrm{1}}}  \otimes  \tau_{{\mathrm{2}}} }{%
{\ottdrulename{lam\_mon\_pair}}{}%
}}

\newcommand{\ottdrulelamXXmonXXsplit}[1]{\ottdrule[#1]{%
\ottpremise{\Gamma  \vdash_{\mathit{mon} }  \ottnt{e_{{\mathrm{1}}}}  \ottsym{:}   \tau_{{\mathrm{1}}}  \otimes  \tau_{{\mathrm{2}}} }%
\ottpremise{  \Gamma   \mathop{,}   \ottmv{x_{{\mathrm{1}}}}  \ottsym{:}  \tau_{{\mathrm{1}}}    \mathop{,}   \ottmv{x_{{\mathrm{2}}}}  \ottsym{:}  \tau_{{\mathrm{2}}}   \vdash_{\mathit{mon} }  \ottnt{e_{{\mathrm{2}}}}  \ottsym{:}  \tau}%
}{
\Gamma  \vdash_{\mathit{mon} }   \ottkw{let}\; ( \ottmv{x_{{\mathrm{1}}}} ,  \ottmv{x_{{\mathrm{2}}}} ) =  \ottnt{e_{{\mathrm{1}}}} \; \ottkw{in}\;  \ottnt{e_{{\mathrm{2}}}}   \ottsym{:}  \tau}{%
{\ottdrulename{lam\_mon\_split}}{}%
}}

\newcommand{\ottdrulelamXXmonXXwith}[1]{\ottdrule[#1]{%
\ottpremise{\Gamma  \vdash_{\mathit{mon} }  \ottnt{e_{{\mathrm{1}}}}  \ottsym{:}  \tau_{{\mathrm{1}}}}%
\ottpremise{\Gamma  \vdash_{\mathit{mon} }  \ottnt{e_{{\mathrm{2}}}}  \ottsym{:}  \tau_{{\mathrm{2}}}}%
}{
\Gamma  \vdash_{\mathit{mon} }   \langle  \ottnt{e_{{\mathrm{1}}}} , \ottnt{e_{{\mathrm{2}}}}  \rangle   \ottsym{:}   \tau_{{\mathrm{1}}}  \mathop{\&}   \tau_{{\mathrm{2}}} }{%
{\ottdrulename{lam\_mon\_with}}{}%
}}

\newcommand{\ottdrulelamXXmonXXfst}[1]{\ottdrule[#1]{%
\ottpremise{\Gamma  \vdash_{\mathit{mon} }  \ottnt{e}  \ottsym{:}   \tau_{{\mathrm{1}}}  \mathop{\&}   \tau_{{\mathrm{2}}} }%
}{
\Gamma  \vdash_{\mathit{mon} }  \ottnt{e}  \ottsym{.}  \ottsym{1}  \ottsym{:}  \tau_{{\mathrm{1}}}}{%
{\ottdrulename{lam\_mon\_fst}}{}%
}}

\newcommand{\ottdrulelamXXmonXXsnd}[1]{\ottdrule[#1]{%
\ottpremise{\Gamma  \vdash_{\mathit{mon} }  \ottnt{e}  \ottsym{:}   \tau_{{\mathrm{1}}}  \mathop{\&}   \tau_{{\mathrm{2}}} }%
}{
\Gamma  \vdash_{\mathit{mon} }  \ottnt{e}  \ottsym{.}  \ottsym{1}  \ottsym{:}  \tau_{{\mathrm{2}}}}{%
{\ottdrulename{lam\_mon\_snd}}{}%
}}

\newcommand{\ottdrulelamXXmonXXinl}[1]{\ottdrule[#1]{%
\ottpremise{\Gamma  \vdash_{\mathit{mon} }  \ottnt{e}  \ottsym{:}  \tau_{{\mathrm{1}}}}%
}{
\Gamma  \vdash_{\mathit{mon} }  \ottkw{inl} \, \ottnt{e}  \ottsym{:}  \tau_{{\mathrm{1}}}  \ottsym{+}  \tau_{{\mathrm{2}}}}{%
{\ottdrulename{lam\_mon\_inl}}{}%
}}

\newcommand{\ottdrulelamXXmonXXinr}[1]{\ottdrule[#1]{%
\ottpremise{\Gamma  \vdash_{\mathit{mon} }  \ottnt{e}  \ottsym{:}  \tau_{{\mathrm{2}}}}%
}{
\Gamma  \vdash_{\mathit{mon} }  \ottkw{inr} \, \ottnt{e}  \ottsym{:}  \tau_{{\mathrm{1}}}  \ottsym{+}  \tau_{{\mathrm{2}}}}{%
{\ottdrulename{lam\_mon\_inr}}{}%
}}

\newcommand{\ottdrulelamXXmonXXcase}[1]{\ottdrule[#1]{%
\ottpremise{\Gamma  \vdash_{\mathit{mon} }  \ottnt{e}  \ottsym{:}  \tau_{{\mathrm{1}}}  \ottsym{+}  \tau_{{\mathrm{2}}}}%
\ottpremise{ \Gamma   \mathop{,}   \ottmv{x}  \ottsym{:}  \tau_{{\mathrm{1}}}   \vdash_{\mathit{mon} }  \ottnt{e_{{\mathrm{1}}}}  \ottsym{:}  \tau}%
\ottpremise{ \Gamma   \mathop{,}   \ottmv{x}  \ottsym{:}  \tau_{{\mathrm{2}}}   \vdash_{\mathit{mon} }  \ottnt{e_{{\mathrm{2}}}}  \ottsym{:}  \tau}%
}{
\Gamma  \vdash_{\mathit{mon} }   \ottkw{case}\;  \ottnt{e} \; \ottkw{of}\; \ottkw{inl}\;  \ottmv{x_{{\mathrm{1}}}}  \rightarrow  \ottnt{e_{{\mathrm{1}}}}  \ottkw{;} \ottkw{inr}\;  \ottmv{x_{{\mathrm{2}}}}  \rightarrow  \ottnt{e_{{\mathrm{2}}}}   \ottsym{:}  \tau}{%
{\ottdrulename{lam\_mon\_case}}{}%
}}

\newcommand{\ottdefnmonXXtyping}[1]{\begin{ottdefnblock}[#1]{$\Gamma  \vdash_{\mathit{mon} }  \ottnt{e}  \ottsym{:}  \tau$}{\ottcom{Monadic calculus from Wadler / Thiemann, CBN semantics}}
\ottusedrule{\ottdrulelamXXmonXXtick{}}
\ottusedrule{\ottdrulelamXXmonXXreturn{}}
\ottusedrule{\ottdrulelamXXmonXXbind{}}
\ottusedrule{\ottdrulelamXXmonXXcoerce{}}
\ottusedrule{\ottdrulelamXXmonXXvar{}}
\ottusedrule{\ottdrulelamXXmonXXabs{}}
\ottusedrule{\ottdrulelamXXmonXXapp{}}
\ottusedrule{\ottdrulelamXXmonXXunit{}}
\ottusedrule{\ottdrulelamXXmonXXsequence{}}
\ottusedrule{\ottdrulelamXXmonXXpair{}}
\ottusedrule{\ottdrulelamXXmonXXsplit{}}
\ottusedrule{\ottdrulelamXXmonXXwith{}}
\ottusedrule{\ottdrulelamXXmonXXfst{}}
\ottusedrule{\ottdrulelamXXmonXXsnd{}}
\ottusedrule{\ottdrulelamXXmonXXinl{}}
\ottusedrule{\ottdrulelamXXmonXXinr{}}
\ottusedrule{\ottdrulelamXXmonXXcase{}}
\end{ottdefnblock}}

\newcommand{\ottdefnsJMonEff}{
\ottdefnmonXXtyping{}}


\newcommand{\ottdefnsJContextOps}{
}

\newcommand{\ottdrulecoeffXXvar}[1]{\ottdrule[#1]{%
}{
   \textcolor{\coeffectcolor}{  \textcolor{\coeffectcolor}{\overline{0} }  }\! \cdot \! \Gamma_{{\mathrm{1}}}    \mathop{,}    \ottmv{x}  :^{\textcolor{\coeffectcolor}{  \textcolor{\coeffectcolor}{1}  } }  \ottnt{A}     \mathop{,}    \textcolor{\coeffectcolor}{  \textcolor{\coeffectcolor}{\overline{0} }  }\! \cdot \! \Gamma_{{\mathrm{2}}}    \vdash_{\mathit{coeff} }  \ottmv{x}  \ottsym{:}  \ottnt{A}}{%
{\ottdrulename{coeff\_var}}{}%
}}

\newcommand{\ottdrulecoeffXXthunk}[1]{\ottdrule[#1]{%
\ottpremise{ \textcolor{\coeffectcolor}{ \textcolor{\coeffectcolor}{\gamma} }\! \cdot \! \Gamma   \vdash_{\mathit{coeff} }  \ottnt{M}  \ottsym{:}  \ottnt{B}}%
}{
 \textcolor{\coeffectcolor}{ \textcolor{\coeffectcolor}{\gamma} }\! \cdot \! \Gamma   \vdash_{\mathit{coeff} }  \ottsym{\{}  \ottnt{M}  \ottsym{\}}  \ottsym{:}  \ottkw{U} \, \ottnt{B}}{%
{\ottdrulename{coeff\_thunk}}{}%
}}

\newcommand{\ottdrulecoeffXXunit}[1]{\ottdrule[#1]{%
}{
 \textcolor{\coeffectcolor}{  \textcolor{\coeffectcolor}{\overline{0} }  }\! \cdot \! \Gamma   \vdash_{\mathit{coeff} }  \ottsym{()}  \ottsym{:}  \ottkw{unit}}{%
{\ottdrulename{coeff\_unit}}{}%
}}

\newcommand{\ottdrulecoeffXXpair}[1]{\ottdrule[#1]{%
\ottpremise{ \textcolor{\coeffectcolor}{ \textcolor{\coeffectcolor}{\gamma}_{{\mathrm{1}}} }\! \cdot \! \Gamma   \vdash_{\mathit{coeff} }  \ottnt{V_{{\mathrm{1}}}}  \ottsym{:}  \ottnt{A_{{\mathrm{1}}}}}%
\ottpremise{ \textcolor{\coeffectcolor}{ \textcolor{\coeffectcolor}{\gamma}_{{\mathrm{2}}} }\! \cdot \! \Gamma   \vdash_{\mathit{coeff} }  \ottnt{V_{{\mathrm{2}}}}  \ottsym{:}  \ottnt{A_{{\mathrm{2}}}}}%
}{
 \textcolor{\coeffectcolor}{  \textcolor{\coeffectcolor}{ \textcolor{\coeffectcolor}{\gamma}_{{\mathrm{1}}} \ottsym{+} \textcolor{\coeffectcolor}{\gamma}_{{\mathrm{2}}} }  }\! \cdot \! \Gamma   \vdash_{\mathit{coeff} }  \ottsym{(}  \ottnt{V_{{\mathrm{1}}}}  \ottsym{,}  \ottnt{V_{{\mathrm{2}}}}  \ottsym{)}  \ottsym{:}   \ottnt{A_{{\mathrm{1}}}} \times \ottnt{A_{{\mathrm{2}}}} }{%
{\ottdrulename{coeff\_pair}}{}%
}}

\newcommand{\ottdrulecoeffXXinl}[1]{\ottdrule[#1]{%
\ottpremise{ \textcolor{\coeffectcolor}{ \textcolor{\coeffectcolor}{\gamma} }\! \cdot \! \Gamma   \vdash_{\mathit{coeff} }  \ottnt{V}  \ottsym{:}  \ottnt{A_{{\mathrm{1}}}}}%
}{
 \textcolor{\coeffectcolor}{ \textcolor{\coeffectcolor}{\gamma} }\! \cdot \! \Gamma   \vdash_{\mathit{coeff} }  \ottkw{inl} \, \ottnt{V}  \ottsym{:}  \ottnt{A_{{\mathrm{1}}}}  \ottsym{+}  \ottnt{A_{{\mathrm{2}}}}}{%
{\ottdrulename{coeff\_inl}}{}%
}}

\newcommand{\ottdrulecoeffXXinr}[1]{\ottdrule[#1]{%
\ottpremise{ \textcolor{\coeffectcolor}{ \textcolor{\coeffectcolor}{\gamma} }\! \cdot \! \Gamma   \vdash_{\mathit{coeff} }  \ottnt{V}  \ottsym{:}  \ottnt{A_{{\mathrm{2}}}}}%
}{
 \textcolor{\coeffectcolor}{ \textcolor{\coeffectcolor}{\gamma} }\! \cdot \! \Gamma   \vdash_{\mathit{coeff} }  \ottkw{inr} \, \ottnt{V}  \ottsym{:}  \ottnt{A_{{\mathrm{1}}}}  \ottsym{+}  \ottnt{A_{{\mathrm{2}}}}}{%
{\ottdrulename{coeff\_inr}}{}%
}}

\newcommand{\ottdrulecoeffXXvwith}[1]{\ottdrule[#1]{%
\ottpremise{ \textcolor{\coeffectcolor}{ \textcolor{\coeffectcolor}{\gamma} }\! \cdot \! \Gamma   \vdash_{\mathit{coeff} }  \ottnt{V_{{\mathrm{1}}}}  \ottsym{:}  \ottnt{A_{{\mathrm{1}}}}}%
\ottpremise{ \textcolor{\coeffectcolor}{ \textcolor{\coeffectcolor}{\gamma} }\! \cdot \! \Gamma   \vdash_{\mathit{coeff} }  \ottnt{V_{{\mathrm{2}}}}  \ottsym{:}  \ottnt{A_{{\mathrm{2}}}}}%
}{
 \textcolor{\coeffectcolor}{ \textcolor{\coeffectcolor}{\gamma} }\! \cdot \! \Gamma   \vdash_{\mathit{coeff} }   \langle  \ottnt{V_{{\mathrm{1}}}}  ,  \ottnt{V_{{\mathrm{2}}}}  \rangle   \ottsym{:}   \ottnt{A_{{\mathrm{1}}}} \mathop{\&} \ottnt{A_{{\mathrm{2}}}} }{%
{\ottdrulename{coeff\_vwith}}{}%
}}

\newcommand{\ottdrulecoeffXXvfst}[1]{\ottdrule[#1]{%
\ottpremise{ \textcolor{\coeffectcolor}{ \textcolor{\coeffectcolor}{\gamma} }\! \cdot \! \Gamma   \vdash_{\mathit{coeff} }  \ottnt{V}  \ottsym{:}   \ottnt{A_{{\mathrm{1}}}} \mathop{\&} \ottnt{A_{{\mathrm{2}}}} }%
}{
 \textcolor{\coeffectcolor}{ \textcolor{\coeffectcolor}{\gamma} }\! \cdot \! \Gamma   \vdash_{\mathit{coeff} }  \ottnt{V}  \ottsym{.}  \ottsym{1}  \ottsym{:}  \ottnt{A_{{\mathrm{1}}}}}{%
{\ottdrulename{coeff\_vfst}}{}%
}}

\newcommand{\ottdrulecoeffXXvsnd}[1]{\ottdrule[#1]{%
\ottpremise{ \textcolor{\coeffectcolor}{ \textcolor{\coeffectcolor}{\gamma} }\! \cdot \! \Gamma   \vdash_{\mathit{coeff} }  \ottnt{V}  \ottsym{:}   \ottnt{A_{{\mathrm{1}}}} \mathop{\&} \ottnt{A_{{\mathrm{2}}}} }%
}{
 \textcolor{\coeffectcolor}{ \textcolor{\coeffectcolor}{\gamma} }\! \cdot \! \Gamma   \vdash_{\mathit{coeff} }  \ottnt{V}  \ottsym{.}  \ottsym{2}  \ottsym{:}  \ottnt{A_{{\mathrm{2}}}}}{%
{\ottdrulename{coeff\_vsnd}}{}%
}}

\newcommand{\ottdrulecoeffXXvsub}[1]{\ottdrule[#1]{%
\ottpremise{ \textcolor{\coeffectcolor}{ \textcolor{\coeffectcolor}{\gamma}' }\! \cdot \! \Gamma   \vdash_{\mathit{coeff} }  \ottnt{V}  \ottsym{:}  \ottnt{A}}%
\ottpremise{ \textcolor{\coeffectcolor}{ \textcolor{\coeffectcolor}{\gamma} }\; \textcolor{\coeffectcolor}{\mathop{\leq_{\mathit{co} } } } \; \textcolor{\coeffectcolor}{ \textcolor{\coeffectcolor}{\gamma}' } }%
}{
 \textcolor{\coeffectcolor}{ \textcolor{\coeffectcolor}{\gamma} }\! \cdot \! \Gamma   \vdash_{\mathit{coeff} }  \ottnt{V}  \ottsym{:}  \ottnt{A}}{%
{\ottdrulename{coeff\_vsub}}{}%
}}

\newcommand{\ottdefnvalXXcoeffXXtyping}[1]{\begin{ottdefnblock}[#1]{$\Psi  \vdash_{\mathit{coeff} }  \ottnt{V}  \ottsym{:}  \ottnt{A}$}{\ottcom{value typing rules (coeffect)}}
\ottusedrule{\ottdrulecoeffXXvar{}}
\ottusedrule{\ottdrulecoeffXXthunk{}}
\ottusedrule{\ottdrulecoeffXXunit{}}
\ottusedrule{\ottdrulecoeffXXpair{}}
\ottusedrule{\ottdrulecoeffXXinl{}}
\ottusedrule{\ottdrulecoeffXXinr{}}
\ottusedrule{\ottdrulecoeffXXvwith{}}
\ottusedrule{\ottdrulecoeffXXvfst{}}
\ottusedrule{\ottdrulecoeffXXvsnd{}}
\ottusedrule{\ottdrulecoeffXXvsub{}}
\end{ottdefnblock}}

\newcommand{\ottdrulecoeffXXabs}[1]{\ottdrule[#1]{%
\ottpremise{  \textcolor{\coeffectcolor}{ \textcolor{\coeffectcolor}{\gamma} }\! \cdot \! \Gamma    \mathop{,}    \ottmv{x}  :^{\textcolor{\coeffectcolor}{ \textcolor{\coeffectcolor}{q} } }  \ottnt{A}    \vdash_{\mathit{coeff} }  \ottnt{M}  \ottsym{:}  \ottnt{B}}%
\ottpremise{ \textcolor{\coeffectcolor}{ \textcolor{\coeffectcolor}{q}' }\;  \textcolor{\coeffectcolor}{\mathop{\leq_{\mathit{co} } } } \; \textcolor{\coeffectcolor}{ \textcolor{\coeffectcolor}{q} } }%
}{
 \textcolor{\coeffectcolor}{ \textcolor{\coeffectcolor}{\gamma} }\! \cdot \! \Gamma   \vdash_{\mathit{coeff} }   \lambda  \ottmv{x} ^{\textcolor{\coeffectcolor}{ \textcolor{\coeffectcolor}{q} } }. \ottnt{M}   \ottsym{:}   \ottnt{A} ^{\textcolor{\coeffectcolor}{ \textcolor{\coeffectcolor}{q}' } } \rightarrow  \ottnt{B} }{%
{\ottdrulename{coeff\_abs}}{}%
}}

\newcommand{\ottdrulecoeffXXapp}[1]{\ottdrule[#1]{%
\ottpremise{ \textcolor{\coeffectcolor}{ \textcolor{\coeffectcolor}{\gamma}_{{\mathrm{1}}} }\! \cdot \! \Gamma   \vdash_{\mathit{coeff} }  \ottnt{M}  \ottsym{:}   \ottnt{A} ^{\textcolor{\coeffectcolor}{ \textcolor{\coeffectcolor}{q} } } \rightarrow  \ottnt{B} }%
\ottpremise{ \textcolor{\coeffectcolor}{ \textcolor{\coeffectcolor}{\gamma}_{{\mathrm{2}}} }\! \cdot \! \Gamma   \vdash_{\mathit{coeff} }  \ottnt{V}  \ottsym{:}  \ottnt{A}}%
}{
 \textcolor{\coeffectcolor}{  \textcolor{\coeffectcolor}{ \textcolor{\coeffectcolor}{\gamma}_{{\mathrm{1}}} \ottsym{+} \ottsym{(}   \textcolor{\coeffectcolor}{ \textcolor{\coeffectcolor}{q} \cdot \textcolor{\coeffectcolor}{\gamma}_{{\mathrm{2}}} }   \ottsym{)} }  }\! \cdot \! \Gamma   \vdash_{\mathit{coeff} }  \ottnt{M} \, \ottnt{V}  \ottsym{:}  \ottnt{B}}{%
{\ottdrulename{coeff\_app}}{}%
}}

\newcommand{\ottdrulecoeffXXforce}[1]{\ottdrule[#1]{%
\ottpremise{ \textcolor{\coeffectcolor}{ \textcolor{\coeffectcolor}{\gamma} }\! \cdot \! \Gamma   \vdash_{\mathit{coeff} }  \ottnt{V}  \ottsym{:}  \ottkw{U} \, \ottnt{B}}%
}{
 \textcolor{\coeffectcolor}{ \textcolor{\coeffectcolor}{\gamma} }\! \cdot \! \Gamma   \vdash_{\mathit{coeff} }  \ottnt{V}  \ottsym{!}  \ottsym{:}  \ottnt{B}}{%
{\ottdrulename{coeff\_force}}{}%
}}

\newcommand{\ottdrulecoeffXXsplit}[1]{\ottdrule[#1]{%
\ottpremise{ \textcolor{\coeffectcolor}{ \textcolor{\coeffectcolor}{\gamma}_{{\mathrm{1}}} }\! \cdot \! \Gamma   \vdash_{\mathit{coeff} }  \ottnt{V}  \ottsym{:}   \ottnt{A_{{\mathrm{1}}}} \times \ottnt{A_{{\mathrm{2}}}} }%
\ottpremise{    \textcolor{\coeffectcolor}{ \textcolor{\coeffectcolor}{\gamma}_{{\mathrm{2}}} }\! \cdot \! \Gamma    \mathop{,}    \ottmv{x_{{\mathrm{1}}}}  :^{\textcolor{\coeffectcolor}{ \textcolor{\coeffectcolor}{q} } }  \ottnt{A_{{\mathrm{1}}}}      \mathop{,}    \ottmv{x_{{\mathrm{2}}}}  :^{\textcolor{\coeffectcolor}{ \textcolor{\coeffectcolor}{q} } }  \ottnt{A_{{\mathrm{2}}}}    \vdash_{\mathit{coeff} }  \ottnt{N}  \ottsym{:}  \ottnt{B}}%
}{
 \textcolor{\coeffectcolor}{  \textcolor{\coeffectcolor}{ \ottsym{(}   \textcolor{\coeffectcolor}{ \textcolor{\coeffectcolor}{q} \cdot \textcolor{\coeffectcolor}{\gamma}_{{\mathrm{1}}} }   \ottsym{)} \ottsym{+} \textcolor{\coeffectcolor}{\gamma}_{{\mathrm{2}}} }  }\! \cdot \! \Gamma   \vdash_{\mathit{coeff} }   \ottkw{case}_{\textcolor{\coeffectcolor}{ \textcolor{\coeffectcolor}{q} } } \;  \ottnt{V} \; \ottkw{of}\;( \ottmv{x_{{\mathrm{1}}}} , \ottmv{x_{{\mathrm{2}}}} )\; \rightarrow\;  \ottnt{N}   \ottsym{:}  \ottnt{B}}{%
{\ottdrulename{coeff\_split}}{}%
}}

\newcommand{\ottdrulecoeffXXret}[1]{\ottdrule[#1]{%
\ottpremise{ \textcolor{\coeffectcolor}{ \textcolor{\coeffectcolor}{\gamma} }\! \cdot \! \Gamma   \vdash_{\mathit{coeff} }  \ottnt{V}  \ottsym{:}  \ottnt{A}}%
}{
 \textcolor{\coeffectcolor}{  \textcolor{\coeffectcolor}{ \textcolor{\coeffectcolor}{q} \cdot \textcolor{\coeffectcolor}{\gamma} }  }\! \cdot \! \Gamma   \vdash_{\mathit{coeff} }   \ottkw{return} _{\textcolor{\coeffectcolor}{ \textcolor{\coeffectcolor}{q} } }\;  \ottnt{V}   \ottsym{:}   \ottkw{F}_{\color{\coeffectcolor}{ \textcolor{\coeffectcolor}{q} } }\;  \ottnt{A} }{%
{\ottdrulename{coeff\_ret}}{}%
}}

\newcommand{\ottdrulecoeffXXletin}[1]{\ottdrule[#1]{%
\ottpremise{ \textcolor{\coeffectcolor}{ \textcolor{\coeffectcolor}{\gamma}_{{\mathrm{1}}} }\! \cdot \! \Gamma   \vdash_{\mathit{coeff} }  \ottnt{M}  \ottsym{:}   \ottkw{F}_{\color{\coeffectcolor}{ \textcolor{\coeffectcolor}{q}_{{\mathrm{1}}} } }\;  \ottnt{A} }%
\ottpremise{  \textcolor{\coeffectcolor}{ \textcolor{\coeffectcolor}{\gamma}_{{\mathrm{2}}} }\! \cdot \! \Gamma    \mathop{,}    \ottmv{x}  :^{\textcolor{\coeffectcolor}{  \textcolor{\coeffectcolor}{ \textcolor{\coeffectcolor}{q}_{{\mathrm{1}}}   \cdot   \textcolor{\coeffectcolor}{q}_{{\mathrm{2}}} }  } }  \ottnt{A}    \vdash_{\mathit{coeff} }  \ottnt{N}  \ottsym{:}  \ottnt{B}}%
}{
 \textcolor{\coeffectcolor}{  \textcolor{\coeffectcolor}{ \ottsym{(}   \textcolor{\coeffectcolor}{ \textcolor{\coeffectcolor}{q}_{{\mathrm{2}}} \cdot \textcolor{\coeffectcolor}{\gamma}_{{\mathrm{1}}} }   \ottsym{)} \ottsym{+} \textcolor{\coeffectcolor}{\gamma}_{{\mathrm{2}}} }  }\! \cdot \! \Gamma   \vdash_{\mathit{coeff} }   \ottmv{x}  \leftarrow^{\textcolor{\coeffectcolor}{ \textcolor{\coeffectcolor}{q}_{{\mathrm{2}}} } }  \ottnt{M} \ \ottkw{in}\  \ottnt{N}   \ottsym{:}  \ottnt{B}}{%
{\ottdrulename{coeff\_letin}}{}%
}}

\newcommand{\ottdrulecoeffXXcpair}[1]{\ottdrule[#1]{%
\ottpremise{ \textcolor{\coeffectcolor}{ \textcolor{\coeffectcolor}{\gamma} }\! \cdot \! \Gamma   \vdash_{\mathit{coeff} }  \ottnt{M_{{\mathrm{1}}}}  \ottsym{:}  \ottnt{B_{{\mathrm{1}}}}}%
\ottpremise{ \textcolor{\coeffectcolor}{ \textcolor{\coeffectcolor}{\gamma} }\! \cdot \! \Gamma   \vdash_{\mathit{coeff} }  \ottnt{M_{{\mathrm{2}}}}  \ottsym{:}  \ottnt{B_{{\mathrm{2}}}}}%
}{
 \textcolor{\coeffectcolor}{ \textcolor{\coeffectcolor}{\gamma} }\! \cdot \! \Gamma   \vdash_{\mathit{coeff} }   \langle  \ottnt{M_{{\mathrm{1}}}} , \ottnt{M_{{\mathrm{2}}}}  \rangle   \ottsym{:}   \ottnt{B_{{\mathrm{1}}}}   \mathop{\&}   \ottnt{B_{{\mathrm{2}}}} }{%
{\ottdrulename{coeff\_cpair}}{}%
}}

\newcommand{\ottdrulecoeffXXfst}[1]{\ottdrule[#1]{%
\ottpremise{ \textcolor{\coeffectcolor}{ \textcolor{\coeffectcolor}{\gamma} }\! \cdot \! \Gamma   \vdash_{\mathit{coeff} }  \ottnt{M}  \ottsym{:}   \ottnt{B_{{\mathrm{1}}}}   \mathop{\&}   \ottnt{B_{{\mathrm{2}}}} }%
}{
 \textcolor{\coeffectcolor}{ \textcolor{\coeffectcolor}{\gamma} }\! \cdot \! \Gamma   \vdash_{\mathit{coeff} }   \ottnt{M}  . 1   \ottsym{:}  \ottnt{B_{{\mathrm{1}}}}}{%
{\ottdrulename{coeff\_fst}}{}%
}}

\newcommand{\ottdrulecoeffXXsnd}[1]{\ottdrule[#1]{%
\ottpremise{ \textcolor{\coeffectcolor}{ \textcolor{\coeffectcolor}{\gamma} }\! \cdot \! \Gamma   \vdash_{\mathit{coeff} }  \ottnt{M}  \ottsym{:}   \ottnt{B_{{\mathrm{1}}}}   \mathop{\&}   \ottnt{B_{{\mathrm{2}}}} }%
}{
 \textcolor{\coeffectcolor}{ \textcolor{\coeffectcolor}{\gamma} }\! \cdot \! \Gamma   \vdash_{\mathit{coeff} }   \ottnt{M}  . 2   \ottsym{:}  \ottnt{B_{{\mathrm{2}}}}}{%
{\ottdrulename{coeff\_snd}}{}%
}}

\newcommand{\ottdrulecoeffXXsequence}[1]{\ottdrule[#1]{%
\ottpremise{ \textcolor{\coeffectcolor}{ \textcolor{\coeffectcolor}{\gamma}_{{\mathrm{1}}} }\! \cdot \! \Gamma   \vdash_{\mathit{coeff} }  \ottnt{V}  \ottsym{:}  \ottkw{unit}}%
\ottpremise{ \textcolor{\coeffectcolor}{ \textcolor{\coeffectcolor}{\gamma}_{{\mathrm{2}}} }\! \cdot \! \Gamma   \vdash_{\mathit{coeff} }  \ottnt{N}  \ottsym{:}  \ottnt{B}}%
}{
 \textcolor{\coeffectcolor}{  \textcolor{\coeffectcolor}{ \textcolor{\coeffectcolor}{\gamma}_{{\mathrm{1}}} \ottsym{+} \textcolor{\coeffectcolor}{\gamma}_{{\mathrm{2}}} }  }\! \cdot \! \Gamma   \vdash_{\mathit{coeff} }   \ottnt{V}  ;  \ottnt{N}   \ottsym{:}  \ottnt{B}}{%
{\ottdrulename{coeff\_sequence}}{}%
}}

\newcommand{\ottdrulecoeffXXcunit}[1]{\ottdrule[#1]{%
}{
 \textcolor{\coeffectcolor}{  \textcolor{\coeffectcolor}{\overline{0} }  }\! \cdot \! \Gamma   \vdash_{\mathit{coeff} }   \langle\rangle   \ottsym{:}   \top }{%
{\ottdrulename{coeff\_cunit}}{}%
}}

\newcommand{\ottdrulecoeffXXcase}[1]{\ottdrule[#1]{%
\ottpremise{ \textcolor{\coeffectcolor}{ \textcolor{\coeffectcolor}{\gamma}_{{\mathrm{1}}} }\! \cdot \! \Gamma   \vdash_{\mathit{coeff} }  \ottnt{V}  \ottsym{:}  \ottnt{A_{{\mathrm{1}}}}  \ottsym{+}  \ottnt{A_{{\mathrm{2}}}}}%
\ottpremise{  \textcolor{\coeffectcolor}{ \textcolor{\coeffectcolor}{\gamma}_{{\mathrm{2}}} }\! \cdot \! \Gamma    \mathop{,}    \ottmv{x_{{\mathrm{1}}}}  :^{\textcolor{\coeffectcolor}{ \textcolor{\coeffectcolor}{q} } }  \ottnt{A_{{\mathrm{1}}}}    \vdash_{\mathit{coeff} }  \ottnt{M_{{\mathrm{1}}}}  \ottsym{:}  \ottnt{B}}%
\ottpremise{  \textcolor{\coeffectcolor}{ \textcolor{\coeffectcolor}{\gamma}_{{\mathrm{2}}} }\! \cdot \! \Gamma    \mathop{,}    \ottmv{x_{{\mathrm{2}}}}  :^{\textcolor{\coeffectcolor}{ \textcolor{\coeffectcolor}{q} } }  \ottnt{A_{{\mathrm{2}}}}    \vdash_{\mathit{coeff} }  \ottnt{M_{{\mathrm{2}}}}  \ottsym{:}  \ottnt{B}}%
\ottpremise{ \textcolor{\coeffectcolor}{ \textcolor{\coeffectcolor}{q} }\;  \textcolor{\coeffectcolor}{\mathop{\leq_{\mathit{co} } } } \; \textcolor{\coeffectcolor}{  \textcolor{\coeffectcolor}{1}  } }%
}{
 \textcolor{\coeffectcolor}{  \textcolor{\coeffectcolor}{  \textcolor{\coeffectcolor}{ \textcolor{\coeffectcolor}{q} \cdot \textcolor{\coeffectcolor}{\gamma}_{{\mathrm{1}}} }  \ottsym{+} \textcolor{\coeffectcolor}{\gamma}_{{\mathrm{2}}} }  }\! \cdot \! \Gamma   \vdash_{\mathit{coeff} }   \ottkw{case}_{\textcolor{\coeffectcolor}{ \textcolor{\coeffectcolor}{q} } }\;  \ottnt{V} \; \ottkw{of}\; \ottkw{inl} \; \ottmv{x_{{\mathrm{1}}}}  \rightarrow\;  \ottnt{M_{{\mathrm{1}}}}  ;  \ottkw{inr} \; \ottmv{x_{{\mathrm{2}}}}  \rightarrow\;  \ottnt{M_{{\mathrm{2}}}}   \ottsym{:}  \ottnt{B}}{%
{\ottdrulename{coeff\_case}}{}%
}}

\newcommand{\ottdrulecoeffXXctensor}[1]{\ottdrule[#1]{%
\ottpremise{ \textcolor{\coeffectcolor}{ \textcolor{\coeffectcolor}{\gamma}_{{\mathrm{1}}} }\! \cdot \! \Gamma   \vdash_{\mathit{coeff} }  \ottnt{M_{{\mathrm{1}}}}  \ottsym{:}  \ottnt{B_{{\mathrm{1}}}}}%
\ottpremise{ \textcolor{\coeffectcolor}{ \textcolor{\coeffectcolor}{\gamma}_{{\mathrm{2}}} }\! \cdot \! \Gamma   \vdash_{\mathit{coeff} }  \ottnt{M_{{\mathrm{2}}}}  \ottsym{:}  \ottnt{B_{{\mathrm{2}}}}}%
}{
 \textcolor{\coeffectcolor}{  \textcolor{\coeffectcolor}{ \textcolor{\coeffectcolor}{\gamma}_{{\mathrm{1}}} \ottsym{+} \textcolor{\coeffectcolor}{\gamma}_{{\mathrm{2}}} }  }\! \cdot \! \Gamma   \vdash_{\mathit{coeff} }  \ottsym{(}  \ottnt{M_{{\mathrm{1}}}}  \ottsym{,}  \ottnt{M_{{\mathrm{2}}}}  \ottsym{)}  \ottsym{:}   \ottnt{B_{{\mathrm{1}}}}  \times  \ottnt{B_{{\mathrm{2}}}} }{%
{\ottdrulename{coeff\_ctensor}}{}%
}}

\newcommand{\ottdrulecoeffXXcsplit}[1]{\ottdrule[#1]{%
\ottpremise{ \textcolor{\coeffectcolor}{ \textcolor{\coeffectcolor}{\gamma}_{{\mathrm{1}}} }\! \cdot \! \Gamma   \vdash_{\mathit{coeff} }  \ottnt{M}  \ottsym{:}   \ottnt{B_{{\mathrm{1}}}}  \times  \ottnt{B_{{\mathrm{2}}}} }%
\ottpremise{    \textcolor{\coeffectcolor}{ \textcolor{\coeffectcolor}{\gamma}_{{\mathrm{2}}} }\! \cdot \! \Gamma    \mathop{,}    \ottmv{x_{{\mathrm{1}}}}  :^{\textcolor{\coeffectcolor}{ \textcolor{\coeffectcolor}{q} } }  \ottkw{U} \, \ottnt{B_{{\mathrm{1}}}}      \mathop{,}    \ottmv{x_{{\mathrm{2}}}}  :^{\textcolor{\coeffectcolor}{ \textcolor{\coeffectcolor}{q} } }  \ottkw{U} \, \ottnt{B_{{\mathrm{2}}}}    \vdash_{\mathit{coeff} }  \ottnt{N}  \ottsym{:}  \ottnt{B}}%
}{
 \textcolor{\coeffectcolor}{  \textcolor{\coeffectcolor}{  \textcolor{\coeffectcolor}{ \textcolor{\coeffectcolor}{q} \cdot \textcolor{\coeffectcolor}{\gamma}_{{\mathrm{1}}} }  \ottsym{+} \textcolor{\coeffectcolor}{\gamma}_{{\mathrm{2}}} }  }\! \cdot \! \Gamma   \vdash_{\mathit{coeff} }   \ottkw{case}_{\textcolor{\coeffectcolor}{ \textcolor{\coeffectcolor}{q} } }\;  \ottnt{M} \; \ottkw{of}\;( \ottmv{x_{{\mathrm{1}}}} , \ottmv{x_{{\mathrm{2}}}} )\; \rightarrow\;  \ottnt{N}   \ottsym{:}  \ottnt{B}}{%
{\ottdrulename{coeff\_csplit}}{}%
}}

\newcommand{\ottdrulecoeffXXcsub}[1]{\ottdrule[#1]{%
\ottpremise{ \textcolor{\coeffectcolor}{ \textcolor{\coeffectcolor}{\gamma}' }\! \cdot \! \Gamma   \vdash_{\mathit{coeff} }  \ottnt{M}  \ottsym{:}  \ottnt{B}}%
\ottpremise{ \textcolor{\coeffectcolor}{ \textcolor{\coeffectcolor}{\gamma} }\; \textcolor{\coeffectcolor}{\mathop{\leq_{\mathit{co} } } } \; \textcolor{\coeffectcolor}{ \textcolor{\coeffectcolor}{\gamma}' } }%
}{
 \textcolor{\coeffectcolor}{ \textcolor{\coeffectcolor}{\gamma} }\! \cdot \! \Gamma   \vdash_{\mathit{coeff} }  \ottnt{M}  \ottsym{:}  \ottnt{B}}{%
{\ottdrulename{coeff\_csub}}{}%
}}

\newcommand{\ottdefncompXXcoeffXXtyping}[1]{\begin{ottdefnblock}[#1]{$\Psi  \vdash_{\mathit{coeff} }  \ottnt{M}  \ottsym{:}  \ottnt{B}$}{\ottcom{computation typing rules (coeffect)}}
\ottusedrule{\ottdrulecoeffXXabs{}}
\ottusedrule{\ottdrulecoeffXXapp{}}
\ottusedrule{\ottdrulecoeffXXforce{}}
\ottusedrule{\ottdrulecoeffXXsplit{}}
\ottusedrule{\ottdrulecoeffXXret{}}
\ottusedrule{\ottdrulecoeffXXletin{}}
\ottusedrule{\ottdrulecoeffXXcpair{}}
\ottusedrule{\ottdrulecoeffXXfst{}}
\ottusedrule{\ottdrulecoeffXXsnd{}}
\ottusedrule{\ottdrulecoeffXXsequence{}}
\ottusedrule{\ottdrulecoeffXXcunit{}}
\ottusedrule{\ottdrulecoeffXXcase{}}
\ottusedrule{\ottdrulecoeffXXctensor{}}
\ottusedrule{\ottdrulecoeffXXcsplit{}}
\ottusedrule{\ottdrulecoeffXXcsub{}}
\end{ottdefnblock}}

\newcommand{\ottdefnsJCoeff}{
\ottdefnvalXXcoeffXXtyping{}\ottdefncompXXcoeffXXtyping{}}

\newcommand{\ottdrulelinXXvar}[1]{\ottdrule[#1]{%
}{
   \textcolor{\coeffectcolor}{  \textcolor{\coeffectcolor}{\overline{0} }  }\! \cdot \! \Gamma_{{\mathrm{1}}}    \mathop{,}    \ottmv{x}  :^{\textcolor{\coeffectcolor}{  \textcolor{\coeffectcolor}{1}  } }  \ottnt{A}     \mathop{,}    \textcolor{\coeffectcolor}{  \textcolor{\coeffectcolor}{\overline{0} }  }\! \cdot \! \Gamma_{{\mathrm{2}}}    \vdash_{\mathit{lin} }  \ottmv{x}  \ottsym{:}  \ottnt{A}}{%
{\ottdrulename{lin\_var}}{}%
}}

\newcommand{\ottdrulelinXXthunk}[1]{\ottdrule[#1]{%
\ottpremise{ \textcolor{\coeffectcolor}{ \textcolor{\coeffectcolor}{\gamma} }\! \cdot \! \Gamma   \vdash_{\mathit{lin} }  \ottnt{M}  \ottsym{:}  \ottnt{B}}%
}{
 \textcolor{\coeffectcolor}{ \textcolor{\coeffectcolor}{\gamma} }\! \cdot \! \Gamma   \vdash_{\mathit{lin} }  \ottsym{\{}  \ottnt{M}  \ottsym{\}}  \ottsym{:}  \ottkw{U} \, \ottnt{B}}{%
{\ottdrulename{lin\_thunk}}{}%
}}

\newcommand{\ottdrulelinXXunit}[1]{\ottdrule[#1]{%
}{
 \textcolor{\coeffectcolor}{  \textcolor{\coeffectcolor}{\overline{0} }  }\! \cdot \! \Gamma   \vdash_{\mathit{lin} }  \ottsym{()}  \ottsym{:}  \ottkw{unit}}{%
{\ottdrulename{lin\_unit}}{}%
}}

\newcommand{\ottdrulelinXXpair}[1]{\ottdrule[#1]{%
\ottpremise{ \textcolor{\coeffectcolor}{ \textcolor{\coeffectcolor}{\gamma}_{{\mathrm{1}}} }\! \cdot \! \Gamma   \vdash_{\mathit{lin} }  \ottnt{V_{{\mathrm{1}}}}  \ottsym{:}  \ottnt{A_{{\mathrm{1}}}}}%
\ottpremise{ \textcolor{\coeffectcolor}{ \textcolor{\coeffectcolor}{\gamma}_{{\mathrm{2}}} }\! \cdot \! \Gamma   \vdash_{\mathit{lin} }  \ottnt{V_{{\mathrm{2}}}}  \ottsym{:}  \ottnt{A_{{\mathrm{2}}}}}%
}{
 \textcolor{\coeffectcolor}{  \textcolor{\coeffectcolor}{ \textcolor{\coeffectcolor}{\gamma}_{{\mathrm{1}}} \ottsym{+} \textcolor{\coeffectcolor}{\gamma}_{{\mathrm{2}}} }  }\! \cdot \! \Gamma   \vdash_{\mathit{lin} }  \ottsym{(}  \ottnt{V_{{\mathrm{1}}}}  \ottsym{,}  \ottnt{V_{{\mathrm{2}}}}  \ottsym{)}  \ottsym{:}   \ottnt{A_{{\mathrm{1}}}} \times \ottnt{A_{{\mathrm{2}}}} }{%
{\ottdrulename{lin\_pair}}{}%
}}

\newcommand{\ottdrulelinXXinl}[1]{\ottdrule[#1]{%
\ottpremise{ \textcolor{\coeffectcolor}{ \textcolor{\coeffectcolor}{\gamma} }\! \cdot \! \Gamma   \vdash_{\mathit{lin} }  \ottnt{V}  \ottsym{:}  \ottnt{A_{{\mathrm{1}}}}}%
}{
 \textcolor{\coeffectcolor}{ \textcolor{\coeffectcolor}{\gamma} }\! \cdot \! \Gamma   \vdash_{\mathit{lin} }  \ottkw{inl} \, \ottnt{V}  \ottsym{:}  \ottnt{A_{{\mathrm{1}}}}  \ottsym{+}  \ottnt{A_{{\mathrm{2}}}}}{%
{\ottdrulename{lin\_inl}}{}%
}}

\newcommand{\ottdrulelinXXinr}[1]{\ottdrule[#1]{%
\ottpremise{ \textcolor{\coeffectcolor}{ \textcolor{\coeffectcolor}{\gamma} }\! \cdot \! \Gamma   \vdash_{\mathit{lin} }  \ottnt{V}  \ottsym{:}  \ottnt{A_{{\mathrm{2}}}}}%
}{
 \textcolor{\coeffectcolor}{ \textcolor{\coeffectcolor}{\gamma} }\! \cdot \! \Gamma   \vdash_{\mathit{lin} }  \ottkw{inr} \, \ottnt{V}  \ottsym{:}  \ottnt{A_{{\mathrm{1}}}}  \ottsym{+}  \ottnt{A_{{\mathrm{2}}}}}{%
{\ottdrulename{lin\_inr}}{}%
}}

\newcommand{\ottdrulelinXXvsub}[1]{\ottdrule[#1]{%
\ottpremise{ \textcolor{\coeffectcolor}{ \textcolor{\coeffectcolor}{\gamma}' }\! \cdot \! \Gamma   \vdash_{\mathit{lin} }  \ottnt{V}  \ottsym{:}  \ottnt{A}}%
\ottpremise{ \textcolor{\coeffectcolor}{ \textcolor{\coeffectcolor}{\gamma} }\; \textcolor{\coeffectcolor}{\mathop{\leq_{\mathit{co} } } } \; \textcolor{\coeffectcolor}{ \textcolor{\coeffectcolor}{\gamma}' } }%
}{
 \textcolor{\coeffectcolor}{ \textcolor{\coeffectcolor}{\gamma} }\! \cdot \! \Gamma   \vdash_{\mathit{lin} }  \ottnt{V}  \ottsym{:}  \ottnt{A}}{%
{\ottdrulename{lin\_vsub}}{}%
}}

\newcommand{\ottdefnvalXXlinXXtyping}[1]{\begin{ottdefnblock}[#1]{$\Psi  \vdash_{\mathit{lin} }  \ottnt{V}  \ottsym{:}  \ottnt{A}$}{\ottcom{value typing rules (resource usage)}}
\ottusedrule{\ottdrulelinXXvar{}}
\ottusedrule{\ottdrulelinXXthunk{}}
\ottusedrule{\ottdrulelinXXunit{}}
\ottusedrule{\ottdrulelinXXpair{}}
\ottusedrule{\ottdrulelinXXinl{}}
\ottusedrule{\ottdrulelinXXinr{}}
\ottusedrule{\ottdrulelinXXvsub{}}
\end{ottdefnblock}}

\newcommand{\ottdrulelinXXabs}[1]{\ottdrule[#1]{%
\ottpremise{  \textcolor{\coeffectcolor}{ \textcolor{\coeffectcolor}{\gamma} }\! \cdot \! \Gamma    \mathop{,}    \ottmv{x}  :^{\textcolor{\coeffectcolor}{ \textcolor{\coeffectcolor}{q} } }  \ottnt{A}    \vdash_{\mathit{lin} }  \ottnt{M}  \ottsym{:}  \ottnt{B}}%
\ottpremise{ \textcolor{\coeffectcolor}{ \textcolor{\coeffectcolor}{q} }\;  \textcolor{\coeffectcolor}{\mathop{\leq_{\mathit{co} } } } \; \textcolor{\coeffectcolor}{ \textcolor{\coeffectcolor}{q}' } }%
}{
 \textcolor{\coeffectcolor}{ \textcolor{\coeffectcolor}{\gamma} }\! \cdot \! \Gamma   \vdash_{\mathit{lin} }   \lambda  \ottmv{x} ^{\textcolor{\coeffectcolor}{ \textcolor{\coeffectcolor}{q} } }. \ottnt{M}   \ottsym{:}   \ottnt{A} ^{\textcolor{\coeffectcolor}{ \textcolor{\coeffectcolor}{q}' } } \rightarrow  \ottnt{B} }{%
{\ottdrulename{lin\_abs}}{}%
}}

\newcommand{\ottdrulelinXXapp}[1]{\ottdrule[#1]{%
\ottpremise{ \textcolor{\coeffectcolor}{ \textcolor{\coeffectcolor}{\gamma}_{{\mathrm{1}}} }\! \cdot \! \Gamma   \vdash_{\mathit{lin} }  \ottnt{M}  \ottsym{:}   \ottnt{A} ^{\textcolor{\coeffectcolor}{ \textcolor{\coeffectcolor}{q} } } \rightarrow  \ottnt{B} }%
\ottpremise{ \textcolor{\coeffectcolor}{ \textcolor{\coeffectcolor}{\gamma}_{{\mathrm{2}}} }\! \cdot \! \Gamma   \vdash_{\mathit{lin} }  \ottnt{V}  \ottsym{:}  \ottnt{A}}%
}{
 \textcolor{\coeffectcolor}{  \textcolor{\coeffectcolor}{ \textcolor{\coeffectcolor}{\gamma}_{{\mathrm{1}}} \ottsym{+} \ottsym{(}   \textcolor{\coeffectcolor}{ \textcolor{\coeffectcolor}{q} \cdot \textcolor{\coeffectcolor}{\gamma}_{{\mathrm{2}}} }   \ottsym{)} }  }\! \cdot \! \Gamma   \vdash_{\mathit{lin} }  \ottnt{M} \, \ottnt{V}  \ottsym{:}  \ottnt{B}}{%
{\ottdrulename{lin\_app}}{}%
}}

\newcommand{\ottdrulelinXXforce}[1]{\ottdrule[#1]{%
\ottpremise{ \textcolor{\coeffectcolor}{ \textcolor{\coeffectcolor}{\gamma} }\! \cdot \! \Gamma   \vdash_{\mathit{lin} }  \ottnt{V}  \ottsym{:}  \ottkw{U} \, \ottnt{B}}%
}{
 \textcolor{\coeffectcolor}{ \textcolor{\coeffectcolor}{\gamma} }\! \cdot \! \Gamma   \vdash_{\mathit{lin} }  \ottnt{V}  \ottsym{!}  \ottsym{:}  \ottnt{B}}{%
{\ottdrulename{lin\_force}}{}%
}}

\newcommand{\ottdrulelinXXsplit}[1]{\ottdrule[#1]{%
\ottpremise{ \textcolor{\coeffectcolor}{ \textcolor{\coeffectcolor}{\gamma}_{{\mathrm{1}}} }\! \cdot \! \Gamma   \vdash_{\mathit{lin} }  \ottnt{V}  \ottsym{:}   \ottnt{A_{{\mathrm{1}}}} \times \ottnt{A_{{\mathrm{2}}}} }%
\ottpremise{    \textcolor{\coeffectcolor}{ \textcolor{\coeffectcolor}{\gamma}_{{\mathrm{2}}} }\! \cdot \! \Gamma    \mathop{,}    \ottmv{x_{{\mathrm{1}}}}  :^{\textcolor{\coeffectcolor}{ \textcolor{\coeffectcolor}{q} } }  \ottnt{A_{{\mathrm{1}}}}      \mathop{,}    \ottmv{x_{{\mathrm{2}}}}  :^{\textcolor{\coeffectcolor}{ \textcolor{\coeffectcolor}{q} } }  \ottnt{A_{{\mathrm{2}}}}    \vdash_{\mathit{lin} }  \ottnt{N}  \ottsym{:}  \ottnt{B}}%
}{
 \textcolor{\coeffectcolor}{  \textcolor{\coeffectcolor}{ \ottsym{(}   \textcolor{\coeffectcolor}{ \textcolor{\coeffectcolor}{q} \cdot \textcolor{\coeffectcolor}{\gamma}_{{\mathrm{1}}} }   \ottsym{)} \ottsym{+} \textcolor{\coeffectcolor}{\gamma}_{{\mathrm{2}}} }  }\! \cdot \! \Gamma   \vdash_{\mathit{lin} }   \ottkw{case}_{\textcolor{\coeffectcolor}{ \textcolor{\coeffectcolor}{q} } } \;  \ottnt{V} \; \ottkw{of}\;( \ottmv{x_{{\mathrm{1}}}} , \ottmv{x_{{\mathrm{2}}}} )\; \rightarrow\;  \ottnt{N}   \ottsym{:}  \ottnt{B}}{%
{\ottdrulename{lin\_split}}{}%
}}

\newcommand{\ottdrulelinXXret}[1]{\ottdrule[#1]{%
\ottpremise{ \textcolor{\coeffectcolor}{ \textcolor{\coeffectcolor}{\gamma} }\! \cdot \! \Gamma   \vdash_{\mathit{lin} }  \ottnt{V}  \ottsym{:}  \ottnt{A}}%
}{
 \textcolor{\coeffectcolor}{  \textcolor{\coeffectcolor}{ \textcolor{\coeffectcolor}{q} \cdot \textcolor{\coeffectcolor}{\gamma} }  }\! \cdot \! \Gamma   \vdash_{\mathit{lin} }   \ottkw{return} _{\textcolor{\coeffectcolor}{ \textcolor{\coeffectcolor}{q} } }\;  \ottnt{V}   \ottsym{:}   \ottkw{F}_{\color{\coeffectcolor}{ \textcolor{\coeffectcolor}{q} } }\;  \ottnt{A} }{%
{\ottdrulename{lin\_ret}}{}%
}}

\newcommand{\ottdrulelinXXletin}[1]{\ottdrule[#1]{%
\ottpremise{ \textcolor{\coeffectcolor}{ \textcolor{\coeffectcolor}{q}' }\; = \; \textcolor{\coeffectcolor}{  \textcolor{\coeffectcolor}{q}_{{\mathrm{2}}} \ \|\ \textcolor{\coeffectcolor}{1}  } }%
\ottpremise{ \textcolor{\coeffectcolor}{ \textcolor{\coeffectcolor}{\gamma}_{{\mathrm{1}}} }\! \cdot \! \Gamma   \vdash_{\mathit{lin} }  \ottnt{M}  \ottsym{:}   \ottkw{F}_{\color{\coeffectcolor}{ \textcolor{\coeffectcolor}{q}_{{\mathrm{1}}} } }\;  \ottnt{A} }%
\ottpremise{  \textcolor{\coeffectcolor}{ \textcolor{\coeffectcolor}{\gamma}_{{\mathrm{2}}} }\! \cdot \! \Gamma    \mathop{,}    \ottmv{x}  :^{\textcolor{\coeffectcolor}{  \textcolor{\coeffectcolor}{ \textcolor{\coeffectcolor}{q}_{{\mathrm{1}}}   \cdot   \textcolor{\coeffectcolor}{q}' }  } }  \ottnt{A}    \vdash_{\mathit{lin} }  \ottnt{N}  \ottsym{:}  \ottnt{B}}%
}{
 \textcolor{\coeffectcolor}{  \textcolor{\coeffectcolor}{ \ottsym{(}   \textcolor{\coeffectcolor}{ \textcolor{\coeffectcolor}{q}' \cdot \textcolor{\coeffectcolor}{\gamma}_{{\mathrm{1}}} }   \ottsym{)} \ottsym{+} \textcolor{\coeffectcolor}{\gamma}_{{\mathrm{2}}} }  }\! \cdot \! \Gamma   \vdash_{\mathit{lin} }   \ottmv{x}  \leftarrow^{\textcolor{\coeffectcolor}{ \textcolor{\coeffectcolor}{q}_{{\mathrm{2}}} } }  \ottnt{M} \ \ottkw{in}\  \ottnt{N}   \ottsym{:}  \ottnt{B}}{%
{\ottdrulename{lin\_letin}}{}%
}}

\newcommand{\ottdrulelinXXletinZero}[1]{\ottdrule[#1]{%
\ottpremise{ \textcolor{\coeffectcolor}{ \textcolor{\coeffectcolor}{q}' }\; = \; \textcolor{\coeffectcolor}{  \textcolor{\coeffectcolor}{q}_{{\mathrm{2}}} \ \|\ \textcolor{\coeffectcolor}{1}  } }%
\ottpremise{ \textcolor{\coeffectcolor}{ \textcolor{\coeffectcolor}{\gamma}_{{\mathrm{1}}} }\! \cdot \! \Gamma   \vdash_{\mathit{lin} }  \ottnt{M}  \ottsym{:}   \ottkw{F}_{\color{\coeffectcolor}{ \textcolor{\coeffectcolor}{q}_{{\mathrm{1}}} } }\;  \ottnt{A} }%
\ottpremise{  \textcolor{\coeffectcolor}{ \textcolor{\coeffectcolor}{\gamma}_{{\mathrm{2}}} }\! \cdot \! \Gamma    \mathop{,}    \ottmv{x}  :^{\textcolor{\coeffectcolor}{  \textcolor{\coeffectcolor}{ \textcolor{\coeffectcolor}{q}_{{\mathrm{1}}}   \cdot   \textcolor{\coeffectcolor}{q}' }  } }  \ottnt{A}    \vdash_{\mathit{lin} }  \ottnt{N}  \ottsym{:}  \ottnt{B}}%
}{
 \textcolor{\coeffectcolor}{  \textcolor{\coeffectcolor}{ \ottsym{(}   \textcolor{\coeffectcolor}{ \textcolor{\coeffectcolor}{q}' \cdot \textcolor{\coeffectcolor}{\gamma}_{{\mathrm{1}}} }   \ottsym{)} \ottsym{+} \textcolor{\coeffectcolor}{\gamma}_{{\mathrm{2}}} }  }\! \cdot \! \Gamma   \vdash_{\mathit{lin} }   \ottmv{x}  \leftarrow^{\textcolor{\coeffectcolor}{ \textcolor{\coeffectcolor}{q}_{{\mathrm{2}}} } }  \ottnt{M} \ \ottkw{in}\  \ottnt{N}   \ottsym{:}  \ottnt{B}}{%
{\ottdrulename{lin\_letin0}}{}%
}}

\newcommand{\ottdrulelinXXcpair}[1]{\ottdrule[#1]{%
\ottpremise{ \textcolor{\coeffectcolor}{ \textcolor{\coeffectcolor}{\gamma} }\! \cdot \! \Gamma   \vdash_{\mathit{lin} }  \ottnt{M_{{\mathrm{1}}}}  \ottsym{:}  \ottnt{B_{{\mathrm{1}}}}}%
\ottpremise{ \textcolor{\coeffectcolor}{ \textcolor{\coeffectcolor}{\gamma} }\! \cdot \! \Gamma   \vdash_{\mathit{lin} }  \ottnt{M_{{\mathrm{2}}}}  \ottsym{:}  \ottnt{B_{{\mathrm{2}}}}}%
}{
 \textcolor{\coeffectcolor}{ \textcolor{\coeffectcolor}{\gamma} }\! \cdot \! \Gamma   \vdash_{\mathit{lin} }   \langle  \ottnt{M_{{\mathrm{1}}}} , \ottnt{M_{{\mathrm{2}}}}  \rangle   \ottsym{:}   \ottnt{B_{{\mathrm{1}}}}   \mathop{\&}   \ottnt{B_{{\mathrm{2}}}} }{%
{\ottdrulename{lin\_cpair}}{}%
}}

\newcommand{\ottdrulelinXXfst}[1]{\ottdrule[#1]{%
\ottpremise{ \textcolor{\coeffectcolor}{ \textcolor{\coeffectcolor}{\gamma} }\! \cdot \! \Gamma   \vdash_{\mathit{lin} }  \ottnt{M}  \ottsym{:}   \ottnt{B_{{\mathrm{1}}}}   \mathop{\&}   \ottnt{B_{{\mathrm{2}}}} }%
}{
 \textcolor{\coeffectcolor}{ \textcolor{\coeffectcolor}{\gamma} }\! \cdot \! \Gamma   \vdash_{\mathit{lin} }   \ottnt{M}  . 1   \ottsym{:}  \ottnt{B_{{\mathrm{1}}}}}{%
{\ottdrulename{lin\_fst}}{}%
}}

\newcommand{\ottdrulelinXXsnd}[1]{\ottdrule[#1]{%
\ottpremise{ \textcolor{\coeffectcolor}{ \textcolor{\coeffectcolor}{\gamma} }\! \cdot \! \Gamma   \vdash_{\mathit{lin} }  \ottnt{M}  \ottsym{:}   \ottnt{B_{{\mathrm{1}}}}   \mathop{\&}   \ottnt{B_{{\mathrm{2}}}} }%
}{
 \textcolor{\coeffectcolor}{ \textcolor{\coeffectcolor}{\gamma} }\! \cdot \! \Gamma   \vdash_{\mathit{lin} }   \ottnt{M}  . 2   \ottsym{:}  \ottnt{B_{{\mathrm{2}}}}}{%
{\ottdrulename{lin\_snd}}{}%
}}

\newcommand{\ottdrulelinXXsequence}[1]{\ottdrule[#1]{%
\ottpremise{ \textcolor{\coeffectcolor}{ \textcolor{\coeffectcolor}{\gamma}_{{\mathrm{1}}} }\! \cdot \! \Gamma   \vdash_{\mathit{lin} }  \ottnt{V}  \ottsym{:}  \ottkw{unit}}%
\ottpremise{ \textcolor{\coeffectcolor}{ \textcolor{\coeffectcolor}{\gamma}_{{\mathrm{2}}} }\! \cdot \! \Gamma   \vdash_{\mathit{lin} }  \ottnt{N}  \ottsym{:}  \ottnt{B}}%
}{
 \textcolor{\coeffectcolor}{  \textcolor{\coeffectcolor}{ \textcolor{\coeffectcolor}{\gamma}_{{\mathrm{1}}} \ottsym{+} \textcolor{\coeffectcolor}{\gamma}_{{\mathrm{2}}} }  }\! \cdot \! \Gamma   \vdash_{\mathit{lin} }   \ottnt{V}  ;  \ottnt{N}   \ottsym{:}  \ottnt{B}}{%
{\ottdrulename{lin\_sequence}}{}%
}}

\newcommand{\ottdrulelinXXcase}[1]{\ottdrule[#1]{%
\ottpremise{ \textcolor{\coeffectcolor}{ \textcolor{\coeffectcolor}{\gamma}_{{\mathrm{1}}} }\! \cdot \! \Gamma   \vdash_{\mathit{lin} }  \ottnt{V}  \ottsym{:}  \ottnt{A_{{\mathrm{1}}}}  \ottsym{+}  \ottnt{A_{{\mathrm{2}}}}}%
\ottpremise{  \textcolor{\coeffectcolor}{ \textcolor{\coeffectcolor}{\gamma}_{{\mathrm{2}}} }\! \cdot \! \Gamma    \mathop{,}    \ottmv{x_{{\mathrm{1}}}}  :^{\textcolor{\coeffectcolor}{ \textcolor{\coeffectcolor}{q} } }  \ottnt{A_{{\mathrm{1}}}}    \vdash_{\mathit{lin} }  \ottnt{M_{{\mathrm{1}}}}  \ottsym{:}  \ottnt{B}}%
\ottpremise{  \textcolor{\coeffectcolor}{ \textcolor{\coeffectcolor}{\gamma}_{{\mathrm{2}}} }\! \cdot \! \Gamma    \mathop{,}    \ottmv{x_{{\mathrm{2}}}}  :^{\textcolor{\coeffectcolor}{ \textcolor{\coeffectcolor}{q} } }  \ottnt{A_{{\mathrm{2}}}}    \vdash_{\mathit{lin} }  \ottnt{M_{{\mathrm{2}}}}  \ottsym{:}  \ottnt{B}}%
\ottpremise{ \textcolor{\coeffectcolor}{ \textcolor{\coeffectcolor}{q} }\;  \textcolor{\coeffectcolor}{\mathop{\leq_{\mathit{co} } } } \; \textcolor{\coeffectcolor}{  \textcolor{\coeffectcolor}{1}  } }%
}{
 \textcolor{\coeffectcolor}{  \textcolor{\coeffectcolor}{  \textcolor{\coeffectcolor}{ \textcolor{\coeffectcolor}{q} \cdot \textcolor{\coeffectcolor}{\gamma}_{{\mathrm{1}}} }  \ottsym{+} \textcolor{\coeffectcolor}{\gamma}_{{\mathrm{2}}} }  }\! \cdot \! \Gamma   \vdash_{\mathit{lin} }   \ottkw{case}_{\textcolor{\coeffectcolor}{ \textcolor{\coeffectcolor}{q} } }\;  \ottnt{V} \; \ottkw{of}\; \ottkw{inl} \; \ottmv{x_{{\mathrm{1}}}}  \rightarrow\;  \ottnt{M_{{\mathrm{1}}}}  ;  \ottkw{inr} \; \ottmv{x_{{\mathrm{2}}}}  \rightarrow\;  \ottnt{M_{{\mathrm{2}}}}   \ottsym{:}  \ottnt{B}}{%
{\ottdrulename{lin\_case}}{}%
}}

\newcommand{\ottdrulelinXXcsub}[1]{\ottdrule[#1]{%
\ottpremise{ \textcolor{\coeffectcolor}{ \textcolor{\coeffectcolor}{\gamma}' }\! \cdot \! \Gamma   \vdash_{\mathit{lin} }  \ottnt{M}  \ottsym{:}  \ottnt{B}}%
\ottpremise{ \textcolor{\coeffectcolor}{ \textcolor{\coeffectcolor}{\gamma} }\; \textcolor{\coeffectcolor}{\mathop{\leq_{\mathit{co} } } } \; \textcolor{\coeffectcolor}{ \textcolor{\coeffectcolor}{\gamma}' } }%
}{
 \textcolor{\coeffectcolor}{ \textcolor{\coeffectcolor}{\gamma} }\! \cdot \! \Gamma   \vdash_{\mathit{lin} }  \ottnt{M}  \ottsym{:}  \ottnt{B}}{%
{\ottdrulename{lin\_csub}}{}%
}}

\newcommand{\ottdefncompXXlinXXtyping}[1]{\begin{ottdefnblock}[#1]{$\Psi  \vdash_{\mathit{lin} }  \ottnt{M}  \ottsym{:}  \ottnt{B}$}{\ottcom{computation typing rules (resource usage)}}
\ottusedrule{\ottdrulelinXXabs{}}
\ottusedrule{\ottdrulelinXXapp{}}
\ottusedrule{\ottdrulelinXXforce{}}
\ottusedrule{\ottdrulelinXXsplit{}}
\ottusedrule{\ottdrulelinXXret{}}
\ottusedrule{\ottdrulelinXXletin{}}
\ottusedrule{\ottdrulelinXXletinZero{}}
\ottusedrule{\ottdrulelinXXcpair{}}
\ottusedrule{\ottdrulelinXXfst{}}
\ottusedrule{\ottdrulelinXXsnd{}}
\ottusedrule{\ottdrulelinXXsequence{}}
\ottusedrule{\ottdrulelinXXcase{}}
\ottusedrule{\ottdrulelinXXcsub{}}
\end{ottdefnblock}}

\newcommand{\ottdefnsJLin}{
\ottdefnvalXXlinXXtyping{}\ottdefncompXXlinXXtyping{}}

\newcommand{\ottdrulecbvXXfullXXvar}[1]{\ottdrule[#1]{%
}{
      \textcolor{\coeffectcolor}{  \textcolor{\coeffectcolor}{\overline{0} }  }\! \cdot \! \Gamma_{{\mathrm{1}}}     \mathop{,}    \ottmv{x}  :^{\textcolor{\coeffectcolor}{  \textcolor{\coeffectcolor}{1}  } }  \tau      \mathop{,}    \textcolor{\coeffectcolor}{  \textcolor{\coeffectcolor}{\overline{0} }  }\! \cdot \! \Gamma_{{\mathrm{2}}}     \ottsym{\mbox{$\mid$}-cbvfull}   \ottmv{x}  :  \tau }{%
{\ottdrulename{cbv\_full\_var}}{}%
}}

\newcommand{\ottdrulecbvXXfullXXabs}[1]{\ottdrule[#1]{%
\ottpremise{   \textcolor{\coeffectcolor}{ \textcolor{\coeffectcolor}{\gamma} }\! \cdot \! \Gamma    \mathop{,}   \ottsym{(}   \ottmv{x}  :^{\textcolor{\coeffectcolor}{ \textcolor{\coeffectcolor}{q} } }  \tau_{{\mathrm{1}}}   \ottsym{)}    \ottsym{\mbox{$\mid$}-cbvfull}   \ottnt{e}  :  \tau_{{\mathrm{2}}} }%
\ottpremise{ \textcolor{\coeffectcolor}{ \textcolor{\coeffectcolor}{q}' }\;  \textcolor{\coeffectcolor}{\mathop{\leq_{\mathit{co} } } } \; \textcolor{\coeffectcolor}{ \textcolor{\coeffectcolor}{q} } }%
}{
  \textcolor{\coeffectcolor}{ \textcolor{\coeffectcolor}{\gamma} }\! \cdot \! \Gamma    \ottsym{\mbox{$\mid$}-cbvfull}    \lambda^{\textcolor{\coeffectcolor}{ \textcolor{\coeffectcolor}{q} } }  \ottmv{x} . \ottnt{e}   :   \tau_{{\mathrm{1}}} ^{\textcolor{\coeffectcolor}{ \textcolor{\coeffectcolor}{q}' } } \rightarrow  \tau_{{\mathrm{2}}}  }{%
{\ottdrulename{cbv\_full\_abs}}{}%
}}

\newcommand{\ottdrulecbvXXfullXXapp}[1]{\ottdrule[#1]{%
\ottpremise{ \textcolor{\coeffectcolor}{ \textcolor{\coeffectcolor}{q}' }\; = \; \textcolor{\coeffectcolor}{  \textcolor{\coeffectcolor}{q} \ \|\ \textcolor{\coeffectcolor}{1}  } }%
\ottpremise{  \textcolor{\coeffectcolor}{ \textcolor{\coeffectcolor}{\gamma}_{{\mathrm{1}}} }\! \cdot \! \Gamma    \ottsym{\mbox{$\mid$}-cbvfull}   \ottnt{e_{{\mathrm{1}}}}  :   \tau_{{\mathrm{1}}} ^{\textcolor{\coeffectcolor}{ \textcolor{\coeffectcolor}{q}' } } \rightarrow  \tau_{{\mathrm{2}}}  }%
\ottpremise{  \textcolor{\coeffectcolor}{ \textcolor{\coeffectcolor}{\gamma}_{{\mathrm{2}}} }\! \cdot \! \Gamma    \ottsym{\mbox{$\mid$}-cbvfull}   \ottnt{e_{{\mathrm{2}}}}  :  \tau_{{\mathrm{1}}} }%
\ottpremise{ \textcolor{\coeffectcolor}{ \textcolor{\coeffectcolor}{\gamma} } \equiv \textcolor{\coeffectcolor}{  \textcolor{\coeffectcolor}{ \textcolor{\coeffectcolor}{\gamma}_{{\mathrm{1}}} \ottsym{+}  \textcolor{\coeffectcolor}{ \textcolor{\coeffectcolor}{q}' \cdot \textcolor{\coeffectcolor}{\gamma}_{{\mathrm{2}}} }  }  } }%
}{
  \textcolor{\coeffectcolor}{ \textcolor{\coeffectcolor}{\gamma} }\! \cdot \! \Gamma    \ottsym{\mbox{$\mid$}-cbvfull}    \ottnt{e_{{\mathrm{1}}}} ^{ \textcolor{\coeffectcolor}{q} }  \ottnt{e_{{\mathrm{2}}}}   :  \tau_{{\mathrm{2}}} }{%
{\ottdrulename{cbv\_full\_app}}{}%
}}

\newcommand{\ottdrulecbvXXfullXXunit}[1]{\ottdrule[#1]{%
}{
  \textcolor{\coeffectcolor}{  \textcolor{\coeffectcolor}{\overline{0} }  }\! \cdot \! \Gamma    \ottsym{\mbox{$\mid$}-cbvfull}   \ottsym{()}  :  \ottkw{unit} }{%
{\ottdrulename{cbv\_full\_unit}}{}%
}}

\newcommand{\ottdrulecbvXXfullXXsequence}[1]{\ottdrule[#1]{%
\ottpremise{  \textcolor{\coeffectcolor}{ \textcolor{\coeffectcolor}{\gamma}_{{\mathrm{1}}} }\! \cdot \! \Gamma    \ottsym{\mbox{$\mid$}-cbvfull}   \ottnt{e_{{\mathrm{1}}}}  :  \ottkw{unit} }%
\ottpremise{  \textcolor{\coeffectcolor}{ \textcolor{\coeffectcolor}{\gamma}_{{\mathrm{2}}} }\! \cdot \! \Gamma    \ottsym{\mbox{$\mid$}-cbvfull}   \ottnt{e_{{\mathrm{2}}}}  :  \tau }%
\ottpremise{ \textcolor{\coeffectcolor}{ \textcolor{\coeffectcolor}{\gamma} } \equiv \textcolor{\coeffectcolor}{  \textcolor{\coeffectcolor}{ \textcolor{\coeffectcolor}{\gamma}_{{\mathrm{1}}} \ottsym{+} \textcolor{\coeffectcolor}{\gamma}_{{\mathrm{2}}} }  } }%
}{
  \textcolor{\coeffectcolor}{ \textcolor{\coeffectcolor}{\gamma} }\! \cdot \! \Gamma    \ottsym{\mbox{$\mid$}-cbvfull}    \ottnt{e_{{\mathrm{1}}}}  ;  \ottnt{e_{{\mathrm{2}}}}   :  \tau }{%
{\ottdrulename{cbv\_full\_sequence}}{}%
}}

\newcommand{\ottdrulecbvXXfullXXseq}[1]{\ottdrule[#1]{%
\ottpremise{  \textcolor{\coeffectcolor}{ \textcolor{\coeffectcolor}{\gamma}_{{\mathrm{1}}} }\! \cdot \! \Gamma    \ottsym{\mbox{$\mid$}-cbvfull}   \ottnt{e_{{\mathrm{1}}}}  :  \ottkw{unit} }%
\ottpremise{  \textcolor{\coeffectcolor}{ \textcolor{\coeffectcolor}{\gamma}_{{\mathrm{2}}} }\! \cdot \! \Gamma    \ottsym{\mbox{$\mid$}-cbvfull}   \ottnt{e_{{\mathrm{2}}}}  :  \tau }%
\ottpremise{ \textcolor{\coeffectcolor}{ \textcolor{\coeffectcolor}{\gamma} } \equiv \textcolor{\coeffectcolor}{  \textcolor{\coeffectcolor}{ \textcolor{\coeffectcolor}{\gamma}_{{\mathrm{1}}} \ottsym{+} \textcolor{\coeffectcolor}{\gamma}_{{\mathrm{2}}} }  } }%
}{
  \textcolor{\coeffectcolor}{ \textcolor{\coeffectcolor}{\gamma} }\! \cdot \! \Gamma    \ottsym{\mbox{$\mid$}-cbvfull}    \ottnt{e_{{\mathrm{1}}}} ; \ottnt{e_{{\mathrm{2}}}}   :  \tau }{%
{\ottdrulename{cbv\_full\_seq}}{}%
}}

\newcommand{\ottdrulecbvXXfullXXpair}[1]{\ottdrule[#1]{%
\ottpremise{  \textcolor{\coeffectcolor}{ \textcolor{\coeffectcolor}{\gamma}_{{\mathrm{1}}} }\! \cdot \! \Gamma    \ottsym{\mbox{$\mid$}-cbvfull}   \ottnt{e_{{\mathrm{1}}}}  :  \tau_{{\mathrm{1}}} }%
\ottpremise{  \textcolor{\coeffectcolor}{ \textcolor{\coeffectcolor}{\gamma}_{{\mathrm{2}}} }\! \cdot \! \Gamma    \ottsym{\mbox{$\mid$}-cbvfull}   \ottnt{e_{{\mathrm{2}}}}  :  \tau_{{\mathrm{2}}} }%
\ottpremise{ \textcolor{\coeffectcolor}{ \textcolor{\coeffectcolor}{\gamma} } \equiv \textcolor{\coeffectcolor}{  \textcolor{\coeffectcolor}{ \textcolor{\coeffectcolor}{\gamma}_{{\mathrm{1}}} \ottsym{+} \textcolor{\coeffectcolor}{\gamma}_{{\mathrm{2}}} }  } }%
}{
  \textcolor{\coeffectcolor}{ \textcolor{\coeffectcolor}{\gamma} }\! \cdot \! \Gamma    \ottsym{\mbox{$\mid$}-cbvfull}   \ottsym{(}  \ottnt{e_{{\mathrm{1}}}}  \ottsym{,}  \ottnt{e_{{\mathrm{2}}}}  \ottsym{)}  :   \tau_{{\mathrm{1}}}  \otimes  \tau_{{\mathrm{2}}}  }{%
{\ottdrulename{cbv\_full\_pair}}{}%
}}

\newcommand{\ottdrulecbvXXfullXXsplit}[1]{\ottdrule[#1]{%
\ottpremise{ \textcolor{\coeffectcolor}{ \textcolor{\coeffectcolor}{q}' }\; = \; \textcolor{\coeffectcolor}{  \textcolor{\coeffectcolor}{q} \ \|\ \textcolor{\coeffectcolor}{1}  } }%
\ottpremise{  \textcolor{\coeffectcolor}{ \textcolor{\coeffectcolor}{\gamma}_{{\mathrm{1}}} }\! \cdot \! \Gamma    \ottsym{\mbox{$\mid$}-cbvfull}   \ottnt{e_{{\mathrm{1}}}}  :   \tau_{{\mathrm{1}}}  \otimes  \tau_{{\mathrm{2}}}  }%
\ottpremise{    \textcolor{\coeffectcolor}{ \textcolor{\coeffectcolor}{\gamma}_{{\mathrm{2}}} }\! \cdot \! \Gamma    \mathop{,}    \ottmv{x_{{\mathrm{1}}}}  :^{\textcolor{\coeffectcolor}{ \textcolor{\coeffectcolor}{q}' } }  \tau_{{\mathrm{1}}}     \mathop{,}    \ottmv{x_{{\mathrm{2}}}}  :^{\textcolor{\coeffectcolor}{ \textcolor{\coeffectcolor}{q}' } }  \tau_{{\mathrm{2}}}     \ottsym{\mbox{$\mid$}-cbvfull}   \ottnt{e_{{\mathrm{2}}}}  :  \tau }%
\ottpremise{ \textcolor{\coeffectcolor}{ \textcolor{\coeffectcolor}{\gamma} } \equiv \textcolor{\coeffectcolor}{  \textcolor{\coeffectcolor}{  \textcolor{\coeffectcolor}{ \textcolor{\coeffectcolor}{q} \cdot \textcolor{\coeffectcolor}{\gamma}_{{\mathrm{1}}} }  \ottsym{+} \textcolor{\coeffectcolor}{\gamma}_{{\mathrm{2}}} }  } }%
}{
  \textcolor{\coeffectcolor}{ \textcolor{\coeffectcolor}{\gamma} }\! \cdot \! \Gamma    \ottsym{\mbox{$\mid$}-cbvfull}    \ottkw{let}_{ \textcolor{\coeffectcolor}{q} }\; ( \ottmv{x_{{\mathrm{1}}}} ,  \ottmv{x_{{\mathrm{2}}}} ) =  \ottnt{e_{{\mathrm{1}}}} \; \ottkw{in}\;  \ottnt{e_{{\mathrm{2}}}}   :  \tau }{%
{\ottdrulename{cbv\_full\_split}}{}%
}}

\newcommand{\ottdrulecbvXXfullXXinl}[1]{\ottdrule[#1]{%
\ottpremise{  \textcolor{\coeffectcolor}{ \textcolor{\coeffectcolor}{\gamma} }\! \cdot \! \Gamma    \ottsym{\mbox{$\mid$}-cbvfull}   \ottnt{e}  :  \tau_{{\mathrm{1}}} }%
}{
  \textcolor{\coeffectcolor}{ \textcolor{\coeffectcolor}{\gamma} }\! \cdot \! \Gamma    \ottsym{\mbox{$\mid$}-cbvfull}   \ottkw{inl} \, \ottnt{e}  :  \tau_{{\mathrm{1}}}  \ottsym{+}  \tau_{{\mathrm{2}}} }{%
{\ottdrulename{cbv\_full\_inl}}{}%
}}

\newcommand{\ottdrulecbvXXfullXXinr}[1]{\ottdrule[#1]{%
\ottpremise{  \textcolor{\coeffectcolor}{ \textcolor{\coeffectcolor}{\gamma} }\! \cdot \! \Gamma    \ottsym{\mbox{$\mid$}-cbvfull}   \ottnt{e}  :  \tau_{{\mathrm{2}}} }%
}{
  \textcolor{\coeffectcolor}{ \textcolor{\coeffectcolor}{\gamma} }\! \cdot \! \Gamma    \ottsym{\mbox{$\mid$}-cbvfull}   \ottkw{inr} \, \ottnt{e}  :  \tau_{{\mathrm{1}}}  \ottsym{+}  \tau_{{\mathrm{2}}} }{%
{\ottdrulename{cbv\_full\_inr}}{}%
}}

\newcommand{\ottdrulecbvXXfullXXcase}[1]{\ottdrule[#1]{%
\ottpremise{  \textcolor{\coeffectcolor}{ \textcolor{\coeffectcolor}{\gamma}_{{\mathrm{1}}} }\! \cdot \! \Gamma    \ottsym{\mbox{$\mid$}-cbvfull}   \ottnt{e}  :  \tau_{{\mathrm{1}}}  \ottsym{+}  \tau_{{\mathrm{2}}} }%
\ottpremise{   \textcolor{\coeffectcolor}{ \textcolor{\coeffectcolor}{\gamma}_{{\mathrm{2}}} }\! \cdot \! \Gamma    \mathop{,}    \ottmv{x_{{\mathrm{1}}}}  :^{\textcolor{\coeffectcolor}{ \textcolor{\coeffectcolor}{q} } }  \tau_{{\mathrm{1}}}     \ottsym{\mbox{$\mid$}-cbvfull}   \ottnt{e_{{\mathrm{1}}}}  :  \tau }%
\ottpremise{   \textcolor{\coeffectcolor}{ \textcolor{\coeffectcolor}{\gamma}_{{\mathrm{2}}} }\! \cdot \! \Gamma    \mathop{,}    \ottmv{x_{{\mathrm{2}}}}  :^{\textcolor{\coeffectcolor}{ \textcolor{\coeffectcolor}{q} } }  \tau_{{\mathrm{2}}}     \ottsym{\mbox{$\mid$}-cbvfull}   \ottnt{e_{{\mathrm{2}}}}  :  \tau }%
\ottpremise{ \textcolor{\coeffectcolor}{ \textcolor{\coeffectcolor}{\gamma} } \equiv \textcolor{\coeffectcolor}{  \textcolor{\coeffectcolor}{  \textcolor{\coeffectcolor}{ \textcolor{\coeffectcolor}{q} \cdot \textcolor{\coeffectcolor}{\gamma}_{{\mathrm{1}}} }  \ottsym{+} \textcolor{\coeffectcolor}{\gamma}_{{\mathrm{2}}} }  } }%
\ottpremise{ \textcolor{\coeffectcolor}{ \textcolor{\coeffectcolor}{q} }\;  \textcolor{\coeffectcolor}{\mathop{\leq_{\mathit{co} } } } \; \textcolor{\coeffectcolor}{  \textcolor{\coeffectcolor}{1}  } }%
}{
  \textcolor{\coeffectcolor}{ \textcolor{\coeffectcolor}{\gamma} }\! \cdot \! \Gamma    \ottsym{\mbox{$\mid$}-cbvfull}    \ottkw{case}_{\textcolor{\coeffectcolor}{ \textcolor{\coeffectcolor}{q} } }\;  \ottnt{e} \; \ottkw{of}\;\ottkw{inl}\; \ottmv{x_{{\mathrm{1}}}}  \rightarrow\;  \ottnt{e_{{\mathrm{1}}}}  ; \ottkw{inr}\; \ottmv{x_{{\mathrm{2}}}}  \rightarrow\;  \ottnt{e_{{\mathrm{2}}}}   :  \tau }{%
{\ottdrulename{cbv\_full\_case}}{}%
}}

\newcommand{\ottdrulecbvXXfullXXbox}[1]{\ottdrule[#1]{%
\ottpremise{  \textcolor{\coeffectcolor}{ \textcolor{\coeffectcolor}{\gamma}_{{\mathrm{1}}} }\! \cdot \! \Gamma    \ottsym{\mbox{$\mid$}-cbvfull}   \ottnt{e}  :  \tau }%
\ottpremise{ \textcolor{\coeffectcolor}{ \textcolor{\coeffectcolor}{\gamma} } \equiv \textcolor{\coeffectcolor}{  \textcolor{\coeffectcolor}{ \textcolor{\coeffectcolor}{q} \cdot \textcolor{\coeffectcolor}{\gamma}_{{\mathrm{1}}} }  } }%
\ottpremise{ \textcolor{\coeffectcolor}{ \textcolor{\coeffectcolor}{q}' }\;  \textcolor{\coeffectcolor}{\mathop{\leq_{\mathit{co} } } } \; \textcolor{\coeffectcolor}{ \textcolor{\coeffectcolor}{q} } }%
}{
  \textcolor{\coeffectcolor}{ \textcolor{\coeffectcolor}{\gamma} }\! \cdot \! \Gamma    \ottsym{\mbox{$\mid$}-cbvfull}    \ottkw{box} _{\textcolor{\coeffectcolor}{ \textcolor{\coeffectcolor}{q} } }\  \ottnt{e}   :   \square_{\textcolor{\coeffectcolor}{ \textcolor{\coeffectcolor}{q}' } }\;  \tau  }{%
{\ottdrulename{cbv\_full\_box}}{}%
}}

\newcommand{\ottdrulecbvXXfullXXunbox}[1]{\ottdrule[#1]{%
\ottpremise{ \textcolor{\coeffectcolor}{ \textcolor{\coeffectcolor}{q}'_{{\mathrm{2}}} }\; = \; \textcolor{\coeffectcolor}{  \textcolor{\coeffectcolor}{q}_{{\mathrm{2}}} \ \|\ \textcolor{\coeffectcolor}{1}  } }%
\ottpremise{  \textcolor{\coeffectcolor}{ \textcolor{\coeffectcolor}{\gamma}_{{\mathrm{1}}} }\! \cdot \! \Gamma    \ottsym{\mbox{$\mid$}-cbvfull}   \ottnt{e_{{\mathrm{1}}}}  :   \square_{\textcolor{\coeffectcolor}{ \textcolor{\coeffectcolor}{q}_{{\mathrm{1}}} } }\;  \tau  }%
\ottpremise{   \textcolor{\coeffectcolor}{ \textcolor{\coeffectcolor}{\gamma}_{{\mathrm{2}}} }\! \cdot \! \Gamma    \mathop{,}    \ottmv{x}  :^{\textcolor{\coeffectcolor}{  \textcolor{\coeffectcolor}{ \textcolor{\coeffectcolor}{q}_{{\mathrm{1}}}   \cdot   \textcolor{\coeffectcolor}{q}'_{{\mathrm{2}}} }  } }  \tau     \ottsym{\mbox{$\mid$}-cbvfull}   \ottnt{e_{{\mathrm{2}}}}  :  \tau' }%
\ottpremise{ \textcolor{\coeffectcolor}{ \textcolor{\coeffectcolor}{\gamma} } \equiv \textcolor{\coeffectcolor}{  \textcolor{\coeffectcolor}{  \textcolor{\coeffectcolor}{ \textcolor{\coeffectcolor}{q}'_{{\mathrm{2}}} \cdot \textcolor{\coeffectcolor}{\gamma}_{{\mathrm{1}}} }  \ottsym{+} \textcolor{\coeffectcolor}{\gamma}_{{\mathrm{2}}} }  } }%
}{
  \textcolor{\coeffectcolor}{ \textcolor{\coeffectcolor}{\gamma} }\! \cdot \! \Gamma    \ottsym{\mbox{$\mid$}-cbvfull}    \ottkw{unbox} _{\textcolor{\coeffectcolor}{ \textcolor{\coeffectcolor}{q}_{{\mathrm{2}}} } }\  \ottmv{x}  =  \ottnt{e_{{\mathrm{1}}}} \  \ottkw{in} \  \ottnt{e_{{\mathrm{2}}}}   :  \tau' }{%
{\ottdrulename{cbv\_full\_unbox}}{}%
}}

\newcommand{\ottdrulecbvXXfullXXsub}[1]{\ottdrule[#1]{%
\ottpremise{  \textcolor{\coeffectcolor}{ \textcolor{\coeffectcolor}{\gamma}' }\! \cdot \! \Gamma    \ottsym{\mbox{$\mid$}-cbvfull}   \ottnt{e}  :  \tau }%
\ottpremise{ \textcolor{\coeffectcolor}{ \textcolor{\coeffectcolor}{\gamma} }\; \textcolor{\coeffectcolor}{\mathop{\leq_{\mathit{co} } } } \; \textcolor{\coeffectcolor}{ \textcolor{\coeffectcolor}{\gamma}' } }%
}{
  \textcolor{\coeffectcolor}{ \textcolor{\coeffectcolor}{\gamma} }\! \cdot \! \Gamma    \ottsym{\mbox{$\mid$}-cbvfull}   \ottnt{e}  :  \tau }{%
{\ottdrulename{cbv\_full\_sub}}{}%
}}

\newcommand{\ottdrulecbvXXfullXXprev}[1]{\ottdrule[#1]{%
\ottpremise{  \textcolor{\coeffectcolor}{ \textcolor{\coeffectcolor}{\gamma} }\! \cdot \! \Gamma    \ottsym{\mbox{$\mid$}-cbvfull}   \ottnt{e}  :  \tau }%
}{
  \textcolor{\coeffectcolor}{  \textcolor{\coeffectcolor}{  \overline{1}  \ottsym{+} \textcolor{\coeffectcolor}{\gamma} }  }\! \cdot \! \Gamma    \ottsym{\mbox{$\mid$}-cbvfull}   \ottkw{prev} \, \ottnt{e}  :  \tau }{%
{\ottdrulename{cbv\_full\_prev}}{}%
}}

\newcommand{\ottdrulecbvXXfullXXappv}[1]{\ottdrule[#1]{%
\ottpremise{  \textcolor{\coeffectcolor}{ \textcolor{\coeffectcolor}{\gamma}_{{\mathrm{1}}} }\! \cdot \! \Gamma    \ottsym{\mbox{$\mid$}-cbvfull}   \ottnt{e_{{\mathrm{1}}}}  :   \tau_{{\mathrm{1}}} ^{\textcolor{\coeffectcolor}{ \textcolor{\coeffectcolor}{q} } } \rightarrow  \tau_{{\mathrm{2}}}  }%
\ottpremise{  \textcolor{\coeffectcolor}{ \textcolor{\coeffectcolor}{\gamma}_{{\mathrm{2}}} }\! \cdot \! \Gamma    \ottsym{\mbox{$\mid$}-cbvfull}   \ottnt{e_{{\mathrm{2}}}}  :  \tau_{{\mathrm{1}}} }%
\ottpremise{ \textcolor{\coeffectcolor}{ \textcolor{\coeffectcolor}{\gamma} } \equiv \textcolor{\coeffectcolor}{  \textcolor{\coeffectcolor}{ \textcolor{\coeffectcolor}{\gamma}_{{\mathrm{1}}} \ottsym{+} \ottsym{(}   \textcolor{\coeffectcolor}{  \textcolor{\coeffectcolor}{q}  \wedge \textcolor{\coeffectcolor}{1}  \cdot \textcolor{\coeffectcolor}{\gamma}_{{\mathrm{2}}} }   \ottsym{)} }  } }%
}{
  \textcolor{\coeffectcolor}{ \textcolor{\coeffectcolor}{\gamma} }\! \cdot \! \Gamma    \ottsym{\mbox{$\mid$}-cbvfull}   \ottnt{e_{{\mathrm{1}}}} \, \ottnt{e_{{\mathrm{2}}}}  :  \tau_{{\mathrm{2}}} }{%
{\ottdrulename{cbv\_full\_appv}}{}%
}}

\newcommand{\ottdrulecbvXXfullXXret}[1]{\ottdrule[#1]{%
\ottpremise{  \textcolor{\coeffectcolor}{ \textcolor{\coeffectcolor}{\gamma} }\! \cdot \! \Gamma    \ottsym{\mbox{$\mid$}-cbvfull}   \ottnt{e}  :  \tau }%
\ottpremise{ \textcolor{\effectcolor}{  \textcolor{\effectcolor}{\varepsilon}  \;  \textcolor{\effectcolor}{\mathop{\leq_{\mathit{eff} } } } \;  \textcolor{\effectcolor}{\phi}  } }%
}{
  \textcolor{\coeffectcolor}{ \textcolor{\coeffectcolor}{\gamma} }\! \cdot \! \Gamma    \ottsym{\mbox{$\mid$}-cbvfull}   ret \, \ottnt{e}  :   \ottkw{T}_{\textcolor{\effectcolor}{ \textcolor{\effectcolor}{\phi} } }\;  \tau  }{%
{\ottdrulename{cbv\_full\_ret}}{}%
}}

\newcommand{\ottdrulecbvXXfullXXbind}[1]{\ottdrule[#1]{%
\ottpremise{ \textcolor{\coeffectcolor}{ \textcolor{\coeffectcolor}{q}' }\; = \; \textcolor{\coeffectcolor}{  \textcolor{\coeffectcolor}{q} \ \|\ \textcolor{\coeffectcolor}{1}  } }%
\ottpremise{  \textcolor{\coeffectcolor}{ \textcolor{\coeffectcolor}{\gamma}_{{\mathrm{1}}} }\! \cdot \! \Gamma    \ottsym{\mbox{$\mid$}-cbvfull}   \ottnt{e_{{\mathrm{1}}}}  :   \ottkw{T}_{\textcolor{\effectcolor}{ \textcolor{\effectcolor}{\phi}_{{\mathrm{1}}} } }\;  \tau_{{\mathrm{1}}}  }%
\ottpremise{   \textcolor{\coeffectcolor}{ \textcolor{\coeffectcolor}{\gamma} }\! \cdot \! \Gamma    \mathop{,}    \ottmv{x}  :^{\textcolor{\coeffectcolor}{ \textcolor{\coeffectcolor}{q}' } }  \tau_{{\mathrm{1}}}     \ottsym{\mbox{$\mid$}-cbvfull}   \ottnt{e_{{\mathrm{2}}}}  :   \ottkw{T}_{\textcolor{\effectcolor}{ \textcolor{\effectcolor}{\phi}_{{\mathrm{2}}} } }\;  \tau_{{\mathrm{2}}}  }%
\ottpremise{ \textcolor{\coeffectcolor}{ \textcolor{\coeffectcolor}{\gamma} } \equiv \textcolor{\coeffectcolor}{  \textcolor{\coeffectcolor}{  \textcolor{\coeffectcolor}{ \textcolor{\coeffectcolor}{q}' \cdot \textcolor{\coeffectcolor}{\gamma}_{{\mathrm{1}}} }  \ottsym{+} \textcolor{\coeffectcolor}{\gamma}_{{\mathrm{2}}} }  } }%
\ottpremise{ \textcolor{\effectcolor}{  \textcolor{\effectcolor}{ \textcolor{\effectcolor}{\phi}_{{\mathrm{1}}}  \cdot  \textcolor{\effectcolor}{\phi}_{{\mathrm{2}}} }  \;  \textcolor{\effectcolor}{\mathop{\leq_{\mathit{eff} } } } \;  \textcolor{\effectcolor}{\phi}  } }%
}{
  \textcolor{\coeffectcolor}{ \textcolor{\coeffectcolor}{\gamma} }\! \cdot \! \Gamma    \ottsym{\mbox{$\mid$}-cbvfull}   \ottkw{bind} \, \ottmv{x}  \ottsym{=}  \textcolor{\coeffectcolor}{q} \, \ottnt{e_{{\mathrm{1}}}} \, \ottkw{in} \, \ottnt{e_{{\mathrm{2}}}}  :   \ottkw{T}_{\textcolor{\effectcolor}{ \textcolor{\effectcolor}{\phi} } }\;  \tau_{{\mathrm{2}}}  }{%
{\ottdrulename{cbv\_full\_bind}}{}%
}}

\newcommand{\ottdrulecbvXXfullXXtick}[1]{\ottdrule[#1]{%
\ottpremise{ \textcolor{\effectcolor}{  \textcolor{\effectcolor}{\ottkw{Tick} }  \;  \textcolor{\effectcolor}{\mathop{\leq_{\mathit{eff} } } } \;  \textcolor{\effectcolor}{\phi}  } }%
}{
  \textcolor{\coeffectcolor}{  \textcolor{\coeffectcolor}{\overline{0} }  }\! \cdot \! \Gamma    \ottsym{\mbox{$\mid$}-cbvfull}    \textcolor{\effectcolor}{ \ottkw{tick} }   :   \ottkw{T}_{\textcolor{\effectcolor}{ \textcolor{\effectcolor}{\phi} } }\;  \ottkw{unit}  }{%
{\ottdrulename{cbv\_full\_tick}}{}%
}}

\newcommand{\ottdefncbvXXfullXXtyping}[1]{\begin{ottdefnblock}[#1]{$ \Psi   \ottsym{\mbox{$\mid$}-cbvfull}   \ottnt{e}  :  \tau $}{\ottcom{Graded effect + graded coeffect CBV system}}
\ottusedrule{\ottdrulecbvXXfullXXvar{}}
\ottusedrule{\ottdrulecbvXXfullXXabs{}}
\ottusedrule{\ottdrulecbvXXfullXXapp{}}
\ottusedrule{\ottdrulecbvXXfullXXunit{}}
\ottusedrule{\ottdrulecbvXXfullXXsequence{}}
\ottusedrule{\ottdrulecbvXXfullXXseq{}}
\ottusedrule{\ottdrulecbvXXfullXXpair{}}
\ottusedrule{\ottdrulecbvXXfullXXsplit{}}
\ottusedrule{\ottdrulecbvXXfullXXinl{}}
\ottusedrule{\ottdrulecbvXXfullXXinr{}}
\ottusedrule{\ottdrulecbvXXfullXXcase{}}
\ottusedrule{\ottdrulecbvXXfullXXbox{}}
\ottusedrule{\ottdrulecbvXXfullXXunbox{}}
\ottusedrule{\ottdrulecbvXXfullXXsub{}}
\ottusedrule{\ottdrulecbvXXfullXXprev{}}
\ottusedrule{\ottdrulecbvXXfullXXappv{}}
\ottusedrule{\ottdrulecbvXXfullXXret{}}
\ottusedrule{\ottdrulecbvXXfullXXbind{}}
\ottusedrule{\ottdrulecbvXXfullXXtick{}}
\end{ottdefnblock}}

\newcommand{\ottdefnsJCBVFull}{
\ottdefncbvXXfullXXtyping{}}

\newcommand{\ottdrulecbnXXfullXXvar}[1]{\ottdrule[#1]{%
}{
      \textcolor{\coeffectcolor}{  \textcolor{\coeffectcolor}{\overline{0} }  }\! \cdot \! \Gamma_{{\mathrm{1}}}     \mathop{,}    \ottmv{x}  :^{\textcolor{\coeffectcolor}{  \textcolor{\coeffectcolor}{1}  } }  \tau      \mathop{,}    \textcolor{\coeffectcolor}{  \textcolor{\coeffectcolor}{\overline{0} }  }\! \cdot \! \Gamma_{{\mathrm{2}}}     \ottsym{\mbox{$\mid$}-cbnfull}   \ottmv{x}  :  \tau }{%
{\ottdrulename{cbn\_full\_var}}{}%
}}

\newcommand{\ottdrulecbnXXfullXXabs}[1]{\ottdrule[#1]{%
\ottpremise{   \textcolor{\coeffectcolor}{ \textcolor{\coeffectcolor}{\gamma} }\! \cdot \! \Gamma    \mathop{,}   \ottsym{(}   \ottmv{x}  :^{\textcolor{\coeffectcolor}{ \textcolor{\coeffectcolor}{q} } }  \tau_{{\mathrm{1}}}   \ottsym{)}    \ottsym{\mbox{$\mid$}-cbnfull}   \ottnt{e}  :  \tau_{{\mathrm{2}}} }%
}{
  \textcolor{\coeffectcolor}{ \textcolor{\coeffectcolor}{\gamma} }\! \cdot \! \Gamma    \ottsym{\mbox{$\mid$}-cbnfull}    \lambda^{\textcolor{\coeffectcolor}{ \textcolor{\coeffectcolor}{q} } }  \ottmv{x} . \ottnt{e}   :   \tau_{{\mathrm{1}}} ^{\textcolor{\coeffectcolor}{ \textcolor{\coeffectcolor}{q} } } \rightarrow  \tau_{{\mathrm{2}}}  }{%
{\ottdrulename{cbn\_full\_abs}}{}%
}}

\newcommand{\ottdrulecbnXXfullXXapp}[1]{\ottdrule[#1]{%
\ottpremise{  \textcolor{\coeffectcolor}{ \textcolor{\coeffectcolor}{\gamma}_{{\mathrm{1}}} }\! \cdot \! \Gamma    \ottsym{\mbox{$\mid$}-cbnfull}   \ottnt{e_{{\mathrm{1}}}}  :   \tau_{{\mathrm{1}}} ^{\textcolor{\coeffectcolor}{ \textcolor{\coeffectcolor}{q} } } \rightarrow  \tau_{{\mathrm{2}}}  }%
\ottpremise{  \textcolor{\coeffectcolor}{ \textcolor{\coeffectcolor}{\gamma}_{{\mathrm{2}}} }\! \cdot \! \Gamma    \ottsym{\mbox{$\mid$}-cbnfull}   \ottnt{e_{{\mathrm{2}}}}  :  \tau_{{\mathrm{1}}} }%
\ottpremise{ \textcolor{\coeffectcolor}{ \textcolor{\coeffectcolor}{\gamma} } \equiv \textcolor{\coeffectcolor}{  \textcolor{\coeffectcolor}{ \textcolor{\coeffectcolor}{\gamma}_{{\mathrm{1}}} \ottsym{+}  \textcolor{\coeffectcolor}{ \textcolor{\coeffectcolor}{q} \cdot \textcolor{\coeffectcolor}{\gamma}_{{\mathrm{2}}} }  }  } }%
}{
  \textcolor{\coeffectcolor}{ \textcolor{\coeffectcolor}{\gamma} }\! \cdot \! \Gamma    \ottsym{\mbox{$\mid$}-cbnfull}    \ottnt{e_{{\mathrm{1}}}} ^{ \textcolor{\coeffectcolor}{q} }  \ottnt{e_{{\mathrm{2}}}}   :  \tau_{{\mathrm{2}}} }{%
{\ottdrulename{cbn\_full\_app}}{}%
}}

\newcommand{\ottdrulecbnXXfullXXunit}[1]{\ottdrule[#1]{%
}{
  \textcolor{\coeffectcolor}{  \textcolor{\coeffectcolor}{\overline{0} }  }\! \cdot \! \Gamma    \ottsym{\mbox{$\mid$}-cbnfull}   \ottsym{()}  :  \ottkw{unit} }{%
{\ottdrulename{cbn\_full\_unit}}{}%
}}

\newcommand{\ottdrulecbnXXfullXXsequence}[1]{\ottdrule[#1]{%
\ottpremise{  \textcolor{\coeffectcolor}{ \textcolor{\coeffectcolor}{\gamma}_{{\mathrm{1}}} }\! \cdot \! \Gamma    \ottsym{\mbox{$\mid$}-cbnfull}   \ottnt{e_{{\mathrm{1}}}}  :  \ottkw{unit} }%
\ottpremise{  \textcolor{\coeffectcolor}{ \textcolor{\coeffectcolor}{\gamma}_{{\mathrm{2}}} }\! \cdot \! \Gamma    \ottsym{\mbox{$\mid$}-cbnfull}   \ottnt{e_{{\mathrm{2}}}}  :  \tau }%
\ottpremise{ \textcolor{\coeffectcolor}{ \textcolor{\coeffectcolor}{\gamma} } \equiv \textcolor{\coeffectcolor}{  \textcolor{\coeffectcolor}{ \textcolor{\coeffectcolor}{\gamma}_{{\mathrm{1}}} \ottsym{+} \textcolor{\coeffectcolor}{\gamma}_{{\mathrm{2}}} }  } }%
}{
  \textcolor{\coeffectcolor}{ \textcolor{\coeffectcolor}{\gamma} }\! \cdot \! \Gamma    \ottsym{\mbox{$\mid$}-cbnfull}    \ottnt{e_{{\mathrm{1}}}}  ;  \ottnt{e_{{\mathrm{2}}}}   :  \tau }{%
{\ottdrulename{cbn\_full\_sequence}}{}%
}}

\newcommand{\ottdrulecbnXXfullXXseq}[1]{\ottdrule[#1]{%
\ottpremise{  \textcolor{\coeffectcolor}{ \textcolor{\coeffectcolor}{\gamma}_{{\mathrm{1}}} }\! \cdot \! \Gamma    \ottsym{\mbox{$\mid$}-cbnfull}   \ottnt{e_{{\mathrm{1}}}}  :  \ottkw{unit} }%
\ottpremise{  \textcolor{\coeffectcolor}{ \textcolor{\coeffectcolor}{\gamma}_{{\mathrm{2}}} }\! \cdot \! \Gamma    \ottsym{\mbox{$\mid$}-cbnfull}   \ottnt{e_{{\mathrm{2}}}}  :  \tau }%
\ottpremise{ \textcolor{\coeffectcolor}{ \textcolor{\coeffectcolor}{\gamma} } \equiv \textcolor{\coeffectcolor}{  \textcolor{\coeffectcolor}{ \textcolor{\coeffectcolor}{\gamma}_{{\mathrm{1}}} \ottsym{+} \textcolor{\coeffectcolor}{\gamma}_{{\mathrm{2}}} }  } }%
}{
  \textcolor{\coeffectcolor}{ \textcolor{\coeffectcolor}{\gamma} }\! \cdot \! \Gamma    \ottsym{\mbox{$\mid$}-cbnfull}    \ottnt{e_{{\mathrm{1}}}} ; \ottnt{e_{{\mathrm{2}}}}   :  \tau }{%
{\ottdrulename{cbn\_full\_seq}}{}%
}}

\newcommand{\ottdrulecbnXXfullXXinl}[1]{\ottdrule[#1]{%
\ottpremise{  \textcolor{\coeffectcolor}{ \textcolor{\coeffectcolor}{\gamma} }\! \cdot \! \Gamma    \ottsym{\mbox{$\mid$}-cbnfull}   \ottnt{e}  :  \tau_{{\mathrm{1}}} }%
}{
  \textcolor{\coeffectcolor}{ \textcolor{\coeffectcolor}{\gamma} }\! \cdot \! \Gamma    \ottsym{\mbox{$\mid$}-cbnfull}   \ottkw{inl} \, \ottnt{e}  :  \tau_{{\mathrm{1}}}  \ottsym{+}  \tau_{{\mathrm{2}}} }{%
{\ottdrulename{cbn\_full\_inl}}{}%
}}

\newcommand{\ottdrulecbnXXfullXXinr}[1]{\ottdrule[#1]{%
\ottpremise{  \textcolor{\coeffectcolor}{ \textcolor{\coeffectcolor}{\gamma} }\! \cdot \! \Gamma    \ottsym{\mbox{$\mid$}-cbnfull}   \ottnt{e}  :  \tau_{{\mathrm{2}}} }%
}{
  \textcolor{\coeffectcolor}{ \textcolor{\coeffectcolor}{\gamma} }\! \cdot \! \Gamma    \ottsym{\mbox{$\mid$}-cbnfull}   \ottkw{inr} \, \ottnt{e}  :  \tau_{{\mathrm{1}}}  \ottsym{+}  \tau_{{\mathrm{2}}} }{%
{\ottdrulename{cbn\_full\_inr}}{}%
}}

\newcommand{\ottdrulecbnXXfullXXwith}[1]{\ottdrule[#1]{%
\ottpremise{  \textcolor{\coeffectcolor}{ \textcolor{\coeffectcolor}{\gamma} }\! \cdot \! \Gamma    \ottsym{\mbox{$\mid$}-cbnfull}   \ottnt{e_{{\mathrm{1}}}}  :  \tau_{{\mathrm{1}}} }%
\ottpremise{  \textcolor{\coeffectcolor}{ \textcolor{\coeffectcolor}{\gamma} }\! \cdot \! \Gamma    \ottsym{\mbox{$\mid$}-cbnfull}   \ottnt{e_{{\mathrm{2}}}}  :  \tau_{{\mathrm{2}}} }%
}{
  \textcolor{\coeffectcolor}{ \textcolor{\coeffectcolor}{\gamma} }\! \cdot \! \Gamma    \ottsym{\mbox{$\mid$}-cbnfull}    \langle  \ottnt{e_{{\mathrm{1}}}} , \ottnt{e_{{\mathrm{2}}}}  \rangle   :   \tau_{{\mathrm{1}}}  \mathop{\&}   \tau_{{\mathrm{2}}}  }{%
{\ottdrulename{cbn\_full\_with}}{}%
}}

\newcommand{\ottdrulecbnXXfullXXfst}[1]{\ottdrule[#1]{%
\ottpremise{  \textcolor{\coeffectcolor}{ \textcolor{\coeffectcolor}{\gamma} }\! \cdot \! \Gamma    \ottsym{\mbox{$\mid$}-cbnfull}   \ottnt{e}  :   \tau_{{\mathrm{1}}}  \mathop{\&}   \tau_{{\mathrm{2}}}  }%
}{
  \textcolor{\coeffectcolor}{ \textcolor{\coeffectcolor}{\gamma} }\! \cdot \! \Gamma    \ottsym{\mbox{$\mid$}-cbnfull}   \ottnt{e}  \ottsym{.}  \ottsym{1}  :  \tau_{{\mathrm{1}}} }{%
{\ottdrulename{cbn\_full\_fst}}{}%
}}

\newcommand{\ottdrulecbnXXfullXXsnd}[1]{\ottdrule[#1]{%
\ottpremise{  \textcolor{\coeffectcolor}{ \textcolor{\coeffectcolor}{\gamma} }\! \cdot \! \Gamma    \ottsym{\mbox{$\mid$}-cbnfull}   \ottnt{e}  :   \tau_{{\mathrm{1}}}  \mathop{\&}   \tau_{{\mathrm{2}}}  }%
}{
  \textcolor{\coeffectcolor}{ \textcolor{\coeffectcolor}{\gamma} }\! \cdot \! \Gamma    \ottsym{\mbox{$\mid$}-cbnfull}   \ottnt{e}  \ottsym{.}  \ottsym{2}  :  \tau_{{\mathrm{2}}} }{%
{\ottdrulename{cbn\_full\_snd}}{}%
}}

\newcommand{\ottdrulecbnXXfullXXcase}[1]{\ottdrule[#1]{%
\ottpremise{  \textcolor{\coeffectcolor}{ \textcolor{\coeffectcolor}{\gamma}_{{\mathrm{1}}} }\! \cdot \! \Gamma    \ottsym{\mbox{$\mid$}-cbnfull}   \ottnt{e}  :  \tau_{{\mathrm{1}}}  \ottsym{+}  \tau_{{\mathrm{2}}} }%
\ottpremise{   \textcolor{\coeffectcolor}{ \textcolor{\coeffectcolor}{\gamma}_{{\mathrm{2}}} }\! \cdot \! \Gamma    \mathop{,}    \ottmv{x_{{\mathrm{1}}}}  :^{\textcolor{\coeffectcolor}{ \textcolor{\coeffectcolor}{q} } }  \tau_{{\mathrm{1}}}     \ottsym{\mbox{$\mid$}-cbnfull}   \ottnt{e_{{\mathrm{1}}}}  :  \tau }%
\ottpremise{   \textcolor{\coeffectcolor}{ \textcolor{\coeffectcolor}{\gamma}_{{\mathrm{2}}} }\! \cdot \! \Gamma    \mathop{,}    \ottmv{x_{{\mathrm{2}}}}  :^{\textcolor{\coeffectcolor}{ \textcolor{\coeffectcolor}{q} } }  \tau_{{\mathrm{2}}}     \ottsym{\mbox{$\mid$}-cbnfull}   \ottnt{e_{{\mathrm{2}}}}  :  \tau }%
\ottpremise{ \textcolor{\coeffectcolor}{ \textcolor{\coeffectcolor}{\gamma} } \equiv \textcolor{\coeffectcolor}{  \textcolor{\coeffectcolor}{  \textcolor{\coeffectcolor}{ \textcolor{\coeffectcolor}{q} \cdot \textcolor{\coeffectcolor}{\gamma}_{{\mathrm{1}}} }  \ottsym{+} \textcolor{\coeffectcolor}{\gamma}_{{\mathrm{2}}} }  } }%
\ottpremise{ \textcolor{\coeffectcolor}{ \textcolor{\coeffectcolor}{q} }\;  \textcolor{\coeffectcolor}{\mathop{\leq_{\mathit{co} } } } \; \textcolor{\coeffectcolor}{  \textcolor{\coeffectcolor}{1}  } }%
}{
  \textcolor{\coeffectcolor}{ \textcolor{\coeffectcolor}{\gamma} }\! \cdot \! \Gamma    \ottsym{\mbox{$\mid$}-cbnfull}    \ottkw{case}_{\textcolor{\coeffectcolor}{ \textcolor{\coeffectcolor}{q} } }\;  \ottnt{e} \; \ottkw{of}\;\ottkw{inl}\; \ottmv{x_{{\mathrm{1}}}}  \rightarrow\;  \ottnt{e_{{\mathrm{1}}}}  ; \ottkw{inr}\; \ottmv{x_{{\mathrm{2}}}}  \rightarrow\;  \ottnt{e_{{\mathrm{2}}}}   :  \tau }{%
{\ottdrulename{cbn\_full\_case}}{}%
}}

\newcommand{\ottdrulecbnXXfullXXbox}[1]{\ottdrule[#1]{%
\ottpremise{  \textcolor{\coeffectcolor}{ \textcolor{\coeffectcolor}{\gamma}_{{\mathrm{1}}} }\! \cdot \! \Gamma    \ottsym{\mbox{$\mid$}-cbnfull}   \ottnt{e}  :  \tau }%
\ottpremise{ \textcolor{\coeffectcolor}{ \textcolor{\coeffectcolor}{\gamma} } \equiv \textcolor{\coeffectcolor}{  \textcolor{\coeffectcolor}{ \textcolor{\coeffectcolor}{q} \cdot \textcolor{\coeffectcolor}{\gamma}_{{\mathrm{1}}} }  } }%
}{
  \textcolor{\coeffectcolor}{ \textcolor{\coeffectcolor}{\gamma} }\! \cdot \! \Gamma    \ottsym{\mbox{$\mid$}-cbnfull}    \ottkw{box} _{\textcolor{\coeffectcolor}{ \textcolor{\coeffectcolor}{q} } }\  \ottnt{e}   :   \square_{\textcolor{\coeffectcolor}{ \textcolor{\coeffectcolor}{q}' } }\;  \tau  }{%
{\ottdrulename{cbn\_full\_box}}{}%
}}

\newcommand{\ottdrulecbnXXfullXXunbox}[1]{\ottdrule[#1]{%
\ottpremise{ \textcolor{\coeffectcolor}{ \textcolor{\coeffectcolor}{q}'_{{\mathrm{2}}} }\; = \; \textcolor{\coeffectcolor}{  \textcolor{\coeffectcolor}{q}_{{\mathrm{2}}} \ \|\ \textcolor{\coeffectcolor}{1}  } }%
\ottpremise{  \textcolor{\coeffectcolor}{ \textcolor{\coeffectcolor}{\gamma}_{{\mathrm{1}}} }\! \cdot \! \Gamma    \ottsym{\mbox{$\mid$}-cbnfull}   \ottnt{e_{{\mathrm{1}}}}  :   \square_{\textcolor{\coeffectcolor}{ \textcolor{\coeffectcolor}{q}_{{\mathrm{1}}} } }\;  \tau  }%
\ottpremise{   \textcolor{\coeffectcolor}{ \textcolor{\coeffectcolor}{\gamma}_{{\mathrm{2}}} }\! \cdot \! \Gamma    \mathop{,}    \ottmv{x}  :^{\textcolor{\coeffectcolor}{  \textcolor{\coeffectcolor}{ \textcolor{\coeffectcolor}{q}_{{\mathrm{1}}}   \cdot   \textcolor{\coeffectcolor}{q}'_{{\mathrm{2}}} }  } }  \tau     \ottsym{\mbox{$\mid$}-cbnfull}   \ottnt{e_{{\mathrm{2}}}}  :  \tau' }%
\ottpremise{ \textcolor{\coeffectcolor}{ \textcolor{\coeffectcolor}{\gamma} } \equiv \textcolor{\coeffectcolor}{  \textcolor{\coeffectcolor}{  \textcolor{\coeffectcolor}{ \textcolor{\coeffectcolor}{q}'_{{\mathrm{2}}} \cdot \textcolor{\coeffectcolor}{\gamma}_{{\mathrm{1}}} }  \ottsym{+} \textcolor{\coeffectcolor}{\gamma}_{{\mathrm{2}}} }  } }%
}{
  \textcolor{\coeffectcolor}{ \textcolor{\coeffectcolor}{\gamma} }\! \cdot \! \Gamma    \ottsym{\mbox{$\mid$}-cbnfull}    \ottkw{unbox} _{\textcolor{\coeffectcolor}{ \textcolor{\coeffectcolor}{q}_{{\mathrm{2}}} } }\  \ottmv{x}  =  \ottnt{e_{{\mathrm{1}}}} \  \ottkw{in} \  \ottnt{e_{{\mathrm{2}}}}   :  \tau' }{%
{\ottdrulename{cbn\_full\_unbox}}{}%
}}

\newcommand{\ottdrulecbnXXfullXXsub}[1]{\ottdrule[#1]{%
\ottpremise{  \textcolor{\coeffectcolor}{ \textcolor{\coeffectcolor}{\gamma}' }\! \cdot \! \Gamma    \ottsym{\mbox{$\mid$}-cbnfull}   \ottnt{e}  :  \tau }%
\ottpremise{ \textcolor{\coeffectcolor}{ \textcolor{\coeffectcolor}{\gamma} }\; \textcolor{\coeffectcolor}{\mathop{\leq_{\mathit{co} } } } \; \textcolor{\coeffectcolor}{ \textcolor{\coeffectcolor}{\gamma}' } }%
}{
  \textcolor{\coeffectcolor}{ \textcolor{\coeffectcolor}{\gamma} }\! \cdot \! \Gamma    \ottsym{\mbox{$\mid$}-cbnfull}   \ottnt{e}  :  \tau }{%
{\ottdrulename{cbn\_full\_sub}}{}%
}}

\newcommand{\ottdrulecbnXXfullXXret}[1]{\ottdrule[#1]{%
\ottpremise{  \textcolor{\coeffectcolor}{ \textcolor{\coeffectcolor}{\gamma} }\! \cdot \! \Gamma    \ottsym{\mbox{$\mid$}-cbnfull}   \ottnt{e}  :  \tau }%
}{
  \textcolor{\coeffectcolor}{ \textcolor{\coeffectcolor}{\gamma} }\! \cdot \! \Gamma    \ottsym{\mbox{$\mid$}-cbnfull}   ret \, \ottnt{e}  :   \ottkw{T}_{\textcolor{\effectcolor}{  \textcolor{\effectcolor}{\varepsilon}  } }\;  \tau  }{%
{\ottdrulename{cbn\_full\_ret}}{}%
}}

\newcommand{\ottdrulecbnXXfullXXbind}[1]{\ottdrule[#1]{%
\ottpremise{ \textcolor{\coeffectcolor}{ \textcolor{\coeffectcolor}{q}' }\; = \; \textcolor{\coeffectcolor}{  \textcolor{\coeffectcolor}{q} \ \|\ \textcolor{\coeffectcolor}{1}  } }%
\ottpremise{  \textcolor{\coeffectcolor}{ \textcolor{\coeffectcolor}{\gamma}_{{\mathrm{1}}} }\! \cdot \! \Gamma    \ottsym{\mbox{$\mid$}-cbnfull}   \ottnt{e_{{\mathrm{1}}}}  :   \ottkw{T}_{\textcolor{\effectcolor}{ \textcolor{\effectcolor}{\phi}_{{\mathrm{1}}} } }\;  \tau_{{\mathrm{1}}}  }%
\ottpremise{   \textcolor{\coeffectcolor}{ \textcolor{\coeffectcolor}{\gamma}_{{\mathrm{2}}} }\! \cdot \! \Gamma    \mathop{,}    \ottmv{x}  :^{\textcolor{\coeffectcolor}{ \textcolor{\coeffectcolor}{q}' } }  \tau_{{\mathrm{1}}}     \ottsym{\mbox{$\mid$}-cbnfull}   \ottnt{e_{{\mathrm{2}}}}  :   \ottkw{T}_{\textcolor{\effectcolor}{ \textcolor{\effectcolor}{\phi}_{{\mathrm{2}}} } }\;  \tau_{{\mathrm{2}}}  }%
\ottpremise{ \textcolor{\coeffectcolor}{ \textcolor{\coeffectcolor}{\gamma} } \equiv \textcolor{\coeffectcolor}{  \textcolor{\coeffectcolor}{  \textcolor{\coeffectcolor}{ \textcolor{\coeffectcolor}{q}' \cdot \textcolor{\coeffectcolor}{\gamma}_{{\mathrm{1}}} }  \ottsym{+} \textcolor{\coeffectcolor}{\gamma}_{{\mathrm{2}}} }  } }%
\ottpremise{ \textcolor{\effectcolor}{  \textcolor{\effectcolor}{ \textcolor{\effectcolor}{\phi}_{{\mathrm{1}}}  \cdot  \textcolor{\effectcolor}{\phi}_{{\mathrm{2}}} }  \; \equiv \;  \textcolor{\effectcolor}{\phi} } }%
}{
  \textcolor{\coeffectcolor}{ \textcolor{\coeffectcolor}{\gamma} }\! \cdot \! \Gamma    \ottsym{\mbox{$\mid$}-cbnfull}   \ottkw{bind} \, \ottmv{x}  \ottsym{=}  \textcolor{\coeffectcolor}{q} \, \ottnt{e_{{\mathrm{1}}}} \, \ottkw{in} \, \ottnt{e_{{\mathrm{2}}}}  :   \ottkw{T}_{\textcolor{\effectcolor}{ \textcolor{\effectcolor}{\phi} } }\;  \tau_{{\mathrm{2}}}  }{%
{\ottdrulename{cbn\_full\_bind}}{}%
}}

\newcommand{\ottdrulecbnXXfullXXtick}[1]{\ottdrule[#1]{%
}{
  \textcolor{\coeffectcolor}{  \textcolor{\coeffectcolor}{\overline{0} }  }\! \cdot \! \Gamma    \ottsym{\mbox{$\mid$}-cbnfull}    \textcolor{\effectcolor}{ \ottkw{tick} }   :   \ottkw{T}_{\textcolor{\coeffectcolor}{  \textcolor{\effectcolor}{ \ottkw{tick} }  } }\;  \ottkw{unit}  }{%
{\ottdrulename{cbn\_full\_tick}}{}%
}}

\newcommand{\ottdefncbnXXfullXXtyping}[1]{\begin{ottdefnblock}[#1]{$ \Psi   \ottsym{\mbox{$\mid$}-cbnfull}   \ottnt{e}  :  \tau $}{\ottcom{Graded effect + graded coeffect CBN system}}
\ottusedrule{\ottdrulecbnXXfullXXvar{}}
\ottusedrule{\ottdrulecbnXXfullXXabs{}}
\ottusedrule{\ottdrulecbnXXfullXXapp{}}
\ottusedrule{\ottdrulecbnXXfullXXunit{}}
\ottusedrule{\ottdrulecbnXXfullXXsequence{}}
\ottusedrule{\ottdrulecbnXXfullXXseq{}}
\ottusedrule{\ottdrulecbnXXfullXXinl{}}
\ottusedrule{\ottdrulecbnXXfullXXinr{}}
\ottusedrule{\ottdrulecbnXXfullXXwith{}}
\ottusedrule{\ottdrulecbnXXfullXXfst{}}
\ottusedrule{\ottdrulecbnXXfullXXsnd{}}
\ottusedrule{\ottdrulecbnXXfullXXcase{}}
\ottusedrule{\ottdrulecbnXXfullXXbox{}}
\ottusedrule{\ottdrulecbnXXfullXXunbox{}}
\ottusedrule{\ottdrulecbnXXfullXXsub{}}
\ottusedrule{\ottdrulecbnXXfullXXret{}}
\ottusedrule{\ottdrulecbnXXfullXXbind{}}
\ottusedrule{\ottdrulecbnXXfullXXtick{}}
\end{ottdefnblock}}

\newcommand{\ottdefnsJCBNFull}{
\ottdefncbnXXfullXXtyping{}}

\newcommand{\ottdrulecbncoeffXXvar}[1]{\ottdrule[#1]{%
}{
      \textcolor{\coeffectcolor}{  \textcolor{\coeffectcolor}{\overline{0} }  }\! \cdot \! \Gamma_{{\mathrm{1}}}     \mathop{,}    \ottmv{x}  :^{\textcolor{\coeffectcolor}{  \textcolor{\coeffectcolor}{1}  } }  \tau      \mathop{,}    \textcolor{\coeffectcolor}{  \textcolor{\coeffectcolor}{\overline{0} }  }\! \cdot \! \Gamma_{{\mathrm{2}}}     \vdash_{\mathit{cbncoeff} }   \ottmv{x}  : \tau }{%
{\ottdrulename{cbncoeff\_var}}{}%
}}

\newcommand{\ottdrulecbncoeffXXabs}[1]{\ottdrule[#1]{%
\ottpremise{   \textcolor{\coeffectcolor}{ \textcolor{\coeffectcolor}{\gamma} }\! \cdot \! \Gamma    \mathop{,}   \ottsym{(}   \ottmv{x}  :^{\textcolor{\coeffectcolor}{ \textcolor{\coeffectcolor}{q} } }  \tau_{{\mathrm{1}}}   \ottsym{)}    \vdash_{\mathit{cbncoeff} }   \ottnt{e}  : \tau_{{\mathrm{2}}} }%
\ottpremise{ \textcolor{\coeffectcolor}{ \textcolor{\coeffectcolor}{q}' }\;  \textcolor{\coeffectcolor}{\mathop{\leq_{\mathit{co} } } } \; \textcolor{\coeffectcolor}{ \textcolor{\coeffectcolor}{q} } }%
}{
  \textcolor{\coeffectcolor}{ \textcolor{\coeffectcolor}{\gamma} }\! \cdot \! \Gamma    \vdash_{\mathit{cbncoeff} }    \lambda^{\textcolor{\coeffectcolor}{ \textcolor{\coeffectcolor}{q} } }  \ottmv{x} . \ottnt{e}   :  \tau_{{\mathrm{1}}} ^{\textcolor{\coeffectcolor}{ \textcolor{\coeffectcolor}{q}' } } \rightarrow  \tau_{{\mathrm{2}}}  }{%
{\ottdrulename{cbncoeff\_abs}}{}%
}}

\newcommand{\ottdrulecbncoeffXXapp}[1]{\ottdrule[#1]{%
\ottpremise{  \textcolor{\coeffectcolor}{ \textcolor{\coeffectcolor}{\gamma}_{{\mathrm{1}}} }\! \cdot \! \Gamma    \vdash_{\mathit{cbncoeff} }   \ottnt{e_{{\mathrm{1}}}}  :  \tau_{{\mathrm{1}}} ^{\textcolor{\coeffectcolor}{ \textcolor{\coeffectcolor}{q} } } \rightarrow  \tau_{{\mathrm{2}}}  }%
\ottpremise{  \textcolor{\coeffectcolor}{ \textcolor{\coeffectcolor}{\gamma}_{{\mathrm{2}}} }\! \cdot \! \Gamma    \vdash_{\mathit{cbncoeff} }   \ottnt{e_{{\mathrm{2}}}}  : \tau_{{\mathrm{1}}} }%
}{
  \textcolor{\coeffectcolor}{  \textcolor{\coeffectcolor}{ \textcolor{\coeffectcolor}{\gamma}_{{\mathrm{1}}} \ottsym{+}  \textcolor{\coeffectcolor}{ \textcolor{\coeffectcolor}{q} \cdot \textcolor{\coeffectcolor}{\gamma}_{{\mathrm{2}}} }  }  }\! \cdot \! \Gamma    \vdash_{\mathit{cbncoeff} }   \ottnt{e_{{\mathrm{1}}}} \, \ottnt{e_{{\mathrm{2}}}}  : \tau_{{\mathrm{2}}} }{%
{\ottdrulename{cbncoeff\_app}}{}%
}}

\newcommand{\ottdrulecbncoeffXXunit}[1]{\ottdrule[#1]{%
}{
  \textcolor{\coeffectcolor}{  \textcolor{\coeffectcolor}{\overline{0} }  }\! \cdot \! \Gamma    \vdash_{\mathit{cbncoeff} }   \ottsym{()}  : \ottkw{unit} }{%
{\ottdrulename{cbncoeff\_unit}}{}%
}}

\newcommand{\ottdrulecbncoeffXXseq}[1]{\ottdrule[#1]{%
\ottpremise{  \textcolor{\coeffectcolor}{ \textcolor{\coeffectcolor}{\gamma}_{{\mathrm{1}}} }\! \cdot \! \Gamma    \vdash_{\mathit{cbncoeff} }   \ottnt{e_{{\mathrm{1}}}}  : \ottkw{unit} }%
\ottpremise{  \textcolor{\coeffectcolor}{ \textcolor{\coeffectcolor}{\gamma}_{{\mathrm{2}}} }\! \cdot \! \Gamma    \vdash_{\mathit{cbncoeff} }   \ottnt{e_{{\mathrm{2}}}}  : \tau }%
}{
  \textcolor{\coeffectcolor}{  \textcolor{\coeffectcolor}{ \textcolor{\coeffectcolor}{\gamma}_{{\mathrm{1}}} \ottsym{+} \textcolor{\coeffectcolor}{\gamma}_{{\mathrm{2}}} }  }\! \cdot \! \Gamma    \vdash_{\mathit{cbncoeff} }    \ottnt{e_{{\mathrm{1}}}}  ;  \ottnt{e_{{\mathrm{2}}}}   : \tau }{%
{\ottdrulename{cbncoeff\_seq}}{}%
}}

\newcommand{\ottdrulecbncoeffXXwith}[1]{\ottdrule[#1]{%
\ottpremise{  \textcolor{\coeffectcolor}{ \textcolor{\coeffectcolor}{\gamma} }\! \cdot \! \Gamma    \vdash_{\mathit{cbncoeff} }   \ottnt{e_{{\mathrm{1}}}}  : \tau_{{\mathrm{1}}} }%
\ottpremise{  \textcolor{\coeffectcolor}{ \textcolor{\coeffectcolor}{\gamma} }\! \cdot \! \Gamma    \vdash_{\mathit{cbncoeff} }   \ottnt{e_{{\mathrm{2}}}}  : \tau_{{\mathrm{2}}} }%
}{
  \textcolor{\coeffectcolor}{ \textcolor{\coeffectcolor}{\gamma} }\! \cdot \! \Gamma    \vdash_{\mathit{cbncoeff} }    \langle  \ottnt{e_{{\mathrm{1}}}} , \ottnt{e_{{\mathrm{2}}}}  \rangle   :  \tau_{{\mathrm{1}}}  \mathop{\&}   \tau_{{\mathrm{2}}}  }{%
{\ottdrulename{cbncoeff\_with}}{}%
}}

\newcommand{\ottdrulecbncoeffXXfst}[1]{\ottdrule[#1]{%
\ottpremise{  \textcolor{\coeffectcolor}{ \textcolor{\coeffectcolor}{\gamma} }\! \cdot \! \Gamma    \vdash_{\mathit{cbncoeff} }   \ottnt{e}  :  \tau_{{\mathrm{1}}}  \mathop{\&}   \tau_{{\mathrm{2}}}  }%
}{
  \textcolor{\coeffectcolor}{ \textcolor{\coeffectcolor}{\gamma} }\! \cdot \! \Gamma    \vdash_{\mathit{cbncoeff} }   \ottnt{e}  \ottsym{.}  \ottsym{1}  : \tau_{{\mathrm{1}}} }{%
{\ottdrulename{cbncoeff\_fst}}{}%
}}

\newcommand{\ottdrulecbncoeffXXsnd}[1]{\ottdrule[#1]{%
\ottpremise{  \textcolor{\coeffectcolor}{ \textcolor{\coeffectcolor}{\gamma} }\! \cdot \! \Gamma    \vdash_{\mathit{cbncoeff} }   \ottnt{e}  :  \tau_{{\mathrm{1}}}  \mathop{\&}   \tau_{{\mathrm{2}}}  }%
}{
  \textcolor{\coeffectcolor}{ \textcolor{\coeffectcolor}{\gamma} }\! \cdot \! \Gamma    \vdash_{\mathit{cbncoeff} }   \ottnt{e}  \ottsym{.}  \ottsym{2}  : \tau_{{\mathrm{2}}} }{%
{\ottdrulename{cbncoeff\_snd}}{}%
}}

\newcommand{\ottdrulecbncoeffXXinl}[1]{\ottdrule[#1]{%
\ottpremise{  \textcolor{\coeffectcolor}{ \textcolor{\coeffectcolor}{\gamma} }\! \cdot \! \Gamma    \vdash_{\mathit{cbncoeff} }   \ottnt{e}  : \tau_{{\mathrm{1}}} }%
}{
  \textcolor{\coeffectcolor}{ \textcolor{\coeffectcolor}{\gamma} }\! \cdot \! \Gamma    \vdash_{\mathit{cbncoeff} }   \ottkw{inl} \, \ottnt{e}  : \tau_{{\mathrm{1}}}  \ottsym{+}  \tau_{{\mathrm{2}}} }{%
{\ottdrulename{cbncoeff\_inl}}{}%
}}

\newcommand{\ottdrulecbncoeffXXinr}[1]{\ottdrule[#1]{%
\ottpremise{  \textcolor{\coeffectcolor}{ \textcolor{\coeffectcolor}{\gamma} }\! \cdot \! \Gamma    \vdash_{\mathit{cbncoeff} }   \ottnt{e}  : \tau_{{\mathrm{2}}} }%
}{
  \textcolor{\coeffectcolor}{ \textcolor{\coeffectcolor}{\gamma} }\! \cdot \! \Gamma    \vdash_{\mathit{cbncoeff} }   \ottkw{inr} \, \ottnt{e}  : \tau_{{\mathrm{1}}}  \ottsym{+}  \tau_{{\mathrm{2}}} }{%
{\ottdrulename{cbncoeff\_inr}}{}%
}}

\newcommand{\ottdrulecbncoeffXXcase}[1]{\ottdrule[#1]{%
\ottpremise{  \textcolor{\coeffectcolor}{ \textcolor{\coeffectcolor}{\gamma}_{{\mathrm{1}}} }\! \cdot \! \Gamma    \vdash_{\mathit{cbncoeff} }   \ottnt{e}  : \tau_{{\mathrm{1}}}  \ottsym{+}  \tau_{{\mathrm{2}}} }%
\ottpremise{   \textcolor{\coeffectcolor}{ \textcolor{\coeffectcolor}{\gamma}_{{\mathrm{2}}} }\! \cdot \! \Gamma    \mathop{,}    \ottmv{x_{{\mathrm{1}}}}  :^{\textcolor{\coeffectcolor}{ \textcolor{\coeffectcolor}{q} } }  \tau_{{\mathrm{1}}}     \vdash_{\mathit{cbncoeff} }   \ottnt{e_{{\mathrm{1}}}}  : \tau }%
\ottpremise{   \textcolor{\coeffectcolor}{ \textcolor{\coeffectcolor}{\gamma}_{{\mathrm{2}}} }\! \cdot \! \Gamma    \mathop{,}    \ottmv{x_{{\mathrm{2}}}}  :^{\textcolor{\coeffectcolor}{ \textcolor{\coeffectcolor}{q} } }  \tau_{{\mathrm{2}}}     \vdash_{\mathit{cbncoeff} }   \ottnt{e_{{\mathrm{2}}}}  : \tau }%
\ottpremise{ \textcolor{\coeffectcolor}{ \textcolor{\coeffectcolor}{q} }\;  \textcolor{\coeffectcolor}{\mathop{\leq_{\mathit{co} } } } \; \textcolor{\coeffectcolor}{  \textcolor{\coeffectcolor}{1}  } }%
}{
  \textcolor{\coeffectcolor}{  \textcolor{\coeffectcolor}{  \textcolor{\coeffectcolor}{ \textcolor{\coeffectcolor}{q} \cdot \textcolor{\coeffectcolor}{\gamma}_{{\mathrm{1}}} }  \ottsym{+} \textcolor{\coeffectcolor}{\gamma}_{{\mathrm{2}}} }  }\! \cdot \! \Gamma    \vdash_{\mathit{cbncoeff} }    \ottkw{case}_{\textcolor{\coeffectcolor}{ \textcolor{\coeffectcolor}{q} } }\;  \ottnt{e} \; \ottkw{of}\;\ottkw{inl}\; \ottmv{x_{{\mathrm{1}}}}  \rightarrow\;  \ottnt{e_{{\mathrm{1}}}}  ; \ottkw{inr}\; \ottmv{x_{{\mathrm{2}}}}  \rightarrow\;  \ottnt{e_{{\mathrm{2}}}}   : \tau }{%
{\ottdrulename{cbncoeff\_case}}{}%
}}

\newcommand{\ottdrulecbncoeffXXbox}[1]{\ottdrule[#1]{%
\ottpremise{  \textcolor{\coeffectcolor}{ \textcolor{\coeffectcolor}{\gamma}_{{\mathrm{1}}} }\! \cdot \! \Gamma    \vdash_{\mathit{cbncoeff} }   \ottnt{e}  : \tau }%
}{
  \textcolor{\coeffectcolor}{  \textcolor{\coeffectcolor}{ \textcolor{\coeffectcolor}{q} \cdot \textcolor{\coeffectcolor}{\gamma}_{{\mathrm{1}}} }  }\! \cdot \! \Gamma    \vdash_{\mathit{cbncoeff} }    \ottkw{box} _{\textcolor{\coeffectcolor}{ \textcolor{\coeffectcolor}{q} } }\  \ottnt{e}   :  \square_{\textcolor{\coeffectcolor}{ \textcolor{\coeffectcolor}{q} } }\;  \tau  }{%
{\ottdrulename{cbncoeff\_box}}{}%
}}

\newcommand{\ottdrulecbncoeffXXunbox}[1]{\ottdrule[#1]{%
\ottpremise{ \textcolor{\coeffectcolor}{ \textcolor{\coeffectcolor}{q}' }\; = \; \textcolor{\coeffectcolor}{  \textcolor{\coeffectcolor}{q}_{{\mathrm{2}}} \ \|\ \textcolor{\coeffectcolor}{1}  } }%
\ottpremise{  \textcolor{\coeffectcolor}{ \textcolor{\coeffectcolor}{\gamma}_{{\mathrm{1}}} }\! \cdot \! \Gamma    \vdash_{\mathit{cbncoeff} }   \ottnt{e_{{\mathrm{1}}}}  :  \square_{\textcolor{\coeffectcolor}{ \textcolor{\coeffectcolor}{q}_{{\mathrm{1}}} } }\;  \tau  }%
\ottpremise{   \textcolor{\coeffectcolor}{ \textcolor{\coeffectcolor}{\gamma}_{{\mathrm{2}}} }\! \cdot \! \Gamma    \mathop{,}    \ottmv{x}  :^{\textcolor{\coeffectcolor}{  \textcolor{\coeffectcolor}{ \textcolor{\coeffectcolor}{q}_{{\mathrm{1}}}   \cdot   \textcolor{\coeffectcolor}{q}' }  } }  \tau     \vdash_{\mathit{cbncoeff} }   \ottnt{e_{{\mathrm{2}}}}  : \tau' }%
}{
  \textcolor{\coeffectcolor}{  \textcolor{\coeffectcolor}{  \textcolor{\coeffectcolor}{ \textcolor{\coeffectcolor}{q}' \cdot \textcolor{\coeffectcolor}{\gamma}_{{\mathrm{1}}} }  \ottsym{+} \textcolor{\coeffectcolor}{\gamma}_{{\mathrm{2}}} }  }\! \cdot \! \Gamma    \vdash_{\mathit{cbncoeff} }    \ottkw{unbox} _{\textcolor{\coeffectcolor}{ \textcolor{\coeffectcolor}{q}_{{\mathrm{2}}} } }\  \ottmv{x}  =  \ottnt{e_{{\mathrm{1}}}} \  \ottkw{in} \  \ottnt{e_{{\mathrm{2}}}}   : \tau' }{%
{\ottdrulename{cbncoeff\_unbox}}{}%
}}

\newcommand{\ottdrulecbncoeffXXsub}[1]{\ottdrule[#1]{%
\ottpremise{  \textcolor{\coeffectcolor}{ \textcolor{\coeffectcolor}{\gamma}_{{\mathrm{1}}} }\! \cdot \! \Gamma    \vdash_{\mathit{cbncoeff} }   \ottnt{e}  : \tau }%
\ottpremise{ \textcolor{\coeffectcolor}{ \textcolor{\coeffectcolor}{\gamma}_{{\mathrm{2}}} }\; \textcolor{\coeffectcolor}{\mathop{\leq_{\mathit{co} } } } \; \textcolor{\coeffectcolor}{ \textcolor{\coeffectcolor}{\gamma}_{{\mathrm{1}}} } }%
}{
  \textcolor{\coeffectcolor}{ \textcolor{\coeffectcolor}{\gamma}_{{\mathrm{2}}} }\! \cdot \! \Gamma    \vdash_{\mathit{cbncoeff} }   \ottnt{e}  : \tau }{%
{\ottdrulename{cbncoeff\_sub}}{}%
}}

\newcommand{\ottdefncbncoeffXXtyping}[1]{\begin{ottdefnblock}[#1]{$ \Psi   \vdash_{\mathit{cbncoeff} }   \ottnt{e}  : \tau $}{\ottcom{Graded CBN coeffect system}}
\ottusedrule{\ottdrulecbncoeffXXvar{}}
\ottusedrule{\ottdrulecbncoeffXXabs{}}
\ottusedrule{\ottdrulecbncoeffXXapp{}}
\ottusedrule{\ottdrulecbncoeffXXunit{}}
\ottusedrule{\ottdrulecbncoeffXXseq{}}
\ottusedrule{\ottdrulecbncoeffXXwith{}}
\ottusedrule{\ottdrulecbncoeffXXfst{}}
\ottusedrule{\ottdrulecbncoeffXXsnd{}}
\ottusedrule{\ottdrulecbncoeffXXinl{}}
\ottusedrule{\ottdrulecbncoeffXXinr{}}
\ottusedrule{\ottdrulecbncoeffXXcase{}}
\ottusedrule{\ottdrulecbncoeffXXbox{}}
\ottusedrule{\ottdrulecbncoeffXXunbox{}}
\ottusedrule{\ottdrulecbncoeffXXsub{}}
\end{ottdefnblock}}

\newcommand{\ottdefnsJLambdaCBNCoeff}{
\ottdefncbncoeffXXtyping{}}

\newcommand{\ottdrulecbvcoeffXXvar}[1]{\ottdrule[#1]{%
}{
      \textcolor{\coeffectcolor}{  \textcolor{\coeffectcolor}{\overline{0} }  }\! \cdot \! \Gamma_{{\mathrm{1}}}     \mathop{,}    \ottmv{x}  :^{\textcolor{\coeffectcolor}{  \textcolor{\coeffectcolor}{1}  } }  \tau      \mathop{,}    \textcolor{\coeffectcolor}{  \textcolor{\coeffectcolor}{\overline{0} }  }\! \cdot \! \Gamma_{{\mathrm{2}}}     \vdash_{\mathit{cbvcoeff} }   \ottmv{x}  : \tau }{%
{\ottdrulename{cbvcoeff\_var}}{}%
}}

\newcommand{\ottdrulecbvcoeffXXabs}[1]{\ottdrule[#1]{%
\ottpremise{   \textcolor{\coeffectcolor}{ \textcolor{\coeffectcolor}{\gamma} }\! \cdot \! \Gamma    \mathop{,}   \ottsym{(}   \ottmv{x}  :^{\textcolor{\coeffectcolor}{ \textcolor{\coeffectcolor}{q} } }  \tau_{{\mathrm{1}}}   \ottsym{)}    \vdash_{\mathit{cbvcoeff} }   \ottnt{e}  : \tau_{{\mathrm{2}}} }%
\ottpremise{ \textcolor{\coeffectcolor}{ \textcolor{\coeffectcolor}{q}' }\;  \textcolor{\coeffectcolor}{\mathop{\leq_{\mathit{co} } } } \; \textcolor{\coeffectcolor}{ \textcolor{\coeffectcolor}{q} } }%
}{
  \textcolor{\coeffectcolor}{ \textcolor{\coeffectcolor}{\gamma} }\! \cdot \! \Gamma    \vdash_{\mathit{cbvcoeff} }    \lambda^{\textcolor{\coeffectcolor}{ \textcolor{\coeffectcolor}{q} } }  \ottmv{x} . \ottnt{e}   :  \tau_{{\mathrm{1}}} ^{\textcolor{\coeffectcolor}{ \textcolor{\coeffectcolor}{q}' } } \rightarrow  \tau_{{\mathrm{2}}}  }{%
{\ottdrulename{cbvcoeff\_abs}}{}%
}}

\newcommand{\ottdrulecbvcoeffXXapp}[1]{\ottdrule[#1]{%
\ottpremise{ \textcolor{\coeffectcolor}{ \textcolor{\coeffectcolor}{q}' }\; = \; \textcolor{\coeffectcolor}{  \textcolor{\coeffectcolor}{q} \ \|\ \textcolor{\coeffectcolor}{1}  } }%
\ottpremise{  \textcolor{\coeffectcolor}{ \textcolor{\coeffectcolor}{\gamma}_{{\mathrm{1}}} }\! \cdot \! \Gamma    \vdash_{\mathit{cbvcoeff} }   \ottnt{e_{{\mathrm{1}}}}  :  \tau_{{\mathrm{1}}} ^{\textcolor{\coeffectcolor}{ \textcolor{\coeffectcolor}{q}' } } \rightarrow  \tau_{{\mathrm{2}}}  }%
\ottpremise{  \textcolor{\coeffectcolor}{ \textcolor{\coeffectcolor}{\gamma}_{{\mathrm{2}}} }\! \cdot \! \Gamma    \vdash_{\mathit{cbvcoeff} }   \ottnt{e_{{\mathrm{2}}}}  : \tau_{{\mathrm{1}}} }%
}{
  \textcolor{\coeffectcolor}{  \textcolor{\coeffectcolor}{ \textcolor{\coeffectcolor}{\gamma}_{{\mathrm{1}}} \ottsym{+}  \textcolor{\coeffectcolor}{ \textcolor{\coeffectcolor}{q}' \cdot \textcolor{\coeffectcolor}{\gamma}_{{\mathrm{2}}} }  }  }\! \cdot \! \Gamma    \vdash_{\mathit{cbvcoeff} }    \ottnt{e_{{\mathrm{1}}}} ^{ \textcolor{\coeffectcolor}{q} }  \ottnt{e_{{\mathrm{2}}}}   : \tau_{{\mathrm{2}}} }{%
{\ottdrulename{cbvcoeff\_app}}{}%
}}

\newcommand{\ottdrulecbvcoeffXXunit}[1]{\ottdrule[#1]{%
}{
  \textcolor{\coeffectcolor}{  \textcolor{\coeffectcolor}{\overline{0} }  }\! \cdot \! \Gamma    \vdash_{\mathit{cbvcoeff} }   \ottsym{()}  : \ottkw{unit} }{%
{\ottdrulename{cbvcoeff\_unit}}{}%
}}

\newcommand{\ottdrulecbvcoeffXXseq}[1]{\ottdrule[#1]{%
\ottpremise{  \textcolor{\coeffectcolor}{ \textcolor{\coeffectcolor}{\gamma}_{{\mathrm{1}}} }\! \cdot \! \Gamma    \vdash_{\mathit{cbvcoeff} }   \ottnt{e_{{\mathrm{1}}}}  : \ottkw{unit} }%
\ottpremise{  \textcolor{\coeffectcolor}{ \textcolor{\coeffectcolor}{\gamma}_{{\mathrm{2}}} }\! \cdot \! \Gamma    \vdash_{\mathit{cbvcoeff} }   \ottnt{e_{{\mathrm{2}}}}  : \tau }%
}{
  \textcolor{\coeffectcolor}{  \textcolor{\coeffectcolor}{ \textcolor{\coeffectcolor}{\gamma}_{{\mathrm{1}}} \ottsym{+} \textcolor{\coeffectcolor}{\gamma}_{{\mathrm{2}}} }  }\! \cdot \! \Gamma    \vdash_{\mathit{cbvcoeff} }    \ottnt{e_{{\mathrm{1}}}}  ;  \ottnt{e_{{\mathrm{2}}}}   : \tau }{%
{\ottdrulename{cbvcoeff\_seq}}{}%
}}

\newcommand{\ottdrulecbvcoeffXXpair}[1]{\ottdrule[#1]{%
\ottpremise{  \textcolor{\coeffectcolor}{ \textcolor{\coeffectcolor}{\gamma}_{{\mathrm{1}}} }\! \cdot \! \Gamma    \vdash_{\mathit{cbvcoeff} }   \ottnt{e_{{\mathrm{1}}}}  : \tau_{{\mathrm{1}}} }%
\ottpremise{  \textcolor{\coeffectcolor}{ \textcolor{\coeffectcolor}{\gamma}_{{\mathrm{2}}} }\! \cdot \! \Gamma    \vdash_{\mathit{cbvcoeff} }   \ottnt{e_{{\mathrm{2}}}}  : \tau_{{\mathrm{2}}} }%
}{
  \textcolor{\coeffectcolor}{  \textcolor{\coeffectcolor}{ \textcolor{\coeffectcolor}{\gamma}_{{\mathrm{1}}} \ottsym{+} \textcolor{\coeffectcolor}{\gamma}_{{\mathrm{2}}} }  }\! \cdot \! \Gamma    \vdash_{\mathit{cbvcoeff} }   \ottsym{(}  \ottnt{e_{{\mathrm{1}}}}  \ottsym{,}  \ottnt{e_{{\mathrm{2}}}}  \ottsym{)}  :  \tau_{{\mathrm{1}}}  \otimes  \tau_{{\mathrm{2}}}  }{%
{\ottdrulename{cbvcoeff\_pair}}{}%
}}

\newcommand{\ottdrulecbvcoeffXXsplit}[1]{\ottdrule[#1]{%
\ottpremise{ \textcolor{\coeffectcolor}{ \textcolor{\coeffectcolor}{q}' }\; = \; \textcolor{\coeffectcolor}{  \textcolor{\coeffectcolor}{q} \ \|\ \textcolor{\coeffectcolor}{1}  } }%
\ottpremise{  \textcolor{\coeffectcolor}{ \textcolor{\coeffectcolor}{\gamma} }\! \cdot \! \Gamma    \vdash_{\mathit{cbvcoeff} }   \ottnt{e_{{\mathrm{1}}}}  :  \tau_{{\mathrm{1}}}  \otimes  \tau_{{\mathrm{2}}}  }%
\ottpremise{    \textcolor{\coeffectcolor}{ \textcolor{\coeffectcolor}{\gamma}_{{\mathrm{2}}} }\! \cdot \! \Gamma    \mathop{,}    \ottmv{x_{{\mathrm{1}}}}  :^{\textcolor{\coeffectcolor}{ \textcolor{\coeffectcolor}{q}' } }  \tau_{{\mathrm{1}}}     \mathop{,}    \ottmv{x_{{\mathrm{2}}}}  :^{\textcolor{\coeffectcolor}{ \textcolor{\coeffectcolor}{q}' } }  \tau_{{\mathrm{2}}}     \vdash_{\mathit{cbvcoeff} }   \ottnt{e_{{\mathrm{2}}}}  : \tau }%
}{
  \textcolor{\coeffectcolor}{  \textcolor{\coeffectcolor}{  \textcolor{\coeffectcolor}{ \textcolor{\coeffectcolor}{q}' \cdot \textcolor{\coeffectcolor}{\gamma}_{{\mathrm{1}}} }  \ottsym{+} \textcolor{\coeffectcolor}{\gamma}_{{\mathrm{2}}} }  }\! \cdot \! \Gamma    \vdash_{\mathit{cbvcoeff} }    \ottkw{let}_{ \textcolor{\coeffectcolor}{q} }\; ( \ottmv{x_{{\mathrm{1}}}} ,  \ottmv{x_{{\mathrm{2}}}} ) =  \ottnt{e_{{\mathrm{1}}}} \; \ottkw{in}\;  \ottnt{e_{{\mathrm{2}}}}   : \tau }{%
{\ottdrulename{cbvcoeff\_split}}{}%
}}

\newcommand{\ottdrulecbvcoeffXXinl}[1]{\ottdrule[#1]{%
\ottpremise{  \textcolor{\coeffectcolor}{ \textcolor{\coeffectcolor}{\gamma} }\! \cdot \! \Gamma    \vdash_{\mathit{cbvcoeff} }   \ottnt{e}  : \tau_{{\mathrm{1}}} }%
}{
  \textcolor{\coeffectcolor}{ \textcolor{\coeffectcolor}{\gamma} }\! \cdot \! \Gamma    \vdash_{\mathit{cbvcoeff} }   \ottkw{inl} \, \ottnt{e}  : \tau_{{\mathrm{1}}}  \ottsym{+}  \tau_{{\mathrm{2}}} }{%
{\ottdrulename{cbvcoeff\_inl}}{}%
}}

\newcommand{\ottdrulecbvcoeffXXinr}[1]{\ottdrule[#1]{%
\ottpremise{  \textcolor{\coeffectcolor}{ \textcolor{\coeffectcolor}{\gamma} }\! \cdot \! \Gamma    \vdash_{\mathit{cbvcoeff} }   \ottnt{e}  : \tau_{{\mathrm{2}}} }%
}{
  \textcolor{\coeffectcolor}{ \textcolor{\coeffectcolor}{\gamma} }\! \cdot \! \Gamma    \vdash_{\mathit{cbvcoeff} }   \ottkw{inr} \, \ottnt{e}  : \tau_{{\mathrm{1}}}  \ottsym{+}  \tau_{{\mathrm{2}}} }{%
{\ottdrulename{cbvcoeff\_inr}}{}%
}}

\newcommand{\ottdrulecbvcoeffXXcase}[1]{\ottdrule[#1]{%
\ottpremise{  \textcolor{\coeffectcolor}{ \textcolor{\coeffectcolor}{\gamma}_{{\mathrm{1}}} }\! \cdot \! \Gamma    \vdash_{\mathit{cbvcoeff} }   \ottnt{e}  : \tau_{{\mathrm{1}}}  \ottsym{+}  \tau_{{\mathrm{2}}} }%
\ottpremise{   \textcolor{\coeffectcolor}{ \textcolor{\coeffectcolor}{\gamma}_{{\mathrm{2}}} }\! \cdot \! \Gamma    \mathop{,}    \ottmv{x_{{\mathrm{1}}}}  :^{\textcolor{\coeffectcolor}{ \textcolor{\coeffectcolor}{q} } }  \tau_{{\mathrm{1}}}     \vdash_{\mathit{cbvcoeff} }   \ottnt{e_{{\mathrm{1}}}}  : \tau }%
\ottpremise{   \textcolor{\coeffectcolor}{ \textcolor{\coeffectcolor}{\gamma}_{{\mathrm{2}}} }\! \cdot \! \Gamma    \mathop{,}    \ottmv{x_{{\mathrm{2}}}}  :^{\textcolor{\coeffectcolor}{ \textcolor{\coeffectcolor}{q} } }  \tau_{{\mathrm{2}}}     \vdash_{\mathit{cbvcoeff} }   \ottnt{e_{{\mathrm{2}}}}  : \tau }%
\ottpremise{ \textcolor{\coeffectcolor}{ \textcolor{\coeffectcolor}{q} }\;  \textcolor{\coeffectcolor}{\mathop{\leq_{\mathit{co} } } } \; \textcolor{\coeffectcolor}{  \textcolor{\coeffectcolor}{1}  } }%
}{
  \textcolor{\coeffectcolor}{  \textcolor{\coeffectcolor}{  \textcolor{\coeffectcolor}{ \textcolor{\coeffectcolor}{q} \cdot \textcolor{\coeffectcolor}{\gamma}_{{\mathrm{1}}} }  \ottsym{+} \textcolor{\coeffectcolor}{\gamma}_{{\mathrm{2}}} }  }\! \cdot \! \Gamma    \vdash_{\mathit{cbvcoeff} }    \ottkw{case}_{\textcolor{\coeffectcolor}{ \textcolor{\coeffectcolor}{q} } }\;  \ottnt{e} \; \ottkw{of}\;\ottkw{inl}\; \ottmv{x_{{\mathrm{1}}}}  \rightarrow\;  \ottnt{e_{{\mathrm{1}}}}  ; \ottkw{inr}\; \ottmv{x_{{\mathrm{2}}}}  \rightarrow\;  \ottnt{e_{{\mathrm{2}}}}   : \tau }{%
{\ottdrulename{cbvcoeff\_case}}{}%
}}

\newcommand{\ottdrulecbvcoeffXXbox}[1]{\ottdrule[#1]{%
\ottpremise{ \textcolor{\coeffectcolor}{ \textcolor{\coeffectcolor}{q}' }\; = \; \textcolor{\coeffectcolor}{  \textcolor{\coeffectcolor}{q} \ \|\ \textcolor{\coeffectcolor}{1}  } }%
\ottpremise{  \textcolor{\coeffectcolor}{ \textcolor{\coeffectcolor}{\gamma}_{{\mathrm{1}}} }\! \cdot \! \Gamma    \vdash_{\mathit{cbvcoeff} }   \ottnt{e}  : \tau }%
}{
  \textcolor{\coeffectcolor}{  \textcolor{\coeffectcolor}{ \textcolor{\coeffectcolor}{q}' \cdot \textcolor{\coeffectcolor}{\gamma}_{{\mathrm{1}}} }  }\! \cdot \! \Gamma    \vdash_{\mathit{cbvcoeff} }    \ottkw{box} _{\textcolor{\coeffectcolor}{ \textcolor{\coeffectcolor}{q} } }\  \ottnt{e}   :  \square_{\textcolor{\coeffectcolor}{ \textcolor{\coeffectcolor}{q}' } }\;  \tau  }{%
{\ottdrulename{cbvcoeff\_box}}{}%
}}

\newcommand{\ottdrulecbvcoeffXXunbox}[1]{\ottdrule[#1]{%
\ottpremise{ \textcolor{\coeffectcolor}{ \textcolor{\coeffectcolor}{q}' }\; = \; \textcolor{\coeffectcolor}{  \textcolor{\coeffectcolor}{q}_{{\mathrm{2}}} \ \|\ \textcolor{\coeffectcolor}{1}  } }%
\ottpremise{  \textcolor{\coeffectcolor}{ \textcolor{\coeffectcolor}{\gamma}_{{\mathrm{1}}} }\! \cdot \! \Gamma    \vdash_{\mathit{cbvcoeff} }   \ottnt{e_{{\mathrm{1}}}}  :  \square_{\textcolor{\coeffectcolor}{ \textcolor{\coeffectcolor}{q}_{{\mathrm{1}}} } }\;  \tau  }%
\ottpremise{   \textcolor{\coeffectcolor}{ \textcolor{\coeffectcolor}{\gamma}_{{\mathrm{2}}} }\! \cdot \! \Gamma    \mathop{,}    \ottmv{x}  :^{\textcolor{\coeffectcolor}{  \textcolor{\coeffectcolor}{ \textcolor{\coeffectcolor}{q}_{{\mathrm{1}}}   \cdot   \textcolor{\coeffectcolor}{q}' }  } }  \tau     \vdash_{\mathit{cbvcoeff} }   \ottnt{e_{{\mathrm{2}}}}  : \tau' }%
}{
  \textcolor{\coeffectcolor}{  \textcolor{\coeffectcolor}{  \textcolor{\coeffectcolor}{ \textcolor{\coeffectcolor}{q}' \cdot \textcolor{\coeffectcolor}{\gamma}_{{\mathrm{1}}} }  \ottsym{+} \textcolor{\coeffectcolor}{\gamma}_{{\mathrm{2}}} }  }\! \cdot \! \Gamma    \vdash_{\mathit{cbvcoeff} }    \ottkw{unbox} _{\textcolor{\coeffectcolor}{ \textcolor{\coeffectcolor}{q}_{{\mathrm{2}}} } }\  \ottmv{x}  =  \ottnt{e_{{\mathrm{1}}}} \  \ottkw{in} \  \ottnt{e_{{\mathrm{2}}}}   : \tau' }{%
{\ottdrulename{cbvcoeff\_unbox}}{}%
}}

\newcommand{\ottdrulecbvcoeffXXsub}[1]{\ottdrule[#1]{%
\ottpremise{  \textcolor{\coeffectcolor}{ \textcolor{\coeffectcolor}{\gamma}_{{\mathrm{1}}} }\! \cdot \! \Gamma    \vdash_{\mathit{cbvcoeff} }   \ottnt{e}  : \tau }%
\ottpremise{ \textcolor{\coeffectcolor}{ \textcolor{\coeffectcolor}{\gamma}_{{\mathrm{2}}} }\; \textcolor{\coeffectcolor}{\mathop{\leq_{\mathit{co} } } } \; \textcolor{\coeffectcolor}{ \textcolor{\coeffectcolor}{\gamma}_{{\mathrm{1}}} } }%
}{
  \textcolor{\coeffectcolor}{ \textcolor{\coeffectcolor}{\gamma}_{{\mathrm{2}}} }\! \cdot \! \Gamma    \vdash_{\mathit{cbvcoeff} }   \ottnt{e}  : \tau }{%
{\ottdrulename{cbvcoeff\_sub}}{}%
}}

\newcommand{\ottdefncbvcoeffXXtyping}[1]{\begin{ottdefnblock}[#1]{$ \Psi   \vdash_{\mathit{cbvcoeff} }   \ottnt{e}  : \tau $}{\ottcom{Graded CBV coeffect system}}
\ottusedrule{\ottdrulecbvcoeffXXvar{}}
\ottusedrule{\ottdrulecbvcoeffXXabs{}}
\ottusedrule{\ottdrulecbvcoeffXXapp{}}
\ottusedrule{\ottdrulecbvcoeffXXunit{}}
\ottusedrule{\ottdrulecbvcoeffXXseq{}}
\ottusedrule{\ottdrulecbvcoeffXXpair{}}
\ottusedrule{\ottdrulecbvcoeffXXsplit{}}
\ottusedrule{\ottdrulecbvcoeffXXinl{}}
\ottusedrule{\ottdrulecbvcoeffXXinr{}}
\ottusedrule{\ottdrulecbvcoeffXXcase{}}
\ottusedrule{\ottdrulecbvcoeffXXbox{}}
\ottusedrule{\ottdrulecbvcoeffXXunbox{}}
\ottusedrule{\ottdrulecbvcoeffXXsub{}}
\end{ottdefnblock}}

\newcommand{\ottdefnsJLambdaCBVCoeff}{
\ottdefncbvcoeffXXtyping{}}

\newcommand{\ottdrulelamXXcomXXvar}[1]{\ottdrule[#1]{%
}{
\ottmv{x}  \ottsym{:}  \tau  \vdash_{\mathit{com} }  \ottmv{x}  \ottsym{:}  \tau}{%
{\ottdrulename{lam\_com\_var}}{}%
}}

\newcommand{\ottdrulelamXXcomXXabs}[1]{\ottdrule[#1]{%
\ottpremise{ \Gamma   \mathop{,}   \ottmv{x}  \ottsym{:}  \tau_{{\mathrm{1}}}   \vdash_{\mathit{com} }  \ottnt{e}  \ottsym{:}  \tau_{{\mathrm{2}}}}%
}{
\Gamma  \vdash_{\mathit{com} }   \lambda  \ottmv{x} . \ottnt{e}   \ottsym{:}  \tau_{{\mathrm{1}}}  \multimap  \tau_{{\mathrm{2}}}}{%
{\ottdrulename{lam\_com\_abs}}{}%
}}

\newcommand{\ottdrulelamXXcomXXapp}[1]{\ottdrule[#1]{%
\ottpremise{\Gamma_{{\mathrm{1}}}  \vdash_{\mathit{com} }  \ottnt{e_{{\mathrm{1}}}}  \ottsym{:}  \tau_{{\mathrm{1}}}  \multimap  \tau_{{\mathrm{2}}}}%
\ottpremise{\Gamma_{{\mathrm{2}}}  \vdash_{\mathit{com} }  \ottnt{e_{{\mathrm{2}}}}  \ottsym{:}  \tau_{{\mathrm{1}}}}%
}{
 \Gamma_{{\mathrm{1}}}   \mathop{,}   \Gamma_{{\mathrm{2}}}   \vdash_{\mathit{com} }  \ottnt{e_{{\mathrm{1}}}} \, \ottnt{e_{{\mathrm{2}}}}  \ottsym{:}  \tau_{{\mathrm{2}}}}{%
{\ottdrulename{lam\_com\_app}}{}%
}}

\newcommand{\ottdrulelamXXcomXXunit}[1]{\ottdrule[#1]{%
}{
\varnothing  \vdash_{\mathit{com} }  \ottsym{()}  \ottsym{:}  \ottkw{unit}}{%
{\ottdrulename{lam\_com\_unit}}{}%
}}

\newcommand{\ottdrulelamXXcomXXsequence}[1]{\ottdrule[#1]{%
\ottpremise{\Gamma_{{\mathrm{1}}}  \vdash_{\mathit{com} }  \ottnt{e_{{\mathrm{1}}}}  \ottsym{:}  \ottkw{unit}}%
\ottpremise{\Gamma_{{\mathrm{2}}}  \vdash_{\mathit{com} }  \ottnt{e_{{\mathrm{2}}}}  \ottsym{:}  \tau}%
\ottpremise{\Gamma  \leq   \Gamma_{{\mathrm{1}}}  +  \Gamma_{{\mathrm{2}}} }%
}{
\Gamma  \vdash_{\mathit{com} }   \ottnt{e_{{\mathrm{1}}}}  ;  \ottnt{e_{{\mathrm{2}}}}   \ottsym{:}  \tau}{%
{\ottdrulename{lam\_com\_sequence}}{}%
}}

\newcommand{\ottdrulelamXXcomXXpair}[1]{\ottdrule[#1]{%
\ottpremise{\Gamma_{{\mathrm{1}}}  \vdash_{\mathit{com} }  \ottnt{e_{{\mathrm{1}}}}  \ottsym{:}  \tau_{{\mathrm{1}}}}%
\ottpremise{\Gamma_{{\mathrm{2}}}  \vdash_{\mathit{com} }  \ottnt{e_{{\mathrm{2}}}}  \ottsym{:}  \tau_{{\mathrm{2}}}}%
\ottpremise{\Gamma  \leq   \Gamma_{{\mathrm{1}}}  +  \Gamma_{{\mathrm{2}}} }%
}{
\Gamma  \vdash_{\mathit{com} }  \ottsym{(}  \ottnt{e_{{\mathrm{1}}}}  \ottsym{,}  \ottnt{e_{{\mathrm{2}}}}  \ottsym{)}  \ottsym{:}   \tau_{{\mathrm{1}}}  \otimes  \tau_{{\mathrm{2}}} }{%
{\ottdrulename{lam\_com\_pair}}{}%
}}

\newcommand{\ottdrulelamXXcomXXsplit}[1]{\ottdrule[#1]{%
\ottpremise{\Gamma_{{\mathrm{1}}}  \vdash_{\mathit{com} }  \ottnt{e_{{\mathrm{1}}}}  \ottsym{:}   \tau_{{\mathrm{1}}}  \otimes  \tau_{{\mathrm{2}}} }%
\ottpremise{  \Gamma_{{\mathrm{2}}}   \mathop{,}   \ottmv{x_{{\mathrm{1}}}}  \ottsym{:}  \tau_{{\mathrm{1}}}    \mathop{,}   \ottmv{x_{{\mathrm{2}}}}  \ottsym{:}  \tau_{{\mathrm{2}}}   \vdash_{\mathit{com} }  \ottnt{e_{{\mathrm{2}}}}  \ottsym{:}  \tau}%
\ottpremise{\Gamma  \leq   \Gamma_{{\mathrm{1}}}  +  \Gamma_{{\mathrm{2}}} }%
}{
\Gamma  \vdash_{\mathit{com} }   \ottkw{let}\; ( \ottmv{x_{{\mathrm{1}}}} ,  \ottmv{x_{{\mathrm{2}}}} ) =  \ottnt{e_{{\mathrm{1}}}} \; \ottkw{in}\;  \ottnt{e_{{\mathrm{2}}}}   \ottsym{:}  \tau}{%
{\ottdrulename{lam\_com\_split}}{}%
}}

\newcommand{\ottdrulelamXXcomXXwith}[1]{\ottdrule[#1]{%
\ottpremise{\Gamma  \vdash_{\mathit{com} }  \ottnt{e_{{\mathrm{1}}}}  \ottsym{:}  \tau_{{\mathrm{1}}}}%
\ottpremise{\Gamma  \vdash_{\mathit{com} }  \ottnt{e_{{\mathrm{2}}}}  \ottsym{:}  \tau_{{\mathrm{2}}}}%
}{
\Gamma  \vdash_{\mathit{com} }   \langle  \ottnt{e_{{\mathrm{1}}}} , \ottnt{e_{{\mathrm{2}}}}  \rangle   \ottsym{:}   \tau_{{\mathrm{1}}}  \mathop{\&}   \tau_{{\mathrm{2}}} }{%
{\ottdrulename{lam\_com\_with}}{}%
}}

\newcommand{\ottdrulelamXXcomXXfst}[1]{\ottdrule[#1]{%
\ottpremise{\Gamma  \vdash_{\mathit{com} }  \ottnt{e}  \ottsym{:}   \tau_{{\mathrm{1}}}  \mathop{\&}   \tau_{{\mathrm{2}}} }%
}{
\Gamma  \vdash_{\mathit{com} }  \ottnt{e}  \ottsym{.}  \ottsym{1}  \ottsym{:}  \tau_{{\mathrm{1}}}}{%
{\ottdrulename{lam\_com\_fst}}{}%
}}

\newcommand{\ottdrulelamXXcomXXsnd}[1]{\ottdrule[#1]{%
\ottpremise{\Gamma  \vdash_{\mathit{com} }  \ottnt{e}  \ottsym{:}   \tau_{{\mathrm{1}}}  \mathop{\&}   \tau_{{\mathrm{2}}} }%
}{
\Gamma  \vdash_{\mathit{com} }  \ottnt{e}  \ottsym{.}  \ottsym{2}  \ottsym{:}  \tau_{{\mathrm{2}}}}{%
{\ottdrulename{lam\_com\_snd}}{}%
}}

\newcommand{\ottdrulelamXXcomXXinl}[1]{\ottdrule[#1]{%
\ottpremise{\Gamma  \vdash_{\mathit{com} }  \ottnt{e}  \ottsym{:}  \tau_{{\mathrm{1}}}}%
}{
\Gamma  \vdash_{\mathit{com} }  \ottkw{inl} \, \ottnt{e}  \ottsym{:}  \tau_{{\mathrm{1}}}  \ottsym{+}  \tau_{{\mathrm{2}}}}{%
{\ottdrulename{lam\_com\_inl}}{}%
}}

\newcommand{\ottdrulelamXXcomXXinr}[1]{\ottdrule[#1]{%
\ottpremise{\Gamma  \vdash_{\mathit{com} }  \ottnt{e}  \ottsym{:}  \tau_{{\mathrm{2}}}}%
}{
\Gamma  \vdash_{\mathit{com} }  \ottkw{inr} \, \ottnt{e}  \ottsym{:}  \tau_{{\mathrm{1}}}  \ottsym{+}  \tau_{{\mathrm{2}}}}{%
{\ottdrulename{lam\_com\_inr}}{}%
}}

\newcommand{\ottdrulelamXXcomXXcase}[1]{\ottdrule[#1]{%
\ottpremise{\Gamma_{{\mathrm{1}}}  \vdash_{\mathit{com} }  \ottnt{e}  \ottsym{:}  \tau_{{\mathrm{1}}}  \ottsym{+}  \tau_{{\mathrm{2}}}}%
\ottpremise{ \Gamma_{{\mathrm{2}}}   \mathop{,}   \ottmv{x}  \ottsym{:}  \tau_{{\mathrm{1}}}   \vdash_{\mathit{com} }  \ottnt{e_{{\mathrm{1}}}}  \ottsym{:}  \tau}%
\ottpremise{ \Gamma_{{\mathrm{2}}}   \mathop{,}   \ottmv{x}  \ottsym{:}  \tau_{{\mathrm{2}}}   \vdash_{\mathit{com} }  \ottnt{e_{{\mathrm{2}}}}  \ottsym{:}  \tau}%
\ottpremise{\Gamma  \leq   \Gamma_{{\mathrm{1}}}  +  \Gamma_{{\mathrm{2}}} }%
}{
\Gamma  \vdash_{\mathit{com} }   \ottkw{case}\;  \ottnt{e} \; \ottkw{of}\; \ottkw{inl}\;  \ottmv{x_{{\mathrm{1}}}}  \rightarrow  \ottnt{e_{{\mathrm{1}}}}  \ottkw{;} \ottkw{inr}\;  \ottmv{x_{{\mathrm{2}}}}  \rightarrow  \ottnt{e_{{\mathrm{2}}}}   \ottsym{:}  \tau}{%
{\ottdrulename{lam\_com\_case}}{}%
}}

\newcommand{\ottdrulelamXXcomXXextract}[1]{\ottdrule[#1]{%
\ottpremise{\Gamma  \vdash_{\mathit{com} }  \ottnt{e}  \ottsym{:}   \square_{\textcolor{\coeffectcolor}{ \textcolor{\coeffectcolor}{q} } }\;  \tau }%
\ottpremise{ \textcolor{\coeffectcolor}{ \textcolor{\coeffectcolor}{q} }\;  \textcolor{\coeffectcolor}{\mathop{\leq_{\mathit{co} } } } \; \textcolor{\coeffectcolor}{  \textcolor{\coeffectcolor}{1}  } }%
}{
\Gamma  \vdash_{\mathit{com} }  \ottkw{extract} \, \ottnt{e}  \ottsym{:}  \tau}{%
{\ottdrulename{lam\_com\_extract}}{}%
}}

\newcommand{\ottdrulelamXXcomXXdiscard}[1]{\ottdrule[#1]{%
\ottpremise{\Gamma_{{\mathrm{1}}}  \vdash_{\mathit{com} }  \ottnt{e_{{\mathrm{1}}}}  \ottsym{:}   \square_{\textcolor{\coeffectcolor}{ \textcolor{\coeffectcolor}{q} } }\;  \tau_{{\mathrm{1}}} }%
\ottpremise{\Gamma_{{\mathrm{2}}}  \vdash_{\mathit{com} }  \ottnt{e_{{\mathrm{2}}}}  \ottsym{:}  \tau_{{\mathrm{2}}}}%
\ottpremise{ \textcolor{\coeffectcolor}{ \textcolor{\coeffectcolor}{q} }\;  \textcolor{\coeffectcolor}{\mathop{\leq_{\mathit{co} } } } \; \textcolor{\coeffectcolor}{  \textcolor{\coeffectcolor}{0}  } }%
}{
 \Gamma_{{\mathrm{1}}}   \mathop{,}   \Gamma_{{\mathrm{2}}}   \vdash_{\mathit{com} }  \ottkw{discard} \, \_ \, \ottsym{=}  \ottnt{e_{{\mathrm{1}}}} \, \ottkw{in} \, \ottnt{e_{{\mathrm{2}}}}  \ottsym{:}  \tau_{{\mathrm{2}}}}{%
{\ottdrulename{lam\_com\_discard}}{}%
}}

\newcommand{\ottdrulelamXXcomXXdivide}[1]{\ottdrule[#1]{%
\ottpremise{\Gamma_{{\mathrm{1}}}  \vdash_{\mathit{com} }  \ottnt{e_{{\mathrm{1}}}}  \ottsym{:}   \square_{\textcolor{\coeffectcolor}{ \textcolor{\coeffectcolor}{q} } }\;  \tau_{{\mathrm{1}}} }%
\ottpremise{ \textcolor{\coeffectcolor}{ \textcolor{\coeffectcolor}{q} }\;  \textcolor{\coeffectcolor}{\mathop{\leq_{\mathit{co} } } } \; \textcolor{\coeffectcolor}{  \textcolor{\coeffectcolor}{ \textcolor{\coeffectcolor}{q}_{{\mathrm{1}}}  +  \textcolor{\coeffectcolor}{q}_{{\mathrm{2}}} }  } }%
\ottpremise{  \Gamma_{{\mathrm{2}}}   \mathop{,}   \ottmv{x}  \ottsym{:}   \square_{\textcolor{\coeffectcolor}{ \textcolor{\coeffectcolor}{q}_{{\mathrm{1}}} } }\;  \tau_{{\mathrm{1}}}     \mathop{,}   \ottmv{x}  \ottsym{:}   \square_{\textcolor{\coeffectcolor}{ \textcolor{\coeffectcolor}{q}_{{\mathrm{2}}} } }\;  \tau_{{\mathrm{2}}}    \vdash_{\mathit{com} }  \ottnt{e_{{\mathrm{2}}}}  \ottsym{:}  \tau_{{\mathrm{2}}}}%
}{
 \Gamma_{{\mathrm{1}}}   \mathop{,}   \Gamma_{{\mathrm{2}}}   \vdash_{\mathit{com} }   \ottkw{divide} \  \ottmv{x_{{\mathrm{1}}}} ^{ \textcolor{\coeffectcolor}{ \textcolor{\coeffectcolor}{q}_{{\mathrm{1}}} } },  \ottmv{x_{{\mathrm{2}}}} ^{ \textcolor{\coeffectcolor}{ \textcolor{\coeffectcolor}{q}_{{\mathrm{2}}} } } =  \ottnt{e_{{\mathrm{1}}}} \  \ottkw{in} \  \ottnt{e_{{\mathrm{2}}}}   \ottsym{:}  \tau_{{\mathrm{2}}}}{%
{\ottdrulename{lam\_com\_divide}}{}%
}}

\newcommand{\ottdrulelamXXcomXXextend}[1]{\ottdrule[#1]{%
\ottpremise{\Gamma_{{\mathrm{1}}}  \vdash_{\mathit{com} }  \ottnt{e_{{\mathrm{1}}}}  \ottsym{:}   \square_{\textcolor{\coeffectcolor}{ \textcolor{\coeffectcolor}{q}'_{{\mathrm{1}}} } }\;  \tau_{{\mathrm{1}}}  \, ... \, \Gamma_{\ottmv{k}}  \vdash_{\mathit{com} }  \ottnt{e_{\ottmv{k}}}  \ottsym{:}   \square_{\textcolor{\coeffectcolor}{ \textcolor{\coeffectcolor}{q}'_{\ottmv{k}} } }\;  \tau_{\ottmv{k}} }%
\ottpremise{ \textcolor{\coeffectcolor}{ \textcolor{\coeffectcolor}{q}'_{{\mathrm{1}}} }\;  \textcolor{\coeffectcolor}{\mathop{\leq_{\mathit{co} } } } \; \textcolor{\coeffectcolor}{  \textcolor{\coeffectcolor}{ \textcolor{\coeffectcolor}{q}   \cdot   \textcolor{\coeffectcolor}{q}_{{\mathrm{1}}} }  }  \, ... \,  \textcolor{\coeffectcolor}{ \textcolor{\coeffectcolor}{q}'_{\ottmv{k}} }\;  \textcolor{\coeffectcolor}{\mathop{\leq_{\mathit{co} } } } \; \textcolor{\coeffectcolor}{  \textcolor{\coeffectcolor}{ \textcolor{\coeffectcolor}{q}   \cdot   \textcolor{\coeffectcolor}{q}_{\ottmv{k}} }  } }%
\ottpremise{\ottmv{x_{{\mathrm{1}}}}  \ottsym{:}   \square_{\textcolor{\coeffectcolor}{ \textcolor{\coeffectcolor}{q}_{{\mathrm{1}}} } }\;  \tau_{{\mathrm{1}}}   \ottsym{,} \, ... \, \ottsym{,}  \ottmv{x_{\ottmv{k}}}  \ottsym{:}   \square_{\textcolor{\coeffectcolor}{ \textcolor{\coeffectcolor}{q}_{\ottmv{k}} } }\;  \tau_{\ottmv{k}}   \vdash_{\mathit{com} }  \ottnt{e'}  \ottsym{:}  \tau'}%
}{
\Gamma_{{\mathrm{1}}}  \ottsym{,} \, ... \, \ottsym{,}  \Gamma_{\ottmv{k}}  \vdash_{\mathit{com} }   \ottkw{extend} _{ \textcolor{\coeffectcolor}{ \textcolor{\coeffectcolor}{q} } }\   \ottmv{x_{{\mathrm{1}}}} ^{ \textcolor{\coeffectcolor}{ \textcolor{\coeffectcolor}{q}_{{\mathrm{1}}} } } =  \ottnt{e_{{\mathrm{1}}}}   \ottsym{,} \, ... \, \ottsym{,}   \ottmv{x_{\ottmv{k}}} ^{ \textcolor{\coeffectcolor}{ \textcolor{\coeffectcolor}{q}_{\ottmv{k}} } } =  \ottnt{e_{\ottmv{k}}}  \  \ottkw{in} \  \ottnt{e'}   \ottsym{:}   \square_{\textcolor{\coeffectcolor}{ \textcolor{\coeffectcolor}{q} } }\;  \tau' }{%
{\ottdrulename{lam\_com\_extend}}{}%
}}

\newcommand{\ottdefncomXXtyping}[1]{\begin{ottdefnblock}[#1]{$\Gamma  \vdash_{\mathit{com} }  \ottnt{e}  \ottsym{:}  \tau$}{\ottcom{Comonadic calculus for linearity, not syntax directed}}
\ottusedrule{\ottdrulelamXXcomXXvar{}}
\ottusedrule{\ottdrulelamXXcomXXabs{}}
\ottusedrule{\ottdrulelamXXcomXXapp{}}
\ottusedrule{\ottdrulelamXXcomXXunit{}}
\ottusedrule{\ottdrulelamXXcomXXsequence{}}
\ottusedrule{\ottdrulelamXXcomXXpair{}}
\ottusedrule{\ottdrulelamXXcomXXsplit{}}
\ottusedrule{\ottdrulelamXXcomXXwith{}}
\ottusedrule{\ottdrulelamXXcomXXfst{}}
\ottusedrule{\ottdrulelamXXcomXXsnd{}}
\ottusedrule{\ottdrulelamXXcomXXinl{}}
\ottusedrule{\ottdrulelamXXcomXXinr{}}
\ottusedrule{\ottdrulelamXXcomXXcase{}}
\ottusedrule{\ottdrulelamXXcomXXextract{}}
\ottusedrule{\ottdrulelamXXcomXXdiscard{}}
\ottusedrule{\ottdrulelamXXcomXXdivide{}}
\ottusedrule{\ottdrulelamXXcomXXextend{}}
\end{ottdefnblock}}

\newcommand{\ottdefnsJComLin}{
\ottdefncomXXtyping{}}

\newcommand{\ottdrulefullXXvar}[1]{\ottdrule[#1]{%
}{
   \textcolor{\coeffectcolor}{  \textcolor{\coeffectcolor}{\overline{0} }  }\! \cdot \! \Gamma_{{\mathrm{1}}}    \mathop{,}    \ottmv{x}  :^{\textcolor{\coeffectcolor}{  \textcolor{\coeffectcolor}{1}  } }  \ottnt{A}     \mathop{,}    \textcolor{\coeffectcolor}{  \textcolor{\coeffectcolor}{\overline{0} }  }\! \cdot \! \Gamma_{{\mathrm{2}}}    \vdash_{\mathit{full} }  \ottmv{x}  \ottsym{:}  \ottnt{A}}{%
{\ottdrulename{full\_var}}{}%
}}

\newcommand{\ottdrulefullXXthunk}[1]{\ottdrule[#1]{%
\ottpremise{  \textcolor{\coeffectcolor}{ \textcolor{\coeffectcolor}{\gamma} }\! \cdot \! \Gamma    \vdash_{\mathit{full} }   \ottnt{M}  :^{\textcolor{\effectcolor}{ \textcolor{\effectcolor}{\phi} } }  \ottnt{B} }%
}{
 \textcolor{\coeffectcolor}{ \textcolor{\coeffectcolor}{\gamma} }\! \cdot \! \Gamma   \vdash_{\mathit{full} }  \ottsym{\{}  \ottnt{M}  \ottsym{\}}  \ottsym{:}   \ottkw{U}_{\color{\effectcolor}{ \textcolor{\effectcolor}{\phi} } }\;  \ottnt{B} }{%
{\ottdrulename{full\_thunk}}{}%
}}

\newcommand{\ottdrulefullXXunit}[1]{\ottdrule[#1]{%
}{
 \textcolor{\coeffectcolor}{  \textcolor{\coeffectcolor}{\overline{0} }  }\! \cdot \! \Gamma   \vdash_{\mathit{full} }  \ottsym{()}  \ottsym{:}  \ottkw{unit}}{%
{\ottdrulename{full\_unit}}{}%
}}

\newcommand{\ottdrulefullXXpair}[1]{\ottdrule[#1]{%
\ottpremise{ \textcolor{\coeffectcolor}{ \textcolor{\coeffectcolor}{\gamma}_{{\mathrm{1}}} }\! \cdot \! \Gamma   \vdash_{\mathit{full} }  \ottnt{V_{{\mathrm{1}}}}  \ottsym{:}  \ottnt{A_{{\mathrm{1}}}}}%
\ottpremise{ \textcolor{\coeffectcolor}{ \textcolor{\coeffectcolor}{\gamma}_{{\mathrm{2}}} }\! \cdot \! \Gamma   \vdash_{\mathit{full} }  \ottnt{V_{{\mathrm{2}}}}  \ottsym{:}  \ottnt{A_{{\mathrm{2}}}}}%
}{
 \textcolor{\coeffectcolor}{  \textcolor{\coeffectcolor}{ \textcolor{\coeffectcolor}{\gamma}_{{\mathrm{1}}} \ottsym{+} \textcolor{\coeffectcolor}{\gamma}_{{\mathrm{2}}} }  }\! \cdot \! \Gamma   \vdash_{\mathit{full} }  \ottsym{(}  \ottnt{V_{{\mathrm{1}}}}  \ottsym{,}  \ottnt{V_{{\mathrm{2}}}}  \ottsym{)}  \ottsym{:}   \ottnt{A_{{\mathrm{1}}}} \times \ottnt{A_{{\mathrm{2}}}} }{%
{\ottdrulename{full\_pair}}{}%
}}

\newcommand{\ottdrulefullXXinl}[1]{\ottdrule[#1]{%
\ottpremise{ \textcolor{\coeffectcolor}{ \textcolor{\coeffectcolor}{\gamma} }\! \cdot \! \Gamma   \vdash_{\mathit{full} }  \ottnt{V}  \ottsym{:}  \ottnt{A_{{\mathrm{1}}}}}%
}{
 \textcolor{\coeffectcolor}{ \textcolor{\coeffectcolor}{\gamma} }\! \cdot \! \Gamma   \vdash_{\mathit{full} }  \ottkw{inl} \, \ottnt{V}  \ottsym{:}  \ottnt{A_{{\mathrm{1}}}}  \ottsym{+}  \ottnt{A_{{\mathrm{2}}}}}{%
{\ottdrulename{full\_inl}}{}%
}}

\newcommand{\ottdrulefullXXinr}[1]{\ottdrule[#1]{%
\ottpremise{ \textcolor{\coeffectcolor}{ \textcolor{\coeffectcolor}{\gamma} }\! \cdot \! \Gamma   \vdash_{\mathit{full} }  \ottnt{V}  \ottsym{:}  \ottnt{A_{{\mathrm{2}}}}}%
}{
 \textcolor{\coeffectcolor}{ \textcolor{\coeffectcolor}{\gamma} }\! \cdot \! \Gamma   \vdash_{\mathit{full} }  \ottkw{inr} \, \ottnt{V}  \ottsym{:}  \ottnt{A_{{\mathrm{1}}}}  \ottsym{+}  \ottnt{A_{{\mathrm{2}}}}}{%
{\ottdrulename{full\_inr}}{}%
}}

\newcommand{\ottdrulefullXXvsub}[1]{\ottdrule[#1]{%
\ottpremise{ \textcolor{\coeffectcolor}{ \textcolor{\coeffectcolor}{\gamma}' }\! \cdot \! \Gamma   \vdash_{\mathit{full} }  \ottnt{V}  \ottsym{:}  \ottnt{A}}%
\ottpremise{ \textcolor{\coeffectcolor}{ \textcolor{\coeffectcolor}{\gamma} }\; \textcolor{\coeffectcolor}{\mathop{\leq_{\mathit{co} } } } \; \textcolor{\coeffectcolor}{ \textcolor{\coeffectcolor}{\gamma}' } }%
}{
 \textcolor{\coeffectcolor}{ \textcolor{\coeffectcolor}{\gamma} }\! \cdot \! \Gamma   \vdash_{\mathit{full} }  \ottnt{V}  \ottsym{:}  \ottnt{A}}{%
{\ottdrulename{full\_vsub}}{}%
}}

\newcommand{\ottdefnfullXXvalXXtyping}[1]{\begin{ottdefnblock}[#1]{$\Psi  \vdash_{\mathit{full} }  \ottnt{V}  \ottsym{:}  \ottnt{A}$}{\ottcom{value effect and co-effect typing rules}}
\ottusedrule{\ottdrulefullXXvar{}}
\ottusedrule{\ottdrulefullXXthunk{}}
\ottusedrule{\ottdrulefullXXunit{}}
\ottusedrule{\ottdrulefullXXpair{}}
\ottusedrule{\ottdrulefullXXinl{}}
\ottusedrule{\ottdrulefullXXinr{}}
\ottusedrule{\ottdrulefullXXvsub{}}
\end{ottdefnblock}}

\newcommand{\ottdrulefullXXabs}[1]{\ottdrule[#1]{%
\ottpremise{   \textcolor{\coeffectcolor}{ \textcolor{\coeffectcolor}{\gamma} }\! \cdot \! \Gamma    \mathop{,}    \ottmv{x}  :^{\textcolor{\coeffectcolor}{ \textcolor{\coeffectcolor}{q} } }  \ottnt{A}     \vdash_{\mathit{full} }   \ottnt{M}  :^{\textcolor{\effectcolor}{ \textcolor{\effectcolor}{\phi} } }  \ottnt{B} }%
\ottpremise{ \textcolor{\coeffectcolor}{ \textcolor{\coeffectcolor}{q} }\;  \textcolor{\coeffectcolor}{\mathop{\leq_{\mathit{co} } } } \; \textcolor{\coeffectcolor}{ \textcolor{\coeffectcolor}{q}' } }%
}{
  \textcolor{\coeffectcolor}{ \textcolor{\coeffectcolor}{\gamma} }\! \cdot \! \Gamma    \vdash_{\mathit{full} }    \lambda  \ottmv{x} ^{\textcolor{\coeffectcolor}{ \textcolor{\coeffectcolor}{q}' } }. \ottnt{M}   :^{\textcolor{\effectcolor}{ \textcolor{\effectcolor}{\phi} } }   \ottnt{A} ^{\textcolor{\coeffectcolor}{ \textcolor{\coeffectcolor}{q}' } } \rightarrow  \ottnt{B}  }{%
{\ottdrulename{full\_abs}}{}%
}}

\newcommand{\ottdrulefullXXapp}[1]{\ottdrule[#1]{%
\ottpremise{  \textcolor{\coeffectcolor}{ \textcolor{\coeffectcolor}{\gamma}_{{\mathrm{1}}} }\! \cdot \! \Gamma    \vdash_{\mathit{full} }   \ottnt{M}  :^{\textcolor{\effectcolor}{ \textcolor{\effectcolor}{\phi} } }   \ottnt{A} ^{\textcolor{\coeffectcolor}{ \textcolor{\coeffectcolor}{q} } } \rightarrow  \ottnt{B}  }%
\ottpremise{ \textcolor{\coeffectcolor}{ \textcolor{\coeffectcolor}{\gamma}_{{\mathrm{2}}} }\! \cdot \! \Gamma   \vdash_{\mathit{full} }  \ottnt{V}  \ottsym{:}  \ottnt{A}}%
}{
  \textcolor{\coeffectcolor}{  \textcolor{\coeffectcolor}{ \textcolor{\coeffectcolor}{\gamma}_{{\mathrm{1}}} \ottsym{+}  \textcolor{\coeffectcolor}{ \textcolor{\coeffectcolor}{q} \cdot \textcolor{\coeffectcolor}{\gamma}_{{\mathrm{2}}} }  }  }\! \cdot \! \Gamma    \vdash_{\mathit{full} }   \ottnt{M} \, \ottnt{V}  :^{\textcolor{\effectcolor}{ \textcolor{\effectcolor}{\phi} } }  \ottnt{B} }{%
{\ottdrulename{full\_app}}{}%
}}

\newcommand{\ottdrulefullXXforce}[1]{\ottdrule[#1]{%
\ottpremise{ \textcolor{\coeffectcolor}{ \textcolor{\coeffectcolor}{\gamma} }\! \cdot \! \Gamma   \vdash_{\mathit{full} }  \ottnt{V}  \ottsym{:}   \ottkw{U}_{\color{\effectcolor}{ \textcolor{\effectcolor}{\phi} } }\;  \ottnt{B} }%
}{
  \textcolor{\coeffectcolor}{ \textcolor{\coeffectcolor}{\gamma} }\! \cdot \! \Gamma    \vdash_{\mathit{full} }   \ottnt{V}  \ottsym{!}  :^{\textcolor{\effectcolor}{ \textcolor{\effectcolor}{\phi} } }  \ottnt{B} }{%
{\ottdrulename{full\_force}}{}%
}}

\newcommand{\ottdrulefullXXret}[1]{\ottdrule[#1]{%
\ottpremise{ \textcolor{\coeffectcolor}{ \textcolor{\coeffectcolor}{\gamma}_{{\mathrm{1}}} }\! \cdot \! \Gamma   \vdash_{\mathit{full} }  \ottnt{V}  \ottsym{:}  \ottnt{A}}%
\ottpremise{ \textcolor{\coeffectcolor}{ \textcolor{\coeffectcolor}{q}' }\;  \textcolor{\coeffectcolor}{\mathop{\leq_{\mathit{co} } } } \; \textcolor{\coeffectcolor}{ \textcolor{\coeffectcolor}{q} } }%
}{
  \textcolor{\coeffectcolor}{  \textcolor{\coeffectcolor}{ \textcolor{\coeffectcolor}{q} \cdot \textcolor{\coeffectcolor}{\gamma}_{{\mathrm{1}}} }  }\! \cdot \! \Gamma    \vdash_{\mathit{full} }    \ottkw{return} _{\textcolor{\coeffectcolor}{ \textcolor{\coeffectcolor}{q} } }\;  \ottnt{V}   :^{\textcolor{\effectcolor}{  \textcolor{\effectcolor}{\varepsilon}  } }   \ottkw{F}_{\color{\coeffectcolor}{ \textcolor{\coeffectcolor}{q}' } }\;  \ottnt{A}  }{%
{\ottdrulename{full\_ret}}{}%
}}

\newcommand{\ottdrulefullXXletin}[1]{\ottdrule[#1]{%
\ottpremise{ \textcolor{\coeffectcolor}{ \textcolor{\coeffectcolor}{q}'_{{\mathrm{2}}} }\; = \; \textcolor{\coeffectcolor}{  \textcolor{\coeffectcolor}{q}_{{\mathrm{2}}} \ \|\ \textcolor{\coeffectcolor}{1}  } }%
\ottpremise{  \textcolor{\coeffectcolor}{ \textcolor{\coeffectcolor}{\gamma}_{{\mathrm{1}}} }\! \cdot \! \Gamma    \vdash_{\mathit{full} }   \ottnt{M_{{\mathrm{1}}}}  :^{\textcolor{\effectcolor}{ \textcolor{\effectcolor}{\phi}_{{\mathrm{1}}} } }   \ottkw{F}_{\color{\coeffectcolor}{ \textcolor{\coeffectcolor}{q}_{{\mathrm{1}}} } }\;  \ottnt{A}  }%
\ottpremise{   \textcolor{\coeffectcolor}{ \textcolor{\coeffectcolor}{\gamma}_{{\mathrm{2}}} }\! \cdot \! \Gamma    \mathop{,}    \ottmv{x}  :^{\textcolor{\coeffectcolor}{  \textcolor{\coeffectcolor}{ \textcolor{\coeffectcolor}{q}_{{\mathrm{1}}}   \cdot   \textcolor{\coeffectcolor}{q}'_{{\mathrm{2}}} }  } }  \ottnt{A}     \vdash_{\mathit{full} }   \ottnt{M_{{\mathrm{2}}}}  :^{\textcolor{\effectcolor}{ \textcolor{\effectcolor}{\phi}_{{\mathrm{2}}} } }  \ottnt{B} }%
}{
  \textcolor{\coeffectcolor}{  \textcolor{\coeffectcolor}{ \ottsym{(}   \textcolor{\coeffectcolor}{ \textcolor{\coeffectcolor}{q}'_{{\mathrm{2}}} \cdot \textcolor{\coeffectcolor}{\gamma}_{{\mathrm{1}}} }   \ottsym{)} \ottsym{+} \textcolor{\coeffectcolor}{\gamma}_{{\mathrm{2}}} }  }\! \cdot \! \Gamma    \vdash_{\mathit{full} }    \ottmv{x}  \leftarrow^{\textcolor{\coeffectcolor}{ \textcolor{\coeffectcolor}{q}_{{\mathrm{2}}} } }  \ottnt{M_{{\mathrm{1}}}} \ \ottkw{in}\  \ottnt{M_{{\mathrm{2}}}}   :^{\textcolor{\effectcolor}{  \textcolor{\effectcolor}{ \textcolor{\effectcolor}{\phi}_{{\mathrm{1}}}  \cdot  \textcolor{\effectcolor}{\phi}_{{\mathrm{2}}} }  } }  \ottnt{B} }{%
{\ottdrulename{full\_letin}}{}%
}}

\newcommand{\ottdrulefullXXletinXXzero}[1]{\ottdrule[#1]{%
\ottpremise{  \textcolor{\coeffectcolor}{ \textcolor{\coeffectcolor}{\gamma}_{{\mathrm{1}}} }\! \cdot \! \Gamma    \vdash_{\mathit{full} }   \ottnt{M}  :^{\textcolor{\effectcolor}{  \textcolor{\effectcolor}{\varepsilon}  } }   \ottkw{F}_{\color{\coeffectcolor}{ \textcolor{\coeffectcolor}{q} } }\;  \ottnt{A}  }%
\ottpremise{   \textcolor{\coeffectcolor}{ \textcolor{\coeffectcolor}{\gamma}_{{\mathrm{2}}} }\! \cdot \! \Gamma    \mathop{,}    \ottmv{x}  :^{\textcolor{\coeffectcolor}{  \textcolor{\coeffectcolor}{0}  } }  \ottnt{A}     \vdash_{\mathit{full} }   \ottnt{N}  :^{\textcolor{\effectcolor}{ \textcolor{\effectcolor}{\phi} } }  \ottnt{B} }%
}{
  \textcolor{\coeffectcolor}{ \textcolor{\coeffectcolor}{\gamma}_{{\mathrm{2}}} }\! \cdot \! \Gamma    \vdash_{\mathit{full} }    \ottmv{x}  \leftarrow^{\textcolor{\coeffectcolor}{0} }_{\textcolor{\effectcolor}{\varepsilon} }  \ottnt{M} \ \ottkw{in}\  \ottnt{N}   :^{\textcolor{\effectcolor}{ \textcolor{\effectcolor}{\phi} } }  \ottnt{B} }{%
{\ottdrulename{full\_letin\_zero}}{}%
}}

\newcommand{\ottdrulefullXXsplit}[1]{\ottdrule[#1]{%
\ottpremise{ \textcolor{\coeffectcolor}{ \textcolor{\coeffectcolor}{\gamma}_{{\mathrm{1}}} }\! \cdot \! \Gamma   \vdash_{\mathit{full} }  \ottnt{V}  \ottsym{:}   \ottnt{A_{{\mathrm{1}}}} \times \ottnt{A_{{\mathrm{2}}}} }%
\ottpremise{    \textcolor{\coeffectcolor}{ \textcolor{\coeffectcolor}{\gamma}_{{\mathrm{2}}} }\! \cdot \! \Gamma    \mathop{,}    \ottmv{x_{{\mathrm{1}}}}  :^{\textcolor{\coeffectcolor}{ \textcolor{\coeffectcolor}{q} } }  \ottnt{A_{{\mathrm{1}}}}     \mathop{,}    \ottmv{x_{{\mathrm{2}}}}  :^{\textcolor{\coeffectcolor}{ \textcolor{\coeffectcolor}{q} } }  \ottnt{A_{{\mathrm{2}}}}     \vdash_{\mathit{full} }   \ottnt{N}  :^{\textcolor{\effectcolor}{ \textcolor{\effectcolor}{\phi} } }  \ottnt{B} }%
\ottpremise{ \textcolor{\coeffectcolor}{ \textcolor{\coeffectcolor}{\gamma} } \equiv \textcolor{\coeffectcolor}{  \textcolor{\coeffectcolor}{  \textcolor{\coeffectcolor}{ \textcolor{\coeffectcolor}{q} \cdot \textcolor{\coeffectcolor}{\gamma}_{{\mathrm{1}}} }  \ottsym{+} \textcolor{\coeffectcolor}{\gamma}_{{\mathrm{2}}} }  } }%
}{
  \textcolor{\coeffectcolor}{ \textcolor{\coeffectcolor}{\gamma} }\! \cdot \! \Gamma    \vdash_{\mathit{full} }    \ottkw{case}_{\textcolor{\coeffectcolor}{ \textcolor{\coeffectcolor}{q} } } \;  \ottnt{V} \; \ottkw{of}\;( \ottmv{x_{{\mathrm{1}}}} , \ottmv{x_{{\mathrm{2}}}} )\; \rightarrow\;  \ottnt{N}   :^{\textcolor{\effectcolor}{ \textcolor{\effectcolor}{\phi} } }  \ottnt{B} }{%
{\ottdrulename{full\_split}}{}%
}}

\newcommand{\ottdrulefullXXcunit}[1]{\ottdrule[#1]{%
}{
  \textcolor{\coeffectcolor}{  \textcolor{\coeffectcolor}{\overline{0} }  }\! \cdot \! \Gamma    \vdash_{\mathit{full} }    \langle\rangle   :^{\textcolor{\effectcolor}{  \textcolor{\effectcolor}{\varepsilon}  } }   \top  }{%
{\ottdrulename{full\_cunit}}{}%
}}

\newcommand{\ottdrulefullXXcpair}[1]{\ottdrule[#1]{%
\ottpremise{  \textcolor{\coeffectcolor}{ \textcolor{\coeffectcolor}{\gamma} }\! \cdot \! \Gamma    \vdash_{\mathit{full} }   \ottnt{M_{{\mathrm{1}}}}  :^{\textcolor{\effectcolor}{ \textcolor{\effectcolor}{\phi} } }  \ottnt{B_{{\mathrm{1}}}} }%
\ottpremise{  \textcolor{\coeffectcolor}{ \textcolor{\coeffectcolor}{\gamma} }\! \cdot \! \Gamma    \vdash_{\mathit{full} }   \ottnt{M_{{\mathrm{2}}}}  :^{\textcolor{\effectcolor}{ \textcolor{\effectcolor}{\phi} } }  \ottnt{B_{{\mathrm{2}}}} }%
}{
  \textcolor{\coeffectcolor}{ \textcolor{\coeffectcolor}{\gamma} }\! \cdot \! \Gamma    \vdash_{\mathit{full} }    \langle  \ottnt{M_{{\mathrm{1}}}} , \ottnt{M_{{\mathrm{2}}}}  \rangle   :^{\textcolor{\effectcolor}{ \textcolor{\effectcolor}{\phi} } }   \ottnt{B_{{\mathrm{1}}}}   \mathop{\&}   \ottnt{B_{{\mathrm{2}}}}  }{%
{\ottdrulename{full\_cpair}}{}%
}}

\newcommand{\ottdrulefullXXfst}[1]{\ottdrule[#1]{%
\ottpremise{  \textcolor{\coeffectcolor}{ \textcolor{\coeffectcolor}{\gamma} }\! \cdot \! \Gamma    \vdash_{\mathit{full} }   \ottnt{M}  :^{\textcolor{\effectcolor}{ \textcolor{\effectcolor}{\phi} } }   \ottnt{B_{{\mathrm{1}}}}   \mathop{\&}   \ottnt{B_{{\mathrm{2}}}}  }%
}{
  \textcolor{\coeffectcolor}{ \textcolor{\coeffectcolor}{\gamma} }\! \cdot \! \Gamma    \vdash_{\mathit{full} }    \ottnt{M}  . 1   :^{\textcolor{\effectcolor}{ \textcolor{\effectcolor}{\phi} } }  \ottnt{B_{{\mathrm{1}}}} }{%
{\ottdrulename{full\_fst}}{}%
}}

\newcommand{\ottdrulefullXXsnd}[1]{\ottdrule[#1]{%
\ottpremise{  \textcolor{\coeffectcolor}{ \textcolor{\coeffectcolor}{\gamma} }\! \cdot \! \Gamma    \vdash_{\mathit{full} }   \ottnt{M}  :^{\textcolor{\effectcolor}{ \textcolor{\effectcolor}{\phi} } }   \ottnt{B_{{\mathrm{1}}}}   \mathop{\&}   \ottnt{B_{{\mathrm{2}}}}  }%
}{
  \textcolor{\coeffectcolor}{ \textcolor{\coeffectcolor}{\gamma} }\! \cdot \! \Gamma    \vdash_{\mathit{full} }    \ottnt{M}  . 2   :^{\textcolor{\effectcolor}{ \textcolor{\effectcolor}{\phi} } }  \ottnt{B_{{\mathrm{2}}}} }{%
{\ottdrulename{full\_snd}}{}%
}}

\newcommand{\ottdrulefullXXsequence}[1]{\ottdrule[#1]{%
\ottpremise{ \textcolor{\coeffectcolor}{ \textcolor{\coeffectcolor}{\gamma}_{{\mathrm{1}}} }\! \cdot \! \Gamma   \vdash_{\mathit{full} }  \ottnt{V}  \ottsym{:}  \ottkw{unit}}%
\ottpremise{  \textcolor{\coeffectcolor}{ \textcolor{\coeffectcolor}{\gamma}_{{\mathrm{2}}} }\! \cdot \! \Gamma    \vdash_{\mathit{full} }   \ottnt{N}  :^{\textcolor{\effectcolor}{ \textcolor{\effectcolor}{\phi} } }  \ottnt{B} }%
\ottpremise{ \textcolor{\coeffectcolor}{ \textcolor{\coeffectcolor}{\gamma} } \equiv \textcolor{\coeffectcolor}{  \textcolor{\coeffectcolor}{ \textcolor{\coeffectcolor}{\gamma}_{{\mathrm{1}}} \ottsym{+} \textcolor{\coeffectcolor}{\gamma}_{{\mathrm{2}}} }  } }%
}{
  \textcolor{\coeffectcolor}{ \textcolor{\coeffectcolor}{\gamma} }\! \cdot \! \Gamma    \vdash_{\mathit{full} }    \ottnt{V}  ;  \ottnt{N}   :^{\textcolor{\effectcolor}{ \textcolor{\effectcolor}{\phi} } }  \ottnt{B} }{%
{\ottdrulename{full\_sequence}}{}%
}}

\newcommand{\ottdrulefullXXcase}[1]{\ottdrule[#1]{%
\ottpremise{ \textcolor{\coeffectcolor}{ \textcolor{\coeffectcolor}{\gamma}_{{\mathrm{1}}} }\! \cdot \! \Gamma   \vdash_{\mathit{full} }  \ottnt{V}  \ottsym{:}  \ottnt{A_{{\mathrm{1}}}}  \ottsym{+}  \ottnt{A_{{\mathrm{2}}}}}%
\ottpremise{  \textcolor{\coeffectcolor}{ \textcolor{\coeffectcolor}{\gamma}_{{\mathrm{2}}} }\! \cdot \!  \Gamma   \mathop{,}   \ottmv{x_{{\mathrm{1}}}}  \ottsym{:}  \ottnt{A_{{\mathrm{1}}}}     \vdash_{\mathit{full} }   \ottnt{M_{{\mathrm{1}}}}  :^{\textcolor{\effectcolor}{ \textcolor{\effectcolor}{\phi} } }  \ottnt{B} }%
\ottpremise{  \textcolor{\coeffectcolor}{ \textcolor{\coeffectcolor}{\gamma}_{{\mathrm{2}}} }\! \cdot \!  \Gamma   \mathop{,}   \ottmv{x_{{\mathrm{2}}}}  \ottsym{:}  \ottnt{A_{{\mathrm{2}}}}     \vdash_{\mathit{full} }   \ottnt{M_{{\mathrm{2}}}}  :^{\textcolor{\effectcolor}{ \textcolor{\effectcolor}{\phi} } }  \ottnt{B} }%
\ottpremise{ \textcolor{\coeffectcolor}{ \textcolor{\coeffectcolor}{\gamma} } \equiv \textcolor{\coeffectcolor}{  \textcolor{\coeffectcolor}{  \textcolor{\coeffectcolor}{ \textcolor{\coeffectcolor}{q} \cdot \textcolor{\coeffectcolor}{\gamma}_{{\mathrm{1}}} }  \ottsym{+} \textcolor{\coeffectcolor}{\gamma}_{{\mathrm{2}}} }  } }%
\ottpremise{ \textcolor{\coeffectcolor}{ \textcolor{\coeffectcolor}{q} }\;  \textcolor{\coeffectcolor}{\mathop{\leq_{\mathit{co} } } } \; \textcolor{\coeffectcolor}{  \textcolor{\coeffectcolor}{1}  } }%
}{
  \textcolor{\coeffectcolor}{ \textcolor{\coeffectcolor}{\gamma} }\! \cdot \! \Gamma    \vdash_{\mathit{full} }   \ottkw{case} \, \ottnt{V} \, \ottkw{of} \, \ottkw{inl} \, \ottmv{x_{{\mathrm{1}}}}  \to  \ottnt{M_{{\mathrm{1}}}}  \mathsf{;} \, \ottkw{inr} \, \ottmv{x_{{\mathrm{2}}}}  \to  \ottnt{M_{{\mathrm{2}}}}  :^{\textcolor{\effectcolor}{ \textcolor{\effectcolor}{\phi} } }  \ottnt{B} }{%
{\ottdrulename{full\_case}}{}%
}}

\newcommand{\ottdrulefullXXtick}[1]{\ottdrule[#1]{%
}{
  \textcolor{\coeffectcolor}{  \textcolor{\coeffectcolor}{\overline{0} }  }\! \cdot \! \Gamma    \vdash_{\mathit{full} }    \textcolor{\effectcolor}{ \ottkw{tick} }   :^{\textcolor{\effectcolor}{  \textcolor{\effectcolor}{\ottkw{Tick} }  } }   \ottkw{F}_{\color{\coeffectcolor}{  \textcolor{\coeffectcolor}{1}  } }\;  \ottkw{unit}  }{%
{\ottdrulename{full\_tick}}{}%
}}

\newcommand{\ottdrulefullXXcsub}[1]{\ottdrule[#1]{%
\ottpremise{  \textcolor{\coeffectcolor}{ \textcolor{\coeffectcolor}{\gamma}' }\! \cdot \! \Gamma    \vdash_{\mathit{full} }   \ottnt{M}  :^{\textcolor{\effectcolor}{ \textcolor{\effectcolor}{\phi}' } }  \ottnt{B} }%
\ottpremise{ \textcolor{\coeffectcolor}{ \textcolor{\coeffectcolor}{\gamma} }\; \textcolor{\coeffectcolor}{\mathop{\leq_{\mathit{co} } } } \; \textcolor{\coeffectcolor}{ \textcolor{\coeffectcolor}{\gamma}' } }%
\ottpremise{ \textcolor{\effectcolor}{ \textcolor{\effectcolor}{\phi}' \;  \textcolor{\effectcolor}{\mathop{\leq_{\mathit{eff} } } } \;  \textcolor{\effectcolor}{\phi}  } }%
}{
  \textcolor{\coeffectcolor}{ \textcolor{\coeffectcolor}{\gamma} }\! \cdot \! \Gamma    \vdash_{\mathit{full} }   \ottnt{M}  :^{\textcolor{\effectcolor}{ \textcolor{\effectcolor}{\phi} } }  \ottnt{B} }{%
{\ottdrulename{full\_csub}}{}%
}}

\newcommand{\ottdrulefullXXdist}[1]{\ottdrule[#1]{%
\ottpremise{  \textcolor{\coeffectcolor}{ \textcolor{\coeffectcolor}{\gamma} }\! \cdot \! \Gamma    \vdash_{\mathit{full} }   \ottnt{M}  :^{\textcolor{\effectcolor}{ \textcolor{\effectcolor}{\phi}' } }   \ottkw{F}_{\color{\coeffectcolor}{ \textcolor{\coeffectcolor}{q} } }\;   \ottkw{U}_{\color{\effectcolor}{ \textcolor{\effectcolor}{\phi} } }\;   \ottkw{F}_{\color{\coeffectcolor}{  \textcolor{\coeffectcolor}{1}  } }\;  \ottnt{A}    }%
}{
  \textcolor{\coeffectcolor}{ \textcolor{\coeffectcolor}{\gamma} }\! \cdot \! \Gamma    \vdash_{\mathit{full} }   \ottkw{dist} \, \ottnt{M}  :^{\textcolor{\effectcolor}{ \textcolor{\effectcolor}{\phi}' } }   \ottkw{F}_{\color{\coeffectcolor}{  \textcolor{\coeffectcolor}{1}  } }\;   \ottkw{U}_{\color{\effectcolor}{  \textcolor{\coeffectcolor}{q}  \oslash  \textcolor{\effectcolor}{\phi}  } }\;   \ottkw{F}_{\color{\coeffectcolor}{  \textcolor{\coeffectcolor}{q}  \odot  \textcolor{\effectcolor}{\phi}  } }\;  \ottnt{A}    }{%
{\ottdrulename{full\_dist}}{}%
}}

\newcommand{\ottdefnfullXXcompXXtyping}[1]{\begin{ottdefnblock}[#1]{$ \Psi   \vdash_{\mathit{full} }   \ottnt{M}  :^{\textcolor{\effectcolor}{ \textcolor{\effectcolor}{\phi} } }  \ottnt{B} $}{\ottcom{computation (co)effect typing rules}}
\ottusedrule{\ottdrulefullXXabs{}}
\ottusedrule{\ottdrulefullXXapp{}}
\ottusedrule{\ottdrulefullXXforce{}}
\ottusedrule{\ottdrulefullXXret{}}
\ottusedrule{\ottdrulefullXXletin{}}
\ottusedrule{\ottdrulefullXXletinXXzero{}}
\ottusedrule{\ottdrulefullXXsplit{}}
\ottusedrule{\ottdrulefullXXcunit{}}
\ottusedrule{\ottdrulefullXXcpair{}}
\ottusedrule{\ottdrulefullXXfst{}}
\ottusedrule{\ottdrulefullXXsnd{}}
\ottusedrule{\ottdrulefullXXsequence{}}
\ottusedrule{\ottdrulefullXXcase{}}
\ottusedrule{\ottdrulefullXXtick{}}
\ottusedrule{\ottdrulefullXXcsub{}}
\ottusedrule{\ottdrulefullXXdist{}}
\end{ottdefnblock}}

\newcommand{\ottdefnsJEffCoeffCBPV}{
\ottdefnfullXXvalXXtyping{}\ottdefnfullXXcompXXtyping{}}

\newcommand{\ottdruleevalXXfullXXvalXXvar}[1]{\ottdrule[#1]{%
}{
     \textcolor{\coeffectcolor}{  \textcolor{\coeffectcolor}{\overline{0}_1}  }\! \cdot \! \rho_{{\mathrm{1}}}    \mathop{,} \;   \ottmv{x}   \mapsto ^{\textcolor{\coeffectcolor}{  \textcolor{\coeffectcolor}{1}  } }  \ottnt{W}      \mathop{,} \;   \textcolor{\coeffectcolor}{  \textcolor{\coeffectcolor}{\overline{0}_2}  }\! \cdot \! \rho_{{\mathrm{2}}}    \vdash_{\mathit{full} }  \ottmv{x}  \Downarrow  \ottnt{W} }{%
{\ottdrulename{eval\_full\_val\_var}}{}%
}}

\newcommand{\ottdruleevalXXfullXXvalXXunit}[1]{\ottdrule[#1]{%
}{
  \textcolor{\coeffectcolor}{  \textcolor{\coeffectcolor}{\overline{0} }  }\! \cdot \! \rho   \vdash_{\mathit{full} }  \ottsym{()}  \Downarrow  \ottsym{()} }{%
{\ottdrulename{eval\_full\_val\_unit}}{}%
}}

\newcommand{\ottdruleevalXXfullXXvalXXthunk}[1]{\ottdrule[#1]{%
}{
  \textcolor{\coeffectcolor}{ \textcolor{\coeffectcolor}{\gamma} }\! \cdot \! \rho   \vdash_{\mathit{full} }  \ottsym{\{}  \ottnt{M}  \ottsym{\}}  \Downarrow   \mathbf{clo}( \textcolor{\coeffectcolor}{\gamma} ,  \rho ,  \ottnt{M} )  }{%
{\ottdrulename{eval\_full\_val\_thunk}}{}%
}}

\newcommand{\ottdruleevalXXfullXXvalXXvpair}[1]{\ottdrule[#1]{%
\ottpremise{  \textcolor{\coeffectcolor}{ \textcolor{\coeffectcolor}{\gamma}_{{\mathrm{1}}} }\! \cdot \! \rho   \vdash_{\mathit{full} }  \ottnt{V_{{\mathrm{1}}}}  \Downarrow  \ottnt{W_{{\mathrm{1}}}} }%
\ottpremise{  \textcolor{\coeffectcolor}{ \textcolor{\coeffectcolor}{\gamma}_{{\mathrm{2}}} }\! \cdot \! \rho   \vdash_{\mathit{full} }  \ottnt{V_{{\mathrm{2}}}}  \Downarrow  \ottnt{W_{{\mathrm{2}}}} }%
}{
  \textcolor{\coeffectcolor}{  \textcolor{\coeffectcolor}{ \textcolor{\coeffectcolor}{\gamma}_{{\mathrm{1}}} \ottsym{+} \textcolor{\coeffectcolor}{\gamma}_{{\mathrm{2}}} }  }\! \cdot \! \rho   \vdash_{\mathit{full} }  \ottsym{(}  \ottnt{V_{{\mathrm{1}}}}  \ottsym{,}  \ottnt{V_{{\mathrm{2}}}}  \ottsym{)}  \Downarrow  \ottsym{(}  \ottnt{W_{{\mathrm{1}}}}  \ottsym{,}  \ottnt{W_{{\mathrm{2}}}}  \ottsym{)} }{%
{\ottdrulename{eval\_full\_val\_vpair}}{}%
}}

\newcommand{\ottdruleevalXXfullXXvalXXinl}[1]{\ottdrule[#1]{%
\ottpremise{  \textcolor{\coeffectcolor}{ \textcolor{\coeffectcolor}{\gamma} }\! \cdot \! \rho   \vdash_{\mathit{full} }  \ottnt{V}  \Downarrow  \ottnt{W} }%
}{
  \textcolor{\coeffectcolor}{ \textcolor{\coeffectcolor}{\gamma} }\! \cdot \! \rho   \vdash_{\mathit{full} }  \ottkw{inl} \, \ottnt{V}  \Downarrow  \ottkw{inl} \, \ottnt{W} }{%
{\ottdrulename{eval\_full\_val\_inl}}{}%
}}

\newcommand{\ottdruleevalXXfullXXvalXXinr}[1]{\ottdrule[#1]{%
\ottpremise{  \textcolor{\coeffectcolor}{ \textcolor{\coeffectcolor}{\gamma} }\! \cdot \! \rho   \vdash_{\mathit{full} }  \ottnt{V}  \Downarrow  \ottnt{W} }%
}{
  \textcolor{\coeffectcolor}{ \textcolor{\coeffectcolor}{\gamma} }\! \cdot \! \rho   \vdash_{\mathit{full} }  \ottkw{inr} \, \ottnt{V}  \Downarrow  \ottkw{inr} \, \ottnt{W} }{%
{\ottdrulename{eval\_full\_val\_inr}}{}%
}}

\newcommand{\ottdruleevalXXfullXXvalXXvsub}[1]{\ottdrule[#1]{%
\ottpremise{  \textcolor{\coeffectcolor}{ \textcolor{\coeffectcolor}{\gamma}' }\! \cdot \! \rho   \vdash_{\mathit{full} }  \ottnt{V}  \Downarrow  \ottnt{W} }%
\ottpremise{ \textcolor{\coeffectcolor}{ \textcolor{\coeffectcolor}{\gamma} }\; \textcolor{\coeffectcolor}{\mathop{\leq_{\mathit{co} } } } \; \textcolor{\coeffectcolor}{ \textcolor{\coeffectcolor}{\gamma}' } }%
}{
  \textcolor{\coeffectcolor}{ \textcolor{\coeffectcolor}{\gamma} }\! \cdot \! \rho   \vdash_{\mathit{full} }  \ottnt{V}  \Downarrow  \ottnt{W} }{%
{\ottdrulename{eval\_full\_val\_vsub}}{}%
}}

\newcommand{\ottdefnevalXXfullXXval}[1]{\begin{ottdefnblock}[#1]{$ \mu  \vdash_{\mathit{full} }  \ottnt{V}  \Downarrow  \ottnt{W} $}{\ottcom{environment-based effect-coeffect semantics for CBPV (large-step)}}
\ottusedrule{\ottdruleevalXXfullXXvalXXvar{}}
\ottusedrule{\ottdruleevalXXfullXXvalXXunit{}}
\ottusedrule{\ottdruleevalXXfullXXvalXXthunk{}}
\ottusedrule{\ottdruleevalXXfullXXvalXXvpair{}}
\ottusedrule{\ottdruleevalXXfullXXvalXXinl{}}
\ottusedrule{\ottdruleevalXXfullXXvalXXinr{}}
\ottusedrule{\ottdruleevalXXfullXXvalXXvsub{}}
\end{ottdefnblock}}

\newcommand{\ottdruleevalXXfullXXcompXXabs}[1]{\ottdrule[#1]{%
\ottpremise{ \textcolor{\coeffectcolor}{ \textcolor{\coeffectcolor}{q}' }\;  \textcolor{\coeffectcolor}{\mathop{\leq_{\mathit{co} } } } \; \textcolor{\coeffectcolor}{ \textcolor{\coeffectcolor}{q} } }%
}{
  \textcolor{\coeffectcolor}{ \textcolor{\coeffectcolor}{\gamma} }\! \cdot \! \rho   \vdash_{\mathit{full} }   \lambda  \ottmv{x} ^{\textcolor{\coeffectcolor}{ \textcolor{\coeffectcolor}{q} } }. \ottnt{M}   \Downarrow   \mathbf{clo}(   \textcolor{\coeffectcolor}{ \textcolor{\coeffectcolor}{\gamma} }\! \cdot \! \rho  ,   \lambda  \ottmv{x} ^{\textcolor{\coeffectcolor}{ \textcolor{\coeffectcolor}{q}' } }. \ottnt{M}   )   \mathop{\#} \textcolor{\effectcolor}{  \textcolor{\effectcolor}{\varepsilon}  } }{%
{\ottdrulename{eval\_full\_comp\_abs}}{}%
}}

\newcommand{\ottdruleevalXXfullXXcompXXcpair}[1]{\ottdrule[#1]{%
}{
  \textcolor{\coeffectcolor}{ \textcolor{\coeffectcolor}{\gamma} }\! \cdot \! \rho   \vdash_{\mathit{full} }   \langle  \ottnt{M_{{\mathrm{1}}}} , \ottnt{M_{{\mathrm{2}}}}  \rangle   \Downarrow   \mathbf{clo}(   \textcolor{\coeffectcolor}{ \textcolor{\coeffectcolor}{\gamma} }\! \cdot \! \rho  ,   \langle  \ottnt{M_{{\mathrm{1}}}} , \ottnt{M_{{\mathrm{2}}}}  \rangle   )   \mathop{\#} \textcolor{\effectcolor}{  \textcolor{\effectcolor}{\varepsilon}  } }{%
{\ottdrulename{eval\_full\_comp\_cpair}}{}%
}}

\newcommand{\ottdruleevalXXfullXXcompXXapp}[1]{\ottdrule[#1]{%
\ottpremise{  \textcolor{\coeffectcolor}{ \textcolor{\coeffectcolor}{\gamma}_{{\mathrm{1}}} }\! \cdot \! \rho   \vdash_{\mathit{full} }  \ottnt{M}  \Downarrow   \mathbf{clo}(   \textcolor{\coeffectcolor}{ \textcolor{\coeffectcolor}{\gamma}' }\! \cdot \! \rho'  ,   \lambda  \ottmv{x} ^{\textcolor{\coeffectcolor}{ \textcolor{\coeffectcolor}{q} } }. \ottnt{M'}   )   \mathop{\#} \textcolor{\effectcolor}{ \textcolor{\effectcolor}{\phi}_{{\mathrm{1}}} } }%
\ottpremise{  \textcolor{\coeffectcolor}{ \textcolor{\coeffectcolor}{\gamma}_{{\mathrm{2}}} }\! \cdot \! \rho   \vdash_{\mathit{full} }  \ottnt{V}  \Downarrow  \ottnt{W} }%
\ottpremise{   \textcolor{\coeffectcolor}{ \textcolor{\coeffectcolor}{\gamma}' }\! \cdot \! \rho'    \mathop{,} \;   \ottmv{x}   \mapsto ^{\textcolor{\coeffectcolor}{ \textcolor{\coeffectcolor}{q} } }  \ottnt{W}    \vdash_{\mathit{full} }  \ottnt{M'}  \Downarrow  \ottnt{T}  \mathop{\#} \textcolor{\effectcolor}{ \textcolor{\effectcolor}{\phi}_{{\mathrm{2}}} } }%
\ottpremise{ \textcolor{\coeffectcolor}{ \textcolor{\coeffectcolor}{\gamma} } \equiv \textcolor{\coeffectcolor}{  \textcolor{\coeffectcolor}{ \textcolor{\coeffectcolor}{\gamma}_{{\mathrm{1}}} \ottsym{+}  \textcolor{\coeffectcolor}{ \textcolor{\coeffectcolor}{q} \cdot \textcolor{\coeffectcolor}{\gamma}_{{\mathrm{2}}} }  }  } }%
\ottpremise{ \textcolor{\coeffectcolor}{ \textcolor{\coeffectcolor}{q} }  \neq  \textcolor{\coeffectcolor}{  \textcolor{\coeffectcolor}{0}  } }%
}{
  \textcolor{\coeffectcolor}{ \textcolor{\coeffectcolor}{\gamma} }\! \cdot \! \rho   \vdash_{\mathit{full} }  \ottnt{M} \, \ottnt{V}  \Downarrow  \ottnt{T}  \mathop{\#} \textcolor{\effectcolor}{  \textcolor{\effectcolor}{ \textcolor{\effectcolor}{\phi}_{{\mathrm{1}}}  \cdot  \textcolor{\effectcolor}{\phi}_{{\mathrm{2}}} }  } }{%
{\ottdrulename{eval\_full\_comp\_app}}{}%
}}

\newcommand{\ottdruleevalXXfullXXcompXXsplit}[1]{\ottdrule[#1]{%
\ottpremise{  \textcolor{\coeffectcolor}{ \textcolor{\coeffectcolor}{\gamma}_{{\mathrm{1}}} }\! \cdot \! \rho   \vdash_{\mathit{full} }  \ottnt{V}  \Downarrow  \ottsym{(}  \ottnt{W_{{\mathrm{1}}}}  \ottsym{,}  \ottnt{W_{{\mathrm{2}}}}  \ottsym{)} }%
\ottpremise{     \textcolor{\coeffectcolor}{ \textcolor{\coeffectcolor}{\gamma}_{{\mathrm{2}}} }\! \cdot \! \rho    \mathop{,} \;   \ottmv{x_{{\mathrm{1}}}}   \mapsto ^{\textcolor{\coeffectcolor}{ \textcolor{\coeffectcolor}{q} } }  \ottnt{W_{{\mathrm{1}}}}      \mathop{,} \;   \ottmv{x_{{\mathrm{2}}}}   \mapsto ^{\textcolor{\coeffectcolor}{ \textcolor{\coeffectcolor}{q} } }  \ottnt{W_{{\mathrm{2}}}}    \vdash_{\mathit{full} }  \ottnt{N}  \Downarrow  \ottnt{T}  \mathop{\#} \textcolor{\effectcolor}{ \textcolor{\effectcolor}{\phi} } }%
\ottpremise{ \textcolor{\coeffectcolor}{ \textcolor{\coeffectcolor}{\gamma} } \equiv \textcolor{\coeffectcolor}{  \textcolor{\coeffectcolor}{  \textcolor{\coeffectcolor}{ \textcolor{\coeffectcolor}{q} \cdot \textcolor{\coeffectcolor}{\gamma}_{{\mathrm{1}}} }  \ottsym{+} \textcolor{\coeffectcolor}{\gamma}_{{\mathrm{2}}} }  } }%
\ottpremise{ \textcolor{\coeffectcolor}{ \textcolor{\coeffectcolor}{q} }  \neq  \textcolor{\coeffectcolor}{  \textcolor{\coeffectcolor}{0}  } }%
}{
  \textcolor{\coeffectcolor}{ \textcolor{\coeffectcolor}{\gamma} }\! \cdot \! \rho   \vdash_{\mathit{full} }   \ottkw{case}_{\textcolor{\coeffectcolor}{ \textcolor{\coeffectcolor}{q} } } \;  \ottnt{V} \; \ottkw{of}\;( \ottmv{x_{{\mathrm{1}}}} , \ottmv{x_{{\mathrm{2}}}} )\; \rightarrow\;  \ottnt{N}   \Downarrow  \ottnt{T}  \mathop{\#} \textcolor{\effectcolor}{ \textcolor{\effectcolor}{\phi} } }{%
{\ottdrulename{eval\_full\_comp\_split}}{}%
}}

\newcommand{\ottdruleevalXXfullXXcompXXret}[1]{\ottdrule[#1]{%
\ottpremise{  \textcolor{\coeffectcolor}{ \textcolor{\coeffectcolor}{\gamma}' }\! \cdot \! \rho   \vdash_{\mathit{full} }  \ottnt{V}  \Downarrow  \ottnt{W} }%
\ottpremise{ \textcolor{\coeffectcolor}{ \textcolor{\coeffectcolor}{\gamma} } \equiv \textcolor{\coeffectcolor}{  \textcolor{\coeffectcolor}{ \textcolor{\coeffectcolor}{q} \cdot \textcolor{\coeffectcolor}{\gamma}' }  } }%
\ottpremise{ \textcolor{\coeffectcolor}{ \textcolor{\coeffectcolor}{q} }  \neq  \textcolor{\coeffectcolor}{  \textcolor{\coeffectcolor}{0}  } }%
}{
  \textcolor{\coeffectcolor}{ \textcolor{\coeffectcolor}{\gamma} }\! \cdot \! \rho   \vdash_{\mathit{full} }   \ottkw{return} _{\textcolor{\coeffectcolor}{ \textcolor{\coeffectcolor}{q} } }\;  \ottnt{V}   \Downarrow   \ottkw{return} _{\textcolor{\coeffectcolor}{ \textcolor{\coeffectcolor}{q} } }  \ottnt{W}   \mathop{\#} \textcolor{\effectcolor}{  \textcolor{\effectcolor}{\varepsilon}  } }{%
{\ottdrulename{eval\_full\_comp\_ret}}{}%
}}

\newcommand{\ottdruleevalXXfullXXcompXXletin}[1]{\ottdrule[#1]{%
\ottpremise{ \textcolor{\coeffectcolor}{ \textcolor{\coeffectcolor}{q}'_{{\mathrm{2}}} }\; = \; \textcolor{\coeffectcolor}{  \textcolor{\coeffectcolor}{q}_{{\mathrm{2}}} \ \|\ \textcolor{\coeffectcolor}{1}  } }%
\ottpremise{  \textcolor{\coeffectcolor}{ \textcolor{\coeffectcolor}{\gamma}_{{\mathrm{1}}} }\! \cdot \! \rho   \vdash_{\mathit{full} }  \ottnt{M}  \Downarrow   \ottkw{return} _{\textcolor{\coeffectcolor}{ \textcolor{\coeffectcolor}{q}_{{\mathrm{1}}} } }  \ottnt{W}   \mathop{\#} \textcolor{\effectcolor}{ \textcolor{\effectcolor}{\phi}_{{\mathrm{1}}} } }%
\ottpremise{   \textcolor{\coeffectcolor}{ \textcolor{\coeffectcolor}{\gamma}_{{\mathrm{2}}} }\! \cdot \! \rho    \mathop{,} \;   \ottmv{x}   \mapsto ^{\textcolor{\coeffectcolor}{  \textcolor{\coeffectcolor}{ \textcolor{\coeffectcolor}{q}_{{\mathrm{1}}}   \cdot   \textcolor{\coeffectcolor}{q}'_{{\mathrm{2}}} }  } }  \ottnt{W}    \vdash_{\mathit{full} }  \ottnt{N}  \Downarrow  \ottnt{T}  \mathop{\#} \textcolor{\effectcolor}{ \textcolor{\effectcolor}{\phi}_{{\mathrm{2}}} } }%
}{
  \textcolor{\coeffectcolor}{  \textcolor{\coeffectcolor}{  \textcolor{\coeffectcolor}{ \textcolor{\coeffectcolor}{q}'_{{\mathrm{2}}} \cdot \textcolor{\coeffectcolor}{\gamma}_{{\mathrm{1}}} }  \ottsym{+} \textcolor{\coeffectcolor}{\gamma}_{{\mathrm{2}}} }  }\! \cdot \! \rho   \vdash_{\mathit{full} }   \ottmv{x}  \leftarrow^{\textcolor{\coeffectcolor}{ \textcolor{\coeffectcolor}{q}_{{\mathrm{2}}} } }  \ottnt{M} \ \ottkw{in}\  \ottnt{N}   \Downarrow  \ottnt{T}  \mathop{\#} \textcolor{\effectcolor}{  \textcolor{\effectcolor}{ \textcolor{\effectcolor}{\phi}_{{\mathrm{1}}}  \cdot  \textcolor{\effectcolor}{\phi}_{{\mathrm{2}}} }  } }{%
{\ottdrulename{eval\_full\_comp\_letin}}{}%
}}

\newcommand{\ottdruleevalXXfullXXcompXXletinXXzero}[1]{\ottdrule[#1]{%
\ottpremise{   \textcolor{\coeffectcolor}{ \textcolor{\coeffectcolor}{\gamma} }\! \cdot \! \rho    \mathop{,} \;   \ottmv{x}   \mapsto ^{\textcolor{\coeffectcolor}{  \textcolor{\coeffectcolor}{0}  } }  \mathop{\lightning}    \vdash_{\mathit{full} }  \ottnt{N}  \Downarrow  \ottnt{T}  \mathop{\#} \textcolor{\effectcolor}{ \textcolor{\effectcolor}{\phi} } }%
}{
  \textcolor{\coeffectcolor}{ \textcolor{\coeffectcolor}{\gamma} }\! \cdot \! \rho   \vdash_{\mathit{full} }   \ottmv{x}  \leftarrow^{\textcolor{\coeffectcolor}{0} }_{\textcolor{\effectcolor}{\varepsilon} }  \ottnt{M} \ \ottkw{in}\  \ottnt{N}   \Downarrow  \ottnt{T}  \mathop{\#} \textcolor{\effectcolor}{ \textcolor{\effectcolor}{\phi} } }{%
{\ottdrulename{eval\_full\_comp\_letin\_zero}}{}%
}}

\newcommand{\ottdruleevalXXfullXXcompXXforce}[1]{\ottdrule[#1]{%
\ottpremise{  \textcolor{\coeffectcolor}{ \textcolor{\coeffectcolor}{\gamma} }\! \cdot \! \rho   \vdash_{\mathit{full} }  \ottnt{V}  \Downarrow   \mathbf{clo}( \textcolor{\coeffectcolor}{\gamma}' ,  \rho' ,  \ottnt{M} )  }%
\ottpremise{  \textcolor{\coeffectcolor}{ \textcolor{\coeffectcolor}{\gamma}' }\! \cdot \! \rho'   \vdash_{\mathit{full} }  \ottnt{M}  \Downarrow  \ottnt{T}  \mathop{\#} \textcolor{\effectcolor}{ \textcolor{\effectcolor}{\phi} } }%
}{
  \textcolor{\coeffectcolor}{ \textcolor{\coeffectcolor}{\gamma} }\! \cdot \! \rho   \vdash_{\mathit{full} }  \ottnt{V}  \ottsym{!}  \Downarrow  \ottnt{T}  \mathop{\#} \textcolor{\effectcolor}{ \textcolor{\effectcolor}{\phi} } }{%
{\ottdrulename{eval\_full\_comp\_force}}{}%
}}

\newcommand{\ottdruleevalXXfullXXcompXXfst}[1]{\ottdrule[#1]{%
\ottpremise{  \textcolor{\coeffectcolor}{ \textcolor{\coeffectcolor}{\gamma} }\! \cdot \! \rho   \vdash_{\mathit{full} }  \ottnt{M}  \Downarrow   \mathbf{clo}(   \textcolor{\coeffectcolor}{ \textcolor{\coeffectcolor}{\gamma}' }\! \cdot \! \rho'  ,   \langle  \ottnt{M_{{\mathrm{1}}}} , \ottnt{M_{{\mathrm{2}}}}  \rangle   )   \mathop{\#} \textcolor{\effectcolor}{ \textcolor{\effectcolor}{\phi}_{{\mathrm{1}}} } }%
\ottpremise{  \textcolor{\coeffectcolor}{ \textcolor{\coeffectcolor}{\gamma}' }\! \cdot \! \rho'   \vdash_{\mathit{full} }  \ottnt{M_{{\mathrm{1}}}}  \Downarrow  \ottnt{T}  \mathop{\#} \textcolor{\effectcolor}{ \textcolor{\effectcolor}{\phi}_{{\mathrm{2}}} } }%
}{
  \textcolor{\coeffectcolor}{ \textcolor{\coeffectcolor}{\gamma} }\! \cdot \! \rho   \vdash_{\mathit{full} }   \ottnt{M}  . 1   \Downarrow  \ottnt{T}  \mathop{\#} \textcolor{\effectcolor}{  \textcolor{\effectcolor}{ \textcolor{\effectcolor}{\phi}_{{\mathrm{1}}}  \cdot  \textcolor{\effectcolor}{\phi}_{{\mathrm{2}}} }  } }{%
{\ottdrulename{eval\_full\_comp\_fst}}{}%
}}

\newcommand{\ottdruleevalXXfullXXcompXXsnd}[1]{\ottdrule[#1]{%
\ottpremise{  \textcolor{\coeffectcolor}{ \textcolor{\coeffectcolor}{\gamma} }\! \cdot \! \rho   \vdash_{\mathit{full} }  \ottnt{M}  \Downarrow   \mathbf{clo}(   \textcolor{\coeffectcolor}{ \textcolor{\coeffectcolor}{\gamma}' }\! \cdot \! \rho'  ,   \langle  \ottnt{M_{{\mathrm{1}}}} , \ottnt{M_{{\mathrm{2}}}}  \rangle   )   \mathop{\#} \textcolor{\effectcolor}{ \textcolor{\effectcolor}{\phi}_{{\mathrm{1}}} } }%
\ottpremise{  \textcolor{\coeffectcolor}{ \textcolor{\coeffectcolor}{\gamma}' }\! \cdot \! \rho'   \vdash_{\mathit{full} }  \ottnt{M_{{\mathrm{2}}}}  \Downarrow  \ottnt{T}  \mathop{\#} \textcolor{\effectcolor}{ \textcolor{\effectcolor}{\phi}_{{\mathrm{2}}} } }%
}{
  \textcolor{\coeffectcolor}{ \textcolor{\coeffectcolor}{\gamma} }\! \cdot \! \rho   \vdash_{\mathit{full} }   \ottnt{M}  . 2   \Downarrow  \ottnt{T}  \mathop{\#} \textcolor{\effectcolor}{  \textcolor{\effectcolor}{ \textcolor{\effectcolor}{\phi}_{{\mathrm{1}}}  \cdot  \textcolor{\effectcolor}{\phi}_{{\mathrm{2}}} }  } }{%
{\ottdrulename{eval\_full\_comp\_snd}}{}%
}}

\newcommand{\ottdruleevalXXfullXXcompXXsequence}[1]{\ottdrule[#1]{%
\ottpremise{  \textcolor{\coeffectcolor}{ \textcolor{\coeffectcolor}{\gamma}_{{\mathrm{1}}} }\! \cdot \! \rho   \vdash_{\mathit{full} }  \ottnt{V}  \Downarrow  \ottsym{()} }%
\ottpremise{  \textcolor{\coeffectcolor}{ \textcolor{\coeffectcolor}{\gamma}_{{\mathrm{2}}} }\! \cdot \! \rho   \vdash_{\mathit{full} }  \ottnt{N}  \Downarrow  \ottnt{T}  \mathop{\#} \textcolor{\effectcolor}{ \textcolor{\effectcolor}{\phi} } }%
\ottpremise{ \textcolor{\coeffectcolor}{ \textcolor{\coeffectcolor}{\gamma} } \equiv \textcolor{\coeffectcolor}{  \textcolor{\coeffectcolor}{ \textcolor{\coeffectcolor}{\gamma}_{{\mathrm{1}}} \ottsym{+} \textcolor{\coeffectcolor}{\gamma}_{{\mathrm{2}}} }  } }%
}{
  \textcolor{\coeffectcolor}{ \textcolor{\coeffectcolor}{\gamma} }\! \cdot \! \rho   \vdash_{\mathit{full} }   \ottnt{V}  ;  \ottnt{N}   \Downarrow  \ottnt{T}  \mathop{\#} \textcolor{\effectcolor}{ \textcolor{\effectcolor}{\phi} } }{%
{\ottdrulename{eval\_full\_comp\_sequence}}{}%
}}

\newcommand{\ottdruleevalXXfullXXcompXXcasel}[1]{\ottdrule[#1]{%
\ottpremise{  \textcolor{\coeffectcolor}{ \textcolor{\coeffectcolor}{\gamma}_{{\mathrm{1}}} }\! \cdot \! \rho   \vdash_{\mathit{full} }  \ottnt{V}  \Downarrow  \ottkw{inl} \, \ottnt{W} }%
\ottpremise{   \textcolor{\coeffectcolor}{ \textcolor{\coeffectcolor}{\gamma}_{{\mathrm{2}}} }\! \cdot \! \rho    \mathop{,} \;   \ottmv{x_{{\mathrm{1}}}}   \mapsto ^{\textcolor{\coeffectcolor}{ \textcolor{\coeffectcolor}{q} } }  \ottnt{W}    \vdash_{\mathit{full} }  \ottnt{M_{{\mathrm{1}}}}  \Downarrow  \ottnt{T}  \mathop{\#} \textcolor{\effectcolor}{ \textcolor{\effectcolor}{\phi} } }%
\ottpremise{ \textcolor{\coeffectcolor}{ \textcolor{\coeffectcolor}{\gamma} } \equiv \textcolor{\coeffectcolor}{  \textcolor{\coeffectcolor}{  \textcolor{\coeffectcolor}{ \textcolor{\coeffectcolor}{q} \cdot \textcolor{\coeffectcolor}{\gamma}_{{\mathrm{1}}} }  \ottsym{+} \textcolor{\coeffectcolor}{\gamma}_{{\mathrm{2}}} }  } }%
\ottpremise{ \textcolor{\coeffectcolor}{ \textcolor{\coeffectcolor}{q} }\;  \textcolor{\coeffectcolor}{\mathop{\leq_{\mathit{co} } } } \; \textcolor{\coeffectcolor}{  \textcolor{\coeffectcolor}{1}  } }%
}{
  \textcolor{\coeffectcolor}{ \textcolor{\coeffectcolor}{\gamma} }\! \cdot \! \rho   \vdash_{\mathit{full} }   \ottkw{case}_{\textcolor{\coeffectcolor}{ \textcolor{\coeffectcolor}{q} } }\;  \ottnt{V} \; \ottkw{of}\; \ottkw{inl} \; \ottmv{x_{{\mathrm{1}}}}  \rightarrow\;  \ottnt{M_{{\mathrm{1}}}}  ;  \ottkw{inr} \; \ottmv{x_{{\mathrm{2}}}}  \rightarrow\;  \ottnt{M_{{\mathrm{2}}}}   \Downarrow  \ottnt{T}  \mathop{\#} \textcolor{\effectcolor}{ \textcolor{\effectcolor}{\phi} } }{%
{\ottdrulename{eval\_full\_comp\_casel}}{}%
}}

\newcommand{\ottdruleevalXXfullXXcompXXcaser}[1]{\ottdrule[#1]{%
\ottpremise{  \textcolor{\coeffectcolor}{ \textcolor{\coeffectcolor}{\gamma}_{{\mathrm{1}}} }\! \cdot \! \rho   \vdash_{\mathit{full} }  \ottnt{V}  \Downarrow  \ottkw{inr} \, \ottnt{W} }%
\ottpremise{   \textcolor{\coeffectcolor}{ \textcolor{\coeffectcolor}{\gamma}_{{\mathrm{2}}} }\! \cdot \! \rho    \mathop{,} \;   \ottmv{x_{{\mathrm{2}}}}   \mapsto ^{\textcolor{\coeffectcolor}{ \textcolor{\coeffectcolor}{q} } }  \ottnt{W}    \vdash_{\mathit{full} }  \ottnt{M_{{\mathrm{2}}}}  \Downarrow  \ottnt{T}  \mathop{\#} \textcolor{\effectcolor}{ \textcolor{\effectcolor}{\phi} } }%
\ottpremise{ \textcolor{\coeffectcolor}{ \textcolor{\coeffectcolor}{\gamma} } \equiv \textcolor{\coeffectcolor}{  \textcolor{\coeffectcolor}{  \textcolor{\coeffectcolor}{ \textcolor{\coeffectcolor}{q} \cdot \textcolor{\coeffectcolor}{\gamma}_{{\mathrm{1}}} }  \ottsym{+} \textcolor{\coeffectcolor}{\gamma}_{{\mathrm{2}}} }  } }%
\ottpremise{ \textcolor{\coeffectcolor}{ \textcolor{\coeffectcolor}{q} }\;  \textcolor{\coeffectcolor}{\mathop{\leq_{\mathit{co} } } } \; \textcolor{\coeffectcolor}{  \textcolor{\coeffectcolor}{1}  } }%
}{
  \textcolor{\coeffectcolor}{ \textcolor{\coeffectcolor}{\gamma} }\! \cdot \! \rho   \vdash_{\mathit{full} }   \ottkw{case}_{\textcolor{\coeffectcolor}{ \textcolor{\coeffectcolor}{q} } }\;  \ottnt{V} \; \ottkw{of}\; \ottkw{inl} \; \ottmv{x_{{\mathrm{1}}}}  \rightarrow\;  \ottnt{M_{{\mathrm{1}}}}  ;  \ottkw{inr} \; \ottmv{x_{{\mathrm{2}}}}  \rightarrow\;  \ottnt{M_{{\mathrm{2}}}}   \Downarrow  \ottnt{T}  \mathop{\#} \textcolor{\effectcolor}{ \textcolor{\effectcolor}{\phi} } }{%
{\ottdrulename{eval\_full\_comp\_caser}}{}%
}}

\newcommand{\ottdruleevalXXfullXXcompXXappXXabsXXzero}[1]{\ottdrule[#1]{%
\ottpremise{  \textcolor{\coeffectcolor}{ \textcolor{\coeffectcolor}{\gamma} }\! \cdot \! \rho   \vdash_{\mathit{full} }  \ottnt{M}  \Downarrow   \mathbf{clo}(   \textcolor{\coeffectcolor}{ \textcolor{\coeffectcolor}{\gamma}' }\! \cdot \! \rho'  ,  \ottnt{M'}  )   \mathop{\#} \textcolor{\effectcolor}{ \textcolor{\effectcolor}{\phi}_{{\mathrm{1}}} } }%
\ottpremise{   \textcolor{\coeffectcolor}{ \textcolor{\coeffectcolor}{\gamma}' }\! \cdot \! \rho'    \mathop{,} \;   \ottmv{x}   \mapsto ^{\textcolor{\coeffectcolor}{  \textcolor{\coeffectcolor}{0}  } }  \mathop{\lightning}    \vdash_{\mathit{full} }  \ottnt{M'}  \Downarrow  \ottnt{T}  \mathop{\#} \textcolor{\effectcolor}{ \textcolor{\effectcolor}{\phi}_{{\mathrm{2}}} } }%
}{
  \textcolor{\coeffectcolor}{ \textcolor{\coeffectcolor}{\gamma} }\! \cdot \! \rho   \vdash_{\mathit{full} }  \ottnt{M} \, \ottnt{V}  \Downarrow  \ottnt{T}  \mathop{\#} \textcolor{\effectcolor}{  \textcolor{\effectcolor}{ \textcolor{\effectcolor}{\phi}_{{\mathrm{1}}}  \cdot  \textcolor{\effectcolor}{\phi}_{{\mathrm{2}}} }  } }{%
{\ottdrulename{eval\_full\_comp\_app\_abs\_zero}}{}%
}}

\newcommand{\ottdruleevalXXfullXXcompXXsplitXXzero}[1]{\ottdrule[#1]{%
\ottpremise{     \textcolor{\coeffectcolor}{ \textcolor{\coeffectcolor}{\gamma} }\! \cdot \! \rho    \mathop{,} \;   \ottmv{x_{{\mathrm{1}}}}   \mapsto ^{\textcolor{\coeffectcolor}{  \textcolor{\coeffectcolor}{0}  } }  \mathop{\lightning}      \mathop{,} \;   \ottmv{x_{{\mathrm{2}}}}   \mapsto ^{\textcolor{\coeffectcolor}{  \textcolor{\coeffectcolor}{0}  } }  \mathop{\lightning}    \vdash_{\mathit{full} }  \ottnt{N}  \Downarrow  \ottnt{T}  \mathop{\#} \textcolor{\effectcolor}{ \textcolor{\effectcolor}{\phi} } }%
}{
  \textcolor{\coeffectcolor}{ \textcolor{\coeffectcolor}{\gamma} }\! \cdot \! \rho   \vdash_{\mathit{full} }   \ottkw{case}_{\textcolor{\coeffectcolor}{  \textcolor{\coeffectcolor}{0}  } } \;  \ottnt{V} \; \ottkw{of}\;( \ottmv{x_{{\mathrm{1}}}} , \ottmv{x_{{\mathrm{2}}}} )\; \rightarrow\;  \ottnt{N}   \Downarrow  \ottnt{T}  \mathop{\#} \textcolor{\effectcolor}{ \textcolor{\effectcolor}{\phi} } }{%
{\ottdrulename{eval\_full\_comp\_split\_zero}}{}%
}}

\newcommand{\ottdruleevalXXfullXXcompXXretXXzero}[1]{\ottdrule[#1]{%
}{
  \textcolor{\coeffectcolor}{  \textcolor{\coeffectcolor}{\overline{0} }  }\! \cdot \! \rho   \vdash_{\mathit{full} }   \ottkw{return} _{\textcolor{\coeffectcolor}{  \textcolor{\coeffectcolor}{0}  } }\;  \ottnt{V}   \Downarrow   \ottkw{return} _{\textcolor{\coeffectcolor}{  \textcolor{\coeffectcolor}{0}  } }  \mathop{\lightning}   \mathop{\#} \textcolor{\effectcolor}{  \textcolor{\effectcolor}{\varepsilon}  } }{%
{\ottdrulename{eval\_full\_comp\_ret\_zero}}{}%
}}

\newcommand{\ottdruleevalXXfullXXcompXXtick}[1]{\ottdrule[#1]{%
}{
  \textcolor{\coeffectcolor}{  \textcolor{\coeffectcolor}{\overline{0} }  }\! \cdot \! \rho   \vdash_{\mathit{full} }   \textcolor{\effectcolor}{ \ottkw{tick} }   \Downarrow   \ottkw{return} _{\textcolor{\coeffectcolor}{  \textcolor{\coeffectcolor}{1}  } }  \ottsym{()}   \mathop{\#} \textcolor{\effectcolor}{  \textcolor{\effectcolor}{\ottkw{Tick} }  } }{%
{\ottdrulename{eval\_full\_comp\_tick}}{}%
}}

\newcommand{\ottdruleevalXXfullXXcompXXcsub}[1]{\ottdrule[#1]{%
\ottpremise{  \textcolor{\coeffectcolor}{ \textcolor{\coeffectcolor}{\gamma}' }\! \cdot \! \rho   \vdash_{\mathit{full} }  \ottnt{M}  \Downarrow  \ottnt{T}  \mathop{\#} \textcolor{\effectcolor}{ \textcolor{\effectcolor}{\phi} } }%
\ottpremise{ \textcolor{\coeffectcolor}{ \textcolor{\coeffectcolor}{\gamma} }\; \textcolor{\coeffectcolor}{\mathop{\leq_{\mathit{co} } } } \; \textcolor{\coeffectcolor}{ \textcolor{\coeffectcolor}{\gamma}' } }%
}{
  \textcolor{\coeffectcolor}{ \textcolor{\coeffectcolor}{\gamma} }\! \cdot \! \rho   \vdash_{\mathit{full} }  \ottnt{M}  \Downarrow  \ottnt{T}  \mathop{\#} \textcolor{\effectcolor}{ \textcolor{\effectcolor}{\phi} } }{%
{\ottdrulename{eval\_full\_comp\_csub}}{}%
}}

\newcommand{\ottdruleevalXXfullXXcompXXdist}[1]{\ottdrule[#1]{%
\ottpremise{  \textcolor{\coeffectcolor}{ \textcolor{\coeffectcolor}{\gamma} }\! \cdot \! \rho   \vdash_{\mathit{full} }  \ottnt{M}  \Downarrow   \ottkw{return} _{\textcolor{\coeffectcolor}{ \textcolor{\coeffectcolor}{q} } }   \mathbf{clo}( \textcolor{\coeffectcolor}{\gamma}' ,  \rho' ,  \ottnt{M'} )    \mathop{\#} \textcolor{\effectcolor}{ \textcolor{\effectcolor}{\phi}' } }%
}{
  \textcolor{\coeffectcolor}{ \textcolor{\coeffectcolor}{\gamma} }\! \cdot \! \rho   \vdash_{\mathit{full} }  \ottkw{dist} \, \ottnt{M}  \Downarrow   \ottkw{return} _{\textcolor{\coeffectcolor}{  \textcolor{\coeffectcolor}{1}  } }   \mathbf{dclo}( \textcolor{\coeffectcolor}{\gamma}' ,  \rho' ,  \textcolor{\coeffectcolor}{q} ,  \ottnt{M'} )    \mathop{\#} \textcolor{\effectcolor}{ \textcolor{\effectcolor}{\phi}' } }{%
{\ottdrulename{eval\_full\_comp\_dist}}{}%
}}

\newcommand{\ottdruleevalXXfullXXcompXXdistXXforce}[1]{\ottdrule[#1]{%
\ottpremise{  \textcolor{\coeffectcolor}{ \textcolor{\coeffectcolor}{\gamma} }\! \cdot \! \rho   \vdash_{\mathit{full} }  \ottnt{V}  \Downarrow   \mathbf{dclo}( \textcolor{\coeffectcolor}{\gamma}' ,  \rho' ,  \textcolor{\coeffectcolor}{q} ,  \ottnt{M'} )  }%
\ottpremise{  \textcolor{\coeffectcolor}{ \textcolor{\coeffectcolor}{\gamma}' }\! \cdot \! \rho'   \vdash_{\mathit{full} }  \ottnt{M'}  \Downarrow   \ottkw{return} _{\textcolor{\coeffectcolor}{  \textcolor{\coeffectcolor}{1}  } }  \ottnt{W}   \mathop{\#} \textcolor{\effectcolor}{ \textcolor{\effectcolor}{\phi} } }%
}{
  \textcolor{\coeffectcolor}{ \textcolor{\coeffectcolor}{\gamma} }\! \cdot \! \rho   \vdash_{\mathit{full} }  \ottnt{V}  \ottsym{!}  \Downarrow   \ottkw{return} _{\textcolor{\coeffectcolor}{ \ottsym{(}   \textcolor{\coeffectcolor}{q}  \odot  \textcolor{\effectcolor}{\phi}   \ottsym{)} } }  \ottnt{W}   \mathop{\#} \textcolor{\effectcolor}{  \textcolor{\coeffectcolor}{q}  \oslash  \textcolor{\effectcolor}{\phi}  } }{%
{\ottdrulename{eval\_full\_comp\_dist\_force}}{}%
}}

\newcommand{\ottdruleevalXXfullXXcompXXdistXXdist}[1]{\ottdrule[#1]{%
\ottpremise{  \textcolor{\coeffectcolor}{ \textcolor{\coeffectcolor}{\gamma} }\! \cdot \! \rho   \vdash_{\mathit{full} }  \ottnt{M}  \Downarrow   \ottkw{return} _{\textcolor{\coeffectcolor}{  \textcolor{\coeffectcolor}{1}  } }   \mathbf{dclo}( \textcolor{\coeffectcolor}{\gamma}' ,  \rho' ,   \textcolor{\coeffectcolor}{1}  ,  \ottnt{M'} )    \mathop{\#} \textcolor{\effectcolor}{ \textcolor{\effectcolor}{\phi}' } }%
}{
  \textcolor{\coeffectcolor}{ \textcolor{\coeffectcolor}{\gamma} }\! \cdot \! \rho   \vdash_{\mathit{full} }  \ottkw{dist} \, \ottnt{M}  \Downarrow   \ottkw{return} _{\textcolor{\coeffectcolor}{  \textcolor{\coeffectcolor}{1}  } }   \mathbf{clo}( \textcolor{\coeffectcolor}{\gamma}' ,  \rho' ,  \ottnt{M'} )    \mathop{\#} \textcolor{\effectcolor}{ \textcolor{\effectcolor}{\phi}' } }{%
{\ottdrulename{eval\_full\_comp\_dist\_dist}}{}%
}}

\newcommand{\ottdefnevalXXfullXXcomp}[1]{\begin{ottdefnblock}[#1]{$ \mu  \vdash_{\mathit{full} }  \ottnt{M}  \Downarrow  \ottnt{T}  \mathop{\#} \textcolor{\effectcolor}{ \textcolor{\effectcolor}{\phi} } $}{\ottcom{environment-based effect-coeffect semantics for CBPV (large-step)}}
\ottusedrule{\ottdruleevalXXfullXXcompXXabs{}}
\ottusedrule{\ottdruleevalXXfullXXcompXXcpair{}}
\ottusedrule{\ottdruleevalXXfullXXcompXXapp{}}
\ottusedrule{\ottdruleevalXXfullXXcompXXsplit{}}
\ottusedrule{\ottdruleevalXXfullXXcompXXret{}}
\ottusedrule{\ottdruleevalXXfullXXcompXXletin{}}
\ottusedrule{\ottdruleevalXXfullXXcompXXletinXXzero{}}
\ottusedrule{\ottdruleevalXXfullXXcompXXforce{}}
\ottusedrule{\ottdruleevalXXfullXXcompXXfst{}}
\ottusedrule{\ottdruleevalXXfullXXcompXXsnd{}}
\ottusedrule{\ottdruleevalXXfullXXcompXXsequence{}}
\ottusedrule{\ottdruleevalXXfullXXcompXXcasel{}}
\ottusedrule{\ottdruleevalXXfullXXcompXXcaser{}}
\ottusedrule{\ottdruleevalXXfullXXcompXXappXXabsXXzero{}}
\ottusedrule{\ottdruleevalXXfullXXcompXXsplitXXzero{}}
\ottusedrule{\ottdruleevalXXfullXXcompXXretXXzero{}}
\ottusedrule{\ottdruleevalXXfullXXcompXXtick{}}
\ottusedrule{\ottdruleevalXXfullXXcompXXcsub{}}
\ottusedrule{\ottdruleevalXXfullXXcompXXdist{}}
\ottusedrule{\ottdruleevalXXfullXXcompXXdistXXforce{}}
\ottusedrule{\ottdruleevalXXfullXXcompXXdistXXdist{}}
\end{ottdefnblock}}

\newcommand{\ottdefnsJFullOp}{
\ottdefnevalXXfullXXval{}\ottdefnevalXXfullXXcomp{}}

\newcommand{\ottdruleevalXXgfullXXcompXXabs}[1]{\ottdrule[#1]{%
\ottpremise{ \textcolor{\coeffectcolor}{ \textcolor{\coeffectcolor}{q}' }\;  \textcolor{\coeffectcolor}{\mathop{\leq_{\mathit{co} } } } \; \textcolor{\coeffectcolor}{ \textcolor{\coeffectcolor}{q} } }%
}{
  \textcolor{\coeffectcolor}{ \textcolor{\coeffectcolor}{\gamma} }\! \cdot \! \rho   \vdash_{\mathit{gen} }   \lambda  \ottmv{x} ^{\textcolor{\coeffectcolor}{ \textcolor{\coeffectcolor}{q} } }. \ottnt{M}   \Downarrow   \mathbf{clo}(   \textcolor{\coeffectcolor}{ \textcolor{\coeffectcolor}{\gamma} }\! \cdot \! \rho  ,   \lambda  \ottmv{x} ^{\textcolor{\coeffectcolor}{ \textcolor{\coeffectcolor}{q}' } }. \ottnt{M}   )   \mathop{\#} \textcolor{\effectcolor}{  \textcolor{\effectcolor}{\varepsilon}  } }{%
{\ottdrulename{eval\_gfull\_comp\_abs}}{}%
}}

\newcommand{\ottdruleevalXXgfullXXcompXXcpair}[1]{\ottdrule[#1]{%
}{
  \textcolor{\coeffectcolor}{ \textcolor{\coeffectcolor}{\gamma} }\! \cdot \! \rho   \vdash_{\mathit{gen} }   \langle  \ottnt{M_{{\mathrm{1}}}} , \ottnt{M_{{\mathrm{2}}}}  \rangle   \Downarrow   \mathbf{clo}(   \textcolor{\coeffectcolor}{ \textcolor{\coeffectcolor}{\gamma} }\! \cdot \! \rho  ,   \langle  \ottnt{M_{{\mathrm{1}}}} , \ottnt{M_{{\mathrm{2}}}}  \rangle   )   \mathop{\#} \textcolor{\effectcolor}{  \textcolor{\effectcolor}{\varepsilon}  } }{%
{\ottdrulename{eval\_gfull\_comp\_cpair}}{}%
}}

\newcommand{\ottdruleevalXXgfullXXcompXXapp}[1]{\ottdrule[#1]{%
\ottpremise{  \textcolor{\coeffectcolor}{ \textcolor{\coeffectcolor}{\gamma}_{{\mathrm{1}}} }\! \cdot \! \rho   \vdash_{\mathit{gen} }  \ottnt{M}  \Downarrow   \mathbf{clo}(   \textcolor{\coeffectcolor}{ \textcolor{\coeffectcolor}{\gamma}' }\! \cdot \! \rho'  ,   \lambda  \ottmv{x} ^{\textcolor{\coeffectcolor}{ \textcolor{\coeffectcolor}{q} } }. \ottnt{M'}   )   \mathop{\#} \textcolor{\effectcolor}{ \textcolor{\effectcolor}{\phi}_{{\mathrm{1}}} } }%
\ottpremise{  \textcolor{\coeffectcolor}{ \textcolor{\coeffectcolor}{\gamma}_{{\mathrm{2}}} }\! \cdot \! \rho   \vdash_{\mathit{full} }  \ottnt{V}  \Downarrow  \ottnt{W} }%
\ottpremise{   \textcolor{\coeffectcolor}{ \textcolor{\coeffectcolor}{\gamma}' }\! \cdot \! \rho'    \mathop{,} \;   \ottmv{x}   \mapsto ^{\textcolor{\coeffectcolor}{ \textcolor{\coeffectcolor}{q} } }  \ottnt{W}    \vdash_{\mathit{gen} }  \ottnt{M'}  \Downarrow  \ottnt{T}  \mathop{\#} \textcolor{\effectcolor}{ \textcolor{\effectcolor}{\phi}_{{\mathrm{2}}} } }%
\ottpremise{ \textcolor{\coeffectcolor}{ \textcolor{\coeffectcolor}{\gamma} } \equiv \textcolor{\coeffectcolor}{  \textcolor{\coeffectcolor}{ \textcolor{\coeffectcolor}{\gamma}_{{\mathrm{1}}} \ottsym{+}  \textcolor{\coeffectcolor}{ \textcolor{\coeffectcolor}{q} \cdot \textcolor{\coeffectcolor}{\gamma}_{{\mathrm{2}}} }  }  } }%
}{
  \textcolor{\coeffectcolor}{ \textcolor{\coeffectcolor}{\gamma} }\! \cdot \! \rho   \vdash_{\mathit{gen} }  \ottnt{M} \, \ottnt{V}  \Downarrow  \ottnt{T}  \mathop{\#} \textcolor{\effectcolor}{  \textcolor{\effectcolor}{ \textcolor{\effectcolor}{\phi}_{{\mathrm{1}}}  \cdot  \textcolor{\effectcolor}{\phi}_{{\mathrm{2}}} }  } }{%
{\ottdrulename{eval\_gfull\_comp\_app}}{}%
}}

\newcommand{\ottdruleevalXXgfullXXcompXXsplit}[1]{\ottdrule[#1]{%
\ottpremise{  \textcolor{\coeffectcolor}{ \textcolor{\coeffectcolor}{\gamma}_{{\mathrm{1}}} }\! \cdot \! \rho   \vdash_{\mathit{full} }  \ottnt{V}  \Downarrow  \ottsym{(}  \ottnt{W_{{\mathrm{1}}}}  \ottsym{,}  \ottnt{W_{{\mathrm{2}}}}  \ottsym{)} }%
\ottpremise{     \textcolor{\coeffectcolor}{ \textcolor{\coeffectcolor}{\gamma}_{{\mathrm{2}}} }\! \cdot \! \rho    \mathop{,} \;   \ottmv{x_{{\mathrm{1}}}}   \mapsto ^{\textcolor{\coeffectcolor}{ \textcolor{\coeffectcolor}{q} } }  \ottnt{W_{{\mathrm{1}}}}      \mathop{,} \;   \ottmv{x_{{\mathrm{2}}}}   \mapsto ^{\textcolor{\coeffectcolor}{ \textcolor{\coeffectcolor}{q} } }  \ottnt{W_{{\mathrm{2}}}}    \vdash_{\mathit{gen} }  \ottnt{N}  \Downarrow  \ottnt{T}  \mathop{\#} \textcolor{\effectcolor}{ \textcolor{\effectcolor}{\phi} } }%
\ottpremise{ \textcolor{\coeffectcolor}{ \textcolor{\coeffectcolor}{\gamma} } \equiv \textcolor{\coeffectcolor}{  \textcolor{\coeffectcolor}{  \textcolor{\coeffectcolor}{ \textcolor{\coeffectcolor}{q} \cdot \textcolor{\coeffectcolor}{\gamma}_{{\mathrm{1}}} }  \ottsym{+} \textcolor{\coeffectcolor}{\gamma}_{{\mathrm{2}}} }  } }%
}{
  \textcolor{\coeffectcolor}{ \textcolor{\coeffectcolor}{\gamma} }\! \cdot \! \rho   \vdash_{\mathit{gen} }   \ottkw{case}_{\textcolor{\coeffectcolor}{ \textcolor{\coeffectcolor}{q} } } \;  \ottnt{V} \; \ottkw{of}\;( \ottmv{x_{{\mathrm{1}}}} , \ottmv{x_{{\mathrm{2}}}} )\; \rightarrow\;  \ottnt{N}   \Downarrow  \ottnt{T}  \mathop{\#} \textcolor{\effectcolor}{ \textcolor{\effectcolor}{\phi} } }{%
{\ottdrulename{eval\_gfull\_comp\_split}}{}%
}}

\newcommand{\ottdruleevalXXgfullXXcompXXret}[1]{\ottdrule[#1]{%
\ottpremise{  \textcolor{\coeffectcolor}{ \textcolor{\coeffectcolor}{\gamma}' }\! \cdot \! \rho   \vdash_{\mathit{full} }  \ottnt{V}  \Downarrow  \ottnt{W} }%
\ottpremise{ \textcolor{\coeffectcolor}{ \textcolor{\coeffectcolor}{\gamma} } \equiv \textcolor{\coeffectcolor}{  \textcolor{\coeffectcolor}{ \textcolor{\coeffectcolor}{q} \cdot \textcolor{\coeffectcolor}{\gamma}' }  } }%
}{
  \textcolor{\coeffectcolor}{ \textcolor{\coeffectcolor}{\gamma} }\! \cdot \! \rho   \vdash_{\mathit{gen} }   \ottkw{return} _{\textcolor{\coeffectcolor}{ \textcolor{\coeffectcolor}{q} } }\;  \ottnt{V}   \Downarrow   \ottkw{return} _{\textcolor{\coeffectcolor}{ \textcolor{\coeffectcolor}{q} } }  \ottnt{W}   \mathop{\#} \textcolor{\effectcolor}{  \textcolor{\effectcolor}{\varepsilon}  } }{%
{\ottdrulename{eval\_gfull\_comp\_ret}}{}%
}}

\newcommand{\ottdruleevalXXgfullXXcompXXletin}[1]{\ottdrule[#1]{%
\ottpremise{ \textcolor{\coeffectcolor}{ \textcolor{\coeffectcolor}{q}'_{{\mathrm{2}}} }\; = \; \textcolor{\coeffectcolor}{  \textcolor{\coeffectcolor}{q}_{{\mathrm{2}}} \ \|\ \textcolor{\coeffectcolor}{1}  } }%
\ottpremise{  \textcolor{\coeffectcolor}{ \textcolor{\coeffectcolor}{\gamma}_{{\mathrm{1}}} }\! \cdot \! \rho   \vdash_{\mathit{gen} }  \ottnt{M}  \Downarrow   \ottkw{return} _{\textcolor{\coeffectcolor}{ \textcolor{\coeffectcolor}{q}_{{\mathrm{1}}} } }  \ottnt{W}   \mathop{\#} \textcolor{\effectcolor}{ \textcolor{\effectcolor}{\phi}_{{\mathrm{1}}} } }%
\ottpremise{   \textcolor{\coeffectcolor}{ \textcolor{\coeffectcolor}{\gamma}_{{\mathrm{2}}} }\! \cdot \! \rho    \mathop{,} \;   \ottmv{x}   \mapsto ^{\textcolor{\coeffectcolor}{  \textcolor{\coeffectcolor}{ \textcolor{\coeffectcolor}{q}_{{\mathrm{1}}}   \cdot   \textcolor{\coeffectcolor}{q}'_{{\mathrm{2}}} }  } }  \ottnt{W}    \vdash_{\mathit{gen} }  \ottnt{N}  \Downarrow  \ottnt{T}  \mathop{\#} \textcolor{\effectcolor}{ \textcolor{\effectcolor}{\phi}_{{\mathrm{2}}} } }%
\ottpremise{ \textcolor{\coeffectcolor}{ \textcolor{\coeffectcolor}{\gamma} } \equiv \textcolor{\coeffectcolor}{  \textcolor{\coeffectcolor}{  \textcolor{\coeffectcolor}{ \textcolor{\coeffectcolor}{q}'_{{\mathrm{2}}} \cdot \textcolor{\coeffectcolor}{\gamma}_{{\mathrm{1}}} }  \ottsym{+} \textcolor{\coeffectcolor}{\gamma}_{{\mathrm{2}}} }  } }%
}{
  \textcolor{\coeffectcolor}{ \textcolor{\coeffectcolor}{\gamma} }\! \cdot \! \rho   \vdash_{\mathit{gen} }   \ottmv{x}  \leftarrow^{\textcolor{\coeffectcolor}{ \textcolor{\coeffectcolor}{q}_{{\mathrm{2}}} } }  \ottnt{M} \ \ottkw{in}\  \ottnt{N}   \Downarrow  \ottnt{T}  \mathop{\#} \textcolor{\effectcolor}{  \textcolor{\effectcolor}{ \textcolor{\effectcolor}{\phi}_{{\mathrm{1}}}  \cdot  \textcolor{\effectcolor}{\phi}_{{\mathrm{2}}} }  } }{%
{\ottdrulename{eval\_gfull\_comp\_letin}}{}%
}}

\newcommand{\ottdruleevalXXgfullXXcompXXforce}[1]{\ottdrule[#1]{%
\ottpremise{  \textcolor{\coeffectcolor}{ \textcolor{\coeffectcolor}{\gamma} }\! \cdot \! \rho   \vdash_{\mathit{full} }  \ottnt{V}  \Downarrow   \mathbf{clo}( \textcolor{\coeffectcolor}{\gamma}' ,  \rho' ,  \ottnt{M} )  }%
\ottpremise{  \textcolor{\coeffectcolor}{ \textcolor{\coeffectcolor}{\gamma}' }\! \cdot \! \rho'   \vdash_{\mathit{gen} }  \ottnt{M}  \Downarrow  \ottnt{T}  \mathop{\#} \textcolor{\effectcolor}{ \textcolor{\effectcolor}{\phi} } }%
}{
  \textcolor{\coeffectcolor}{ \textcolor{\coeffectcolor}{\gamma} }\! \cdot \! \rho   \vdash_{\mathit{gen} }  \ottnt{V}  \ottsym{!}  \Downarrow  \ottnt{T}  \mathop{\#} \textcolor{\effectcolor}{ \textcolor{\effectcolor}{\phi} } }{%
{\ottdrulename{eval\_gfull\_comp\_force}}{}%
}}

\newcommand{\ottdruleevalXXgfullXXcompXXfst}[1]{\ottdrule[#1]{%
\ottpremise{  \textcolor{\coeffectcolor}{ \textcolor{\coeffectcolor}{\gamma} }\! \cdot \! \rho   \vdash_{\mathit{gen} }  \ottnt{M}  \Downarrow   \mathbf{clo}(   \textcolor{\coeffectcolor}{ \textcolor{\coeffectcolor}{\gamma}' }\! \cdot \! \rho'  ,   \langle  \ottnt{M_{{\mathrm{1}}}} , \ottnt{M_{{\mathrm{2}}}}  \rangle   )   \mathop{\#} \textcolor{\effectcolor}{ \textcolor{\effectcolor}{\phi}_{{\mathrm{1}}} } }%
\ottpremise{  \textcolor{\coeffectcolor}{ \textcolor{\coeffectcolor}{\gamma}' }\! \cdot \! \rho'   \vdash_{\mathit{gen} }  \ottnt{M_{{\mathrm{1}}}}  \Downarrow  \ottnt{T}  \mathop{\#} \textcolor{\effectcolor}{ \textcolor{\effectcolor}{\phi}_{{\mathrm{2}}} } }%
}{
  \textcolor{\coeffectcolor}{ \textcolor{\coeffectcolor}{\gamma} }\! \cdot \! \rho   \vdash_{\mathit{gen} }   \ottnt{M}  . 1   \Downarrow  \ottnt{T}  \mathop{\#} \textcolor{\effectcolor}{  \textcolor{\effectcolor}{ \textcolor{\effectcolor}{\phi}_{{\mathrm{1}}}  \cdot  \textcolor{\effectcolor}{\phi}_{{\mathrm{2}}} }  } }{%
{\ottdrulename{eval\_gfull\_comp\_fst}}{}%
}}

\newcommand{\ottdruleevalXXgfullXXcompXXsnd}[1]{\ottdrule[#1]{%
\ottpremise{  \textcolor{\coeffectcolor}{ \textcolor{\coeffectcolor}{\gamma} }\! \cdot \! \rho   \vdash_{\mathit{gen} }  \ottnt{M}  \Downarrow   \mathbf{clo}(   \textcolor{\coeffectcolor}{ \textcolor{\coeffectcolor}{\gamma}' }\! \cdot \! \rho'  ,   \langle  \ottnt{M_{{\mathrm{1}}}} , \ottnt{M_{{\mathrm{2}}}}  \rangle   )   \mathop{\#} \textcolor{\effectcolor}{ \textcolor{\effectcolor}{\phi}_{{\mathrm{1}}} } }%
\ottpremise{  \textcolor{\coeffectcolor}{ \textcolor{\coeffectcolor}{\gamma}' }\! \cdot \! \rho'   \vdash_{\mathit{gen} }  \ottnt{M_{{\mathrm{2}}}}  \Downarrow  \ottnt{T}  \mathop{\#} \textcolor{\effectcolor}{ \textcolor{\effectcolor}{\phi}_{{\mathrm{2}}} } }%
}{
  \textcolor{\coeffectcolor}{ \textcolor{\coeffectcolor}{\gamma} }\! \cdot \! \rho   \vdash_{\mathit{gen} }   \ottnt{M}  . 2   \Downarrow  \ottnt{T}  \mathop{\#} \textcolor{\effectcolor}{  \textcolor{\effectcolor}{ \textcolor{\effectcolor}{\phi}_{{\mathrm{1}}}  \cdot  \textcolor{\effectcolor}{\phi}_{{\mathrm{2}}} }  } }{%
{\ottdrulename{eval\_gfull\_comp\_snd}}{}%
}}

\newcommand{\ottdruleevalXXgfullXXcompXXseq}[1]{\ottdrule[#1]{%
\ottpremise{  \textcolor{\coeffectcolor}{ \textcolor{\coeffectcolor}{\gamma}_{{\mathrm{1}}} }\! \cdot \! \rho   \vdash_{\mathit{full} }  \ottnt{V}  \Downarrow  \ottsym{()} }%
\ottpremise{  \textcolor{\coeffectcolor}{ \textcolor{\coeffectcolor}{\gamma}_{{\mathrm{2}}} }\! \cdot \! \rho   \vdash_{\mathit{gen} }  \ottnt{N}  \Downarrow  \ottnt{T}  \mathop{\#} \textcolor{\effectcolor}{ \textcolor{\effectcolor}{\phi} } }%
\ottpremise{ \textcolor{\coeffectcolor}{ \textcolor{\coeffectcolor}{\gamma} } \equiv \textcolor{\coeffectcolor}{  \textcolor{\coeffectcolor}{ \textcolor{\coeffectcolor}{\gamma}_{{\mathrm{1}}} \ottsym{+} \textcolor{\coeffectcolor}{\gamma}_{{\mathrm{2}}} }  } }%
}{
  \textcolor{\coeffectcolor}{ \textcolor{\coeffectcolor}{\gamma} }\! \cdot \! \rho   \vdash_{\mathit{gen} }   \ottnt{V}  ;  \ottnt{N}   \Downarrow  \ottnt{T}  \mathop{\#} \textcolor{\effectcolor}{ \textcolor{\effectcolor}{\phi} } }{%
{\ottdrulename{eval\_gfull\_comp\_seq}}{}%
}}

\newcommand{\ottdruleevalXXgfullXXcompXXcasel}[1]{\ottdrule[#1]{%
\ottpremise{  \textcolor{\coeffectcolor}{ \textcolor{\coeffectcolor}{\gamma}_{{\mathrm{1}}} }\! \cdot \! \rho   \vdash_{\mathit{full} }  \ottnt{V}  \Downarrow  \ottkw{inl} \, \ottnt{W} }%
\ottpremise{   \textcolor{\coeffectcolor}{ \textcolor{\coeffectcolor}{\gamma}_{{\mathrm{2}}} }\! \cdot \! \rho    \mathop{,} \;   \ottmv{x_{{\mathrm{1}}}}   \mapsto ^{\textcolor{\coeffectcolor}{ \textcolor{\coeffectcolor}{q} } }  \ottnt{W}    \vdash_{\mathit{gen} }  \ottnt{M_{{\mathrm{1}}}}  \Downarrow  \ottnt{T}  \mathop{\#} \textcolor{\effectcolor}{ \textcolor{\effectcolor}{\phi} } }%
\ottpremise{ \textcolor{\coeffectcolor}{ \textcolor{\coeffectcolor}{\gamma} } \equiv \textcolor{\coeffectcolor}{  \textcolor{\coeffectcolor}{  \textcolor{\coeffectcolor}{ \textcolor{\coeffectcolor}{q} \cdot \textcolor{\coeffectcolor}{\gamma}_{{\mathrm{1}}} }  \ottsym{+} \textcolor{\coeffectcolor}{\gamma}_{{\mathrm{2}}} }  } }%
\ottpremise{ \textcolor{\coeffectcolor}{ \textcolor{\coeffectcolor}{q} }\;  \textcolor{\coeffectcolor}{\mathop{\leq_{\mathit{co} } } } \; \textcolor{\coeffectcolor}{  \textcolor{\coeffectcolor}{1}  } }%
}{
  \textcolor{\coeffectcolor}{ \textcolor{\coeffectcolor}{\gamma} }\! \cdot \! \rho   \vdash_{\mathit{gen} }   \ottkw{case}_{\textcolor{\coeffectcolor}{ \textcolor{\coeffectcolor}{q} } }\;  \ottnt{V} \; \ottkw{of}\; \ottkw{inl} \; \ottmv{x_{{\mathrm{1}}}}  \rightarrow\;  \ottnt{M_{{\mathrm{1}}}}  ;  \ottkw{inr} \; \ottmv{x_{{\mathrm{2}}}}  \rightarrow\;  \ottnt{M_{{\mathrm{2}}}}   \Downarrow  \ottnt{T}  \mathop{\#} \textcolor{\effectcolor}{ \textcolor{\effectcolor}{\phi} } }{%
{\ottdrulename{eval\_gfull\_comp\_casel}}{}%
}}

\newcommand{\ottdruleevalXXgfullXXcompXXcaser}[1]{\ottdrule[#1]{%
\ottpremise{  \textcolor{\coeffectcolor}{ \textcolor{\coeffectcolor}{\gamma}_{{\mathrm{1}}} }\! \cdot \! \rho   \vdash_{\mathit{full} }  \ottnt{V}  \Downarrow  \ottkw{inr} \, \ottnt{W} }%
\ottpremise{   \textcolor{\coeffectcolor}{ \textcolor{\coeffectcolor}{\gamma}_{{\mathrm{2}}} }\! \cdot \! \rho    \mathop{,} \;   \ottmv{x_{{\mathrm{2}}}}   \mapsto ^{\textcolor{\coeffectcolor}{ \textcolor{\coeffectcolor}{q} } }  \ottnt{W}    \vdash_{\mathit{gen} }  \ottnt{M_{{\mathrm{2}}}}  \Downarrow  \ottnt{T}  \mathop{\#} \textcolor{\effectcolor}{ \textcolor{\effectcolor}{\phi} } }%
\ottpremise{ \textcolor{\coeffectcolor}{ \textcolor{\coeffectcolor}{\gamma} } \equiv \textcolor{\coeffectcolor}{  \textcolor{\coeffectcolor}{  \textcolor{\coeffectcolor}{ \textcolor{\coeffectcolor}{q} \cdot \textcolor{\coeffectcolor}{\gamma}_{{\mathrm{1}}} }  \ottsym{+} \textcolor{\coeffectcolor}{\gamma}_{{\mathrm{2}}} }  } }%
\ottpremise{ \textcolor{\coeffectcolor}{ \textcolor{\coeffectcolor}{q} }\;  \textcolor{\coeffectcolor}{\mathop{\leq_{\mathit{co} } } } \; \textcolor{\coeffectcolor}{  \textcolor{\coeffectcolor}{1}  } }%
}{
  \textcolor{\coeffectcolor}{ \textcolor{\coeffectcolor}{\gamma} }\! \cdot \! \rho   \vdash_{\mathit{gen} }   \ottkw{case}_{\textcolor{\coeffectcolor}{ \textcolor{\coeffectcolor}{q} } }\;  \ottnt{V} \; \ottkw{of}\; \ottkw{inl} \; \ottmv{x_{{\mathrm{1}}}}  \rightarrow\;  \ottnt{M_{{\mathrm{1}}}}  ;  \ottkw{inr} \; \ottmv{x_{{\mathrm{2}}}}  \rightarrow\;  \ottnt{M_{{\mathrm{2}}}}   \Downarrow  \ottnt{T}  \mathop{\#} \textcolor{\effectcolor}{ \textcolor{\effectcolor}{\phi} } }{%
{\ottdrulename{eval\_gfull\_comp\_caser}}{}%
}}

\newcommand{\ottdruleevalXXgfullXXcompXXtick}[1]{\ottdrule[#1]{%
}{
  \textcolor{\coeffectcolor}{  \textcolor{\coeffectcolor}{\overline{0} }  }\! \cdot \! \rho   \vdash_{\mathit{gen} }   \textcolor{\effectcolor}{ \ottkw{tick} }   \Downarrow   \ottkw{return} _{\textcolor{\coeffectcolor}{  \textcolor{\coeffectcolor}{1}  } }  \ottsym{()}   \mathop{\#} \textcolor{\effectcolor}{  \textcolor{\effectcolor}{\ottkw{Tick} }  } }{%
{\ottdrulename{eval\_gfull\_comp\_tick}}{}%
}}

\newcommand{\ottdruleevalXXgfullXXcompXXcsub}[1]{\ottdrule[#1]{%
\ottpremise{  \textcolor{\coeffectcolor}{ \textcolor{\coeffectcolor}{\gamma}' }\! \cdot \! \rho   \vdash_{\mathit{gen} }  \ottnt{M}  \Downarrow  \ottnt{T}  \mathop{\#} \textcolor{\effectcolor}{ \textcolor{\effectcolor}{\phi} } }%
\ottpremise{ \textcolor{\coeffectcolor}{ \textcolor{\coeffectcolor}{\gamma} }\; \textcolor{\coeffectcolor}{\mathop{\leq_{\mathit{co} } } } \; \textcolor{\coeffectcolor}{ \textcolor{\coeffectcolor}{\gamma}' } }%
}{
  \textcolor{\coeffectcolor}{ \textcolor{\coeffectcolor}{\gamma} }\! \cdot \! \rho   \vdash_{\mathit{gen} }  \ottnt{M}  \Downarrow  \ottnt{T}  \mathop{\#} \textcolor{\effectcolor}{ \textcolor{\effectcolor}{\phi} } }{%
{\ottdrulename{eval\_gfull\_comp\_csub}}{}%
}}

\newcommand{\ottdruleevalXXgfullXXcompXXletinXXzero}[1]{\ottdrule[#1]{%
\ottpremise{  \textcolor{\coeffectcolor}{ \textcolor{\coeffectcolor}{\gamma}_{{\mathrm{1}}} }\! \cdot \! \rho   \vdash_{\mathit{gen} }  \ottnt{M}  \Downarrow   \ottkw{return} _{\textcolor{\coeffectcolor}{ \textcolor{\coeffectcolor}{q}_{{\mathrm{1}}} } }  \ottnt{W}   \mathop{\#} \textcolor{\effectcolor}{ \textcolor{\effectcolor}{\phi}_{{\mathrm{1}}} } }%
\ottpremise{   \textcolor{\coeffectcolor}{ \textcolor{\coeffectcolor}{\gamma}_{{\mathrm{2}}} }\! \cdot \! \rho    \mathop{,} \;   \ottmv{x}   \mapsto ^{\textcolor{\coeffectcolor}{ \textcolor{\coeffectcolor}{q}_{{\mathrm{1}}} } }  \ottnt{W}    \vdash_{\mathit{gen} }  \ottnt{N}  \Downarrow  \ottnt{T}  \mathop{\#} \textcolor{\effectcolor}{ \textcolor{\effectcolor}{\phi}_{{\mathrm{2}}} } }%
}{
  \textcolor{\coeffectcolor}{  \textcolor{\coeffectcolor}{ \textcolor{\coeffectcolor}{\gamma}_{{\mathrm{1}}} \ottsym{+} \textcolor{\coeffectcolor}{\gamma}_{{\mathrm{2}}} }  }\! \cdot \! \rho   \vdash_{\mathit{gen} }   \ottmv{x}  \leftarrow^{\textcolor{\coeffectcolor}{0} }_{\textcolor{\effectcolor}{\varepsilon} }  \ottnt{M} \ \ottkw{in}\  \ottnt{N}   \Downarrow  \ottnt{T}  \mathop{\#} \textcolor{\effectcolor}{  \textcolor{\effectcolor}{ \textcolor{\effectcolor}{\phi}_{{\mathrm{1}}}  \cdot  \textcolor{\effectcolor}{\phi}_{{\mathrm{2}}} }  } }{%
{\ottdrulename{eval\_gfull\_comp\_letin\_zero}}{}%
}}

\newcommand{\ottdruleevalXXgfullXXcompXXsequence}[1]{\ottdrule[#1]{%
\ottpremise{  \textcolor{\coeffectcolor}{ \textcolor{\coeffectcolor}{\gamma}_{{\mathrm{1}}} }\! \cdot \! \rho   \vdash_{\mathit{full} }  \ottnt{V}  \Downarrow  \ottsym{()} }%
\ottpremise{  \textcolor{\coeffectcolor}{ \textcolor{\coeffectcolor}{\gamma}_{{\mathrm{2}}} }\! \cdot \! \rho   \vdash_{\mathit{gen} }  \ottnt{N}  \Downarrow  \ottnt{T}  \mathop{\#} \textcolor{\effectcolor}{ \textcolor{\effectcolor}{\phi} } }%
\ottpremise{ \textcolor{\coeffectcolor}{ \textcolor{\coeffectcolor}{\gamma} } \equiv \textcolor{\coeffectcolor}{  \textcolor{\coeffectcolor}{ \textcolor{\coeffectcolor}{\gamma}_{{\mathrm{1}}} \ottsym{+} \textcolor{\coeffectcolor}{\gamma}_{{\mathrm{2}}} }  } }%
}{
  \textcolor{\coeffectcolor}{ \textcolor{\coeffectcolor}{\gamma} }\! \cdot \! \rho   \vdash_{\mathit{gen} }   \ottnt{V}  ;  \ottnt{N}   \Downarrow  \ottnt{T}  \mathop{\#} \textcolor{\effectcolor}{ \textcolor{\effectcolor}{\phi} } }{%
{\ottdrulename{eval\_gfull\_comp\_sequence}}{}%
}}

\newcommand{\ottdefnevalXXgfullXXcomp}[1]{\begin{ottdefnblock}[#1]{$ \mu  \vdash_{\mathit{gen} }  \ottnt{M}  \Downarrow  \ottnt{T}  \mathop{\#} \textcolor{\effectcolor}{ \textcolor{\effectcolor}{\phi} } $}{\ottcom{environment-based effect-coeffect semantics for CBPV (large-step)}}
\ottusedrule{\ottdruleevalXXgfullXXcompXXabs{}}
\ottusedrule{\ottdruleevalXXgfullXXcompXXcpair{}}
\ottusedrule{\ottdruleevalXXgfullXXcompXXapp{}}
\ottusedrule{\ottdruleevalXXgfullXXcompXXsplit{}}
\ottusedrule{\ottdruleevalXXgfullXXcompXXret{}}
\ottusedrule{\ottdruleevalXXgfullXXcompXXletin{}}
\ottusedrule{\ottdruleevalXXgfullXXcompXXforce{}}
\ottusedrule{\ottdruleevalXXgfullXXcompXXfst{}}
\ottusedrule{\ottdruleevalXXgfullXXcompXXsnd{}}
\ottusedrule{\ottdruleevalXXgfullXXcompXXseq{}}
\ottusedrule{\ottdruleevalXXgfullXXcompXXcasel{}}
\ottusedrule{\ottdruleevalXXgfullXXcompXXcaser{}}
\ottusedrule{\ottdruleevalXXgfullXXcompXXtick{}}
\ottusedrule{\ottdruleevalXXgfullXXcompXXcsub{}}
\ottusedrule{\ottdruleevalXXgfullXXcompXXletinXXzero{}}
\ottusedrule{\ottdruleevalXXgfullXXcompXXsequence{}}
\end{ottdefnblock}}

\newcommand{\ottdefnsJGFullOp}{
\ottdefnevalXXgfullXXcomp{}}

\newcommand{\ottdefnss}{
\ottdefnsJSmallStepCBV
\ottdefnsJSmallStepCBPV
\ottdefnsJEnv
\ottdefnsJEffEnv
\ottdefnsJInstrEnv
\ottdefnsJSTLC
\ottdefnsJCBPV
\ottdefnsJEff
\ottdefnsJMonEff
\ottdefnsJContextOps
\ottdefnsJCoeff
\ottdefnsJLin
\ottdefnsJCBVFull
\ottdefnsJCBNFull
\ottdefnsJLambdaCBNCoeff
\ottdefnsJLambdaCBVCoeff
\ottdefnsJComLin
\ottdefnsJEffCoeffCBPV
\ottdefnsJFullOp
\ottdefnsJGFullOp
}

\newcommand{\ottall}{\ottmetavars\\[0pt]
\ottgrammar\\[5.0mm]
\ottdefnss}

  \renewottcommands[ott]


\begin{CCSXML}
<ccs2012>
<concept>
<concept_id>10003752.10003790.10011740</concept_id>
<concept_desc>Theory of computation~Type theory</concept_desc>
<concept_significance>500</concept_significance>
</concept>
</ccs2012>
\end{CCSXML}

\ccsdesc[500]{Theory of computation~Type theory}

\keywords{Types, CBPV, Effects, Coeffects}
\copyrightyear{2024}

\begin{abstract}
  Effect and coeffect tracking integrate many types of compile-time
  analysis, such as cost, liveness, or dataflow, directly into a language's type
  system. In this paper, we investigate the addition of effect and coeffect
  tracking to the type system of call-by-push-value (CBPV), a computational
  model useful in compilation for its isolation of effects and for its ability
  to cleanly express both call-by-name and call-by-value computations.  Our
  main result is \emph{effect-and-coeffect soundness}, which asserts that the
  type system accurately bounds the effects that the program may trigger
  during execution and accurately tracks the demands that the program may make
  on its environment. This result holds for two different dynamic semantics: a
  generic one that can be adapted for different coeffects and one that is
  adapted for reasoning about resource usage. In particular, the second 
  semantics discards the evaluation of unused values and pure computations
  while ensuring that effectful computations are always evaluated, even if 
  their results are not required. Our results have been mechanized using 
  the Coq proof assistant.
\end{abstract}

\maketitle

\bibliographystyle{ACM-Reference-Format}
\citestyle{acmauthoryear}   

\ifextended
\textbf{This paper is an extended version of \citet{torczon:cbpv}. }
\fi

\section{Introduction}
\label{sec:introduction}

Computations interact with the world in which they run. Sometimes the
computation does something the world can observe, known as
an \emph{effect}~\cite{lucassen-gifford}, and sometimes computations demand
something that the world must provide, known as a
\emph{coeffect}~\cite{petricek:2014,Brunel:2014,granule-project}.
For example, running a computation might take time (a clock ticking is an
effect) and might require resources (using input parameters is a coeffect).

Some programming languages track effects and coeffects
statically.  Frank~\cite{mcbride:frank},
Koka~\cite{leijen2014koka}, and the Verse functional logic
language~\cite{verse-manual} do this for effects such as state, exceptions,
divergence, and failure; Linear Haskell~\cite{linear-haskell} does this for a
resource management coeffect, while Agda and Idris 2~\cite{brady:idris2} do
this for a relevancy coeffect. The Effekt
language~\cite{brachthauser2022effects} both tracks effects statically and
uses a limited form of coeffect tracking to ensure that effect handlers are
well-scoped. Finally, the Granule language~\cite{orchard:2019} uses monads and
comonads graded by abstract structures to track various effects and
coeffects in a flexible and expressive system.

We would like to update the type systems of existing languages with effect and
coeffect tracking by annotating their existing type systems. However, in
contrast to systems that use monads and comonads to isolate effectful and
coeffectful code from the rest of the language, we need an approach that is
descriptive and that does not restrict programmers in how they structure their
code. 

Because effectful computation depends on evaluation order, precisely tracking
effects works best in a language that makes its ``ambient monad'' explicit,
such as Moggi's computational lambda calculus~\cite{moggi:computational} and
fine-grained CBV~\cite{levy:fine_grained}. These systems separate inert
``values'' from executable ``computations'' and include ``return'' and
``let'' constructs to sequence evaluation. This ``ambient monad'' is part of the 
structure of the language itself; all computations are monadic. 

Levy's Call-By-Push-Value (CBPV)~\cite{levy:call-by-push-value} is a calculus
that makes both the ambient computational monad \emph{and comonad}
explicit. As above, it separates values from computations and uses ``return''
and ``let'' constructs to track how computations manipulate values. However,
CBPV also includes thunks, which temporarily suspend computations and treat
them as values, for the opposite purpose; as a result all computations are
also comonadic. In CBPV, then, we can annotate these existing structures
directly to track effects and coeffects, instead of adding new features to the language.

CBPV is a low-level language and is appropriate for use as a compiler
intermediate
representation~\cite{garbuzov2018structural,rizkallah:cbpv}.
Its distinction between values and computations allows CBPV to work with
strict and nonstrict language features explicitly, enabling it to model both
call-by-value and call-by-name languages with the same facility.  Adding
effects and coeffects to CBPV would enrich this intermediate representation to
support program optimizations; for example, to justify dead code elimination
for pure code whose coeffect annotations mark it as unused.

The ability of CBPV to model both CBV and CBN also lets us observe how
evaluation order changes the way a program alters and makes demands on the
world.  Levy characterizes the difference between values and computations with
the slogan: ``a value \emph{is}, a computation \emph{does}.''~\cite{levy:call-by-push-value}  Our
interpretation of this slogan is that only computations may contain effectful
subcomponents---values must be pure throughout. Conversely, coeffects describe
the demands a program makes on its inputs, which are always values in CBPV.

CBPV uses separate types for values and computations.  Values have
\emph{positive} types (for which we use the metavariable $\ottnt{A}$), while
computations have \emph{negative} types (for which we use $\ottnt{B}$).  These two
forms are connected via an adjunction: the thunk type $\ottkw{U} \, \ottnt{B}$ suspends a
computation as an inert value, and the type of return $\ottkw{F} \, \ottnt{A}$ creates a
fine-grained structure similar to monadic bind that threads values through
computations.  Due to the structure of the adjunction, the combination
$\ottkw{U} \, \ottsym{(}  \ottkw{F} \, \ottnt{A}  \ottsym{)}$ forms a monad and the combination $\ottkw{F} \, \ottsym{(}  \ottkw{U} \, \ottnt{B}  \ottsym{)}$ forms a comonad
\cite{levy2003adjunction}.

The duality between values and computations gives CBPV its power,
and it is reflected in the structures we use
to statically track effects and coeffects. For
effects, we add effect information $\textcolor{\effectcolor}{\phi}$ to the thunk type $ \ottkw{U}_{\color{\effectcolor}{ \textcolor{\effectcolor}{\phi} } }\;  \ottnt{B} $,
recording the latent effect of suspended computations. Similarly, to track
coeffects, we add coeffect information $\textcolor{\coeffectcolor}{q}$ to the returner type
$ \ottkw{F}_{\color{\coeffectcolor}{ \textcolor{\coeffectcolor}{q} } }\;  \ottnt{A} $, describing the demands subsequent
computation is allowed to make on the returned value.
With this augmentation, we will show that the types $ \ottkw{U}_{\color{\effectcolor}{ \textcolor{\effectcolor}{\phi} } }\;  \ottsym{(}  \ottkw{F} \, \ottnt{A}  \ottsym{)} $ and
$ \ottkw{F}_{\color{\coeffectcolor}{ \textcolor{\coeffectcolor}{q} } }\;  \ottsym{(}  \ottkw{U} \, \ottnt{B}  \ottsym{)} $ can encode the graded monads and comonads associated with
effect and coeffect tracking.

Following this duality, this paper begins with two mirrored halves and then
combines them. The first part (Section~\ref{sec:effects}) extends CBPV with
effect tracking and shows how we can recover the graded monad by grading the
thunk type with latent effects. The second part (Section~\ref{sec:coeffects})
extends CBPV with coeffect tracking and recovers a graded comonad by grading
the returner type with latent coeffects; we also discuss modifications to the
system for resource tracking with coeffects (Section~\ref{sec:resource-usage}).  Finally, we combine the two
systems and explore their interaction (Section~\ref{sec:full}). This paper is best read in color:
effects $\textcolor{\effectcolor}{\phi}$ appear in red and coeffects $\textcolor{\coeffectcolor}{q}$ in blue. Without these
colorful annotations, the type system and semantics are the standard rules of
CBPV.

Along the way, we prove the following results about our extensions.
\begin{itemize}
\item We prove \emph{effect soundness} for our effect-annotated extension of CBPV,
  demonstrating that the type-and-effect system accurately bounds what happens at runtime.
  To do so, we define an environment-based big-step operational semantics for
  CBPV instrumented to precisely track effects during evaluation, and we use a
  logical relation to prove our soundness theorem. (Section~\ref{sec:effect-soundness})
\item We prove that the standard translations from call-by-value (CBV) and
  call-by-name (CBN) lambda calculi to CBPV are \emph{type-and-effect
    preserving}. Starting with a well-typed CBV or (monadic) CBN program, we can produce a
  well-typed CBPV program with the same effects as the source program.
  (Section~\ref{sec:effect-translations})
\item We prove \emph{coeffect soundness} for a coeffect-annotated extension
  of CBPV, demonstrating that the type-and-coeffect system accurately tracks the demands
  a program may make on its inputs. We do so using an
  environment-based big-step operational semantics for CBPV, where
  the environment has been instrumented to track coeffects during evaluation.
  (Section~\ref{sec:coeffect-soundness})
\item We observe that our generic coeffect-tracking operational semantics behavior has 
  strange implications when reasoning about resource usage.
  Therefore, we adapt the rules of our
  semantics so that it does not demand resources for discarded values, providing
  a better model of how the program uses its inputs in this coeffect.
  (Section~\ref{sec:resource-usage})
\item We prove that the standard translations from both CBN and CBV to CBPV
  are \emph{type-and-coeffect} preserving for this updated coeffect system. Starting with a well-typed CBN or
  CBV program, we can produce a well-typed CBPV program with the same coeffects.
  (Section~\ref{sec:coeffect-translation})
\ifproducts
\item CBPV augmented with coeffects allows us to compare the duality between
  values and computations with the duality between shared and disjoint
  resources. We observe that these two notions do not need to align, and explore two new
  forms of products that are available in this context. (Section~\ref{sec:products})
\fi
\item We combine the `tick' effect and resource tracking coeffect together into the same
  CBPV type system and prove combined versions of the
  results from each: \emph{type-and-effect-and-coeffect soundness}
  and \emph{type-and-effect-and-coeffect preservation} of the standard translations
  from CBV and CBN. We extend this system with a new rule that does not demand 
  resources for unused \emph{computations}, when they are effect-free.
  Finally, we prove that our discarding semantics produces the same result and has the 
  same effects as our general semantics, justifying the soundness of our resource 
  accounting semantics. (Section~\ref{sec:full})
\end{itemize}

We are not the first to extend CBPV with effect tracking and our type system 
is most similar to \citet{kammar-plotkin} and \citet{forster:expressive-user-effects}. 
However, all other definitions and results of this paper are novel. In
particular, we have found little work that explores the interaction between
CBPV and coeffects.  Furthermore, while we are able use the standard
translations to interpret CBV and CBN in CBPV, designing the effect and
coeffect systems so that these translations ``just work'' is a contribution of
this paper. Our approach to effect-and-coeffect soundness also differs from
prior work---we employ a novel environment-based big-step semantics for CBPV
that leads to short and straightforward proofs.

For simplicity, the effect systems in this paper only track clock effects, and
the coeffect systems only count variable usages. As a result, we do not
explore more sophisticated interactions between other forms of effects and
coeffects, such as local and global state~\cite{nanevski:dynamic-binding}, or
between information flow and nondeterminism, or between usage analysis and
errors~\cite{Gaboardi:2016}.

The results of this paper have been formalized in Coq and are
available online\footnote{\pfloc} and archived on
Zenodo~\cite{zenodo:artifact}. \ifartifact\else This document includes hyperlinks
  that connect each definition and theorem to the appropriate source file in the
  mechanized proofs. \fi \ifextended\else For space, some parts
  of our mechanization have been elided from this paper, but full details are
  available in an extended version~\cite{extended-version}. \fi

\section{Call-by-push-value (CBPV) and effect tracking}
\label{sec:effects}

In this section, we extend the type system of CBPV with effect tracking. Our
modifications to the base system, which are limited to reasoning about
effect annotations $\textcolor{red}{\textcolor{\effectcolor}{\phi}}$, are marked in red.

CBPV syntactically separates terms into \emph{values} $\ottnt{V}$\!, inhabiting positive
types $\ottnt{A}$, and \emph{computations} $\ottnt{M}$, inhabiting negative types $\ottnt{B}$,
as shown by the following \link{effects/CBPV/syntax.v}{ValTy,CompTy,Val,Comp}{grammar.}

\[
\begin{array}{llcl}
\textit{value types}        & \ottnt{A} &::=&  \texttt{unit} \ |\  \ottkw{U}_{\color{\effectcolor}{ \textcolor{\effectcolor}{\phi} } }\;  \ottnt{B} \ |\  \ottnt{A_{{\mathrm{1}}}} \times \ottnt{A_{{\mathrm{2}}}} \ \ifextended|\ \ottnt{A_{{\mathrm{1}}}}  \ottsym{+}  \ottnt{A_{{\mathrm{2}}}} \fi
\\
\textit{computation types}  & \ottnt{B} &::=& \ottnt{A}  \to  \ottnt{B}\ |\ \ottkw{F} \, \ottnt{A}\ |\  \ottnt{B_{{\mathrm{1}}}}   \mathop{\&}   \ottnt{B_{{\mathrm{2}}}}  \\
\\
\textit{values}       & \ottnt{V} &::=& \ottmv{x}\ |\ \ottsym{\{}  \ottnt{M}  \ottsym{\}}\ \ottsym{()}\ |\ 
 |\ \ottsym{(}  \ottnt{V_{{\mathrm{1}}}}  \ottsym{,}  \ottnt{V_{{\mathrm{2}}}}  \ottsym{)}\  \ifextended |\ \ottkw{inl} \, \ottnt{V}\ |\ \ottkw{inr} \, \ottnt{V} \fi \\
\textit{computations} & \ottnt{M} &::=&
 \lambda  \ottmv{x} . \ottnt{M} \ |\ \ottnt{M} \, \ottnt{V}\ |\ \ottnt{V}  \ottsym{!}\ |\  \ottkw{let}\; ( \ottmv{x_{{\mathrm{1}}}} ,  \ottmv{x_{{\mathrm{2}}}} ) =  \ottnt{V} \; \ottkw{in}\;  \ottnt{N} \  \\
&&|&  \langle  \ottnt{M_{{\mathrm{1}}}} , \ottnt{M_{{\mathrm{2}}}}  \rangle \ |\  \ottnt{M}  . 1 \ |\  \ottnt{M}  . 2 \ |\ \ottkw{return} \, \ottnt{V}\ |\ \ottmv{x}  \leftarrow  \ottnt{M} \, \ottkw{in} \, \ottnt{N}\ |\  \textcolor{\effectcolor}{ \ottkw{tick} }   \\
\ifextended &&|&  \ottnt{V}  ;  \ottnt{M} \ |\ \ottkw{case} \, \ottnt{V} \, \ottkw{of} \, \ottkw{inl} \, \ottmv{x_{{\mathrm{1}}}}  \to  \ottnt{M_{{\mathrm{1}}}}  \mathsf{;} \, \ottkw{inr} \, \ottmv{x_{{\mathrm{2}}}}  \to  \ottnt{M_{{\mathrm{2}}}}\ \\ \fi
\end{array}
\]

Values in CBPV mostly correspond to the values found in a call-by-value
typed functional language, \ifextended 
such as unit and positive products and sums of values.
\else
such as unit and positive products of values.
\fi
CBPV values also include suspended computations, called
\emph{thunks} and written $\ottsym{\{}  \ottnt{M}  \ottsym{\}}$.
(Variables always represent values, so they are always declared with value
types in the context.)

Computations in CBPV include abstractions ($ \lambda  \ottmv{x} . \ottnt{M} $), applications
($\ottnt{M} \, \ottnt{V}$), elimination (\emph{forcing}) of thunks ($\ottnt{V}  \ottsym{!}$),
\ifextended
unit elimination ($ \ottnt{V}  ;  \ottnt{M} $), 
positive product elimination ($ \ottkw{let}\; ( \ottmv{x_{{\mathrm{1}}}} ,  \ottmv{x_{{\mathrm{2}}}} ) =  \ottnt{V} \; \ottkw{in}\;  \ottnt{N} $), 
and 
positive sum elimination ($\ottkw{case} \, \ottnt{V} \, \ottkw{of} \, \ottkw{inl} \, \ottmv{x_{{\mathrm{1}}}}  \to  \ottnt{M_{{\mathrm{1}}}}  \mathsf{;} \, \ottkw{inr} \, \ottmv{x_{{\mathrm{2}}}}  \to  \ottnt{M_{{\mathrm{2}}}}$).
\else
and positive product elimination ($ \ottkw{let}\; ( \ottmv{x_{{\mathrm{1}}}} ,  \ottmv{x_{{\mathrm{2}}}} ) =  \ottnt{V} \; \ottkw{in}\;  \ottnt{N} $).
\fi
In addition to positive products,
CBPV also includes negative products, of type $ \ottnt{B_{{\mathrm{1}}}}   \mathop{\&}   \ottnt{B_{{\mathrm{2}}}} $. These are
introduced with a pair of computations $ \langle  \ottnt{M_{{\mathrm{1}}}} , \ottnt{M_{{\mathrm{2}}}}  \rangle $ and eliminated by
projecting either the first or second component, \emph{i.e.} $ \ottnt{M}  . 1 $ or
$ \ottnt{M}  . 2 $.

Values can be threaded through computations. The $\ottkw{return} \, \ottnt{V}$ form
injects a value into a trivial computation. In the 
``letin'' construct, written $\ottmv{x}  \leftarrow  \ottnt{M} \, \ottkw{in} \, \ottnt{N}$,
the first subcomputation must evaluate to the form $\ottkw{return} \, \ottnt{V}$, and
the second computation can then reference $\ottnt{V}$.
An advantage of CBPV is that this bind-like
method of threading values through computations makes it readily extensible with effectful
language features. \citet{levy:call-by-push-value,
  levy2006,levy:siglog-tutorial} demonstrates how to add
nontermination, nondeterminism, errors, I/O, state, and control effects to CBPV.
In each case, Levy extends the language with new computations and modifies the
operational semantics to account for the new features.

For simplicity, we describe a single effect in this paper, the $ \textcolor{\effectcolor}{ \ottkw{tick} } $
computation.  This effect advances a virtual clock in the operational
semantics, simulating the cost of the program.

\subsection{CBPV: Type-and-effect System}

\begin{figure}
  \drules[eff]{$\Gamma  \vdash_{\mathit{eff} }  \ottnt{V}  \ottsym{:}  \ottnt{A}$}{value effect typing}{var,thunk,unit,pair}
\ifextended \[ \drule{eff-inl} \quad \drule{eff-inr} \] \fi
  \drules[eff]{$ \Gamma   \vdash_{\mathit{eff} }   \ottnt{M} \; :^{\textcolor{\effectcolor}{  \textcolor{\effectcolor}{\phi}  } }\;  \ottnt{B} $}{computation effect
    typing}{abs,app,force,ret,letin,split}

\[ \drule[width=3in]{eff-cpair} \drule{eff-fst}\ \drule{eff-snd} \]

\ifextended \[ \drule{eff-sequence} \qquad \drule[width=4in]{eff-case} \] \fi
\[ \drule{eff-tick} \qquad \drule[width=3in]{eff-sub} \]
\caption{CBPV typing and effect tracking}
\label{fig:cbpv-typing}
\label{fig:cbpv-effects}
\Description{The typing rules for CBPV augmented with effect tracking}
\end{figure}

Our \link{effects/CBPV/typing.v}{VWt,CWt}{type-and-effect system} for CBPV is
shown in Figure~\ref{fig:cbpv-effects}.  Under some typing context $\Gamma$,
this system assigns a value type to values ($\Gamma  \vdash_{\mathit{eff} }  \ottnt{V}  \ottsym{:}  \ottnt{A}$) and both a
computation type and effect to computations ($ \Gamma   \vdash_{\mathit{eff} }   \ottnt{M} \; :^{\textcolor{\effectcolor}{  \textcolor{\effectcolor}{\phi}  } }\;  \ottnt{B} $), where
$\textcolor{\effectcolor}{\phi}$ is an upper bound on the effects that could occur during the evaluation
of $\ottnt{M}$. The judgement for values does not need an effect annotation
because values are pure. In \rref{eff-thunk}, the thunk type $ \ottkw{U}_{\color{\effectcolor}{ \textcolor{\effectcolor}{\phi} } }\;  \ottnt{B} $
  records the latent effect of a suspended computation.

Following \citet{katsumata:2014}, our system models effects using an arbitrary
\emph{preordered monoid}. This gives us an identity element $\textcolor{red}{\varepsilon}$,
an associative combining operation $ \textcolor{\effectcolor}{ \textcolor{\effectcolor}{\phi}_{{\mathrm{1}}}  \cdot  \textcolor{\effectcolor}{\phi}_{{\mathrm{2}}} } $, and a preorder relation $ \textcolor{\effectcolor}{\mathop{\leq_{\mathit{eff} } } } $ that
respects the operation. We also include a primitive effect $ \textcolor{\effectcolor}{\ottkw{Tick} } $ produced by the $ \textcolor{\effectcolor}{ \ottkw{tick} } $
computation. However, the only parts of the system that are specific to this effect are the rules
for $ \textcolor{\effectcolor}{ \ottkw{tick} } $, which is our \emph{only} effectful computation. All other rules are presented
generically and are adaptable to other effects and effectful computations (e.g. a $ \textcolor{\effectcolor}{\ottkw{Read} } $ effect
produced by a $ \textcolor{\effectcolor}{ \ottkw{read} } $ computation).

Concretely, we could use the natural number monoid with the usual ordering, 0 as the
identity element $ \textcolor{\effectcolor}{\varepsilon} $, and addition as the combining operation to have our type system perform a cost
analysis. Using 1 as our model of the $ \textcolor{\effectcolor}{\ottkw{Tick} } $ effect, the system would statically bound the number
of $ \textcolor{\effectcolor}{ \ottkw{tick} } $s that are evaluated. For example, the type system would tell us that the computation
$ \langle   \textcolor{\effectcolor}{ \ottkw{tick} }  , \ottmv{y}  \leftarrow   \textcolor{\effectcolor}{ \ottkw{tick} }  \, \ottkw{in} \,  \textcolor{\effectcolor}{ \ottkw{tick} }   \rangle $ advances the clock at most twice. If the first component of
the pair is projected, the type system overapproximates the effect produced during execution.
Note that to track other behaviors with our type system, we need only change our preordered monoid
accordingly (e.g. we could track possible effects with the power set monoid ordered by set inclusion).

\Rref{eff-ret,eff-letin} motivate the choice of a monoid structure.
Returning a value has no effect, so the
effect of $\ottkw{return} \, \ottnt{V}$ should always be $\textcolor{red}{\varepsilon}$.
\Rref{eff-letin} must combine
effects because $\ottmv{x}  \leftarrow  \ottnt{M} \, \ottkw{in} \, \ottnt{N}$ is the only computation in our system
with two subcomputations,
both of which may be effectful. Finally, because return and letin
satisfy identity and associativity
properties as
the building blocks of the CBPV monad,
we need these same properties in our effect structure.

\Rref{eff-sub} allows for imprecision in the type system. That is, an effect
annotation $\textcolor{\effectcolor}{\phi}$ on the type of a program indicates that the program
will have \emph{at most} $\textcolor{\effectcolor}{\phi}$ as its effect; it may have less.  If the
type system determines that the computation will complete within 5 ticks, it
is also sound, but less precise, for it to say that it will complete within 7
ticks.  Choosing the discrete ordering (i.e. using equality for $ \textcolor{\effectcolor}{\mathop{\leq_{\mathit{eff} } } } $) forces the type system to
track effects precisely. Note that to allow the discrete ordering, we do not
assume $ \textcolor{\effectcolor}{  \textcolor{\effectcolor}{\varepsilon}  \;  \textcolor{\effectcolor}{\mathop{\leq_{\mathit{eff} } } } \;  \textcolor{\effectcolor}{\phi}  } $ from the effect structure. In other words, the 
type system does not need $ \textcolor{\effectcolor}{\varepsilon} $ to be the least effect, only an identity element for the combining operation.

This imprecision allows more programs to type check. In a program with
branching, different branches may have different effects.  For example, in
\rref{eff-cpair}, only one side of a computational pair will ever be
evaluated. However, for soundness, both computations must be typed with the same effect
(which may be an overapproximation due to subeffecting).

Unlike in effect systems for the $\lambda$-calculus, the latent effects of
function bodies are not recorded in function types. Instead, they are
propagated to the conclusion of \rref{eff-abs}. This makes sense because
abstractions are not values in CBPV. From an operational sense, they are
computations that pop the argument off the stack before continuing~\cite{levy:call-by-push-value}.

\subsection{Instrumented Operational Semantics and Effect Soundness}

\ifextended
\else
\begin{figure}
\drules[eval-val]{$ \rho  \vdash  \ottnt{V} \ \Downarrow\  \ottnt{W} $}{Value closing}{var,unit,thunk,vpair}
\caption{Operational semantics of CBPV with effect tracking}
\Description{Operational semantics of CBPV with effect tracking}
\label{fig:eval-val}
\end{figure}

\begin{figure}
\drules[eval-eff-comp]{$ \rho  \vdash_{\mathit{eff} }  \ottnt{M}  \Downarrow  \ottnt{T}  \mathop{\#} \textcolor{\effectcolor}{ \textcolor{\effectcolor}{\phi} } $}{Computation rules}{}
\[  \drule{eval-eff-comp-abs}\ \quad \drule[width=3in]{eval-eff-comp-app-abs}\  \]

\[  \drule[width=4in]{eval-eff-comp-force-thunk}\ \drule{eval-eff-comp-return}\   \]

\[  \drule[width=4in]{eval-eff-comp-letin-ret} \]

\[  \drule{eval-eff-comp-split}\ \quad \drule{eval-eff-comp-cpair} \]

\[  \drule{eval-eff-comp-fst}\  \quad \drule{eval-eff-comp-snd} \]

\caption{Operational semantics of CBPV with effect tracking}
\Description{Operational semantics of CBPV with effect tracking}
\label{fig:eff-big-step}
\end{figure}
\fi

We next define a big-step, \emph{environment-based} operational semantics for CBPV.
Here, an \link{effects/CBPV/semantics.v}{env}{environment,} $\rho$, is a mapping from variables to \link{effects/CBPV/semantics.v}{VClos}{\emph{closed values},}
$\ottnt{W}$, and can be thought of as a sequence of delayed substitutions.  Closed values
include closures, \emph{i.e.} suspended computations paired with closing
environments, as well as \ifextended unit, positive products and sums of
closed values.  \else unit and positive products. \fi

\[
\begin{array}{llcl}
\textit{environments}     & \rho & ::= & \emptyset\ |\  \rho   \mathop{,}   \ottmv{x}  \mapsto  \ottnt{W}  \\
\textit{closed values}    & \ottnt{W}   & ::= & \ottsym{()}\ |\  \mathbf{clo}( \rho , \{  \ottnt{M}  \} ) \ |\
 \ottsym{(}  \ottnt{W_{{\mathrm{1}}}}  \ottsym{,}  \ottnt{W_{{\mathrm{2}}}}  \ottsym{)}\ \ifextended |\ \ottkw{inl} \, \ottnt{W}\ |\ \ottkw{inr} \, \ottnt{W} \fi \\
\end{array}
\]

This semantics is new but straightforward. Past presentations of CBPV define its
operational behavior using small-step, big-step, or stack-based semantics, but
all the ones we have found use immediate substitution~\cite{levy:siglog-tutorial}.
We choose an
environment-based big-step semantics for two reasons. First, the big-step
structure corresponds closely to the structure of the type system; there is
only one rule of the operational semantics for each rule of the type
system. Together with the use of environments, this semantics eliminates the
need for substitution lemmas, leading to a remarkably straightforward
soundness proof (Section~\ref{sec:effect-soundness}).  Second, the environment
lets us track the demands that computations make on their inputs in our
coeffect soundness proof (Section~\ref{sec:coeffect-soundness}). For example,
with resource usage, we can include annotations in the environment
that count how many times
the program accesses each variable during computation, mirroring
the annotations in the context in the type system. A substitution-based
semantics does not support this instrumentation.

\ifextended Appendix~\ref{fig:eff-big-step} \else Figure~\ref{fig:eff-big-step} \fi shows the definition of the operational
semantics. This semantics consists of two relations. The \link{effects/CBPV/semantics.v}{EvalVal}{first relation,} written $ \rho  \vdash  \ottnt{V} \ \Downarrow\  \ottnt{W} $, uses the provided environment $\rho$ to ``evaluate'' a
value $\ottnt{V}$ to a closed value $\ottnt{W}$. This operation is essentially a
substitution operation in that it replaces each variable found in the value
with its definition in the environment.

The \link{effects/CBPV/semantics.v}{EvalClos}{second relation,} written $ \rho  \vdash_{\mathit{eff} }  \ottnt{M}  \Downarrow  \ottnt{T}  \mathop{\#} \textcolor{\effectcolor}{ \textcolor{\effectcolor}{\phi} } $, shows how computations
evaluate to \link{effects/CBPV/semantics.v}{CClos}{\emph{closed terminal computations},} $\ottnt{T}$. Closed terminals
are computations that cannot step any further, such as
returned (closed) values and suspended abstractions and pairs.
The effect annotation
$\textcolor{\effectcolor}{\phi}$ on this relation counts the number of ticks that occur during
evaluation of $\ottnt{M}$.
While suspended abstractions and pairs resemble closures, they are not
first class. Instead, they provide a convenient notation describing the
propagation of the environment during evaluation.

\[
\begin{array}{llcl}
\textit{closed terminals} & \ottnt{T}   & ::= &
   \ottkw{return} \, \ottnt{W}\ |\  \mathbf{clo}(  \rho ,   \lambda  \ottmv{x} . \ottnt{M}   ) \ |\  \mathbf{clo}(  \rho ,   \langle  \ottnt{M_{{\mathrm{1}}}} , \ottnt{M_{{\mathrm{2}}}}  \rangle   )    \\
\end{array}
\]

The operational semantics of the $ \textcolor{\effectcolor}{ \ottkw{tick} } $ computation is trivial---it merely
produces a unit value and a single $ \textcolor{\effectcolor}{\ottkw{Tick} } $ effect.  Other computations either
produce no effect (as in \rref{eval-eff-comp-abs}) or
combine the effects of their subcomponents (as in
\rref{eval-eff-comp-app-abs}).  As in the type-and-effect system, the only
rule that is specific to the $ \textcolor{\effectcolor}{\ottkw{Tick} } $ effect is the rule for $ \textcolor{\effectcolor}{ \ottkw{tick} } $.
All other effects in these rules are parameterized over the input monoid.

While the type system allows for imprecision, the operational semantics
precisely tracks the effects of computation.

\subsection{Type and Effect Soundness}
\label{sec:effect-soundness}

We state our effect
soundness theorem as follows: closed, well-typed computations of type $\ottkw{F} \, \ottnt{A}$
return closed values and produce effects that are bounded by the type system.

\begin{theorem}[\link{effects/CBPV/soundness.v}{soundness}{Effect soundness}]
\label{thm:effect-soundness}\ 
If $ \varnothing   \vdash_{\mathit{eff} }   \ottnt{M} \; :^{\textcolor{\effectcolor}{  \textcolor{\effectcolor}{\phi}  } }\;  \ottkw{F} \, \ottnt{A} $ then $ \emptyset  \vdash_{\mathit{eff} }  \ottnt{M}  \Downarrow  \ottkw{return} \, \ottnt{W}  \mathop{\#} \textcolor{\effectcolor}{ \textcolor{\effectcolor}{\phi}_{{\mathrm{1}}} } $ where
  $ \textcolor{\effectcolor}{ \textcolor{\effectcolor}{\phi}_{{\mathrm{1}}} \;  \textcolor{\effectcolor}{\mathop{\leq_{\mathit{eff} } } } \;  \textcolor{\effectcolor}{\phi}  } $.
\end{theorem}
\ifextended The reason that this theorem is limited to type $\ottkw{F} \, \ottnt{A}$ is
  because we do not assume $ \textcolor{\effectcolor}{  \textcolor{\effectcolor}{\varepsilon}  \;  \textcolor{\effectcolor}{\mathop{\leq_{\mathit{eff} } } } \;  \textcolor{\effectcolor}{\phi}  } $ in the preordered monoid. As a
  result, at other types, the soundness theorem is more complex. For a general
  type $\ottnt{B}$, we know that if $\ottnt{M}$ evaluates to some terminal $\ottnt{T}$
  with effect $\textcolor{\effectcolor}{\phi}'$, then there is some $\textcolor{\effectcolor}{\phi}''$ where
  $ \textcolor{\effectcolor}{  \textcolor{\effectcolor}{ \textcolor{\effectcolor}{\phi}'  \cdot  \textcolor{\effectcolor}{\phi}'' }  \;  \textcolor{\effectcolor}{\mathop{\leq_{\mathit{eff} } } } \;  \textcolor{\effectcolor}{\phi}  } $. This extra $\textcolor{\effectcolor}{\phi}''$ is the latent effect
  from the case where $\ottnt{T}$ is a closure.  \fi

The proof is simple and based on the following
logical relation, consisting of three functions defined
mutually over the structure of types: closed
values $ \mathcal{W}\llbracket  \ottnt{A}  \rrbracket $, closed terminal computations $ \mathcal{T}\llbracket  \ottnt{B} \rrbracket^{\textcolor{\effectcolor}{ \textcolor{\effectcolor}{\phi} } } $, and
computations tupled with environments $ \mathcal{M}\llbracket  \ottnt{B} \rrbracket^{\textcolor{\effectcolor}{ \textcolor{\effectcolor}{\phi} } } $.

\begin{definition}[\link{effects/CBPV/semtyping.v}{LRV}{CBPV with Effects: Logical Relation}]
\[
\begin{array}{lcl@{\ }l}
 \mathcal{W}\llbracket   \ottkw{U}_{\color{\effectcolor}{ \textcolor{\effectcolor}{\phi} } }\;  \ottnt{B}   \rrbracket    &=& \{\  \mathbf{clo}( \rho , \{  \ottnt{M}  \} )  &|\  \ottsym{(}  \rho  \ottsym{,}  \ottnt{M}  \ottsym{)} \, \in \,  \mathcal{M}\llbracket  \ottnt{B} \rrbracket^{\textcolor{\effectcolor}{ \textcolor{\effectcolor}{\phi} } } \ \}  \\
 \mathcal{W}\llbracket  \ottkw{unit}  \rrbracket    &=& \{\ ()\ \}                                  \\
 \mathcal{W}\llbracket   \ottnt{A_{{\mathrm{1}}}} \times \ottnt{A_{{\mathrm{2}}}}   \rrbracket    &=& \{\ \ottsym{(}  \ottnt{W_{{\mathrm{1}}}}  \ottsym{,}  \ottnt{W_{{\mathrm{2}}}}  \ottsym{)}  &|\  \ottnt{W_{{\mathrm{1}}}} \, \in \,  \mathcal{W}\llbracket  \ottnt{A_{{\mathrm{1}}}}  \rrbracket  \ \textit{and}\  \ottnt{W_{{\mathrm{2}}}} \, \in \,  \mathcal{W}\llbracket  \ottnt{A_{{\mathrm{2}}}}  \rrbracket  \ \}\\
\ifextended
 \mathcal{W}\llbracket  \ottnt{A_{{\mathrm{1}}}}  \ottsym{+}  \ottnt{A_{{\mathrm{2}}}}  \rrbracket    &=& \{\ \ottkw{inl} \, \ottnt{W}      &|\ \ottnt{W} \, \in \,  \mathcal{W}\llbracket  \ottnt{A_{{\mathrm{1}}}}  \rrbracket \ \}\ \cup
                        \{\ \ottkw{inr} \, \ottnt{W}\ |\ \ottnt{W} \, \in \,  \mathcal{W}\llbracket  \ottnt{A_{{\mathrm{2}}}}  \rrbracket \ \}  \\ \fi
\\
 \mathcal{T}\llbracket  \ottkw{F} \, \ottnt{A} \rrbracket^{\textcolor{\effectcolor}{ \textcolor{\effectcolor}{\phi} } }  &=&  \{\ \ottkw{return} \, \ottnt{W} &|\  \ottnt{W} \, \in \,  \mathcal{W}\llbracket  \ottnt{A}  \rrbracket  \ \textit{and}\   \textcolor{\effectcolor}{ \textcolor{\effectcolor}{\phi} \; \equiv \;   \textcolor{\effectcolor}{\varepsilon}  }  \ \} \\
 \mathcal{T}\llbracket  \ottnt{A}  \to  \ottnt{B} \rrbracket^{\textcolor{\effectcolor}{ \textcolor{\effectcolor}{\phi} } }  &=&  \{\  \mathbf{clo}(  \rho ,   \lambda  \ottmv{x} . \ottnt{M}   )  &|\  \textit{for all}\  \ottnt{W} \, \in \,  \mathcal{W}\llbracket  \ottnt{A}  \rrbracket  ,\  \ottsym{(}  \ottsym{(}   \rho   \mathop{,}   \ottmv{x}  \mapsto  \ottnt{W}   \ottsym{)}  \ottsym{,}  \ottnt{M}  \ottsym{)} \, \in \,  \mathcal{M}\llbracket  \ottnt{B} \rrbracket^{\textcolor{\effectcolor}{ \textcolor{\effectcolor}{\phi} } }  \ \} \\
 \mathcal{T}\llbracket   \ottnt{B_{{\mathrm{1}}}}   \mathop{\&}   \ottnt{B_{{\mathrm{2}}}}  \rrbracket^{\textcolor{\effectcolor}{ \textcolor{\effectcolor}{\phi} } }  &=& \{\  \mathbf{clo}(  \rho ,   \langle  \ottnt{M_{{\mathrm{1}}}} , \ottnt{M_{{\mathrm{2}}}}  \rangle   )    &|\
                          \ottsym{(}  \rho  \ottsym{,}  \ottnt{M_{{\mathrm{1}}}}  \ottsym{)} \, \in \,  \mathcal{M}\llbracket  \ottnt{B_{{\mathrm{1}}}} \rrbracket^{\textcolor{\effectcolor}{ \textcolor{\effectcolor}{\phi} } }  \ \textit{and}\  \ottsym{(}  \rho  \ottsym{,}  \ottnt{M_{{\mathrm{2}}}}  \ottsym{)} \, \in \,  \mathcal{M}\llbracket  \ottnt{B_{{\mathrm{2}}}} \rrbracket^{\textcolor{\effectcolor}{ \textcolor{\effectcolor}{\phi} } }  \ \} \\
\\
 \mathcal{M}\llbracket  \ottnt{B} \rrbracket^{\textcolor{\effectcolor}{ \textcolor{\effectcolor}{\phi} } }  &=& \{\ (\rho , \ottnt{M}) &|\   \rho  \vdash_{\mathit{eff} }  \ottnt{M}  \Downarrow  \ottnt{T}  \mathop{\#} \textcolor{\effectcolor}{ \textcolor{\effectcolor}{\phi}_{{\mathrm{1}}} }  \ \textit{and}\  \ottnt{T} \, \in \,  \mathcal{T}\llbracket  \ottnt{B} \rrbracket^{\textcolor{\effectcolor}{ \textcolor{\effectcolor}{\phi}_{{\mathrm{2}}} } }  
   \mathit{and}\  \textcolor{\effectcolor}{  \textcolor{\effectcolor}{ \textcolor{\effectcolor}{\phi}_{{\mathrm{1}}}  \cdot  \textcolor{\effectcolor}{\phi}_{{\mathrm{2}}} }  \;  \textcolor{\effectcolor}{\mathop{\leq_{\mathit{eff} } } } \;  \textcolor{\effectcolor}{\phi}  } \ \} \\
\end{array}
\]
\end{definition}

We use this relation to define semantic typing for
environments, values, and computations.
\begin{definition}[\link{effects/CBPV/semtyping.v}{SemVWt,SemCWt}{CBPV with Effects: Semantic Typing}]
\[
\begin{array}{lcl@{\ }l}
\Gamma  \vDash  \rho &=& \multicolumn{2}{l}{  \ottmv{x}  \ottsym{:}  \ottnt{A} \, \in \, \Gamma \ \textit{implies}\  \ottmv{x}  \mapsto  \ottnt{W} \, \in \, \rho  \ \textit{and}\  \ottnt{W} \, \in \,  \mathcal{W}\llbracket  \ottnt{A}  \rrbracket  } \\
\Gamma  \vDash_{\mathit{eff} }  \ottnt{V}  \ottsym{:}  \ottnt{A} &=& \multicolumn{2}{l}{  \Gamma  \vDash  \rho \ \textit{implies}\   \rho  \vdash  \ottnt{V} \ \Downarrow\  \ottnt{W}   \ \textit{and}\  \ottnt{W} \, \in \,  \mathcal{W}\llbracket  \ottnt{A}  \rrbracket  } \\
 \Gamma   \vDash_{\mathit{eff} }   \ottnt{M} \; :^{\textcolor{\effectcolor}{  \textcolor{\effectcolor}{\phi}  } }\;  \ottnt{B}  &=& \multicolumn{2}{l}{ \Gamma  \vDash  \rho \ \textit{implies}\  \ottsym{(}  \rho  \ottsym{,}  \ottnt{M}  \ottsym{)} \, \in \,  \mathcal{M}\llbracket  \ottnt{B} \rrbracket^{\textcolor{\effectcolor}{ \textcolor{\effectcolor}{\phi} } }  } \\
\end{array}
\]
\end{definition}

Using these definitions, we can prove semantic typing lemmas corresponding to
each of the syntactic typing rules shown in
Figure~\ref{fig:cbpv-effects}. These proofs require our assumptions about the
monoidal structure of effects: that $ \textcolor{\effectcolor}{\varepsilon} $ is an identity element for the
associative combining operation.  

With these lemmas, we show the fundamental lemma as a straightforward
induction.
\begin{lemma}[\link{effects/CBPV/soundness.v}{fundamental\_lemma}{Fundamental Lemma: effect soundness}]\ 
\begin{enumerate}
\item If $\Gamma  \vdash_{\mathit{eff} }  \ottnt{V}  \ottsym{:}  \ottnt{A}$ then $\Gamma  \vDash_{\mathit{eff} }  \ottnt{V}  \ottsym{:}  \ottnt{A}$.
\item If $ \Gamma   \vdash_{\mathit{eff} }   \ottnt{M} \; :^{\textcolor{\effectcolor}{  \textcolor{\effectcolor}{\phi}  } }\;  \ottnt{B} $ then $ \Gamma   \vDash_{\mathit{eff} }   \ottnt{M} \; :^{\textcolor{\effectcolor}{  \textcolor{\effectcolor}{\phi}  } }\;  \ottnt{B} $.
\end{enumerate}
\end{lemma}

The effect soundness theorem (\ref{thm:effect-soundness}) follows from the second 
clause of this lemma, after instantiating $\Gamma$ with the empty context and $\ottnt{B}$ with $\ottkw{F} \, \ottnt{A}$.
Unfolding the definition of $ \varnothing   \vDash_{\mathit{eff} }   \ottnt{M} \; :^{\textcolor{\effectcolor}{  \textcolor{\effectcolor}{\phi}  } }\;  \ottkw{F} \, \ottnt{A} $ gives us some $\textcolor{red}{\phi_1}$ and $\textcolor{red}{\phi_2}$ such 
that $ \emptyset  \vdash_{\mathit{eff} }  \ottnt{M}  \Downarrow  \ottnt{T}  \mathop{\#} \textcolor{\effectcolor}{ \textcolor{\effectcolor}{\phi}_{{\mathrm{1}}} } $ and $\ottnt{T} \, \in \,  \mathcal{T}\llbracket  \ottkw{F} \, \ottnt{A} \rrbracket^{\textcolor{\effectcolor}{ \textcolor{\effectcolor}{\phi}_{{\mathrm{2}}} } } $ and $ \textcolor{\effectcolor}{  \textcolor{\effectcolor}{ \textcolor{\effectcolor}{\phi}_{{\mathrm{1}}}  \cdot  \textcolor{\effectcolor}{\phi}_{{\mathrm{2}}} }  \;  \textcolor{\effectcolor}{\mathop{\leq_{\mathit{eff} } } } \;  \textcolor{\effectcolor}{\phi}  } $.
Further unfolding definitions means that 
$\ottnt{T}$ must be $\ottkw{return} \, \ottnt{W}$, $\textcolor{red}{\phi_2}$ must be $ \textcolor{\effectcolor}{\varepsilon} $, and thus $ \textcolor{\effectcolor}{ \textcolor{\effectcolor}{\phi}_{{\mathrm{1}}} \;  \textcolor{\effectcolor}{\mathop{\leq_{\mathit{eff} } } } \;  \textcolor{\effectcolor}{\phi}  } $.

\subsection{Type-and-effect Preserving Translations}
\label{sec:effect-translations}

\citet{levy2006} provides translations from call-by-value (CBV) and
call-by-name (CBN) $\lambda$-calculi to CBPV and shows that those translations
preserve types, denotational semantics, and (substitution-based) big-step
operational semantics. We show here that those
translations also preserve effects.

For the CBV translation, we start with a $\lambda$-calculus that has a
simple type-and-effect system, loosely based on
\citet{lucassen-gifford}. However, as few CBN
languages directly include effects, for the CBN
translation we start with a simply-typed $\lambda$-calculus
that encapsulates effects
using a \emph{graded monad}. Furthermore,
we show that we can use this same monad with the CBV translation because
effects are encapsulated.

\subsubsection{CBV Type-and-effect System}

The \link{effect/CBV/typing.v}{Wt}{simple CBV language with effect tracking} in this subsection features the
same $ \textcolor{\effectcolor}{ \ottkw{tick} } $ term and $ \textcolor{\effectcolor}{\ottkw{Tick} } $ effect as before, along with the usual forms of the $\lambda$-calculus.\ifextended\else\footnote{For space, we elide the typing rules for unit and products. These rules are available in the extended version~\cite{extended-version}.} \fi 
\ifextended
\[
\begin{array}{llcl}
\mathit{types} &\tau &::=&  \texttt{unit} \ |\  \tau_{{\mathrm{1}}}  \stackrel{\textcolor{\effectcolor}{ \textcolor{\effectcolor}{\phi} } }{\rightarrow}  \tau_{{\mathrm{2}}} \ |\  \tau_{{\mathrm{1}}}  \otimes  \tau_{{\mathrm{2}}} \ |\ \tau_{{\mathrm{1}}}  \ottsym{+}  \tau_{{\mathrm{2}}} \\
\mathit{terms} &\ottnt{e}   &::=&  \textcolor{\effectcolor}{ \ottkw{tick} } \ |\ \ottmv{x}\ |\  \lambda  \ottmv{x} . \ottnt{e} \ |\ \ottnt{e_{{\mathrm{1}}}} \, \ottnt{e_{{\mathrm{2}}}}\ |\ \ottsym{()}\ |\  \ottnt{e_{{\mathrm{1}}}}  ;  \ottnt{e_{{\mathrm{2}}}}  \\
              & & | & \ottsym{(}  \ottnt{e_{{\mathrm{1}}}}  \ottsym{,}  \ottnt{e_{{\mathrm{2}}}}  \ottsym{)}\ |\  \ottkw{let}\; ( \ottmv{x_{{\mathrm{1}}}} ,  \ottmv{x_{{\mathrm{2}}}} ) =  \ottnt{e_{{\mathrm{1}}}} \; \ottkw{in}\;  \ottnt{e_{{\mathrm{2}}}} \ \\
              & & | & \ottkw{inl} \, \ottnt{e}\ |\ \ottkw{inr} \, \ottnt{e}\ |\  \ottkw{case}\;  \ottnt{e} \; \ottkw{of}\; \ottkw{inl}\;  \ottmv{x_{{\mathrm{1}}}}  \rightarrow  \ottnt{e_{{\mathrm{1}}}}  \ottkw{;} \ottkw{inr}\;  \ottmv{x_{{\mathrm{2}}}}  \rightarrow  \ottnt{e_{{\mathrm{2}}}}  \\
\end{array}
\]
\fi

\ifextended
 \drules[lam-eff]{$ \Gamma   \vdash_{\mathit{eff} }   \ottnt{e}  :^{\textcolor{\effectcolor}{ \textcolor{\effectcolor}{\phi} } }  \tau $}{STLC + effect typing}
    {var,abs,app,unit,sequence,pair,split,inl,inr,case,tick}
\else
\[ \drule[width=1in]{lam-eff-var}\;\drule{lam-eff-abs}\;\drule{lam-eff-app}\;\drule{lam-eff-tick} \]
\fi

Function types, written $ \tau_{{\mathrm{1}}}  \stackrel{\textcolor{\effectcolor}{ \textcolor{\effectcolor}{\phi} } }{\rightarrow}  \tau_{{\mathrm{2}}} $, are annotated with \emph{latent
  effects}, which occur when the function is called.  In the application rule
\rref{lam-eff-app}, this latent effect is
combined with $\textcolor{red}{\textcolor{\effectcolor}{\phi}_{{\mathrm{1}}}}$,
the effects that occur when evaluating the function
$\ottnt{e_{{\mathrm{1}}}}$ to a $\lambda$ expression, and $\textcolor{red}{\textcolor{\effectcolor}{\phi}_{{\mathrm{2}}}}$, the effects that occur when
evaluating the argument to a value.

The \link{effects/CBV/translations.v}{translateType,translateTerm}{CBV type
  and term translations} follow directly from \citet{levy:siglog-tutorial}.
Besides adding a case for the $ \textcolor{\effectcolor}{ \ottkw{tick} } $ expression, the only change that we
make is moving the latent effect from the function type to the thunk type. All
other cases are exactly as in prior work. \ifextended\else Because of this, we
  only show the translation for function types and the $ \textcolor{\effectcolor}{ \ottkw{tick} } $
  expression. \fi

\[
\begin{array}{ll}
\textit{Type translation}\\
 \llbracket {  \tau_{{\mathrm{1}}}  \stackrel{\textcolor{\effectcolor}{ \textcolor{\effectcolor}{\phi} } }{\rightarrow}  \tau_{{\mathrm{2}}}  } \rrbracket_{\textsc{v} } & =  \ottkw{U}_{\color{\effectcolor}{ \textcolor{\effectcolor}{\phi} } }\;  \ottsym{(}   \llbracket { \tau_{{\mathrm{1}}} } \rrbracket_{\textsc{v} }   \to  \ottkw{F} \,  \llbracket { \tau_{{\mathrm{2}}} } \rrbracket_{\textsc{v} }   \ottsym{)}  \\
\ifextended
 \llbracket { \ottkw{unit} } \rrbracket_{\textsc{v} }         & =  \texttt{unit}  \\
 \llbracket {  \tau_{{\mathrm{1}}}  \otimes  \tau_{{\mathrm{2}}}  } \rrbracket_{\textsc{v} } & =    \llbracket { \tau_{{\mathrm{1}}} } \rrbracket_{\textsc{v} }  \times  \llbracket { \tau_{{\mathrm{2}}} } \rrbracket_{\textsc{v} }   \\
 \llbracket { \tau_{{\mathrm{1}}}  \ottsym{+}  \tau_{{\mathrm{2}}} } \rrbracket_{\textsc{v} } & =   \llbracket { \tau_{{\mathrm{1}}} } \rrbracket_{\textsc{v} }   \ottsym{+}   \llbracket { \tau_{{\mathrm{2}}} } \rrbracket_{\textsc{v} }  \\
    &\\
\textit{Context translation}\\
 \llbracket { \varnothing } \rrbracket_{\textsc{v} }  & =  \varnothing  \\
 \llbracket {  \Gamma   \mathop{,}   \ottmv{x}  \ottsym{:}  \tau  } \rrbracket_{\textsc{v} }  & =   \llbracket { \Gamma } \rrbracket_{\textsc{v} }    \mathop{,}   \ottmv{x}  \ottsym{:}   \llbracket { \tau } \rrbracket_{\textsc{v} }   \\ \fi
\ifextended&\\\else\end{array}\qquad\begin{array}{ll}\fi
\textit{Term translation}\\
 \llbracket {  \textcolor{\effectcolor}{ \ottkw{tick} }  } \rrbracket_{\textsc{v} }           & =  \textcolor{\effectcolor}{ \ottkw{tick} }  \\
\ifextended
 \llbracket { \ottmv{x} } \rrbracket_{\textsc{v} }              & = \ottkw{return} \, \ottmv{x} \\
 \llbracket {  \lambda  \ottmv{x} . \ottnt{e}  } \rrbracket_{\textsc{v} }         & = \ottkw{return} \, \ottsym{\{}   \lambda  \ottmv{x} .  \llbracket { \ottnt{e} } \rrbracket_{\textsc{v} }    \ottsym{\}} \\
 \llbracket { \ottnt{e_{{\mathrm{1}}}} \, \ottnt{e_{{\mathrm{2}}}} } \rrbracket_{\textsc{v} }          & = \ottmv{x}  \leftarrow   \llbracket { \ottnt{e_{{\mathrm{1}}}} } \rrbracket_{\textsc{v} }  \, \ottkw{in} \, \ottmv{y}  \leftarrow   \llbracket { \ottnt{e_{{\mathrm{2}}}} } \rrbracket_{\textsc{v} }  \, \ottkw{in} \, \ottmv{x}  \ottsym{!} \, \ottmv{y} \\
 \llbracket { \ottsym{()} } \rrbracket_{\textsc{v} }             & = \ottkw{return} \, \ottsym{()} \\
 \llbracket {  \ottnt{e_{{\mathrm{1}}}}  ;  \ottnt{e_{{\mathrm{2}}}}  } \rrbracket_{\textsc{v} }  & = \ottmv{x}  \leftarrow   \llbracket { \ottnt{e_{{\mathrm{1}}}} } \rrbracket_{\textsc{v} }  \, \ottkw{in} \,  \ottmv{x}  ;   \llbracket { \ottnt{e_{{\mathrm{2}}}} } \rrbracket_{\textsc{v} }   \\
 \llbracket { \ottsym{(}  \ottnt{e_{{\mathrm{1}}}}  \ottsym{,}  \ottnt{e_{{\mathrm{2}}}}  \ottsym{)} } \rrbracket_{\textsc{v} }      & = \ottmv{x}  \leftarrow   \llbracket { \ottnt{e_{{\mathrm{1}}}} } \rrbracket_{\textsc{v} }  \, \ottkw{in} \, \ottmv{y}  \leftarrow   \llbracket { \ottnt{e_{{\mathrm{2}}}} } \rrbracket_{\textsc{v} }  \, \ottkw{in} \, \ottkw{return} \, \ottsym{(}  \ottmv{x}  \ottsym{,}  \ottmv{y}  \ottsym{)} \\
 \llbracket {  \ottkw{let}\; ( \ottmv{x_{{\mathrm{1}}}} ,  \ottmv{x_{{\mathrm{2}}}} ) =  \ottnt{e_{{\mathrm{1}}}} \; \ottkw{in}\;  \ottnt{e_{{\mathrm{2}}}}  } \rrbracket_{\textsc{v} }  & = \ottmv{x}  \leftarrow   \llbracket { \ottnt{e_{{\mathrm{1}}}} } \rrbracket_{\textsc{v} }  \, \ottkw{in} \,  \ottkw{let}\; ( \ottmv{x_{{\mathrm{1}}}} ,  \ottmv{x_{{\mathrm{2}}}} ) =  \ottmv{x} \; \ottkw{in}\;   \llbracket { \ottnt{e_{{\mathrm{2}}}} } \rrbracket_{\textsc{v} }   \\
 \llbracket { \ottkw{inl} \, \ottnt{e} } \rrbracket_{\textsc{v} }         & = \ottmv{x}  \leftarrow   \llbracket { \ottnt{e} } \rrbracket_{\textsc{v} }  \, \ottkw{in} \, \ottkw{return} \, \ottsym{(}  \ottkw{inl} \, \ottmv{x}  \ottsym{)} \\
 \llbracket { \ottkw{inr} \, \ottnt{e} } \rrbracket_{\textsc{v} }         & = \ottmv{x}  \leftarrow   \llbracket { \ottnt{e} } \rrbracket_{\textsc{v} }  \, \ottkw{in} \, \ottkw{return} \, \ottsym{(}  \ottkw{inr} \, \ottmv{x}  \ottsym{)} \\
 \llbracket {  \ottkw{case}\;  \ottnt{e} \; \ottkw{of}\; \ottkw{inl}\;  \ottmv{x_{{\mathrm{1}}}}  \rightarrow  \ottnt{e_{{\mathrm{1}}}}  \ottkw{;} \ottkw{inr}\;  \ottmv{x_{{\mathrm{2}}}}  \rightarrow  \ottnt{e_{{\mathrm{2}}}}  } \rrbracket_{\textsc{v} }  &=
     \ottmv{x}  \leftarrow   \llbracket { \ottnt{e} } \rrbracket_{\textsc{v} }  \, \ottkw{in} \, \ottkw{case} \, \ottmv{x} \, \ottkw{of} \, \ottkw{inl} \, \ottmv{x_{{\mathrm{1}}}}  \to   \llbracket { \ottnt{e_{{\mathrm{1}}}} } \rrbracket_{\textsc{v} }   \mathsf{;} \, \ottkw{inr} \, \ottmv{x_{{\mathrm{2}}}}  \to   \llbracket { \ottnt{e_{{\mathrm{2}}}} } \rrbracket_{\textsc{v} }  \\
\fi
\end{array}
\]

This translation preserves types and effects from the source language.

\begin{lemma}[\link{effects/EffCBV/proofs.v}{translation\_correct}{CBV translation is type correct}]
\label{lem:cbv-trans-effects}
If $ \Gamma   \vdash_{\mathit{eff} }   \ottnt{e}  :^{\textcolor{\effectcolor}{ \textcolor{\effectcolor}{\phi} } }  \tau $ then
  $  \llbracket { \Gamma } \rrbracket_{\textsc{v} }    \vdash_{\mathit{eff} }    \llbracket { \ottnt{e} } \rrbracket_{\textsc{v} }  \; :^{\textcolor{\effectcolor}{  \textcolor{\effectcolor}{\phi}  } }\;  \ottkw{F} \,  \llbracket { \tau } \rrbracket_{\textsc{v} }  $.
\end{lemma}

This result is easy to prove, reassuring us that our
effect system design is correct: we can use CBPV to encode the well-studied
type-and-effect systems developed over the past 40 years.

\subsubsection{Graded Monads}

CBPV is designed to serve as a convenient translation
target for both CBV and CBN languages.
However, in CBN languages, effects are usually\footnote{Instead of graded monads, we could also consider a translation from call-by-name language that does not
encapsulate effects, such as the one defined by~\citet{McDermottMycroft+2018+93+108}.}
tracked using parametric effect
monads, also known as \emph{graded monads}
~\cite{katsumata:2014,smirnov2008graded,wadler-thiemann,orchard:haskell-effects}.
Therefore, here we translate a \link{effects/CBN/typing.v}{Wt}{CBN language
with graded monads} to CBPV.  Our source language for this translation is
the simply-typed $\lambda$-calculus with unit\ifextended, sums\fi \ and products, together with
a graded monadic type $ \ottkw{T}_{\textcolor{\effectcolor}{ \textcolor{\effectcolor}{\phi} } }\;  \tau $, the monadic operations $\ottkw{return}$
and $ \ottkw{bind} $, and the $ \textcolor{\effectcolor}{ \ottkw{tick} } $ operation, with a monadic
type. To account
for imprecision, we include an explicit type coercion, written $\ottkw{coerce} \, \ottnt{e}$ for the graded monad.

\ifextended
  \drules[lam-mon]{$\Gamma  \vdash_{\mathit{mon} }  \ottnt{e}  \ottsym{:}  \tau$}{STLC + graded monad}
{var,abs,app,unit,sequence,pair,split,with,fst,snd,inl,inr,case,coerce,return,bind,tick}
\else
\[  \drule{lam-mon-return}\quad\drule[width=4in]{lam-mon-bind} \]
\[  \drule{lam-mon-tick}\quad\drule[width=3in]{lam-mon-coerce} \]
\fi

Below, we extend Levy's \link{effects/CBN/translation}{translateTerm}{translation of the CBN $\lambda$-calculus} to include the graded
monad. The translation of the core language is as in prior work and all effects
are isolated to the monadic type, so we only show the monadic portion in the figure.

\[
\begin{array}{ll}
\textit{Type translation}\\
\ifextended
 \llbracket { \ottkw{unit} } \rrbracket_{\textsc{n} }            & = \ottkw{F} \, \ottkw{unit} \\
 \llbracket { \tau_{{\mathrm{1}}}  \to  \tau_{{\mathrm{2}}} } \rrbracket_{\textsc{n} }     & = \ottsym{(}  \ottkw{U} \,  \llbracket { \tau_{{\mathrm{1}}} } \rrbracket_{\textsc{n} }   \ottsym{)}  \to   \llbracket { \tau_{{\mathrm{2}}} } \rrbracket_{\textsc{n} }  \\
 \llbracket {  \tau_{{\mathrm{1}}}  \mathop{\&}   \tau_{{\mathrm{2}}}  } \rrbracket_{\textsc{n} }      & =   \llbracket { \tau_{{\mathrm{1}}} } \rrbracket_{\textsc{n} }    \mathop{\&}    \llbracket { \tau_{{\mathrm{2}}} } \rrbracket_{\textsc{n} }   \\
 \llbracket { \tau_{{\mathrm{1}}}  \ottsym{+}  \tau_{{\mathrm{2}}} } \rrbracket_{\textsc{n} }      & = \ottkw{F} \, \ottsym{(}  \ottkw{U} \,  \llbracket { \tau_{{\mathrm{1}}} } \rrbracket_{\textsc{n} }   \ottsym{+}  \ottkw{U} \,  \llbracket { \tau_{{\mathrm{2}}} } \rrbracket_{\textsc{n} }   \ottsym{)} \\  \fi
 \llbracket {  \ottkw{T}_{\textcolor{\effectcolor}{ \textcolor{\effectcolor}{\phi} } }\;  \tau  } \rrbracket_{\textsc{n} }  & = \ottkw{F} \, \ottsym{(}   \ottkw{U}_{\color{\effectcolor}{ \textcolor{\effectcolor}{\phi} } }\;  \ottkw{F} \, \ottsym{(}   \ottkw{U}_{\color{\effectcolor}{  \textcolor{\effectcolor}{\varepsilon}  } }\;   \llbracket { \tau } \rrbracket_{\textsc{n} }    \ottsym{)}   \ottsym{)} \\
&\\
\ifextended
\textit{Context translation}\\
 \llbracket { \varnothing } \rrbracket_{\textsc{n} }  & =  \varnothing  \\
 \llbracket {  \Gamma   \mathop{,}   \ottmv{x}  \ottsym{:}  \tau  } \rrbracket_{\textsc{n} }  & =   \llbracket { \Gamma } \rrbracket_{\textsc{n} }    \mathop{,}   \ottmv{x}  \ottsym{:}   \ottkw{U}_{\color{\effectcolor}{  \textcolor{\effectcolor}{\varepsilon}  } }\;   \llbracket { \tau } \rrbracket_{\textsc{n} }    \\
&\\
\fi
\textit{Term translation}\\
\ifextended
 \llbracket { \ottmv{x} } \rrbracket_{\textsc{n} }              & = \ottmv{x}  \ottsym{!} \\
 \llbracket {  \lambda  \ottmv{x} . \ottnt{e}  } \rrbracket_{\textsc{n} }         & =  \lambda  \ottmv{x} .  \llbracket { \ottnt{e} } \rrbracket_{\textsc{n} }   \\
 \llbracket { \ottnt{e_{{\mathrm{1}}}} \, \ottnt{e_{{\mathrm{2}}}} } \rrbracket_{\textsc{n} }          & =  \llbracket { \ottnt{e_{{\mathrm{1}}}} } \rrbracket_{\textsc{n} }  \, \ottsym{\{}   \llbracket { \ottnt{e_{{\mathrm{2}}}} } \rrbracket_{\textsc{n} }   \ottsym{\}} \\
 \llbracket { \ottsym{()} } \rrbracket_{\textsc{n} }             & = \ottkw{return} \, \ottsym{()} \\
 \llbracket {  \ottnt{e_{{\mathrm{1}}}}  ;  \ottnt{e_{{\mathrm{2}}}}  } \rrbracket_{\textsc{n} }  & = \ottmv{x}  \leftarrow   \llbracket { \ottnt{e_{{\mathrm{1}}}} } \rrbracket_{\textsc{n} }  \, \ottkw{in} \,  \ottmv{x}  ;   \llbracket { \ottnt{e_{{\mathrm{2}}}} } \rrbracket_{\textsc{n} }   \\
 \llbracket {  \langle  \ottnt{e_{{\mathrm{1}}}} , \ottnt{e_{{\mathrm{2}}}}  \rangle  } \rrbracket_{\textsc{n} }      & =  \langle   \llbracket { \ottnt{e_{{\mathrm{1}}}} } \rrbracket_{\textsc{n} }  ,  \llbracket { \ottnt{e_{{\mathrm{2}}}} } \rrbracket_{\textsc{n} }   \rangle  \\
 \llbracket { \ottnt{e}  \ottsym{.}  \ottsym{1} } \rrbracket_{\textsc{n} }         & =   \llbracket { \ottnt{e} } \rrbracket_{\textsc{n} }   . 1  \\
 \llbracket { \ottnt{e}  \ottsym{.}  \ottsym{2} } \rrbracket_{\textsc{n} }         & =   \llbracket { \ottnt{e} } \rrbracket_{\textsc{n} }   . 2  \\
 \llbracket { \ottkw{inl} \, \ottnt{e} } \rrbracket_{\textsc{n} }         & = \ottkw{return} \, \ottkw{inl} \, \ottsym{\{}   \llbracket { \ottnt{e} } \rrbracket_{\textsc{n} }   \ottsym{\}} \\
 \llbracket { \ottkw{inr} \, \ottnt{e} } \rrbracket_{\textsc{n} }         & = \ottkw{return} \, \ottkw{inr} \, \ottsym{\{}   \llbracket { \ottnt{e} } \rrbracket_{\textsc{n} }   \ottsym{\}} \\
 \llbracket {  \ottkw{case}\;  \ottnt{e} \; \ottkw{of}\; \ottkw{inl}\;  \ottmv{x_{{\mathrm{1}}}}  \rightarrow  \ottnt{e_{{\mathrm{1}}}}  \ottkw{;} \ottkw{inr}\;  \ottmv{x_{{\mathrm{2}}}}  \rightarrow  \ottnt{e_{{\mathrm{2}}}}  } \rrbracket_{\textsc{n} }  &=
     \ottmv{x}  \leftarrow   \llbracket { \ottnt{e} } \rrbracket_{\textsc{n} }  \, \ottkw{in} \, \ottkw{case} \, \ottmv{x} \, \ottkw{of} \, \ottkw{inl} \, \ottmv{x_{{\mathrm{1}}}}  \to   \llbracket { \ottnt{e_{{\mathrm{1}}}} } \rrbracket_{\textsc{n} }   \mathsf{;} \, \ottkw{inr} \, \ottmv{x_{{\mathrm{2}}}}  \to   \llbracket { \ottnt{e_{{\mathrm{2}}}} } \rrbracket_{\textsc{n} }  \\
\fi
 \llbracket { \ottkw{return} \, \ottnt{e} } \rrbracket_{\textsc{n} }  & = \ottkw{return} \, \ottsym{\{}  \ottkw{return} \, \ottsym{\{}   \llbracket { \ottnt{e} } \rrbracket_{\textsc{n} }   \ottsym{\}}  \ottsym{\}}\\
 \llbracket { \ottkw{bind} \, \ottmv{x}  \ottsym{=}  \ottnt{e_{{\mathrm{1}}}} \, \ottkw{in} \, \ottnt{e_{{\mathrm{2}}}} } \rrbracket_{\textsc{n} }  & =
  \ottkw{return} \, \ottsym{\{}  \ottmv{x}  \leftarrow  \ottsym{(}  \ottmv{y}  \leftarrow   \llbracket { \ottnt{e_{{\mathrm{1}}}} } \rrbracket_{\textsc{n} }  \, \ottkw{in} \, \ottmv{y}  \ottsym{!}  \ottsym{)} \, \ottkw{in} \, \ottmv{z}  \leftarrow   \llbracket { \ottnt{e_{{\mathrm{2}}}} } \rrbracket_{\textsc{n} }  \, \ottkw{in} \, \ottmv{z}  \ottsym{!}  \ottsym{\}} \\
 \llbracket { \ottkw{coerce} \, \ottnt{e} } \rrbracket_{\textsc{n} }  & = \ottkw{return} \, \ottsym{\{}  \ottmv{x}  \leftarrow   \llbracket { \ottnt{e} } \rrbracket_{\textsc{n} }  \, \ottkw{in} \, \ottmv{x}  \ottsym{!}  \ottsym{\}} \\
 \llbracket {  \textcolor{\effectcolor}{ \ottkw{tick} }  } \rrbracket_{\textsc{n} }  & = \ottkw{return} \, \ottsym{\{}  \ottmv{x}  \leftarrow   \textcolor{\effectcolor}{ \ottkw{tick} }  \, \ottkw{in} \, \ottkw{return} \, \ottsym{\{}  \ottkw{return} \, \ottmv{x}  \ottsym{\}}  \ottsym{\}}  \\
\end{array}
\]

This translation preserves types (with embedded effects) from the source
language.  Note that, because the monadic type marks effectful code, the
translation produces CBPV computations that can be checked with the ``pure''
effect $\textcolor{red}{ \textcolor{\effectcolor}{\varepsilon} }$.

\begin{lemma}[\link{effects/CBN/proofs.v}{translation\_correct}{CBN translation is type correct}]
\label{lem:type-trans-lam-mon-lin}
  If $\Gamma  \vdash_{\mathit{mon} }  \ottnt{e}  \ottsym{:}  \tau$ then
  $  \llbracket { \Gamma } \rrbracket_{\textsc{n} }    \vdash_{\mathit{eff} }    \llbracket { \ottnt{e} } \rrbracket_{\textsc{n} }  \; :^{\textcolor{\effectcolor}{   \textcolor{\effectcolor}{\varepsilon}   } }\;   \llbracket { \tau } \rrbracket_{\textsc{n} }  $.
\end{lemma}

One difficulty of this translation is that the monadic type in the CBPV
adjunction is \textbf{U F}. This type is a value type, and the standard CBN
translation produces terms with computation types. Therefore. to use \textbf{U
  F} as the monad in our CBN translation, we need to bracket it: on the
outside by \textbf{F} to form a computation type, and then on the inside by
\textbf{U} to construct the value type that the monad expects. This bracketing
produces an awkward translation of the monadic operations with doubled
thunking. This awkwardness is due to the presence of the monad in the source
language; it is a separate structure from the ambient monad of the computation
language.

\ifextended

Because the graded monad isolates effects, we can also
evaluate the monadic language using a call-by-value semantics,
reusing the same translation we used for the CBV language with
effects. For the CBV translation, the monadic type is more accessible: the type translation produces value
types, so we don't need the additional bracketing in the translations for $\ottkw{return} \, \ottnt{e}$
and $ \textcolor{\effectcolor}{ \ottkw{tick} } $. The translations for $ \ottkw{bind} $ and $ \ottkw{coerce} $ remain unchanged.

\[
\begin{array}{ll}
\textit{Type translation}\\
 \llbracket {  \ottkw{T}_{\textcolor{\effectcolor}{ \textcolor{\effectcolor}{\phi} } }\;  \tau  } \rrbracket_{\textsc{v} }  & =  \ottkw{U}_{\color{\effectcolor}{ \textcolor{\effectcolor}{\phi} } }\;  \ottkw{F} \,  \llbracket { \tau } \rrbracket_{\textsc{v} }   \\
&\\
\end{array}
\qquad
\begin{array}{ll}
\textit{Term translation}\\
 \llbracket { \ottkw{return} \, \ottnt{e} } \rrbracket_{\textsc{v} }  & = \ottkw{return} \, \ottsym{\{}   \llbracket { \ottnt{e} } \rrbracket_{\textsc{v} }   \ottsym{\}}\\
 \llbracket {  \textcolor{\effectcolor}{ \ottkw{tick} }  } \rrbracket_{\textsc{v} }  & = \ottkw{return} \, \ottsym{\{}  \ottmv{x}  \leftarrow   \textcolor{\effectcolor}{ \ottkw{tick} }  \, \ottkw{in} \, \ottkw{return} \, \ottmv{x}  \ottsym{\}} \\
&\\
\end{array}
\]

\begin{lemma}[\link{effects/CBV/proofs.v}{translation\_correct}{Monadic translation type correctness}]\ 
\label{lem:type-trans-cbv-mon-eff}
    If $\Gamma  \vdash_{\mathit{mon} }  \ottnt{e}  \ottsym{:}  \tau$ then $  \llbracket { \Gamma } \rrbracket_{\textsc{v} }    \vdash_{\mathit{eff} }    \llbracket { \ottnt{e} } \rrbracket_{\textsc{v} }  \; :^{\textcolor{\effectcolor}{   \textcolor{\effectcolor}{\varepsilon}   } }\;  \ottkw{F} \,  \llbracket { \tau } \rrbracket_{\textsc{v} }  $.
\end{lemma}
\fi
\section{CBPV and Coeffects (Version 1: General semantics)}
\label{sec:coeffects}

Next, we construct a parallel extension of CBPV augmented with \emph{coeffect}
tracking. \ifextended Appendix~\ref{fig:cbpv-coeffects} \else
  Figure~\ref{fig:cbpv-coeffects} \fi lists the
\link{general/typing.v}{VWt,CWt}{typing rules,} with coeffect annotations in
blue. Coeffect systems are designed for reasoning about how programs use their
inputs, so we annotate variables at their binding sites and in the context.

Coeffects annotations consist of \emph{grades} $\textcolor{\coeffectcolor}{q}$ taken from a
\emph{preordered semiring}. This structure provides an addition operation $ \textcolor{\coeffectcolor}{ \textcolor{\coeffectcolor}{q}_{{\mathrm{1}}}  +  \textcolor{\coeffectcolor}{q}_{{\mathrm{2}}} } $,
an additive identity element $\textcolor{blue}{ \textcolor{\coeffectcolor}{0} }$, a multiplication
operation $ \textcolor{\coeffectcolor}{ \textcolor{\coeffectcolor}{q}_{{\mathrm{1}}}   \cdot   \textcolor{\coeffectcolor}{q}_{{\mathrm{2}}} } $, a multiplicative identity $\textcolor{blue}{ \textcolor{\coeffectcolor}{1} }$,
and a reflexive and transitive binary relation $ \textcolor{\coeffectcolor}{\mathop{\leq_{\mathit{co} } } } $ that respects
addition and multiplication. (The preorder does not have to be the one defined
by the addition operation.)  The need for a semiring rather than a monoid
arises from the fact that any value may be bound to a variable
that may then be used multiple times, requiring a notion of coeffect multiplication.

\ifextended\else
\begin{figure}
  \drules[coeff]{$ \textcolor{\coeffectcolor}{ \textcolor{\coeffectcolor}{\gamma} }\! \cdot \! \Gamma   \vdash_{\mathit{coeff} }  \ottnt{V}  \ottsym{:}  \ottnt{A}$}{value coeffect
    typing}{}
\[ \drule[width=3in]{coeff-var}\ \drule[width=4in]{coeff-thunk} \ \drule{coeff-unit} \]

\[ \drule[width=4in]{coeff-pair} \quad \drule[width=3in]{coeff-vsub} \]
\ifextended \[ \drule{coeff-inl} \quad \drule{coeff-inr} \] \fi
  \drules[coeff]{$ \textcolor{\coeffectcolor}{ \textcolor{\coeffectcolor}{\gamma} }\! \cdot \! \Gamma   \vdash_{\mathit{coeff} }  \ottnt{M}  \ottsym{:}  \ottnt{B}$}{computation coeffect
    typing}{}
\[ \drule{coeff-abs}\ \drule[width=2in]{coeff-app}\ \drule{coeff-force}\ \]

\[ \drule[width=4in]{coeff-split} \ \drule[width=4in]{coeff-ret}\]

\[ \drule[width=3in]{coeff-letin} \ \drule[width=3in]{coeff-cpair}  \]

\[ \drule{coeff-fst}\ \drule{coeff-snd} \ \drule{coeff-csub} \]

\ifextended \[ \drule{coeff-sequence}\ \drule{coeff-case} \] \fi
\caption{CBPV type system with coeffect tracking}
\label{fig:cbpv-coeffects}
\Description{}
\end{figure}
\fi

Similarly to the previous section, our type system in this section is general
across coeffects and can be specialized via the choice of semiring and
preorder. For example, if we are only concerned with relevance analysis
(i.e. determining which of a functions inputs are relevant to computation)
then we might use a semiring with two elements: $ \textcolor{\coeffectcolor}{0} $ marks inputs that are
known to be unused and $ \textcolor{\coeffectcolor}{1} $ is for elements that may or may not be
needed. Or, in the case of information flow, then we might use a semiring
where $ \textcolor{\coeffectcolor}{0} $ marks secret inputs and $ \textcolor{\coeffectcolor}{1} $ marks public information; only
the latter may influence the result of the computation.

We would also like to use coeffects to track resource usage.  However, as we
discuss in detail below, this general semantics does not provide a satisfying
account of resource usage and requires further refinement in the next
section. Therefore, we first describe the general semantics in terms of the
resource usage coeffect, so that we can prepare for this discussion.

In the case of resource usage, grades bound the \emph{uses} of variables, as
in bounded linear logic, and come from the natural number semiring with the
usual addition and multiplication operators. The additive and multiplicative
identity elements of this semiring mark $ \textcolor{\coeffectcolor}{0} $ and at most $ \textcolor{\coeffectcolor}{1} $ (affine)
use of a variable respectively, and the addition and multiplication semiring
operations calculate the total number of times each variable is used in the
program.

As in many systems for bounded linear logic, $ \textcolor{\coeffectcolor}{ \textcolor{\coeffectcolor}{q}_{{\mathrm{1}}} }\;  \textcolor{\coeffectcolor}{\mathop{\leq_{\mathit{co} } } } \; \textcolor{\coeffectcolor}{ \textcolor{\coeffectcolor}{q}_{{\mathrm{2}}} } $ indicates that
$\textcolor{blue}{\textcolor{\coeffectcolor}{q}_{{\mathrm{1}}}}$ is \emph{less precise} or \emph{less restrictive}
than $\textcolor{blue}{\textcolor{\coeffectcolor}{q}_{{\mathrm{2}}}}$. When counting variable usage, this has the
opposite order from the usual one---we have $ \textcolor{\coeffectcolor}{ \ottsym{3} }\;  \textcolor{\coeffectcolor}{\mathop{\leq_{\mathit{co} } } } \; \textcolor{\coeffectcolor}{ \ottsym{2} } $ because allowing at
most 3 uses is less restrictive than at most 2. With other coeffects, such as
security levels, this ordering has a more intuitive interpretation: a higher
grade corresponds to a higher security level, which is more restrictive than a
low security level.

Like the effect system with subeffecting,
this type system includes a rule for \emph{subcoeffecting}: if
a judgment holds with some annotation $\textcolor{blue}{\textcolor{\coeffectcolor}{q}_{{\mathrm{2}}}}$ on a variable
in the context, then it is also derivable with any $ \textcolor{\coeffectcolor}{ \textcolor{\coeffectcolor}{q}_{{\mathrm{1}}} }\;  \textcolor{\coeffectcolor}{\mathop{\leq_{\mathit{co} } } } \; \textcolor{\coeffectcolor}{ \textcolor{\coeffectcolor}{q}_{{\mathrm{2}}} } $.
For example, we can weaken a judgment
that a computation makes zero ($ \textcolor{\coeffectcolor}{0} $) uses of some variable to observe at most one
use (affine) or any other number. This corresponds to the usual weakening
lemma from typed $\lambda$-calculi.

Again, as in the effect section, including a preorder with the semiring allows for
imprecision,
needed when analyzing branching computations. For example, if one
branch requires $ \textcolor{\coeffectcolor}{1} $ use of a variable $\ottmv{x}$, but the other branch
requires $ \textcolor{\coeffectcolor}{0} $ uses, the system will record that the program must have
the resources to use $\ottmv{x}$ at least once, because
$ \textcolor{\coeffectcolor}{  \textcolor{\coeffectcolor}{1}  }\;  \textcolor{\coeffectcolor}{\mathop{\leq_{\mathit{co} } } } \; \textcolor{\coeffectcolor}{  \textcolor{\coeffectcolor}{0}  } $, in a semiring where $ \textcolor{\coeffectcolor}{1} $ corresponds to affine usage.
This relation is dual to the preorder's role in
the effect system---if one branch ticks once and the other does not tick,
then the system will record at most one tick.
In both cases, replacing the ordering with the
discrete preorder means that the type system must be precise and would
reject both of these examples.

The type system uses a \emph{grade vector} $\textcolor{\coeffectcolor}{\gamma}$, a comma-separated list of
grades, to represent the annotations for the variables in a typing context.
When combined with a typing context $\Gamma$, written $ \textcolor{\coeffectcolor}{ \textcolor{\coeffectcolor}{\gamma} }\! \cdot \! \Gamma $,
the grade vector must have the same length as $\Gamma$.
We extend a combined grade vector and typing context simultaneously with the
notation $  \textcolor{\coeffectcolor}{ \textcolor{\coeffectcolor}{\gamma} }\! \cdot \! \Gamma    \mathop{,}    \ottmv{x}  :^{\textcolor{\coeffectcolor}{ \textcolor{\coeffectcolor}{q} } }  \ottnt{A}  $, equivalent to
$ \textcolor{\coeffectcolor}{ \ottsym{(}   \textcolor{\coeffectcolor}{ \textcolor{\coeffectcolor}{\gamma} \mathop{,} \textcolor{\coeffectcolor}{q} }   \ottsym{)} }\! \cdot \! \ottsym{(}   \Gamma   \mathop{,}   \ottmv{x}  \ottsym{:}  \ottnt{A}   \ottsym{)} $.

The grade vector written $ \textcolor{\coeffectcolor}{\overline{0} } $ contains
only zeros and is used where its length can be inferred from context.
Grade vectors of the same length can be added together pointwise, written
$ \textcolor{\coeffectcolor}{ \textcolor{\coeffectcolor}{\gamma}_{{\mathrm{1}}} \ottsym{+} \textcolor{\coeffectcolor}{\gamma}_{{\mathrm{2}}} } $, and compared pointwise, written $ \textcolor{\coeffectcolor}{ \textcolor{\coeffectcolor}{\gamma}_{{\mathrm{1}}} }\; \textcolor{\coeffectcolor}{\mathop{\leq_{\mathit{co} } } } \; \textcolor{\coeffectcolor}{ \textcolor{\coeffectcolor}{\gamma}_{{\mathrm{2}}} } $.
Grade vectors can also be pointwise scaled, written $ \textcolor{\coeffectcolor}{ \textcolor{\coeffectcolor}{q} \cdot \textcolor{\coeffectcolor}{\gamma} } $.

The basis of this system is \rref{coeff-var}. When
introducing a variable $\ottmv{x}$, the context must grade $\ottmv{x}$ with $ \textcolor{\coeffectcolor}{1} $,
indicating that it is used once.
No other variables in the context should affect
the typing judgement, so they must have grade $ \textcolor{\coeffectcolor}{0} $.
Similarly, the unit value $\ottsym{()}$ can make no demands on the environment, so \rref{coeff-unit} requires
that all variables in the typing context be graded $ \textcolor{\coeffectcolor}{0} $.

In \rref{coeff-thunk} \ifextended , \rref{coeff-inl},
\rref{coeff-inr}, \fi  and \rref{coeff-force}, there is a single subterm that
makes exactly the same demands on its environment as the term in the
conclusion, so we use the same grade vector in the conclusion and the premise.
\ifextended For example, in a sum type, $\ottkw{inl} \, \ottnt{V}$ makes the same demands as
$\ottnt{V}$. \fi

In other rules, the term in the conclusion has multiple subterms, so we
combine the demands made by each. In \rref{coeff-pair},
the subterms both get evaluated and do not directly
interact, so we combine their grade vectors via simple pointwise addition.
Conversely, with negative products, the two subterms must use the same resources,
so we use the same grade
vector in each premise and the conclusion.
Intuitively, this is because we can only ever
project out one subterm from a computation pair (see \rref{coeff-fst} and
\rref{coeff-snd}), so the projected term will make all the same demands on
the environment as the pair.

In \rref{coeff-abs}, we know from the premise that $\ottnt{M}$ will require a
grade of $\textcolor{\coeffectcolor}{q}$ on $\ottmv{x}$, so we store that grade as an annotation on
$\ottmv{x}$ in the term syntax. For flexibility, we allow the annotation in the
type, $\textcolor{\coeffectcolor}{q}'$, to be a less precise approximation of $\textcolor{\coeffectcolor}{q}$. (This
expressiveness is useful for the translation results in the next section.
Note that subcoeffecting is not sufficient as it cannot allow the annotation 
on the $\lambda$ to differ from the annotation on the function type.)
Both the premise and the
conclusion make the same demands on the variables in $\Gamma$, so $\textcolor{\coeffectcolor}{\gamma}$
is otherwise the same in both.

In some rules, we must
combine the grade vectors of subterms using both scaling and addition.
For example, in \rref{coeff-app}, $\textcolor{\coeffectcolor}{\gamma}_{{\mathrm{1}}}$ denotes
the demands the operator $\ottnt{M}$ makes on the environment, and $\textcolor{\coeffectcolor}{\gamma}_{{\mathrm{2}}}$
denotes the demands the argument $\ottnt{V}$ makes.
$\ottnt{M}$ has type $ \ottnt{A} ^{\textcolor{\coeffectcolor}{ \textcolor{\coeffectcolor}{q} } } \rightarrow  \ottnt{B} $, indicating that
when it is reduced to some terminal $ \lambda  \ottmv{x} ^{\textcolor{\coeffectcolor}{ \textcolor{\coeffectcolor}{q}' } }. \ottnt{M'} $, then $\ottnt{M'}$ will
require $\ottmv{x}$ to have a grade of $\textcolor{\coeffectcolor}{q}'$, where $ \textcolor{\coeffectcolor}{ \textcolor{\coeffectcolor}{q} }\;  \textcolor{\coeffectcolor}{\mathop{\leq_{\mathit{co} } } } \; \textcolor{\coeffectcolor}{ \textcolor{\coeffectcolor}{q}' } $.
This means we must scale
$\textcolor{\coeffectcolor}{\gamma}_{{\mathrm{1}}}$ by $\textcolor{\coeffectcolor}{q}$ before adding it to $\textcolor{\coeffectcolor}{\gamma}_{{\mathrm{2}}}$ to calculate
the total demands that $\ottnt{M} \, \ottnt{V}$ makes on its environment.

\ifextended
\Rref{coeff-split} follows a similar pattern. In this rule, we
require a grade of $\textcolor{\coeffectcolor}{q}$ on $\ottmv{x_{{\mathrm{1}}}}$ and $\ottmv{x_{{\mathrm{2}}}}$ in $\ottnt{N}$, so we scale
$\textcolor{\coeffectcolor}{\gamma}_{{\mathrm{1}}}$, or the demands made by $\ottnt{V}$, by $\textcolor{\coeffectcolor}{q}$. (Subcoeffecting
allows us to use the same grade for $\ottmv{x_{{\mathrm{1}}}}$ and $\ottmv{x_{{\mathrm{2}}}}$ even though
the exact demands $\ottnt{N}$ makes on each may be different.)

In \rref{coeff-case}, we additionally require that $ \textcolor{\coeffectcolor}{ \textcolor{\coeffectcolor}{q} }\;  \textcolor{\coeffectcolor}{\mathop{\leq_{\mathit{co} } } } \; \textcolor{\coeffectcolor}{  \textcolor{\coeffectcolor}{1}  } $.
We need to evaluate $\ottnt{V}$ to either $\ottkw{inl} \, \ottnt{V_{{\mathrm{1}}}}$ or $\ottkw{inr} \, \ottnt{V_{{\mathrm{2}}}}$ for some
$\ottnt{V_{{\mathrm{1}}}}$ or $\ottnt{V_{{\mathrm{2}}}}$ in order for this branching to be well-defined. In
our resource usage example, we can interpret this as requiring
at least $ \textcolor{\coeffectcolor}{1} $ copy of $\ottnt{V}$ in order to proceed.
\fi

In the effect system, we annotate the type $ \ottkw{U}_{\color{\effectcolor}{ \textcolor{\effectcolor}{\phi} } }\;  \ottnt{B} $ with the effect of
the suspended computation. In the coeffect system, we dually annotate the returner
type $ \ottkw{F}_{\color{\coeffectcolor}{ \textcolor{\coeffectcolor}{q} } }\;  \ottnt{A} $. In our resource usage example, the $\textcolor{\coeffectcolor}{q}$
indicates that we require enough resources from the environment to
produce $\textcolor{\coeffectcolor}{q}$ copies of a value. For example, $ \ottkw{return} _{\textcolor{\coeffectcolor}{ \ottsym{3} } }\;  \ottnt{V} $ indicates
that we require the resources to create $\ottsym{3}$ copies of $\ottnt{V}$.
Therefore, \rref{coeff-ret} scales the demands needed to create
$\ottnt{V}$ by $\textcolor{\coeffectcolor}{q}$.

In \rref{coeff-letin}, $\ottnt{M}$ has returner type $ \ottkw{F}_{\color{\coeffectcolor}{ \textcolor{\coeffectcolor}{q}_{{\mathrm{1}}} } }\;  \ottnt{A} $, and its
result value has been scaled by $\textcolor{\coeffectcolor}{q}_{{\mathrm{1}}}$. However, the expression includes
another scaling annotation $\textcolor{\coeffectcolor}{q}_{{\mathrm{2}}}$, that allows duplication of the computation
$\ottnt{M}$ itself. If $\textcolor{\coeffectcolor}{\gamma}_{{\mathrm{1}}}$ denotes the demands $\ottnt{M}$ makes on its environment,
$ \textcolor{\coeffectcolor}{ \textcolor{\coeffectcolor}{q}_{{\mathrm{1}}}   \cdot   \textcolor{\coeffectcolor}{q}_{{\mathrm{2}}} } $ denotes the grade $\ottnt{N}$ requires $\ottmv{x}$ to have, and $\textcolor{\coeffectcolor}{\gamma}_{{\mathrm{2}}}$
denotes the demands $\ottnt{N}$ makes from the rest of the environment, then we need
$ \textcolor{\coeffectcolor}{  \textcolor{\coeffectcolor}{ \textcolor{\coeffectcolor}{q}_{{\mathrm{2}}} \cdot \textcolor{\coeffectcolor}{\gamma}_{{\mathrm{1}}} }  \ottsym{+} \textcolor{\coeffectcolor}{\gamma}_{{\mathrm{2}}} } $ to type the entire term.

The scaling annotations in $ \ottkw{return} _{\textcolor{\coeffectcolor}{ \textcolor{\coeffectcolor}{q} } }\;  \ottnt{V} $ and $ \ottmv{x}  \leftarrow^{\textcolor{\coeffectcolor}{ \textcolor{\coeffectcolor}{q} } }  \ottnt{M} \ \ottkw{in}\  \ottnt{N} $ increase
the expressiveness of the language and are required for the translation of a
CBV $\lambda$-calculus to CBPV described in
Section~\ref{cbv-coeffects-translation}. Because CBV is strict, when
translating an application, we must use a let binding to evaluate the
translated argument before applying the translated function to it.
However, the function may require a particular grade $\textcolor{\coeffectcolor}{q}$ on its argument,
so we must be able to scale this computation. Similarly, to translate the graded CBV
comonadic type, we need to be able to duplicate values.

The two subsumption \rref{coeff-vsub,coeff-csub} allow for subcoeffecting.

\subsection{General Instrumented Operational Semantics and Coeffect Soundness}
\label{sec:coeffect-soundness}

\ifextended 
\begin{figure}
\drules[eval-coeff-val]{$  \textcolor{\coeffectcolor}{ \textcolor{\coeffectcolor}{\gamma} }\! \cdot \! \rho   \vdash_{\mathit{coeff} }  \ottnt{V}  \Downarrow  \ottnt{W} $}{Value rules}{}
\[ \drule[width=3in]{eval-coeff-val-var} \]

\[ \drule{eval-coeff-val-thunk}\!\drule[width=3in]{eval-coeff-val-vpair} \]

\[ \drule{eval-coeff-val-inl} \quad \drule{eval-coeff-val-inr} \]

\[ \drule[width=3in]{eval-coeff-val-vsub} \]
\caption{Instrumented operational semantics (values)}
\Description{Rules for the instrumented operational semantics}
\label{fig:eval-coeff-val}
\Description{}
\end{figure}
\begin{figure}
\drules[eval-coeff-comp]{$  \textcolor{\coeffectcolor}{ \textcolor{\coeffectcolor}{\gamma} }\! \cdot \! \rho   \vdash_{\mathit{coeff} }  \ottnt{M}  \Downarrow  \ottnt{T} $}{Computation rules}{}
\[  \drule{eval-coeff-comp-abs}\   \drule{eval-coeff-comp-cpair} \]

\[  \drule[width=3in]{eval-coeff-comp-app-abs}\!\drule{eval-coeff-comp-split}\]

\[  \drule{eval-coeff-comp-return}\ \drule{eval-coeff-comp-letin-ret} \]

\[  \drule[width=4in]{eval-coeff-comp-force-thunk}    \]

\[  \drule{eval-coeff-comp-fst}\      \drule{eval-coeff-comp-snd} \]

\[ \drule[width=5in]{eval-coeff-comp-sequence} \]

\[ \drule[width=5in]{eval-coeff-comp-case-inl}  \]

\[ \drule[width=5in]{eval-coeff-comp-case-inr}  \]

\[ \drule[width=5in]{eval-coeff-comp-csub} \]

\caption{Instrumented operational semantics (computations)}
\label{fig:eval-coeff-comp}
\Description{Rules for the instrumented operational semantics}
\end{figure}
\else 
\begin{figure}
\drules[eval-coeff-val]{$  \textcolor{\coeffectcolor}{ \textcolor{\coeffectcolor}{\gamma} }\! \cdot \! \rho   \vdash_{\mathit{coeff} }  \ottnt{V}  \Downarrow  \ottnt{W} $}{Value rules}{}
\[ \drule[width=3in]{eval-coeff-val-var} \quad \drule{eval-coeff-val-unit} \]

\[ \drule{eval-coeff-val-thunk}\!\drule[width=3in]{eval-coeff-val-vpair} \]

\[ \drule[width=3in]{eval-coeff-val-vsub} \]

\drules[eval-coeff-comp]{$  \textcolor{\coeffectcolor}{ \textcolor{\coeffectcolor}{\gamma} }\! \cdot \! \rho   \vdash_{\mathit{coeff} }  \ottnt{M}  \Downarrow  \ottnt{T} $}{Computation rules}{}
\[  \drule{eval-coeff-comp-abs}\   \drule{eval-coeff-comp-cpair} \]

\[  \drule[width=3in]{eval-coeff-comp-app-abs}\!\!\!\!\!\drule{eval-coeff-comp-split}\]

\[  \drule{eval-coeff-comp-return}\ \drule[width=4in]{eval-coeff-comp-force-thunk} \]

\[ \drule[width=3in]{eval-coeff-comp-letin-ret} \ \drule{eval-coeff-comp-fst}\]

\[       \drule[width=5in]{eval-coeff-comp-snd}  \ \drule[width=5in]{eval-coeff-comp-csub}\]

\caption{Instrumented operational semantics}
\label{fig:eval-coeff-val}
\label{fig:eval-coeff-comp}
\Description{Rules for the instrumented operational semantics}
\end{figure}
\fi

Next, we develop an \link{general/semantics.v}{EvalVal,EvalComp}{instrumented
  operational semantics}, shown in Figure~\ref{fig:eval-coeff-val}\ifextended
  and Figure~\ref{fig:eval-coeff-comp}\fi, that tracks coeffects using an
environment $\rho$, which maps variables to closed values, and a grade
vector $\textcolor{\coeffectcolor}{\gamma}$ of equal length, which implicitly maps variables to their
coeffects.  As before, we extend both a grade vector and corresponding
environment simultaneously with the notation
$  \textcolor{\coeffectcolor}{ \textcolor{\coeffectcolor}{\gamma} }\! \cdot \! \rho    \mathop{,} \;   \ottmv{x}   \mapsto ^{\textcolor{\coeffectcolor}{ \textcolor{\coeffectcolor}{q} } }  \ottnt{W}  $, equivalent to
$ \textcolor{\coeffectcolor}{ \ottsym{(}   \textcolor{\coeffectcolor}{ \textcolor{\coeffectcolor}{\gamma} \mathop{,} \textcolor{\coeffectcolor}{q} }   \ottsym{)} }\! \cdot \! \ottsym{(}   \rho   \mathop{,}   \ottmv{x}  \mapsto  \ottnt{W}   \ottsym{)} $.

We also use $\ottnt{W}$ as a metavariable
for \emph{closed} values and
$\ottnt{T}$ as a metavariable for \emph{closed terminal} computations. However,
closed terminals include coeffects here. They have the form
$ \ottkw{return} _{\textcolor{\coeffectcolor}{ \textcolor{\coeffectcolor}{q} } }  \ottnt{W} $, $ \mathbf{clo}(   \textcolor{\coeffectcolor}{ \textcolor{\coeffectcolor}{\gamma} }\! \cdot \! \rho  ,   \lambda  \ottmv{x} ^{\textcolor{\coeffectcolor}{ \textcolor{\coeffectcolor}{q} } }. \ottnt{M}   ) $, or
$ \mathbf{clo}(   \textcolor{\coeffectcolor}{ \textcolor{\coeffectcolor}{\gamma} }\! \cdot \! \rho  ,   \langle  \ottnt{M_{{\mathrm{1}}}} , \ottnt{M_{{\mathrm{2}}}}  \rangle   ) $, where $ \mathbf{clo}(   \textcolor{\coeffectcolor}{ \textcolor{\coeffectcolor}{\gamma} }\! \cdot \! \rho  ,  \ottnt{M}  ) $ denotes the
\emph{closure} of $\ottnt{M}$ under $ \textcolor{\coeffectcolor}{ \textcolor{\coeffectcolor}{\gamma} }\! \cdot \! \rho $. The grade vector in
the closure indicates the demands on the variables used by $\ottnt{M}$.

Unlike our instrumented operational semantics for effects, which calculates
the exact effect of a computation, this semantics cannot track
coeffects with precision.
For example, suppose we have a term $ \lambda  \ottmv{x} ^{\textcolor{\coeffectcolor}{  \textcolor{\coeffectcolor}{1}  } }. \ottnt{M} $
where $\ottnt{M}$ is a computation that both branches on its argument and uses it in at exactly one branch, such as
$ \ottkw{case}_{\textcolor{\coeffectcolor}{  \textcolor{\coeffectcolor}{1}  } }\;  \ottmv{x} \; \ottkw{of}\; \ottkw{inl} \; \ottmv{x_{{\mathrm{1}}}}  \rightarrow\;  \ottkw{return} \, \ottmv{x}  ;  \ottkw{inr} \; \ottmv{x_{{\mathrm{2}}}}  \rightarrow\;  \ottkw{return} \, \ottkw{inr} \, \ottsym{()} $.
\ifextended\else
\footnote{The system in our extended version~\cite{extended-version} and Coq development includes sums.}
\fi
What should this step to? If provided with an argument of the form $\ottkw{inl} \, \ottmv{y}$,
it should step to $ \mathbf{clo}(   \ottmv{x}   \mapsto ^{\textcolor{\coeffectcolor}{  \textcolor{\coeffectcolor}{1}  } }  \ottkw{inl} \, \ottmv{y}  ,   \lambda  \ottmv{x} ^{\textcolor{\coeffectcolor}{  \textcolor{\coeffectcolor}{1}  } }. \ottnt{M}   ) $.
If provided with an argument of the form $\ottkw{inr} \, \ottmv{y}$, it should step to
$ \mathbf{clo}(   \ottmv{x}   \mapsto ^{\textcolor{\coeffectcolor}{  \textcolor{\coeffectcolor}{0}  } }  \ottkw{inr} \, \ottmv{y}  ,   \lambda  \ottmv{x} ^{\textcolor{\coeffectcolor}{  \textcolor{\coeffectcolor}{1}  } }. \ottnt{M}   ) $. But, if this term is the entire program,
it is not clear what it should step to.
In general, depending on the argument, the body of a function
$ \lambda  \ottmv{x} ^{\textcolor{\coeffectcolor}{ \textcolor{\coeffectcolor}{q} } }. \ottnt{M} $ may require a different exact grade
on $\ottmv{x}$; all we know from the typing judgement is that $\textcolor{\coeffectcolor}{q}$ must be a bound on that usage.
We cannot write a precise rule for evaluating abstractions
to their closed terminal forms, because we do not have access to the argument yet
when doing that evaluation.

Therefore, as in the typing rules, the operational semantics also includes rules
for subcoeffecting, \rref{eval-coeff-val-vsub,eval-coeff-comp-csub}. These rules say
that if we can step a term with grades given by $\textcolor{\coeffectcolor}{\gamma}$ attached to the environment,
then we can step it with $\textcolor{\coeffectcolor}{\gamma}'$ for any $ \textcolor{\coeffectcolor}{ \textcolor{\coeffectcolor}{\gamma}' }\; \textcolor{\coeffectcolor}{\mathop{\leq_{\mathit{co} } } } \; \textcolor{\coeffectcolor}{ \textcolor{\coeffectcolor}{\gamma} } $, i.e., any less
precise accounting.

As in the semantics for CBPV without coeffects, we define ``evaluation'' of
values using the given environment (see
Figure~\ref{fig:eval-coeff-val}).
These rules mirror the typing rules: \rref{eval-coeff-val-var} requires the
evaluating variable to have $ \textcolor{\coeffectcolor}{1} $ as its corresponding grade and all other variables
to have $ \textcolor{\coeffectcolor}{0} $;
\rref{eval-coeff-val-unit}
requires that every variable be graded with 0;
\rref{eval-coeff-val-thunk} simply includes the grade vector in the closure along
with the environment, and \rref{eval-coeff-val-vpair}
sums the grades needed to evaluate subterms to their
closures.


Figure~\ref{fig:eval-coeff-comp} also shows
computations.
\Rref{eval-coeff-comp-abs,eval-coeff-comp-force-thunk,eval-coeff-comp-cpair,eval-coeff-comp-fst,eval-coeff-comp-snd} are largely the same as before,
just with the inclusion of grade vectors along with environments.
\ifextended
\Rref{eval-coeff-comp-sequence} simply sums the vectors required to evaluate each
subterm.
The sum type elimination rules scale the demands made by the term being
eliminated by $\textcolor{\coeffectcolor}{q}$ before adding them to the demands needed to evaluate
the rest of the computation, as in the typing rules. They also require
that $ \textcolor{\coeffectcolor}{ \textcolor{\coeffectcolor}{q} }\;  \textcolor{\coeffectcolor}{\mathop{\leq_{\mathit{co} } } } \; \textcolor{\coeffectcolor}{  \textcolor{\coeffectcolor}{1}  } $ for the branching behavior to be well-defined, as
in the typing rules. Intuitively, in a resource counting context, if we
have 0 copies of a value, we should not be able to use it to determine
which branch to take. \fi

In \rref{eval-coeff-comp-return}, we scale the grade needed to evaluate the
subterm to its closure by $\textcolor{\coeffectcolor}{q}$. In the elimination
\rref{eval-coeff-comp-app-abs,eval-coeff-comp-letin-ret,eval-coeff-comp-split},
if we are eliminating a value $\ottnt{V}$ and binding it to a variable
$\ottmv{x}$ with a grade $\textcolor{\coeffectcolor}{q}$ for use in some computation $\ottnt{M}$,
we must scale the grade vector needed to evaluate $\ottnt{V}$
by $\textcolor{\coeffectcolor}{q}$ before adding it to the grade vector needed to continue with $\ottnt{M}$, as
in the typing rules.

We prove a coeffect soundness theorem stating that if a term is well-typed
with some grade vector $\textcolor{\coeffectcolor}{\gamma}$, then given $\textcolor{\coeffectcolor}{\gamma}$ and some
environment $\rho$ that provides values of the correct type for all free
variables, it can evaluate to a terminal. Because both values and computations
make demands on their inputs, we state this property for both.
We formalize the requirement on $\rho$ as $\Gamma  \vDash  \rho$ in our logical
relation below, and this theorem follows immediately from the fundamental lemma.

\begin{theorem}[\link{general/soundness.v}{soundness}{Coeffect soundness}]
\label{thm:general-soundness}
    Let $\Gamma$ be a context and $\rho$ an environment mapping all variables
    in the domain of $\Gamma$ to closed values of the expected type, such 
that $\Gamma  \vDash  \rho$. Then:
\begin{enumerate}
\item If $ \textcolor{\coeffectcolor}{ \textcolor{\coeffectcolor}{\gamma} }\! \cdot \! \Gamma   \vdash_{\mathit{coeff} }  \ottnt{V}  \ottsym{:}  \ottnt{A}$ then
  $  \textcolor{\coeffectcolor}{ \textcolor{\coeffectcolor}{\gamma} }\! \cdot \! \rho   \vdash_{\mathit{coeff} }  \ottnt{V}  \Downarrow  \ottnt{W} $ for some closed value $\ottnt{W}$.
\item If $ \textcolor{\coeffectcolor}{ \textcolor{\coeffectcolor}{\gamma} }\! \cdot \! \Gamma   \vdash_{\mathit{coeff} }  \ottnt{M}  \ottsym{:}  \ottnt{B}$ then
  $  \textcolor{\coeffectcolor}{ \textcolor{\coeffectcolor}{\gamma} }\! \cdot \! \rho   \vdash_{\mathit{coeff} }  \ottnt{M}  \Downarrow  \ottnt{T} $ for some closed terminal
  computation $\ottnt{T}$.
\end{enumerate}
\end{theorem}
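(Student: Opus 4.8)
The plan is to prove both parts by simultaneous mutual induction, but to route the argument through a logical relation rather than inducting on the typing derivations directly. A naive induction on typing breaks down at the elimination forms that unfold a captured computation---application (\rref{coeff-app}), forcing (\rref{coeff-force}), and the projections (\rref{coeff-fst}, \rref{coeff-snd}). In each of these the principal subterm first evaluates to a closure $\mathbf{clo}(\textcolor{blue}{\delta} \cdot \rho', \ldots)$ that packages a \emph{definition-time} environment, and to complete the evaluation we must run the closure's body; but that body is typed relative to the captured context, which is invisible to an inversion of the current judgement. Exactly as in the effect-soundness development, I would therefore define, by recursion on types, mutually recursive predicates $\mathcal{W}\llbracket A \rrbracket$ on closed values and $\mathcal{T}\llbracket B \rrbracket$ on closed terminals, lifted to $\mathcal{V}\llbracket A \rrbracket$ on graded-environment/value pairs and $\mathcal{M}\llbracket B \rrbracket$ on graded-environment/computation pairs by demanding successful evaluation into the terminal relation. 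The load-bearing clause is the one for $\ottkw{U}\,B$: a thunk closure is good precisely when its captured graded environment together with its body lies in $\mathcal{M}\llbracket B \rrbracket$, so that goodness of a value already certifies that forcing or applying it evaluates.

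With the relation fixed I would define semantic typing, writing $\Gamma \vDash \rho$ for the pointwise statement that $\rho$ supplies, for each $\ottmv{x} : A$ in $\Gamma$, a closed value in $\mathcal{W}\llbracket A \rrbracket$ (this is precisely the theorem's hypothesis on $\rho$), and then prove one semantic-typing lemma for each syntactic rule of Figure~\ref{fig:cbpv-coeffects}. The fundamental lemma---that every well-typed value or computation is semantically well-typed---follows by a straightforward induction, and the coeffect soundness theorem is its immediate corollary: unfolding $\mathcal{V}\llbracket A \rrbracket$ and $\mathcal{M}\llbracket B \rrbracket$ for the given $\textcolor{blue}{\gamma} \cdot \rho$ yields exactly the required $\ottnt{W}$ and $\ottnt{T}$.

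The work in each case is to reconcile the grade arithmetic of the typing rule with that of the matching evaluation rule. For the purely structural rules (\rref{coeff-thunk}, \rref{coeff-inl}, \rref{coeff-inr}, \rref{coeff-force}, \rref{coeff-cpair}, and the projections) the grade vector is threaded through unchanged, so the case follows directly from the induction hypotheses. The substantive rules combine vectors by scaling and addition: \rref{coeff-app} uses $\textcolor{blue}{\gamma}_1 + \textcolor{blue}{q} \cdot \textcolor{blue}{\gamma}_2$, \rref{coeff-ret} scales by $\textcolor{blue}{q}$, \rref{coeff-letin} forms $\textcolor{blue}{q}_2 \cdot \textcolor{blue}{\gamma}_1 + \textcolor{blue}{\gamma}_2$, and \rref{coeff-split} and \rref{coeff-case} are analogous. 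In each, inversion exposes the same combination, but only up to the slack $\textcolor{blue}{\gamma} \leq_{co} (\cdots)$, whereas the operational rules demand the combination exactly. I would close this gap using the two sub-coeffecting lemmas: the operational-semantics sub-coeffecting lemma relaxes an evaluation derivation proved at the precise combined vector down to the weaker $\textcolor{blue}{\gamma}$ advertised in the conclusion, while typing sub-coeffecting plays the dual role when feeding subterm typings into the induction hypotheses.

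I expect the main obstacle to be the application case together with the other closure-eliminating cases, where three obligations must be discharged at once: the closure clause of the logical relation must supply evaluability of the body, the captured environment must be shown to satisfy $\Gamma' \vDash \rho'$ so that its induction hypothesis applies, and the scaled grade vector ($\textcolor{blue}{q} \cdot \textcolor{blue}{\gamma}_2$ for application, $\textcolor{blue}{q}_2 \cdot \textcolor{blue}{\gamma}_1$ for \rref{coeff-letin}) must be matched between typing and evaluation via sub-coeffecting. The side condition $\textcolor{blue}{q} \leq_{co} \textcolor{blue}{1}$ in the sum-elimination rule \rref{coeff-case} must also be carried along, as it is what licenses the branching step operationally. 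Once the closure clause is set up correctly, the remaining cases reduce to routine grade-vector bookkeeping.
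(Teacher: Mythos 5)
Your proposal matches the paper's proof essentially exactly: the same four-part logical relation ($\mathcal{W}$, $\mathcal{T}$, $\mathcal{V}$, $\mathcal{M}$) with the thunk clause carrying the closure's membership in $\mathcal{M}\llbracket B \rrbracket$, the same semantic typing judgements and per-rule semantic lemmas, and the theorem as a corollary of the fundamental lemma. The only cosmetic difference is that the paper's evaluation rules already carry $\leq_{co}$ premises, so the slack between the typing rule's combined vector and $\textcolor{blue}{\gamma}$ is absorbed directly by the rule rather than by a separate appeal to operational sub-coeffecting; this does not affect correctness.
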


The proof of coeffect soundness is similar to the proof
of effect soundness, and requires a similar logical relation.

\begin{definition}[\link{general/semtyping.v}{LRV,LRC}{CBPV with General Coeffects: Logical Relation}]
\[
\begin{array}{lcl@{\ }l}
 \mathcal{W}\llbracket  \ottkw{U} \, \ottnt{B}  \rrbracket    &=& \{\  \mathbf{clo}(  \textcolor{\coeffectcolor}{ \textcolor{\coeffectcolor}{\gamma} }\! \cdot \! \rho  , \{  \ottnt{M}  \} )  &|\
                           \ottsym{(}   \textcolor{\coeffectcolor}{ \textcolor{\coeffectcolor}{\gamma} }\! \cdot \! \rho   \ottsym{,}  \ottnt{M}  \ottsym{)} \, \in \,  \mathcal{M}\llbracket  \ottnt{B}  \rrbracket \ \} \\
 \mathcal{W}\llbracket  \ottkw{unit}  \rrbracket    &=& \{\ ()\ \}                                  \\
 \mathcal{W}\llbracket   \ottnt{A_{{\mathrm{1}}}} \times \ottnt{A_{{\mathrm{2}}}}   \rrbracket  &=& \{\ \ottsym{(}  \ottnt{W_{{\mathrm{1}}}}  \ottsym{,}  \ottnt{W_{{\mathrm{2}}}}  \ottsym{)} &|\
                              \ottnt{W_{{\mathrm{1}}}} \, \in \,  \mathcal{W}\llbracket  \ottnt{A_{{\mathrm{1}}}}  \rrbracket  \ \textit{and}\  \ottnt{W_{{\mathrm{2}}}} \, \in \,  \mathcal{W}\llbracket  \ottnt{A_{{\mathrm{2}}}}  \rrbracket   \}\\
\ifextended
 \mathcal{W}\llbracket  \ottnt{A_{{\mathrm{1}}}}  \ottsym{+}  \ottnt{A_{{\mathrm{2}}}}  \rrbracket  &=& \{\ \ottkw{inl} \, \ottnt{W}     &|\
                             \ottnt{W} \, \in \,  \mathcal{W}\llbracket  \ottnt{A_{{\mathrm{1}}}}  \rrbracket  \}\ \cup
                         \{\ \ottkw{inr} \, \ottnt{W} |\ \ottnt{W} \, \in \,  \mathcal{W}\llbracket  \ottnt{A_{{\mathrm{2}}}}  \rrbracket  \}  \\ \fi
\\
 \mathcal{T}\llbracket   \ottkw{F}_{\color{\coeffectcolor}{ \textcolor{\coeffectcolor}{q} } }\;  \ottnt{A}   \rrbracket     &=&  \{\  \ottkw{return} _{\textcolor{\coeffectcolor}{ \textcolor{\coeffectcolor}{q} } }  \ottnt{W}   &|\
                               \ottnt{W} \, \in \,  \mathcal{W}\llbracket  \ottnt{A}  \rrbracket  \ \} \\
 \mathcal{T}\llbracket   \ottnt{A} ^{\textcolor{\coeffectcolor}{ \textcolor{\coeffectcolor}{q} } } \rightarrow  \ottnt{B}   \rrbracket  &=&  \{\  \mathbf{clo}(   \textcolor{\coeffectcolor}{ \textcolor{\coeffectcolor}{\gamma} }\! \cdot \! \rho  ,   \lambda  \ottmv{x} ^{\textcolor{\coeffectcolor}{ \textcolor{\coeffectcolor}{q} } }. \ottnt{M}   )  &|\
                \textit{for all}\ \ottnt{W} \, \in \,  \mathcal{W}\llbracket  \ottnt{A}  \rrbracket ,
                \ottsym{(}  \ottsym{(}    \textcolor{\coeffectcolor}{ \textcolor{\coeffectcolor}{\gamma} }\! \cdot \! \rho    \mathop{,} \;   \ottmv{x}   \mapsto ^{\textcolor{\coeffectcolor}{ \textcolor{\coeffectcolor}{q} } }  \ottnt{W}    \ottsym{)}  \ottsym{,}  \ottnt{M}  \ottsym{)} \, \in \,  \mathcal{M}\llbracket  \ottnt{B}  \rrbracket \ \} \\
 \mathcal{T}\llbracket   \ottnt{B_{{\mathrm{1}}}}   \mathop{\&}   \ottnt{B_{{\mathrm{2}}}}   \rrbracket   &=& \{\  \mathbf{clo}(   \textcolor{\coeffectcolor}{ \textcolor{\coeffectcolor}{\gamma} }\! \cdot \! \rho  ,   \langle  \ottnt{M_{{\mathrm{1}}}} , \ottnt{M_{{\mathrm{2}}}}  \rangle   )    &|\
                  \ottsym{(}   \textcolor{\coeffectcolor}{ \textcolor{\coeffectcolor}{\gamma} }\! \cdot \! \rho   \ottsym{,}  \ottnt{M_{{\mathrm{1}}}}  \ottsym{)} \, \in \,  \mathcal{M}\llbracket  \ottnt{B_{{\mathrm{1}}}}  \rrbracket  \ \textit{and}\  \ottsym{(}   \textcolor{\coeffectcolor}{ \textcolor{\coeffectcolor}{\gamma} }\! \cdot \! \rho   \ottsym{,}  \ottnt{M_{{\mathrm{2}}}}  \ottsym{)} \, \in \,  \mathcal{M}\llbracket  \ottnt{B_{{\mathrm{2}}}}  \rrbracket  \ \} \\
\\
\textit{Closures}\\
 \mathcal{V}\llbracket  \ottnt{A}  \rrbracket  &=& \{\ ( \textcolor{\coeffectcolor}{ \textcolor{\coeffectcolor}{\gamma} }\! \cdot \! \rho  , \ottnt{V}) &|\
                       \textcolor{\coeffectcolor}{ \textcolor{\coeffectcolor}{\gamma} }\! \cdot \! \rho   \vdash_{\mathit{coeff} }  \ottnt{V}  \Downarrow  \ottnt{W}  \ \textit{and}\  \ottnt{W} \, \in \,  \mathcal{W}\llbracket  \ottnt{A}  \rrbracket  \ \} \\
 \mathcal{M}\llbracket  \ottnt{B}  \rrbracket  &=& \{\ ( \textcolor{\coeffectcolor}{ \textcolor{\coeffectcolor}{\gamma} }\! \cdot \! \rho  , \ottnt{M}) &|\
                       \textcolor{\coeffectcolor}{ \textcolor{\coeffectcolor}{\gamma} }\! \cdot \! \rho   \vdash_{\mathit{coeff} }  \ottnt{M}  \Downarrow  \ottnt{T}  \ \textit{and}\  \ottnt{T} \, \in \,  \mathcal{T}\llbracket  \ottnt{B}  \rrbracket  \ \} \\
\\
\end{array}
\]
\end{definition}

\begin{definition}[\link{general/semtyping.v}{SemVWt,SemCWt}{CBPV with General Coeffects: Semantic Typing}]
\[
\begin{array}{lcl@{\ }l}
\Gamma  \vDash  \rho &=&
  \multicolumn{2}{l}{\ottmv{x}  \ottsym{:}  \ottnt{A} \, \in \, \Gamma
  \ implies \ \textit{exists} \ \ottnt{W}, \  \ottmv{x}  \mapsto  \ottnt{W} \, \in \, \rho \ \textit{and}\  \ottnt{W} \, \in \,  \mathcal{W}\llbracket  \ottnt{A}  \rrbracket   }  \\
 \textcolor{\coeffectcolor}{ \textcolor{\coeffectcolor}{\gamma} }\! \cdot \! \Gamma   \vDash_{\mathit{coeff} }  \ottnt{V}  \ottsym{:}  \ottnt{A} &=& \multicolumn{2}{l}{ \textit{for all} \ \rho,
  \Gamma  \vDash  \rho \ implies \ \textit{exists} \ \ottnt{W}, \    \textcolor{\coeffectcolor}{ \textcolor{\coeffectcolor}{\gamma} }\! \cdot \! \rho   \vdash_{\mathit{coeff} }  \ottnt{V}  \Downarrow  \ottnt{W}  \ \textit{and}\  \ottnt{W} \, \in \,  \mathcal{W}\llbracket  \ottnt{A}  \rrbracket  } \\
 \textcolor{\coeffectcolor}{ \textcolor{\coeffectcolor}{\gamma} }\! \cdot \! \Gamma   \vDash_{\mathit{coeff} }  \ottnt{M}  \ottsym{:}  \ottnt{B} &=& \multicolumn{2}{l}{ \textit{for all} \ \rho,
  \Gamma  \vDash  \rho \ implies \ \ottsym{(}   \textcolor{\coeffectcolor}{ \textcolor{\coeffectcolor}{\gamma} }\! \cdot \! \rho   \ottsym{,}  \ottnt{M}  \ottsym{)} \, \in \,  \mathcal{M}\llbracket  \ottnt{B}  \rrbracket } \\
\end{array}
\]
\end{definition}

We can now state the fundamental lemma, which derives the
soundness theorem as a corollary.

\begin{theorem}[\link{general/soundness.v}{fundamental\_lemma}{Fundamental lemma: coeffect soundness}]
For all $\textcolor{\coeffectcolor}{\gamma}$, $\Gamma$, if $ \textcolor{\coeffectcolor}{ \textcolor{\coeffectcolor}{\gamma} }\! \cdot \! \Gamma   \vdash_{\mathit{coeff} }  \ottnt{V}  \ottsym{:}  \ottnt{A}$ then
$ \textcolor{\coeffectcolor}{ \textcolor{\coeffectcolor}{\gamma} }\! \cdot \! \Gamma   \vDash_{\mathit{coeff} }  \ottnt{V}  \ottsym{:}  \ottnt{A}$,
and
for all $\textcolor{\coeffectcolor}{\gamma}$, $\Gamma$, if $ \textcolor{\coeffectcolor}{ \textcolor{\coeffectcolor}{\gamma} }\! \cdot \! \Gamma   \vdash_{\mathit{coeff} }  \ottnt{M}  \ottsym{:}  \ottnt{B}$ then
$ \textcolor{\coeffectcolor}{ \textcolor{\coeffectcolor}{\gamma} }\! \cdot \! \Gamma   \vDash_{\mathit{coeff} }  \ottnt{M}  \ottsym{:}  \ottnt{B}$.
\end{theorem}

We can show \ref{thm:general-soundness} by unfolding the definitions of 
$ \textcolor{\coeffectcolor}{ \textcolor{\coeffectcolor}{\gamma} }\! \cdot \! \Gamma   \vDash_{\mathit{coeff} }  \ottnt{V}  \ottsym{:}  \ottnt{A}$ and $ \textcolor{\coeffectcolor}{ \textcolor{\coeffectcolor}{\gamma} }\! \cdot \! \Gamma   \vDash_{\mathit{coeff} }  \ottnt{M}  \ottsym{:}  \ottnt{B}$, which 
give us the desired evaluations.

\subsection{A Strange Semantics?}

The operational semantics and soundness proof in this section 
work for any instantiation of the coeffect semiring. However, this semantics has
strange implications for the resource usage coeffect. Here, the soundness
theorem should say that if $  \textcolor{\coeffectcolor}{ \textcolor{\coeffectcolor}{\gamma} }\! \cdot \! \rho   \vdash_{\mathit{coeff} }  \ottnt{M}  \Downarrow  \ottnt{T} $, then the evaluation
of $\ottnt{M}$ used its variables at most the number of times indicated by
$\textcolor{\coeffectcolor}{\gamma}$. If $\textcolor{\coeffectcolor}{\gamma}$ says that a variable $\ottmv{x}$ has grade $ \textcolor{\coeffectcolor}{0} $, then
there should never be a use of \rref{eval-coeff-val-var} with the variable $\ottmv{x}$.

But, on closer examination of the operational semantics, this is not exactly
what this soundness theorem implies. Consider the following example:
\[  \ottmv{x}  :^{\textcolor{\coeffectcolor}{  \textcolor{\coeffectcolor}{0}  } }  \ottkw{U} \, \ottsym{(}  \ottkw{F} \, \ottkw{unit}  \ottsym{)}   \vdash_{\mathit{coeff} }   \ottmv{z_{{\mathrm{1}}}}  \leftarrow^{\textcolor{\coeffectcolor}{  \textcolor{\coeffectcolor}{0}  } }  \ottmv{x}  \ottsym{!} \ \ottkw{in}\   \ottkw{return} _{\textcolor{\coeffectcolor}{  \textcolor{\coeffectcolor}{1}  } }\;  \ottsym{()}    \ottsym{:}   \ottkw{F}_{\color{\coeffectcolor}{  \textcolor{\coeffectcolor}{1}  } }\;  \ottkw{unit}  \]
$\ottmv{x}  \ottsym{!}$ does not contribute to the final result,
and the resources used in its evaluation are accordingly multiplied by $ \textcolor{\coeffectcolor}{0} $ when
we calculate the grade for $\ottmv{x}$ in the context. However, our
semantics evaluates $\ottmv{x}$ once here using \rref{eval-coeff-val-var},
violating the principle we described above.

More generally, we encounter this issue with any rule in the operational semantics
that scales resources based on some annotation in the terms. For example, in
\rref{eval-coeff-comp-app-abs}, the resources used by the evaluation of the
argument $\textcolor{\coeffectcolor}{\gamma}_{{\mathrm{2}}}$ are scaled by $\textcolor{\coeffectcolor}{q}$, the grade on the function
argument. The total resources of the application $\textcolor{\coeffectcolor}{\gamma}$ must equal
this scaled vector plus
$\textcolor{\coeffectcolor}{\gamma}_{{\mathrm{1}}}$, the resources used to evaluate the
function -- \emph{i.e.}, we must have $ \textcolor{\coeffectcolor}{ \textcolor{\coeffectcolor}{\gamma} } \equiv \textcolor{\coeffectcolor}{  \textcolor{\coeffectcolor}{ \textcolor{\coeffectcolor}{\gamma}_{{\mathrm{1}}} \ottsym{+}  \textcolor{\coeffectcolor}{ \textcolor{\coeffectcolor}{q} \cdot \textcolor{\coeffectcolor}{\gamma}_{{\mathrm{2}}} }  }  } $. What if $\textcolor{\coeffectcolor}{q}$ is $ \textcolor{\coeffectcolor}{0} $?  The resources
needed to compute the argument are then not accounted for in
$\textcolor{\coeffectcolor}{\gamma}$. This suggests that we should not evaluate the argument at all in
this case, so we need to adjust our operational semantics.

\section{CBPV and coeffects (Version 2: Resource  Tracking)}
\label{sec:resource-usage}

\begin{figure}
\drules[lin]{$ \textcolor{\coeffectcolor}{ \textcolor{\coeffectcolor}{\gamma} }\! \cdot \! \Gamma   \vdash_{\mathit{lin} }  \ottnt{M}  \ottsym{:}  \ottnt{B}$}{Modified typing rule}{}
\[ \drule[width=8in]{lin-letin}\]
\drules[eval-lin-comp]{$  \textcolor{\coeffectcolor}{ \textcolor{\coeffectcolor}{\gamma} }\! \cdot \! \rho   \vdash_{\mathit{lin} }  \ottnt{M}  \Downarrow  \ottnt{T} $}{New and modified computation rules}{}
\[ \drule[width=5in]{eval-lin-comp-app-abs-zero} \]

\[ \drule{eval-lin-comp-ret-zero} \ \ \drule[width=3in]{eval-lin-comp-split-zero} \]

\[ \drule[width=8in]{eval-lin-comp-letin-ret} \]
\caption{Typing rules and instrumented operational semantics for resource tracking}
\label{fig:eval-lin-comp-zero}
\label{fig:cbpv-lin}
\Description{Modified computation rules for the instrumented operational semantics}
\end{figure}

In this section, we discuss how, with a few additional axioms, we can modify
our instrumented operational semantics and type system to produce a better
model for resource tracking. Our goal is to ensure that we never evaluate
values and computations without including their resource usage in the final count. The
modifications that we discuss here are summarized in
Figure~\ref{fig:eval-lin-comp-zero}. We
use the judgements $ \textcolor{\coeffectcolor}{ \textcolor{\coeffectcolor}{\gamma} }\! \cdot \! \Gamma   \vdash_{\mathit{lin} }  \ottnt{M}  \ottsym{:}  \ottnt{B}$ and $ \textcolor{\coeffectcolor}{ \textcolor{\coeffectcolor}{\gamma} }\! \cdot \! \Gamma   \vdash_{\mathit{lin} }  \ottnt{V}  \ottsym{:}  \ottnt{A}$
to refer to the \link{resource/CBPV/typing.v}{}{modified typing rules} of this section and
$  \textcolor{\coeffectcolor}{ \textcolor{\coeffectcolor}{\gamma} }\! \cdot \! \rho   \vdash_{\mathit{lin} }  \ottnt{M}  \Downarrow  \ottnt{T} $ to refer to the \link{resource/CBPV/semantics.v}{}{modified operational semantics},
highlighting the connection between resource usage coeffects and bounded linear logic.

First, we axiomatize that the semiring is nontrivial. If $ \textcolor{\coeffectcolor}{  \textcolor{\coeffectcolor}{1}  }\; = \; \textcolor{\coeffectcolor}{  \textcolor{\coeffectcolor}{0}  } $, resource
tracking via grades is meaningless, and our general semantics degenerates to
standard CBPV. Second, we require
that if $ \textcolor{\coeffectcolor}{  \textcolor{\coeffectcolor}{0}  }\;  \textcolor{\coeffectcolor}{\mathop{\leq_{\mathit{co} } } } \; \textcolor{\coeffectcolor}{  \textcolor{\coeffectcolor}{ \textcolor{\coeffectcolor}{q}_{{\mathrm{1}}}  +  \textcolor{\coeffectcolor}{q}_{{\mathrm{2}}} }  } $, then $ \textcolor{\coeffectcolor}{ \textcolor{\coeffectcolor}{q}_{{\mathrm{1}}} }\; = \; \textcolor{\coeffectcolor}{  \textcolor{\coeffectcolor}{0}  } $ and $ \textcolor{\coeffectcolor}{ \textcolor{\coeffectcolor}{q}_{{\mathrm{2}}} }\; = \; \textcolor{\coeffectcolor}{  \textcolor{\coeffectcolor}{0}  } $. If either
subterm in a value pair requires nonzero resources, we should not be
able to evaluate the pair with no resources. Finally, for similar reasons, we
require that there be no nonzero zero divisors in the semiring, \emph{i.e.},
if $ \textcolor{\coeffectcolor}{  \textcolor{\coeffectcolor}{0}  }\; = \; \textcolor{\coeffectcolor}{  \textcolor{\coeffectcolor}{ \textcolor{\coeffectcolor}{q}_{{\mathrm{1}}}   \cdot   \textcolor{\coeffectcolor}{q}_{{\mathrm{2}}} }  } $, then $ \textcolor{\coeffectcolor}{ \textcolor{\coeffectcolor}{q}_{{\mathrm{1}}} }\; = \; \textcolor{\coeffectcolor}{  \textcolor{\coeffectcolor}{0}  } $ or $ \textcolor{\coeffectcolor}{ \textcolor{\coeffectcolor}{q}_{{\mathrm{2}}} }\; = \; \textcolor{\coeffectcolor}{  \textcolor{\coeffectcolor}{0}  } $.
Semirings that satisfy these additional constraints include natural numbers, with their usual or discrete orderings, or the $\{ 0 , 1 , \omega \}$ semiring that tracks whether inputs are unused, only used linearly, or with any usage. Note that 
$ \textcolor{\coeffectcolor}{1} $ is incomparable to $ \textcolor{\coeffectcolor}{0} $ in this semiring.

In this system, the $ \textcolor{\coeffectcolor}{0} $ grade denotes that the corresponding variable is
inaccessible, i.e., used $ \textcolor{\coeffectcolor}{0} $ times, so anywhere we eliminate a value and bind it to an
inaccessible variable (or return a value with grade $ \textcolor{\coeffectcolor}{0} $), we require special
treatment.
\Rref{eval-coeff-comp-app-abs,eval-coeff-comp-return,eval-coeff-comp-split} all have this property,
so we modify these rules to require that the relevant grade
be nonzero.
We also add new rules that apply when the grade
\emph{is} zero. These rules, shown in
Figure~\ref{fig:eval-lin-comp-zero}, discard the unused value $\ottnt{V}$ without evaluating it
and use a new, untyped, closed value $ \mathop{\lightning} $ in place of the result of evaluating $\ottnt{V}$.
Because values are pure, discarding an unused value does not alter any effects of the program.

However, \rref{eval-coeff-comp-letin-ret} requires special consideration. Unlike in the
rules above, which discard values, this rule discards a \emph{computation} -- but
because that computation could be effectful, this could change the semantics in unintended ways.
Following related work~\cite{gavazzo:quantitative,dal-lago:relational-theory},
we reconcile this by adding a notion of $ \textcolor{\coeffectcolor}{q} \ \|\ \textcolor{\coeffectcolor}{1} $,
which is equivalent to $\textcolor{\coeffectcolor}{q}$ when $\textcolor{\coeffectcolor}{q}$ is nonzero and $ \textcolor{\coeffectcolor}{1} $ otherwise.
We continue to allow the syntax of the term itself to contain any $\textcolor{\coeffectcolor}{q}_{{\mathrm{2}}}$, but the rest of the typing rule
refers to $ \textcolor{\coeffectcolor}{q}_{{\mathrm{2}}} \ \|\ \textcolor{\coeffectcolor}{1} $ instead.
(All other typing rules stay the same.)
The evaluation rule, \rref{eval-lin-comp-letin-ret}, follows the same pattern (see Figure~\ref{fig:eval-lin-comp-zero}).
Note that this modified evaluation rule introduces a new source of imprecision: we may
consume resources to evaluate code without ever using its result, making our final resource
accounting more of an overapproximation.

With these modifications, we update our logical relation with
a special case for zero resources below.
For brevity we show only the changes.

\begin{definition}[\link{resource/CBPV/semtyping.v}{LRV,LRC}{CBPV with Resource Coeffects: Logical Relation}]
\[
\begin{array}{lcl@{\ }l}
\textit{Closed graded values}\\
 \mathcal{W}_{\textcolor{\coeffectcolor}{  \textcolor{\coeffectcolor}{0}  } }\llbracket  \ottnt{A}  \rrbracket    &=& \{\  \mathop{\lightning}  \ \} \\
 \mathcal{W}_{\textcolor{\coeffectcolor}{ \textcolor{\coeffectcolor}{q} } }\llbracket  \ottnt{A}  \rrbracket    &=&  \mathcal{W}\llbracket  \ottnt{A}  \rrbracket  \mbox{ when $\textcolor{\coeffectcolor}{q} \neq 0$} \\
\\

\textit{Closed terminals}\\
 \mathcal{T}\llbracket   \ottkw{F}_{\color{\coeffectcolor}{ \textcolor{\coeffectcolor}{q} } }\;  \ottnt{A}   \rrbracket     &=&  \{\  \ottkw{return} _{\textcolor{\coeffectcolor}{ \textcolor{\coeffectcolor}{q} } }  \ottnt{W}   &|\
                               \ottnt{W} \, \in \,  \mathcal{W}_{\textcolor{\coeffectcolor}{ \textcolor{\coeffectcolor}{q} } }\llbracket  \ottnt{A}  \rrbracket  \ \} \\
 \mathcal{T}\llbracket   \ottnt{A} ^{\textcolor{\coeffectcolor}{ \textcolor{\coeffectcolor}{q} } } \rightarrow  \ottnt{B}   \rrbracket  &=&  \{\  \mathbf{clo}(   \textcolor{\coeffectcolor}{ \textcolor{\coeffectcolor}{\gamma} }\! \cdot \! \rho  ,   \lambda  \ottmv{x} ^{\textcolor{\coeffectcolor}{ \textcolor{\coeffectcolor}{q} } }. \ottnt{M}   )  &|\
                \mathit{forall} \ \ottnt{W} \, \in \,  \mathcal{W}_{\textcolor{\coeffectcolor}{ \textcolor{\coeffectcolor}{q} } }\llbracket  \ottnt{A}  \rrbracket , \\
&&&\qquad                \ottsym{(}  \ottsym{(}    \textcolor{\coeffectcolor}{ \textcolor{\coeffectcolor}{\gamma} }\! \cdot \! \rho    \mathop{,} \;   \ottmv{x}   \mapsto ^{\textcolor{\coeffectcolor}{ \textcolor{\coeffectcolor}{q} } }  \ottnt{W}    \ottsym{)}  \ottsym{,}  \ottnt{M}  \ottsym{)} \, \in \,  \mathcal{M}\llbracket  \ottnt{B}  \rrbracket \ \} \\
\end{array}
\]
\end{definition}

Furthermore, we update our semantic typing relation for environments to also
include a special case for zero; in this case the environment need not have a closed value for that variable.
(The remaining definitions do not change other than to use the
resource accounting operational semantics. In particular, $ \mathcal{V}\llbracket  \ottnt{A}  \rrbracket $ still requires
the resulting closed value to be in $ \mathcal{W}\llbracket  \ottnt{A}  \rrbracket $. )

\begin{definition}[\link{resource/CBPV/semtyping.v}{SemVWt,SemCWt}{CBPV with Resource Coeffects: Semantic Typing}]
\[
\begin{array}{lcl@{\ }l}
\textcolor{\coeffectcolor}{\gamma}  \cdot  \Gamma  \vDash  \rho &=&
  \multicolumn{2}{l}{ \ottmv{x}  :^{\textcolor{\coeffectcolor}{ \textcolor{\coeffectcolor}{q} } }  \ottnt{A}  \, \in \,  \textcolor{\coeffectcolor}{ \textcolor{\coeffectcolor}{\gamma} }\! \cdot \! \Gamma 
  \ \mathit{implies} \   \textcolor{\coeffectcolor}{ \textcolor{\coeffectcolor}{q} }\; = \; \textcolor{\coeffectcolor}{  \textcolor{\coeffectcolor}{0}  }  \ \textit{or}\   (  \ottmv{x}  \mapsto  \ottnt{W} \, \in \, \rho \ \textit{and}\  \ottnt{W} \, \in \,  \mathcal{W}\llbracket  \ottnt{A}  \rrbracket   )     }  \\
\end{array}
\]
\end{definition}

With these updates, we again prove the fundamental theorem. As in the previous
section, if we unfold the definitions above, this theorem gives us exactly the
\link{resource/CBPV/soundness.v}{soundness}{soundness theorem} we would like.

\begin{theorem}[\link{resource/CBPV/soundness.v}{fundamental\_lemma}{Fundamental lemma: resource soundness}]
For all $\textcolor{\coeffectcolor}{\gamma}$, $\Gamma$, if $ \textcolor{\coeffectcolor}{ \textcolor{\coeffectcolor}{\gamma} }\! \cdot \! \Gamma   \vdash_{\mathit{lin} }  \ottnt{V}  \ottsym{:}  \ottnt{A}$, then
$\textcolor{\coeffectcolor}{\gamma}  \cdot  \Gamma  \vDash_{\mathit{lin} }  \ottnt{V}  \ottsym{:}  \ottnt{A}$,
and
for all $\textcolor{\coeffectcolor}{\gamma}$, $\Gamma$, if $ \textcolor{\coeffectcolor}{ \textcolor{\coeffectcolor}{\gamma} }\! \cdot \! \Gamma   \vdash_{\mathit{lin} }  \ottnt{M}  \ottsym{:}  \ottnt{B}$, then
$\textcolor{\coeffectcolor}{\gamma}  \cdot  \Gamma  \vDash_{\mathit{lin} }  \ottnt{M}  \ottsym{:}  \ottnt{B}$.
\end{theorem}

We can also use this theorem to reason about unused variables. For example,
suppose we type check some computation $\ottnt{M}$ in the context of an
inaccessible variable $x$. Instantiating the theorem above with this context
assures us that evaluation succeeds even when variables are mapped to
$ \mathop{\lightning} $ in the environment.

\begin{corollary}[Inaccessible variable example]
  For all $\ottnt{M}$ and $\ottnt{B}$, if $ \ottmv{x}  :^{\textcolor{\coeffectcolor}{  \textcolor{\coeffectcolor}{0}  } }  \ottnt{A}   \vdash_{\mathit{lin} }  \ottnt{M}  \ottsym{:}  \ottnt{B}$, then 
  there exists some $\ottnt{T}$, such that $  \ottmv{x}   \mapsto ^{\textcolor{\coeffectcolor}{  \textcolor{\coeffectcolor}{0}  } }  \mathop{\lightning}   \vdash_{\mathit{lin} }  \ottnt{M}  \Downarrow  \ottnt{T} $.
\end{corollary}

Because the operational semantics does not include any rules for evaluating
$ \mathop{\lightning} $, we can conclude that $ \textcolor{\coeffectcolor}{0} $-marked variables are never used by
the operational semantics. Furthermore, there are no assumptions about the
structure of $ \mathop{\lightning} $ values, so we can discard them during computation.

\subsection{Translation Soundness}
\label{sec:coeffect-translation}

As with effects, we explore the translation of coeffect-aware CBN and CBV
$\lambda$-calculi to CBPV. As in our CBPV extension with coeffects, the source
type systems are parameterized by a preordered semiring structure of coeffects
and combine the typing context with $\textcolor{\coeffectcolor}{\gamma}$, a vector of coeffect
annotations that describe the demands on each variable.

\begin{figure}
\[ \drule[width=4in]{cbncoeff-app} \drule{cbncoeff-box}\; \]
\[ \drule[width=4in]{cbncoeff-unbox} \]
\caption{CBN with coeffect tracking}
\label{fig:cbncoeff}
\Description{The typing rules for CBN augmented with coeffect tracking}
\end{figure}

\begin{figure}
\[ \drule[width=5in]{cbvcoeff-app} \]
\[ \drule{cbvcoeff-box}\;\drule[width=4in]{cbvcoeff-unbox} \]
\caption{CBV with coeffect tracking}
\label{fig:cbvcoeff}
\Description{The typing rules for CBV augmented with coeffect tracking}
\end{figure}

The type-and-coeffect system that we consider as the starting point of our CBN
translation is adapted from the simple type system of
\citet{weirich:graded-haskell} and is similar to the system developed by
\citet{abel:icfp2020}. The differences between this source language and the
related work are minor. The design of our CBV language is inspired by
\citet{dal-lago:relational-theory}. To make the comparison clear, we present
it as a standard CBV lambda calculus instead of fine-grained CBV.  Other
changes to the language include the introduction of subcoeffecting, allowing
functions to take $\textcolor{\coeffectcolor}{q}$ copies of their argument instead of one (and
annotating applications with $\textcolor{\coeffectcolor}{q}$), and replacing $ \textcolor{\coeffectcolor}{q}  \wedge \textcolor{\coeffectcolor}{1} $ with
$ \textcolor{\coeffectcolor}{q} \ \|\ \textcolor{\coeffectcolor}{1} $ to force the evaluation of subterms. (We choose $ \textcolor{\coeffectcolor}{q} \ \|\ \textcolor{\coeffectcolor}{1} $
over $ \textcolor{\coeffectcolor}{q}  \wedge \textcolor{\coeffectcolor}{1} $ to avoid requiring the existence
of $ \textcolor{\coeffectcolor}{q}  \wedge \textcolor{\coeffectcolor}{1} $ as an axiom of the semiring. The difference is minor.)

The rules for the CBN version of the system appear in
\link{resource/CBN/typing.v}{Wt}{Figure~\ref{fig:cbncoeff}}; the rules for the CBV version are in
\link{resource/CBV/typing.v}{Wt}{Figure~\ref{fig:cbvcoeff}}. Most of the rules
parallel those of the corresponding terms in CBPV; for brevity, then, we show
only the rules for application, boxing and unboxing here.  The two languages
differ in the application rule. In CBV, we annotate applications with the number
of times the function uses its argument. Because the argument will always be
evaluated once in CBV, if $\textcolor{\coeffectcolor}{q}$ is zero, we force it to be one.

These latter two terms introduce and
eliminate the modal type $ \square_{\textcolor{\coeffectcolor}{ \textcolor{\coeffectcolor}{q} } }\;  \tau $. The introduction form requires
grade $\textcolor{\coeffectcolor}{q}$ on its argument. When we unbox the argument, the second
subterm has access to it with grade $ \textcolor{\coeffectcolor}{ \textcolor{\coeffectcolor}{q}_{{\mathrm{1}}}   \cdot   \textcolor{\coeffectcolor}{q}_{{\mathrm{2}}} } $.  The $\textcolor{\coeffectcolor}{q}_{{\mathrm{1}}}$ comes from
when the box was created, and the $\textcolor{\coeffectcolor}{q}_{{\mathrm{2}}}$ comes from the unboxing term, as in
let bindings in CBPV. In CBV, we use $\ottkw{letin}$ in
both translations, so we include $ \textcolor{\coeffectcolor}{q} \ \|\ \textcolor{\coeffectcolor}{1} $ in both rules in an
analogous way to its use in \rref{coeff-letin}.
In CBN, we use $\ottkw{letin}$ in the translation of $\ottkw{unbox}$ but
not $\ottkw{box}$, so we can drop $ \textcolor{\coeffectcolor}{q} \ \|\ \textcolor{\coeffectcolor}{1} $ from the typing rule for box.
This imprecision makes sense in the source languages for the same reason it makes
sense in CBPV: because we are combining effects and coeffects,
we sometimes need to evaluate subterms for their effects even if the results of those
subterms are never used.

\subsubsection{Call-by-name Translation}

We first consider a
call-by-name translation to CBPV. For brevity, we show just the translation
of function and box types on the left below and the translation of applications and
the $\ottkw{box}$ and $\ottkw{unbox}$ terms on the right.

\[
\begin{array}{ll}
 \llbracket {  \tau_{{\mathrm{1}}} ^{\textcolor{\coeffectcolor}{ \textcolor{\coeffectcolor}{q} } } \rightarrow  \tau_{{\mathrm{2}}}  } \rrbracket_{\textsc{n} }  & =  \ottsym{(}  \ottkw{U} \,  \llbracket { \tau_{{\mathrm{1}}} } \rrbracket_{\textsc{n} }   \ottsym{)} ^{\textcolor{\coeffectcolor}{ \textcolor{\coeffectcolor}{q} } } \rightarrow   \llbracket { \tau_{{\mathrm{2}}} } \rrbracket_{\textsc{n} }    \\
 \llbracket {  \square_{\textcolor{\coeffectcolor}{ \textcolor{\coeffectcolor}{q} } }\;  \tau  } \rrbracket_{\textsc{n} }  & =  \ottkw{F}_{\color{\coeffectcolor}{ \textcolor{\coeffectcolor}{q} } }\;  \ottsym{(}  \ottkw{U} \,  \llbracket { \tau } \rrbracket_{\textsc{n} }   \ottsym{)}  \\
\end{array}
\quad
\begin{array}{ll}
 \llbracket { \ottnt{e_{{\mathrm{1}}}} \, \ottnt{e_{{\mathrm{2}}}} } \rrbracket_{\textsc{n} }                            & =  \llbracket { \ottnt{e_{{\mathrm{1}}}} } \rrbracket_{\textsc{n} }  \, \ottsym{\{}   \llbracket { \ottnt{e_{{\mathrm{2}}}} } \rrbracket_{\textsc{n} }   \ottsym{\}} \\
 \llbracket {  \ottkw{box} _{\textcolor{\coeffectcolor}{ \textcolor{\coeffectcolor}{q} } }\  \ottnt{e}  } \rrbracket_{\textsc{n} }               & =  \ottkw{return} _{\textcolor{\coeffectcolor}{ \textcolor{\coeffectcolor}{q} } }\;  \ottsym{\{}   \llbracket { \ottnt{e} } \rrbracket_{\textsc{n} }   \ottsym{\}}  \\
 \llbracket {  \ottkw{unbox} _{\textcolor{\coeffectcolor}{ \textcolor{\coeffectcolor}{q} } }\  \ottmv{x}  =  \ottnt{e_{{\mathrm{1}}}} \  \ottkw{in} \  \ottnt{e_{{\mathrm{2}}}}  } \rrbracket_{\textsc{n} }             & =  \ottmv{x}  \leftarrow^{\textcolor{\coeffectcolor}{ \textcolor{\coeffectcolor}{q} } }   \llbracket { \ottnt{e_{{\mathrm{1}}}} } \rrbracket_{\textsc{n} }  \ \ottkw{in}\   \llbracket { \ottnt{e_{{\mathrm{2}}}} } \rrbracket_{\textsc{n} }   \\
\end{array}
\]

In this translation, the coeffect on the $\lambda$-calculus function
type translates directly
to the coeffect on the CBPV function type. Furthermore,
the modal type $ \square_{\textcolor{\coeffectcolor}{ \textcolor{\coeffectcolor}{q} } }\;  \tau $ is a graded comonad, so it can be translated to
the comonad in CBPV, adding the grade to the returner type.

The CBN translation of $\lambda$ terms is as usual. However, the translation of the
box introduction and elimination forms follows from the definition of the CBPV comonadic type.
To create a box, we return the thunked translation of the expression. To eliminate a box,
we use $\ottkw{letin}$ to move the thunk to the environment.

\subsubsection{Call-by-value Translation} \label{cbv-coeffects-translation}

Next, we define a corresponding CBV translation to CBPV. For brevity, we again
show only the translation of function and graded modal types and of applications and the
$\ottkw{box}$ and $\ottkw{unbox}$ terms.

\[
\begin{array}{ll}
 \llbracket {  \tau_{{\mathrm{1}}} ^{\textcolor{\coeffectcolor}{ \textcolor{\coeffectcolor}{q} } } \rightarrow  \tau_{{\mathrm{2}}}  } \rrbracket_{\textsc{v} }  & = \ottkw{U} \, \ottsym{(}    \llbracket { \tau_{{\mathrm{1}}} } \rrbracket_{\textsc{v} }  ^{\textcolor{\coeffectcolor}{ \textcolor{\coeffectcolor}{q} } } \rightarrow   \ottkw{F}_{\color{\coeffectcolor}{  \textcolor{\coeffectcolor}{1}  } }\;   \llbracket { \tau_{{\mathrm{2}}} } \rrbracket_{\textsc{v} }     \ottsym{)} \\
 \llbracket {  \square_{\textcolor{\coeffectcolor}{ \textcolor{\coeffectcolor}{q} } }\;  \tau  } \rrbracket_{\textsc{v} }  & = \ottkw{U} \, \ottsym{(}   \ottkw{F}_{\color{\coeffectcolor}{ \textcolor{\coeffectcolor}{q} } }\;   \llbracket { \tau } \rrbracket_{\textsc{v} }    \ottsym{)} \\
\\
 \llbracket {  \ottnt{e_{{\mathrm{1}}}} ^{ \textcolor{\coeffectcolor}{q} }  \ottnt{e_{{\mathrm{2}}}}  } \rrbracket_{\textsc{v} }    & =  \ottmv{x}  \leftarrow^{\textcolor{\coeffectcolor}{  \textcolor{\coeffectcolor}{1}  } }   \llbracket { \ottnt{e_{{\mathrm{1}}}} } \rrbracket_{\textsc{v} }  \ \ottkw{in}\   \ottmv{y}  \leftarrow^{\textcolor{\coeffectcolor}{ \textcolor{\coeffectcolor}{q} } }   \llbracket { \ottnt{e_{{\mathrm{2}}}} } \rrbracket_{\textsc{v} }  \ \ottkw{in}\  \ottmv{x}  \ottsym{!}   \, \ottmv{y} \\
 \llbracket {  \ottkw{box} _{\textcolor{\coeffectcolor}{ \textcolor{\coeffectcolor}{q} } }\  \ottnt{e}  } \rrbracket_{\textsc{v} }    & =  \ottmv{x}  \leftarrow^{\textcolor{\coeffectcolor}{ \textcolor{\coeffectcolor}{q} } }   \llbracket { \ottnt{e} } \rrbracket_{\textsc{v} }  \ \ottkw{in}\   \ottkw{return} _{\textcolor{\coeffectcolor}{  \textcolor{\coeffectcolor}{1}  } }\;  \ottsym{\{}   \ottkw{return} _{\textcolor{\coeffectcolor}{  \textcolor{\coeffectcolor}{q} \ \|\ \textcolor{\coeffectcolor}{1}  } }\;  \ottmv{x}   \ottsym{\}}   \\
 \llbracket {  \ottkw{unbox} _{\textcolor{\coeffectcolor}{ \textcolor{\coeffectcolor}{q} } }\  \ottmv{x}  =  \ottnt{e_{{\mathrm{1}}}} \  \ottkw{in} \  \ottnt{e_{{\mathrm{2}}}}  } \rrbracket_{\textsc{v} }  & =  \ottmv{y}  \leftarrow^{\textcolor{\coeffectcolor}{ \textcolor{\coeffectcolor}{q} } }   \llbracket { \ottnt{e_{{\mathrm{1}}}} } \rrbracket_{\textsc{v} }  \ \ottkw{in}\   \ottmv{x}  \leftarrow^{\textcolor{\coeffectcolor}{ \textcolor{\coeffectcolor}{q} } }  \ottmv{y}  \ottsym{!} \ \ottkw{in}\   \llbracket { \ottnt{e_{{\mathrm{2}}}} } \rrbracket_{\textsc{v} }    \\
\end{array}
\]

As above, we propagate the coeffect from the $\lambda$-calculus
function type directly to the CBPV function type. Similarly, we propagate
the grade in the modal type to the inner returner type and let binding in CBPV.

For applications, we use let bindings to access the translations of the function and
the argument. The argument is not thunked in translation, so it is strict, but the
function is thunked in translation, so we must force it before applying it.
$\ottkw{box}$ is also strict in CBV, so its translation first
evaluates its argument. The rest of the translation follows its type
definition. In CBPV, the computation \mbox{$ \ottmv{x}  \leftarrow^{\textcolor{\coeffectcolor}{  \textcolor{\coeffectcolor}{1}  } }  \ottnt{M} \ \ottkw{in}\   \ottkw{return} _{\textcolor{\coeffectcolor}{  \textcolor{\coeffectcolor}{1}  } }\;  \ottmv{x}  $} is
equivalent to $\ottnt{M}$, but the computation $ \ottmv{x}  \leftarrow^{\textcolor{\coeffectcolor}{ \textcolor{\coeffectcolor}{q} } }  \ottnt{M} \ \ottkw{in}\   \ottkw{return} _{\textcolor{\coeffectcolor}{ \textcolor{\coeffectcolor}{q} } }\;  \ottmv{x}  $
corresponds to duplicating $\ottnt{M}$ $\textcolor{\coeffectcolor}{q}$ times in a resource usage coeffect.
This propagation of the grade is exactly the feature that we need to translate
the $\ottkw{box}$ term. Like the CBN translation of the modal type, in the CBV
translation, the comonadic type is difficult to access. In this translation,
$\ottkw{box}$ must include an extra thunk that is forced in the translation of
the $\ottkw{unbox}$ term, giving us access to the comonadic type \textbf{F U}.
We must also use the annotation capability of
$\ottkw{letin}$ (twice) to mirror the annotation in the source language. The
correctness proofs for both the \link{resource/CBN/proofs.v}{translation\_correct}{CBN} and 
\link{resource/CBV/proofs.v}{translation\_correct}{CBV} translations follow from the
corresponding proofs of the combined system (taking the trivial effect).

\section{Combined Effects and Coeffects}
\label{sec:full}

Next, we present a system that tracks both effects and coeffects, by
combining the effect system of Section~\ref{sec:effects}
with the resource usage system of Section~\ref{sec:resource-usage}, and adding one
new rule.

\begin{definition}[\link{full/CBPV/typing.v}{VWt,CWt}{Combined type system}]
The judgements $ \textcolor{\coeffectcolor}{ \textcolor{\coeffectcolor}{\gamma} }\! \cdot \! \Gamma   \vdash_{\mathit{full} }  \ottnt{V}  \ottsym{:}  \ottnt{A}$ and 
$  \textcolor{\coeffectcolor}{ \textcolor{\coeffectcolor}{\gamma} }\! \cdot \! \Gamma    \vdash_{\mathit{full} }   \ottnt{M}  :^{\textcolor{\effectcolor}{ \textcolor{\effectcolor}{\phi} } }  \ottnt{B} $ refer to the CBPV type system with
effect annotations from Figure~\ref{fig:cbpv-effects} and coeffect annotations
(resource tracking version) from Figure~\ref{fig:cbpv-lin}.%
\ifextended
 The full definition is available in Appendix~\ref{app:cbpv-full-typing}.
\else
\footnote{For space, we do not include the entire combined system here. The full
rules of this system are available in the extended version of this
paper~\cite{extended-version} and in the Coq development.}
\fi
\end{definition}

This type system is a straightforward
combination of the systems presented earlier.
For example, the typing \rref{full-letin} combines \rref{eff-letin}
with \rref{lin-letin} and includes both the grade vector
$ \textcolor{\coeffectcolor}{  \textcolor{\coeffectcolor}{ \textcolor{\coeffectcolor}{q}'_{{\mathrm{2}}} \cdot \textcolor{\coeffectcolor}{\gamma}_{{\mathrm{1}}} }  \ottsym{+} \textcolor{\coeffectcolor}{\gamma}_{{\mathrm{2}}} } $ and the effect $ \textcolor{\effectcolor}{ \textcolor{\effectcolor}{\phi}_{{\mathrm{1}}}  \cdot  \textcolor{\effectcolor}{\phi}_{{\mathrm{2}}} } $ for the
computation.

\[ \drule[width=8in]{full-letin} \]

Similarly, we augment our instrumented operational semantics to track both
effects and coeffects.
\begin{definition}[\link{full/CBPV/semantics.v}{EvalVal,EvalComp}{Combined Resource Semantics}]
\label{def:full}%
\ifextended
\footnote{The full definition is available in Appendix~\ref{app:cbpv-full-semantics}}
\fi
  The judgements $  \textcolor{\coeffectcolor}{ \textcolor{\coeffectcolor}{\gamma} }\! \cdot \! \rho   \vdash_{\mathit{full} }  \ottnt{V}  \Downarrow  \ottnt{W} $ and
  $  \textcolor{\coeffectcolor}{ \textcolor{\coeffectcolor}{\gamma} }\! \cdot \! \rho   \vdash_{\mathit{full} }  \ottnt{M}  \Downarrow  \ottnt{T}  \mathop{\#} \textcolor{\effectcolor}{ \textcolor{\effectcolor}{\phi} } $ refer to the CBPV operational semantics
  with effect annotations from
  Figures~\ref{fig:eval-val} and \ref{fig:eff-big-step} and coeffect
  annotations from Figure~\ref{fig:eval-coeff-comp}, with updates for resource tracking from
  Figure~\ref{fig:eval-lin-comp-zero}.
\end{definition}

\noindent
For example, the $\ottkw{letin}$ evaluation rule computes the
instrumented grade vector and effect and requires that the computation
$\ottnt{M}$ be evaluated at least once, as in \rref{eval-lin-comp-letin-ret}.

\[ \drule[width=8in]{eval-full-comp-letin} \]

We can use this operational semantics to show both effect and coeffect soundness
of the combined type system. However, before we do so, we make one more extension
to the language.

\paragraph{Skipping Unused Discardable Computations}
\label{sec:letzero}

In Section~\ref{sec:resource-usage}, we developed several ``zero'' rules
for discarding unused \emph{values}. But, unused \emph{computations} could not be
discarded, because they may have effects. However, in this system, we can identify
unused, pure computations, and add a new syntactic form, written $ \ottmv{x}  \leftarrow^{\textcolor{\coeffectcolor}{0} }_{\textcolor{\effectcolor}{\varepsilon} }  \ottnt{M} \ \ottkw{in}\  \ottnt{N} $,
indicating that they can be discarded. The typing rule (below left) requires
that $\ottnt{M}$ be effect free and its result unused in $\ottnt{N}$.
\[ \drule[width=5in]{full-letin-zero}  \drule[width=5in]{eval-full-comp-letin-zero}  \]

\noindent
Furthermore, the operational semantics of this new expression form (above
right) does not evaluate $\ottnt{M}$. Instead it uses the junk value $ \mathop{\lightning} $
for the result of this computation.

To see this rule in action, consider the CBV translation of an expression
$y_2\ (y_1\ x)$, where $y_2$ is a constant function and $y_1$ is pure. In this
case, the type system can observe that $\ottmv{x}$ does not contribute to the
final result when the application of $y_1$ to $x$ is marked as discardable.

\[     \ottmv{x}  :^{\textcolor{\coeffectcolor}{  \textcolor{\coeffectcolor}{0}  } }  \ottnt{A}    \mathop{,}    \ottmv{y_{{\mathrm{1}}}}  :^{\textcolor{\coeffectcolor}{  \textcolor{\coeffectcolor}{0}  } }   \ottkw{U}_{\color{\effectcolor}{  \textcolor{\effectcolor}{\varepsilon}  } }\;  \ottsym{(}   \ottnt{A} ^{\textcolor{\coeffectcolor}{  \textcolor{\coeffectcolor}{1}  } } \rightarrow  \ottkw{F} \, \ottnt{A}   \ottsym{)}      \mathop{,}    \ottmv{y_{{\mathrm{2}}}}  :^{\textcolor{\coeffectcolor}{  \textcolor{\coeffectcolor}{1}  } }   \ottkw{U}_{\color{\effectcolor}{ \textcolor{\effectcolor}{\phi} } }\;  \ottsym{(}   \ottnt{A} ^{\textcolor{\coeffectcolor}{  \textcolor{\coeffectcolor}{0}  } } \rightarrow  \ottnt{B}   \ottsym{)}      \vdash_{\mathit{full} }    \ottmv{z}  \leftarrow^{\textcolor{\coeffectcolor}{0} }_{\textcolor{\effectcolor}{\varepsilon} }  \ottmv{y_{{\mathrm{1}}}}  \ottsym{!} \, \ottmv{x} \ \ottkw{in}\  \ottmv{y_{{\mathrm{2}}}}  \ottsym{!}  \, \ottmv{z}  :^{\textcolor{\effectcolor}{ \textcolor{\effectcolor}{\phi} } }  \ottnt{B}  \]

\paragraph{Soundness Proof for Discardable Computations}

We next show that discarding unused values and unused pure computations
does not change the evaluation behavior of computations. To do so,
we need the following properties that state that $\textcolor{red}{\varepsilon}$ is
the minimum element of the effect preorder.
\footnote{These properties hold for $ \textcolor{\effectcolor}{ \ottkw{tick} } $ effects, but we have not used them
before now.}
\begin{definition}[Min identity]
%
(1) For all $\textcolor{\effectcolor}{\phi}$, $ \textcolor{\effectcolor}{  \textcolor{\effectcolor}{\varepsilon}  \;  \textcolor{\effectcolor}{\mathop{\leq_{\mathit{eff} } } } \;  \textcolor{\effectcolor}{\phi}  } $
(2) For all $\textcolor{\effectcolor}{\phi}$, $  \textcolor{\effectcolor}{ \textcolor{\effectcolor}{\phi} \;  \textcolor{\effectcolor}{\mathop{\leq_{\mathit{eff} } } } \;   \textcolor{\effectcolor}{\varepsilon}   }  \ \textit{implies}\   \textcolor{\effectcolor}{ \textcolor{\effectcolor}{\phi} }\; = \; \textcolor{\effectcolor}{  \textcolor{\effectcolor}{\varepsilon}  }  $.
\end{definition}

To prove that discarding is sound, we establish a relation between our
combined resource semantics and one that does not discard terms.

\begin{definition}[\link{full/CBPV/junk.v}{G.EvalComp}{Combined nondiscarding semantics}]
\label{def:gfull}
The judgement
$  \textcolor{\coeffectcolor}{ \textcolor{\coeffectcolor}{\gamma} }\! \cdot \! \rho   \vdash_{\mathit{gen} }  \ottnt{M}  \Downarrow  \ottnt{T}  \mathop{\#} \textcolor{\effectcolor}{ \textcolor{\effectcolor}{\phi} } $ refers to the operational semantics
that is the combination of CBPV with effect annotations from Figure
\ref{fig:eff-big-step} and coeffect annotations from
Figure~\ref{fig:eval-coeff-comp}, with the modified \rref{full-letin} that
always evaluates computations. This semantics does not include rules that
discard values or computations and uses \rref{eval-full-comp-letin} to evaluate
the new $\ottkw{letin}$ expression.
\end{definition}

Our simulation lemma states that for closed boolean-valued
computations\ifextended\else\footnote{The system in our extended version~\cite{extended-version} and Coq development
  includes sums, necessary to implement the boolean type.}\fi, evaluating with either the nondiscarding
semantics (Definition~\ref{def:gfull}) or with the resource
semantics (Definition~\ref{def:full}) produces the same result and the same
effect.

\begin{lemma}[\link{full/CBPV/junk.v}{resource\_simulation}{Resource Simulation}]
If $  \textcolor{\coeffectcolor}{ \emptyset }\! \cdot \! \varnothing    \vdash_{\mathit{full} }   \ottnt{M}  :^{\textcolor{\effectcolor}{ \textcolor{\effectcolor}{\phi} } }   \ottkw{F}_{\color{\coeffectcolor}{  \textcolor{\coeffectcolor}{1}  } }\;  \ottsym{(}  \ottkw{unit}  \ottsym{+}  \ottkw{unit}  \ottsym{)}  $ then
either
\begin{enumerate}
\item $  \textcolor{\coeffectcolor}{ \emptyset }\! \cdot \! \emptyset   \vdash_{\mathit{gen} }  \ottnt{M}  \Downarrow   \ottkw{return} _{\textcolor{\coeffectcolor}{  \textcolor{\coeffectcolor}{1}  } }  \ottsym{(}  \ottkw{inl} \, \ottsym{()}  \ottsym{)}   \mathop{\#} \textcolor{\effectcolor}{ \textcolor{\effectcolor}{\phi}_{{\mathrm{1}}} } $ and
    $  \textcolor{\coeffectcolor}{ \emptyset }\! \cdot \! \emptyset   \vdash_{\mathit{full} }  \ottnt{M}  \Downarrow   \ottkw{return} _{\textcolor{\coeffectcolor}{  \textcolor{\coeffectcolor}{1}  } }  \ottsym{(}  \ottkw{inl} \, \ottsym{()}  \ottsym{)}   \mathop{\#} \textcolor{\effectcolor}{ \textcolor{\effectcolor}{\phi}_{{\mathrm{1}}} } $ or
\item $  \textcolor{\coeffectcolor}{ \emptyset }\! \cdot \! \emptyset   \vdash_{\mathit{gen} }  \ottnt{M}  \Downarrow   \ottkw{return} _{\textcolor{\coeffectcolor}{  \textcolor{\coeffectcolor}{1}  } }  \ottsym{(}  \ottkw{inr} \, \ottsym{()}  \ottsym{)}   \mathop{\#} \textcolor{\effectcolor}{ \textcolor{\effectcolor}{\phi}_{{\mathrm{1}}} } $ and
    $  \textcolor{\coeffectcolor}{ \emptyset }\! \cdot \! \emptyset   \vdash_{\mathit{full} }  \ottnt{M}  \Downarrow   \ottkw{return} _{\textcolor{\coeffectcolor}{  \textcolor{\coeffectcolor}{1}  } }  \ottsym{(}  \ottkw{inr} \, \ottsym{()}  \ottsym{)}   \mathop{\#} \textcolor{\effectcolor}{ \textcolor{\effectcolor}{\phi}_{{\mathrm{1}}} } $.
\end{enumerate}
\end{lemma}

This simulation lemma is a corollary of a much more general result---the
fundamental lemma for a binary logical relation between computations that are
evaluated with the two different semantics.
\ifextended This relation appears in
  Appendix~\ref{sec:binary-rel} and 
  \link{full/CBPV/junk.v}{LRM,LRV,LRC,SemVWt,SemCWt}{in the Coq development}.
\else This relation,
  shown below, is mutually defined with relations between closed values and
  closed terminals (not shown, but available in the extended version~\cite{extended-version} and 
  \link{full/CBPV/junk.v}{LRM,LRV,LRC,SemVWt,SemCWt}{in the Coq development}).

\[
\begin{array}{lcl@{\ }l}
 \mathcal{M}\llbracket  \ottnt{B} \rrbracket^{\textcolor{\effectcolor}{ \textcolor{\effectcolor}{\phi} } }  &=& \{\ ( \textcolor{\coeffectcolor}{ \textcolor{\coeffectcolor}{\gamma} }\! \cdot \! \rho_{{\mathrm{1}}}  , \ottnt{M_{{\mathrm{1}}}},  \textcolor{\coeffectcolor}{ \textcolor{\coeffectcolor}{\gamma} }\! \cdot \! \rho_{{\mathrm{2}}}  ,  \ottnt{M_{{\mathrm{2}}}} )\
|\    \textcolor{\coeffectcolor}{ \textcolor{\coeffectcolor}{\gamma} }\! \cdot \! \rho_{{\mathrm{1}}}   \vdash_{\mathit{gen} }  \ottnt{M_{{\mathrm{1}}}}  \Downarrow  \ottnt{T_{{\mathrm{1}}}}  \mathop{\#} \textcolor{\effectcolor}{ \textcolor{\effectcolor}{\phi}_{{\mathrm{1}}} }  \ \textit{and}\    \textcolor{\coeffectcolor}{ \textcolor{\coeffectcolor}{\gamma} }\! \cdot \! \rho_{{\mathrm{2}}}   \vdash_{\mathit{full} }  \ottnt{M_{{\mathrm{2}}}}  \Downarrow  \ottnt{T_{{\mathrm{2}}}}  \mathop{\#} \textcolor{\effectcolor}{ \textcolor{\effectcolor}{\phi}_{{\mathrm{1}}} }   \\
&&\quad                 \mathit{and}\  \ottsym{(}  \ottnt{T_{{\mathrm{1}}}}  \ottsym{,}  \ottnt{T_{{\mathrm{1}}}}  \ottsym{)} \, \in \,  \mathcal{T}\llbracket  \ottnt{B} \rrbracket^{\textcolor{\effectcolor}{ \textcolor{\effectcolor}{\phi}_{{\mathrm{2}}} } }  \ \textit{and}\  \ottsym{(}  \ottnt{T_{{\mathrm{1}}}}  \ottsym{,}  \ottnt{T_{{\mathrm{2}}}}  \ottsym{)} \, \in \,  \mathcal{T}\llbracket  \ottnt{B} \rrbracket^{\textcolor{\effectcolor}{ \textcolor{\effectcolor}{\phi}_{{\mathrm{2}}} } }  \
                 \mathit{and}\  \textcolor{\effectcolor}{  \textcolor{\effectcolor}{ \textcolor{\effectcolor}{\phi}_{{\mathrm{1}}}  \cdot  \textcolor{\effectcolor}{\phi}_{{\mathrm{2}}} }  \;  \textcolor{\effectcolor}{\mathop{\leq_{\mathit{eff} } } } \;  \textcolor{\effectcolor}{\phi}  } \ \} \\
\end{array}
\]

\fi

\noindent
Using this relation, we define a binary version of the semantic typing
relation. Two environments $\rho_{{\mathrm{1}}}$ and $\rho_{{\mathrm{2}}}$ are related when the
closed values in the first environment are related to themselves, and, if the
usage is nonzero, the closed value in the second environment is related to the
first. The first condition ensures that we know something about closed values
in the first relation even when the corresponding value in the second relation
has been discarded in the resource semantics.

\begin{definition}[\link{full/CBPV/junk.v}{SemVWt,SemCWt}{Semantic double typing}]
\label{fig:binary-typing}
\[
\begin{array}{l@{\ }c@{\ }ll}
 \textcolor{\coeffectcolor}{ \textcolor{\coeffectcolor}{\gamma} }  \cdot   \Gamma   \vDash   \rho_{{\mathrm{1}}}  \sim  \rho_{{\mathrm{2}}}  &=&
   \ottmv{x}  :^{\textcolor{\coeffectcolor}{ \textcolor{\coeffectcolor}{q} } }  \ottnt{A}  \, \in \,  \textcolor{\coeffectcolor}{ \textcolor{\coeffectcolor}{\gamma} }\! \cdot \! \Gamma  \ \mathit{implies}\
   \ottmv{x}  \mapsto  \ottnt{W_{{\mathrm{1}}}} \, \in \, \rho_{{\mathrm{1}}} \ \textit{and}\  \ottsym{(}  \ottnt{W_{{\mathrm{1}}}}  \ottsym{,}  \ottnt{W_{{\mathrm{1}}}}  \ottsym{)} \, \in \,  \mathcal{W}\llbracket  \ottnt{A}  \rrbracket   \ \mathit{and} \  \\
&&  (   \textcolor{\coeffectcolor}{ \textcolor{\coeffectcolor}{q} }\; = \; \textcolor{\coeffectcolor}{  \textcolor{\coeffectcolor}{0}  }  \ \textit{or}\   (  \ottmv{x}  \mapsto  \ottnt{W_{{\mathrm{2}}}} \, \in \, \rho_{{\mathrm{2}}} \ \textit{and}\  \ottsym{(}  \ottnt{W_{{\mathrm{1}}}}  \ottsym{,}  \ottnt{W_{{\mathrm{2}}}}  \ottsym{)} \, \in \,  \mathcal{W}\llbracket  \ottnt{A}  \rrbracket   )   )     \\
  \textcolor{\coeffectcolor}{ \textcolor{\coeffectcolor}{\gamma} }  \cdot   \Gamma   \vDash_{\mathit{full} }   \ottnt{V_{{\mathrm{1}}}}  \sim  \ottnt{V_{{\mathrm{2}}}} \; :  \ottnt{A}  &=&
  \mathit{forall} \ \rho_{{\mathrm{1}}}, \rho_{{\mathrm{2}}},  \textcolor{\coeffectcolor}{ \textcolor{\coeffectcolor}{\gamma} }  \cdot   \Gamma   \vDash   \rho_{{\mathrm{1}}}  \sim  \rho_{{\mathrm{2}}}  \  \\
&& \mathit{implies}\    \textcolor{\coeffectcolor}{ \textcolor{\coeffectcolor}{\gamma} }\! \cdot \! \rho_{{\mathrm{1}}}   \vdash_{\mathit{full} }  \ottnt{V_{{\mathrm{1}}}}  \Downarrow  \ottnt{W_{{\mathrm{1}}}}  \ \textit{and}\    \textcolor{\coeffectcolor}{ \textcolor{\coeffectcolor}{\gamma} }\! \cdot \! \rho_{{\mathrm{2}}}   \vdash_{\mathit{full} }  \ottnt{V_{{\mathrm{1}}}}  \Downarrow  \ottnt{W_{{\mathrm{1}}}}   \\
&& \mathit{and}\  \ottsym{(}  \ottnt{W_{{\mathrm{1}}}}  \ottsym{,}  \ottnt{W_{{\mathrm{1}}}}  \ottsym{)} \, \in \,  \mathcal{W}\llbracket  \ottnt{A}  \rrbracket  \ \textit{and}\  \ottsym{(}  \ottnt{W_{{\mathrm{1}}}}  \ottsym{,}  \ottnt{W_{{\mathrm{2}}}}  \ottsym{)} \, \in \,  \mathcal{W}\llbracket  \ottnt{A}  \rrbracket  \ \\
   \textcolor{\coeffectcolor}{ \textcolor{\coeffectcolor}{\gamma} }  \cdot   \Gamma   \vDash_{\mathit{full} }   \ottnt{M_{{\mathrm{1}}}}  \approx  \ottnt{M_{{\mathrm{2}}}} \; :^{\textcolor{\effectcolor}{  \textcolor{\effectcolor}{\phi}  } }\;  \ottnt{B}  &=&
  \mathit{forall} \ \rho_{{\mathrm{1}}}, \rho_{{\mathrm{2}}},  \textcolor{\coeffectcolor}{ \textcolor{\coeffectcolor}{\gamma} }  \cdot   \Gamma   \vDash   \rho_{{\mathrm{1}}}  \sim  \rho_{{\mathrm{2}}}  \   \mathit{implies} \
   ( \textcolor{\coeffectcolor}{\gamma} \cdot   \rho_{{\mathrm{1}}} ,  \ottnt{M_{{\mathrm{1}}}} ,  \textcolor{\coeffectcolor}{\gamma} \cdot \rho_{{\mathrm{2}}} ,  \ottnt{M_{{\mathrm{2}}}}  )  \in    \mathcal{M}\llbracket  \ottnt{B} \rrbracket^{\textcolor{\effectcolor}{ \textcolor{\effectcolor}{\phi} } }    \\
\end{array}
\]
\end{definition}

The fundamental theorem shows that this binary relation is reflexive. 
\begin{theorem}[\link{full/CBPV/junk.v}{fundamental}{Fundamental lemma: simulation}]\ 
\begin{enumerate}
\item For all $\textcolor{\coeffectcolor}{\gamma}$, $\Gamma$, if $ \textcolor{\coeffectcolor}{ \textcolor{\coeffectcolor}{\gamma} }\! \cdot \! \Gamma   \vdash_{\mathit{full} }  \ottnt{V}  \ottsym{:}  \ottnt{A}$, then
$ \textcolor{\coeffectcolor}{ \textcolor{\coeffectcolor}{\gamma} }  \cdot   \Gamma   \vDash_{\mathit{full} }   \ottnt{V}  \sim  \ottnt{V} \; :  \ottnt{A} $,
and
\item for all $\textcolor{\coeffectcolor}{\gamma}$, $\Gamma$, if $  \textcolor{\coeffectcolor}{ \textcolor{\coeffectcolor}{\gamma} }\! \cdot \! \Gamma    \vdash_{\mathit{full} }   \ottnt{M}  :^{\textcolor{\effectcolor}{ \textcolor{\effectcolor}{\phi} } }  \ottnt{B} $, then
$ \textcolor{\coeffectcolor}{ \textcolor{\coeffectcolor}{\gamma} }  \cdot   \Gamma   \vDash_{\mathit{full} }   \ottnt{M}  \approx  \ottnt{M} \; :^{\textcolor{\effectcolor}{  \textcolor{\effectcolor}{\phi}  } }\;  \ottnt{B} $.
\end{enumerate}
\end{theorem}
This fundamental lemma combines and generalizes prior results of this
paper. In particular, it shows the effect-and-coeffect soundness of the
combined type system with respect to both the nondiscarding and resource
accounting semantics--the effects and coeffects of the evaluation are bounded
by the type system.  For clarity, we also separately show effect-and-coeffect
soundness of the combined type system
\link{full/CBPV/soundness.v}{soundness}{in the Coq development}.

\paragraph{CBN and CBV Translations}
Finally, we have defined CBN and CBV with combined effects and coeffects
and have proved the soundness of translations to the combined CPBV type system.

\noindent
\begin{theorem}[\link{full/CBN/proofs.v}{translation\_correct}{CBN} and 
\link{full/CBV/proofs.v}{translation\_correct}{CBV} translation correctness]\ 
\begin{enumerate}
\item For all $\textcolor{\coeffectcolor}{\gamma}$, $\Gamma$, $\ottnt{e}$, $\tau$,
  if $  \textcolor{\coeffectcolor}{ \textcolor{\coeffectcolor}{\gamma} }\! \cdot \! \Gamma    \vdash_{\mathit{cbncoeff} }   \ottnt{e}  : \tau $, then
    $  \textcolor{\coeffectcolor}{ \textcolor{\coeffectcolor}{\gamma} }\! \cdot \!  \llbracket { \Gamma } \rrbracket_{\textsc{n} }     \vdash_{\mathit{full} }    \llbracket { \ottnt{e} } \rrbracket_{\textsc{n} }   :^{\textcolor{\effectcolor}{  \textcolor{\effectcolor}{\varepsilon}  } }   \llbracket { \tau } \rrbracket_{\textsc{n} }  $,
and
\item For all $\textcolor{\coeffectcolor}{\gamma}$, $\Gamma$, $\ottnt{e}$, $\tau$,
  if $  \textcolor{\coeffectcolor}{ \textcolor{\coeffectcolor}{\gamma} }\! \cdot \! \Gamma    \vdash_{\mathit{cbvcoeff} }   \ottnt{e}  : \tau $, then
    $  \textcolor{\coeffectcolor}{ \textcolor{\coeffectcolor}{\gamma} }\! \cdot \!  \llbracket { \Gamma } \rrbracket_{\textsc{v} }     \vdash_{\mathit{full} }    \llbracket { \ottnt{e} } \rrbracket_{\textsc{v} }   :^{\textcolor{\effectcolor}{  \textcolor{\effectcolor}{\varepsilon}  } }   \ottkw{F}_{\color{\coeffectcolor}{  \textcolor{\coeffectcolor}{1}  } }\;   \llbracket { \tau } \rrbracket_{\textsc{v} }   $,
\end{enumerate}
\end{theorem}

Like \ref{lem:cbv-trans-effects}, these proofs follow by
simple induction, so we omit them here; however, they can be found in the
Coq development.

\section{Related Work}
\label{sec:related-work}

Call-by-push-value (CBPV) was originally developed by
\citet{levy:call-by-push-value}. \citet{forster:cbpv} mechanized proofs of its
metatheoretic properties
and translation soundness and
inspired our mechanized proofs.  Current applications of CBPV include
modeling compiler intermediate
languages~\cite{rizkallah:cbpv,new:cbpv-stal},
understanding the role that polarity plays in bidirectional
typing~\cite{dunfield-krishnaswami} and
subtyping~\cite{lakhani:polarized-subtyping}, and incorporating effects into
dependent type
theories~\cite{pedrot:reasonably-exceptional-type-theory,pedrot:fire-triangle}.


\paragraph{CBPV and Effects}

Call-by-value languages with effect tracking go back to
FX~\cite{lucassen-gifford}.  \citet{wadler-thiemann} showed the connection
between graded monads and effects by translating the effect system of
\citet{talpin-jouvelot} to a language that isolates effects using graded
monads.  Our monadic effect language is inspired by this paper, generalized
following~\citet{katsumata:2014}.  In this paper, our translation is the
reverse of Wadler and Thiemann, mapping a language with graded monads to an
effect-style extension of CBPV. Like us, \citet{rajani:cost-analysis} use a
logical relation to show the soundness of their monadic cost analysis.

Although CBPV has often been used to model the semantics of effects, its type
system has only rarely been extended with effect tracking.  The type system
that we present in Section~\ref{sec:effects} is most similar to MAM
(multi-adjunctive metalanguage) from \citet{forster:expressive-user-effects},
which builds on \citet{kammar-plotkin} and \citet{Kammar2013}. Forster et
al. use MAM to compare the relative expressiveness of effect handlers, monadic
reflection and delimited control. The differences between our system and MAM
are in the abstract structure of effects: MAM does not use a
preordered monoid to track effects. Instead, in each extension effects are
interpreted differently. For effect handlers, effects are a set of operations
specified by some effect signature; for monadic reflection, effects are monad
stacks; for delimited control, effects are a stack of computation types.

\citet{wuttke:ms-thesis} defines a cost-annotated version of CBPV by
annotating the thunk type in CBPV with a bound $[a < I]$ that limits the number
of times that thunks can be forced. This work includes both call-by-value
and call-by-name translations from cost-annotated PCF terms to cost-annotated
CBPV. For expressiveness, the system includes subtyping and indexed
types.

Some extensions of CBPV annotate effects on $\ottkw{F} \, \ottnt{A}$ instead of $\ottkw{U} \, \ottnt{B}$. These
systems isolate effects so that they need not be tracked by the typing judgement.
Extended Call-by-Push-Value
(ECBPV)~\cite{extended-cbpv} adds call-by-need evaluation to CBPV and
layers an effect system to augment equational reasoning. This system uses an
operation $\langle\textcolor{\effectcolor}{\phi}\rangle B$ to extend the effect annotation to other
computation types, combining effects in returner types and pushing effects to
the result type of functions and inside with-products.
\citet{rioux:computation-focusing} tracks divergence. In this system, the sequencing operation requires that the
annotation on the returner type be less than or equal to any annotation on the
result of the continuation.


\paragraph{Coeffects}

Type systems that track coeffects were introduced by
\citet{Brunel:2014,Ghica:2014,petricek:2014} and developed by
\citet{orchard:2019,granule-project,abel:icfp2020}. Early applications were
for bounded linearity; but these systems have also been used for
tracking information flow in differential privacy~\cite{Reed:2010}, dynamic binding~\cite{nanevski:dynamic-binding}
and have also been applied for resource usage in Haskell~\cite{linear-haskell} and
irrelevance in dependently-typed
languages~\cite{atkey:2018,weirich:graded-haskell,abel:2023}.
\citet{petricek:2014} give a number of additional examples, including dataflow (the
number of past values needed in a stream processing language) and data
liveness (whether references to a variable are still needed).

As in our work, all prior semantics that ``count'' uses of variables are
imprecise and allow execution to waste resources.  \citet{abel:2023} and
\citet{weirich:graded-haskell} use a heap-based operational semantics to show
coeffect soundness for a language with a small-step, call-by-name semantics,
but do not consider the interactions with effects.  \citet{bianchini:oopsla24}
proves resource soundness for a fine-grained call-by-value language using a
big-step semantics. Their language includes a nontermination effect through
recursive functions and recursive types. Their soundness proof is based on a
heap-based semantics, which must simultaneously evaluate $\textcolor{\coeffectcolor}{q}$ copies of an
expression. In contrast, because our environment-based semantics can separate
the resource usage of a subexpression from the rest of the computation, our
semantics uses multiplication instead of multi-usage.  For consistency with
effects, several rules of their type system require that the number of copies
of the produced value to be nonzero, similar to our use of $ \textcolor{\coeffectcolor}{q} \ \|\ \textcolor{\coeffectcolor}{1} $.

\citet{dal-lago:relational-theory} also explore the addition of effects and
coeffects to a fine-grained call-by-value language. They also force the
$\ottkw{letin}$ term to count the coeffects of the computation at least once,
through the use of $ \textcolor{\coeffectcolor}{q}  \wedge \textcolor{\coeffectcolor}{1} $. (This rule is derived from
\citet{gavazzo:quantitative}.) Unlike our work, Dal Lago and Gavazzo give a
denotational semantics based on a monadic evaluation function and do not track
resource usage. Their main result is a definition of a program relation in the
presence of effects and coeffects. Their approach is to refine a standard
logical relation with \emph{relators} and \emph{corelators} that capture the
interaction of effects and coeffects with the language semantics. This
approach is more general than ours, which is tied to a specific effect and
coeffect.

\paragraph{CBPV and Linearity} Our extension of CBPV with coeffect typing is
novel and inspired by the duality with effects. The most related systems are
those involving linearity in the context of low-level or compiler intermediate
languages.  \citet{schopp2015computation} develops a low-level language,
similar to CBPV, that includes linear operations in its type system.  The
\emph{enriched effect
  calculus}~\cite{egger:enriched-effect-calculus,egger:enriched-effect-calculus-journal}
extends a type theory for computational effects, with primitives from linear
logic.  \citet{ahmed:L3} augment a variant of typed assembly language with
linear types. \citet{jang_et_al:LIPIcs.FSCD.2024.15} develop a natural
deduction formulation of adjoint logic (which is similar to CBPV) and use its
structure to combine linear, affine, strict and intuitionistic logics in a
uniform setting.


\paragraph{Interactions Between Effects and Coeffects}
\scw{TODO: Add discussion of Effekt language?}

Several systems describe interactions between effects and coeffects.
\citet{nanevski:dynamic-binding} uses comonads to guard the usage of local
state (dynamic binding) and monads to guard the usage of global state. In each
case the type system tracks the set of locations can be safely read and
updated.  In future work, we would like to extend this work with state effects
and local effect handlers so that we can track this interaction using
annotations on thunk and returner types, instead of encapsulated within monad
and comonadic structures.

\citet{Gaboardi:2016} present a combined calculus featuring effects and
coeffects.  Unlike this work, their lambda calculus isolates effects and
coeffects using graded monadic and comonadic modal types. A key feature of
their system are ``graded distributive laws'', that permit interactions
between the monad and comonad. The exact interactions are mediated by operations
determined by the particular effects and coeffects being modeled. For example,
we could distribute a term of type $ \square_{\textcolor{\coeffectcolor}{ \ottsym{3} } }\;  \ottsym{(}   \ottkw{T}_{\textcolor{\effectcolor}{  \textcolor{\effectcolor}{2_{\mathit{eff} } }  } }\;  \tau   \ottsym{)} $
into a term of type $ \ottkw{T}_{\textcolor{\effectcolor}{  \textcolor{\effectcolor}{6_{\mathit{eff} } }  } }\;  \ottsym{(}   \square_{\textcolor{\coeffectcolor}{ \ottsym{3} } }\;  \tau   \ottsym{)} $. That is, it could turn
3 copies of a monad which ticks twice and returns a term of type $\tau$
into a monad which ticks 6 times then returns 3 copies of the term.

In future work, we hope to add distributivity to this system. Unlike the
distributive property described above, in this context the distributive laws
need not change the structure of the computation. Instead, we would like it to
redistribute grades on types in the form $ \ottkw{F}_{\color{\coeffectcolor}{ \textcolor{\coeffectcolor}{q}_{{\mathrm{1}}} } }\;   \ottkw{U}_{\color{\effectcolor}{ \textcolor{\effectcolor}{\phi} } }\;   \ottkw{F}_{\color{\coeffectcolor}{ \textcolor{\coeffectcolor}{q}_{{\mathrm{2}}} } }\;  \ottnt{A}   $ or
$ \ottkw{U}_{\color{\effectcolor}{ \textcolor{\effectcolor}{\phi}_{{\mathrm{1}}} } }\;   \ottkw{F}_{\color{\coeffectcolor}{ \textcolor{\coeffectcolor}{q} } }\;   \ottkw{U}_{\color{\effectcolor}{ \textcolor{\effectcolor}{\phi}_{{\mathrm{2}}} } }\;  \ottnt{B}   $. However, we have yet to determine what sorts of
rearrangement are sound in this context.


\section{Conclusion and Future work}
\label{sec:future-work}
\label{sec:conclusion}

In this paper we have annotated the ambient monad and comonad of CBPV to
statically track effects and coeffects. We have presented these extensions
separately to provide a gentle introduction, before developing a combined
calculus that tracks both simultaneously. We have identified semantic subtleties
in resource tracking and have developed an alternative semantics that better
describes our understanding of this coeffect. We have proven soundness for all
versions of our type system, identifying the required assumptions of the
effect and coeffect algebras. To make sure that our designs are expressive, we
have shown the standard translations from call-by-value and call-by-name
lambda calculi into call-by-push-value preserve tick and resource tracking
with our system.  By exploring both effects and coeffects together, we were
also able to observe similarities between these dual notions, and, more
importantly, identify their differences.

However, this work is only the starting point for investigation in this space.
The natural next step is to go beyond a single effect (tick) and single
coeffect (resource usage) to develop a more general structure for extensions
of CBPV, perhaps based on algebraic effects~\cite{plotkin:2008} or effect
signatures~\cite{katsumata:2014}. This structure would allow us to verify that
our rules stay general in the presence of other effects, such as
nontermination and state, or other coeffects, such as information-flow
tracking and differential privacy.

We can also extend this work by adding language features that interact with
effect and coeffect tracking, such as polymorphism, indexed or dependent
types, and quantification over effects and coeffects. Subtyping would captures
the idea that the type $ \ottkw{U}_{\color{\effectcolor}{ \textcolor{\effectcolor}{\phi}_{{\mathrm{1}}} } }\;  \ottnt{B} $ is a subtype of $ \ottkw{U}_{\color{\effectcolor}{ \textcolor{\effectcolor}{\phi}_{{\mathrm{2}}} } }\;  \ottnt{B} $ when
$ \textcolor{\effectcolor}{ \textcolor{\effectcolor}{\phi}_{{\mathrm{1}}} \;  \textcolor{\effectcolor}{\mathop{\leq_{\mathit{eff} } } } \;  \textcolor{\effectcolor}{\phi}_{{\mathrm{2}}}  } $, and that the type $ \ottkw{F}_{\color{\coeffectcolor}{ \textcolor{\coeffectcolor}{q}_{{\mathrm{1}}} } }\;  \ottnt{A} $ is a subtype of
$ \ottkw{F}_{\color{\coeffectcolor}{ \textcolor{\coeffectcolor}{q}_{{\mathrm{2}}} } }\;  \ottnt{A} $ when $ \textcolor{\coeffectcolor}{ \textcolor{\coeffectcolor}{q}_{{\mathrm{2}}} }\;  \textcolor{\coeffectcolor}{\mathop{\leq_{\mathit{co} } } } \; \textcolor{\coeffectcolor}{ \textcolor{\coeffectcolor}{q}_{{\mathrm{1}}} } $.  Finally, we would like to explore the
practical concerns of this system in more depth, focusing on how users or
compilers might make effective use of the statically tracked information.

\ifanonymous
\section*{Data-Availability Statement}
The Coq proofs that are included in the supplementary material will be
packaged as an artifact and made freely available via Zenodo when published.  \fi

\ifanonymous
\else
\begin{acks}
  Thanks to Dominic Orchard, Richard Eisenberg and Kevin Diggs for comments
  and suggestions. Yiyun Liu assisted with the initial setup of our Coq
  proofs, building on a prior Autosubst development of CBPV in
  Coq~\cite{forster:cbpv}.  This work was supported by the
  \grantsponsor{}{National Science Foundation}{} under grants
  \grantnum{}{CCF-2006535}, \grantnum{}{CNS-2244494}, and
  \grantnum{}{CCF-2327738}.

\end{acks}
\fi

\bibliography{paper}


\begin{thebibliography}{57}


\ifx \showCODEN    \undefined \def \showCODEN     #1{\unskip}     \fi
\ifx \showDOI      \undefined \def \showDOI       #1{#1}\fi
\ifx \showISBNx    \undefined \def \showISBNx     #1{\unskip}     \fi
\ifx \showISBNxiii \undefined \def \showISBNxiii  #1{\unskip}     \fi
\ifx \showISSN     \undefined \def \showISSN      #1{\unskip}     \fi
\ifx \showLCCN     \undefined \def \showLCCN      #1{\unskip}     \fi
\ifx \shownote     \undefined \def \shownote      #1{#1}          \fi
\ifx \showarticletitle \undefined \def \showarticletitle #1{#1}   \fi
\ifx \showURL      \undefined \def \showURL       {\relax}        \fi
\providecommand\bibfield[2]{#2}
\providecommand\bibinfo[2]{#2}
\providecommand\natexlab[1]{#1}
\providecommand\showeprint[2][]{arXiv:#2}

\bibitem[Abel and Bernardy(2020)]%
        {abel:icfp2020}
\bibfield{author}{\bibinfo{person}{Andreas Abel} {and}
  \bibinfo{person}{Jean-Philippe Bernardy}.} \bibinfo{year}{2020}\natexlab{}.
\newblock \showarticletitle{A unified view of modalities in type systems}.
\newblock \bibinfo{journal}{\emph{Proc. ACM Program. Lang.}}
  \bibinfo{volume}{4}, \bibinfo{number}{ICFP}, Article \bibinfo{articleno}{90}
  (\bibinfo{date}{aug} \bibinfo{year}{2020}), \bibinfo{numpages}{28}~pages.
\newblock
\urldef\tempurl%
\url{https://doi.org/10.1145/3408972}
\showDOI{\tempurl}


\bibitem[Abel et~al\mbox{.}(2023)]%
        {abel:2023}
\bibfield{author}{\bibinfo{person}{Andreas Abel}, \bibinfo{person}{Nils~Anders
  Danielsson}, {and} \bibinfo{person}{Oskar Eriksson}.}
  \bibinfo{year}{2023}\natexlab{}.
\newblock \showarticletitle{A Graded Modal Dependent Type Theory with a
  Universe and Erasure, Formalized}.
\newblock \bibinfo{journal}{\emph{Proc. ACM Program. Lang.}}
  \bibinfo{volume}{7}, \bibinfo{number}{ICFP}, Article \bibinfo{articleno}{220}
  (\bibinfo{date}{aug} \bibinfo{year}{2023}), \bibinfo{numpages}{35}~pages.
\newblock
\urldef\tempurl%
\url{https://doi.org/10.1145/3607862}
\showDOI{\tempurl}


\bibitem[Ahmed et~al\mbox{.}(2007)]%
        {ahmed:L3}
\bibfield{author}{\bibinfo{person}{Amal Ahmed}, \bibinfo{person}{Matthew
  Fluet}, {and} \bibinfo{person}{Greg Morrisett}.}
  \bibinfo{year}{2007}\natexlab{}.
\newblock \showarticletitle{L\({}^{\mbox{3}}\): {A} Linear Language with
  Locations}.
\newblock \bibinfo{journal}{\emph{Fundam. Informaticae}} \bibinfo{volume}{77},
  \bibinfo{number}{4} (\bibinfo{year}{2007}), \bibinfo{pages}{397--449}.
\newblock
\urldef\tempurl%
\url{http://content.iospress.com/articles/fundamenta-informaticae/fi77-4-06}
\showURL{%
\tempurl}


\bibitem[Atkey(2018)]%
        {atkey:2018}
\bibfield{author}{\bibinfo{person}{Robert Atkey}.}
  \bibinfo{year}{2018}\natexlab{}.
\newblock \showarticletitle{Syntax and Semantics of Quantitative Type Theory}.
  In \bibinfo{booktitle}{\emph{Proceedings of the 33rd Annual ACM/IEEE
  Symposium on Logic in Computer Science}} (Oxford, United Kingdom)
  \emph{(\bibinfo{series}{LICS '18})}. \bibinfo{publisher}{Association for
  Computing Machinery}, \bibinfo{address}{New York, NY, USA},
  \bibinfo{pages}{56–65}.
\newblock
\showISBNx{9781450355834}
\urldef\tempurl%
\url{https://doi.org/10.1145/3209108.3209189}
\showDOI{\tempurl}


\bibitem[Bernardy et~al\mbox{.}(2017)]%
        {linear-haskell}
\bibfield{author}{\bibinfo{person}{Jean-Philippe Bernardy},
  \bibinfo{person}{Mathieu Boespflug}, \bibinfo{person}{Ryan~R. Newton},
  \bibinfo{person}{Simon Peyton~Jones}, {and} \bibinfo{person}{Arnaud
  Spiwack}.} \bibinfo{year}{2017}\natexlab{}.
\newblock \showarticletitle{Linear Haskell: Practical Linearity in a
  Higher-Order Polymorphic Language}.
\newblock \bibinfo{journal}{\emph{Proc. ACM Program. Lang.}}
  \bibinfo{volume}{2}, \bibinfo{number}{POPL}, Article \bibinfo{articleno}{5}
  (\bibinfo{date}{dec} \bibinfo{year}{2017}), \bibinfo{numpages}{29}~pages.
\newblock
\urldef\tempurl%
\url{https://doi.org/10.1145/3158093}
\showDOI{\tempurl}


\bibitem[Bianchini et~al\mbox{.}(2023)]%
        {bianchini:oopsla24}
\bibfield{author}{\bibinfo{person}{Riccardo Bianchini},
  \bibinfo{person}{Francesco Dagnino}, \bibinfo{person}{Paola Giannini}, {and}
  \bibinfo{person}{Elena Zucca}.} \bibinfo{year}{2023}\natexlab{}.
\newblock \showarticletitle{Resource-Aware Soundness for Big-Step Semantics}.
\newblock \bibinfo{journal}{\emph{Proc. ACM Program. Lang.}}
  \bibinfo{volume}{7}, \bibinfo{number}{OOPSLA2}, Article
  \bibinfo{articleno}{267} (\bibinfo{date}{oct} \bibinfo{year}{2023}),
  \bibinfo{numpages}{29}~pages.
\newblock
\urldef\tempurl%
\url{https://doi.org/10.1145/3622843}
\showDOI{\tempurl}


\bibitem[Brachth\"{a}user et~al\mbox{.}(2022)]%
        {brachthauser2022effects}
\bibfield{author}{\bibinfo{person}{Jonathan~Immanuel Brachth\"{a}user},
  \bibinfo{person}{Philipp Schuster}, \bibinfo{person}{Edward Lee}, {and}
  \bibinfo{person}{Aleksander Boruch-Gruszecki}.}
  \bibinfo{year}{2022}\natexlab{}.
\newblock \showarticletitle{Effects, capabilities, and boxes: from scope-based
  reasoning to type-based reasoning and back}.
\newblock \bibinfo{journal}{\emph{Proc. ACM Program. Lang.}}
  \bibinfo{volume}{6}, \bibinfo{number}{OOPSLA1}, Article
  \bibinfo{articleno}{76} (\bibinfo{date}{apr} \bibinfo{year}{2022}),
  \bibinfo{numpages}{30}~pages.
\newblock
\urldef\tempurl%
\url{https://doi.org/10.1145/3527320}
\showDOI{\tempurl}


\bibitem[Brady(2021)]%
        {brady:idris2}
\bibfield{author}{\bibinfo{person}{Edwin Brady}.}
  \bibinfo{year}{2021}\natexlab{}.
\newblock \showarticletitle{{Idris 2: Quantitative Type Theory in Practice}}.
  In \bibinfo{booktitle}{\emph{35th European Conference on Object-Oriented
  Programming (ECOOP 2021)}} \emph{(\bibinfo{series}{Leibniz International
  Proceedings in Informatics (LIPIcs)}, Vol.~\bibinfo{volume}{194})},
  \bibfield{editor}{\bibinfo{person}{Anders M{\o}ller} {and}
  \bibinfo{person}{Manu Sridharan}} (Eds.). \bibinfo{publisher}{Schloss
  Dagstuhl -- Leibniz-Zentrum f{\"u}r Informatik}, \bibinfo{address}{Dagstuhl,
  Germany}, \bibinfo{pages}{9:1--9:26}.
\newblock
\showISBNx{978-3-95977-190-0}
\showISSN{1868-8969}
\urldef\tempurl%
\url{https://doi.org/10.4230/LIPIcs.ECOOP.2021.9}
\showDOI{\tempurl}


\bibitem[Brunel et~al\mbox{.}(2014)]%
        {Brunel:2014}
\bibfield{author}{\bibinfo{person}{Alo{\"i}s Brunel}, \bibinfo{person}{Marco
  Gaboardi}, \bibinfo{person}{Damiano Mazza}, {and} \bibinfo{person}{Steve
  Zdancewic}.} \bibinfo{year}{2014}\natexlab{}.
\newblock \showarticletitle{A Core Quantitative Coeffect Calculus}. In
  \bibinfo{booktitle}{\emph{Programming Languages and Systems}},
  \bibfield{editor}{\bibinfo{person}{Zhong Shao}} (Ed.).
  \bibinfo{publisher}{Springer Berlin Heidelberg}, \bibinfo{address}{Berlin,
  Heidelberg}, \bibinfo{pages}{351--370}.
\newblock
\showISBNx{978-3-642-54833-8}


\bibitem[Choudhury et~al\mbox{.}(2021)]%
        {weirich:graded-haskell}
\bibfield{author}{\bibinfo{person}{Pritam Choudhury}, \bibinfo{person}{Harley
  D.~Eades III}, \bibinfo{person}{Richard~A. Eisenberg}, {and}
  \bibinfo{person}{Stephanie Weirich}.} \bibinfo{year}{2021}\natexlab{}.
\newblock \showarticletitle{A Graded Dependent Type System with a Usage-Aware
  Semantics}.
\newblock \bibinfo{journal}{\emph{Proc. ACM Program. Lang.}}
  \bibinfo{volume}{5}, \bibinfo{number}{POPL} (\bibinfo{date}{Jan.}
  \bibinfo{year}{2021}).
\newblock
\urldef\tempurl%
\url{https://doi.org/10.1145/3434331}
\showDOI{\tempurl}
\newblock
\shownote{Artifact available}.


\bibitem[Convent et~al\mbox{.}(2020)]%
        {mcbride:frank}
\bibfield{author}{\bibinfo{person}{Lukas Convent}, \bibinfo{person}{Sam
  Lindley}, \bibinfo{person}{Conor McBride}, {and} \bibinfo{person}{Craig
  McLaughlin}.} \bibinfo{year}{2020}\natexlab{}.
\newblock \showarticletitle{Doo bee doo bee doo}.
\newblock \bibinfo{journal}{\emph{Journal of Functional Programming}}
  \bibinfo{volume}{30} (\bibinfo{year}{2020}), \bibinfo{pages}{e9}.
\newblock
\urldef\tempurl%
\url{https://doi.org/10.1017/S0956796820000039}
\showDOI{\tempurl}


\bibitem[Dal~Lago and Gavazzo(2022)]%
        {dal-lago:relational-theory}
\bibfield{author}{\bibinfo{person}{Ugo Dal~Lago} {and}
  \bibinfo{person}{Francesco Gavazzo}.} \bibinfo{year}{2022}\natexlab{}.
\newblock \showarticletitle{A relational theory of effects and coeffects}.
\newblock \bibinfo{journal}{\emph{Proc. ACM Program. Lang.}}
  \bibinfo{volume}{6}, \bibinfo{number}{POPL}, Article \bibinfo{articleno}{31}
  (\bibinfo{date}{jan} \bibinfo{year}{2022}), \bibinfo{numpages}{28}~pages.
\newblock
\urldef\tempurl%
\url{https://doi.org/10.1145/3498692}
\showDOI{\tempurl}


\bibitem[Dunfield and Krishnaswami(2021)]%
        {dunfield-krishnaswami}
\bibfield{author}{\bibinfo{person}{Jana Dunfield} {and} \bibinfo{person}{Neel
  Krishnaswami}.} \bibinfo{year}{2021}\natexlab{}.
\newblock \showarticletitle{Bidirectional Typing}.
\newblock \bibinfo{journal}{\emph{ACM Comput. Surv.}} \bibinfo{volume}{54},
  \bibinfo{number}{5}, Article \bibinfo{articleno}{98} (\bibinfo{date}{may}
  \bibinfo{year}{2021}), \bibinfo{numpages}{38}~pages.
\newblock
\showISSN{0360-0300}
\urldef\tempurl%
\url{https://doi.org/10.1145/3450952}
\showDOI{\tempurl}


\bibitem[Egger et~al\mbox{.}(2009)]%
        {egger:enriched-effect-calculus}
\bibfield{author}{\bibinfo{person}{Jeff Egger}, \bibinfo{person}{Rasmus~Ejlers
  M{\o}gelberg}, {and} \bibinfo{person}{Alex Simpson}.}
  \bibinfo{year}{2009}\natexlab{}.
\newblock \showarticletitle{Enriching an Effect Calculus with Linear Types}. In
  \bibinfo{booktitle}{\emph{Computer Science Logic}},
  \bibfield{editor}{\bibinfo{person}{Erich Gr{\"a}del} {and}
  \bibinfo{person}{Reinhard Kahle}} (Eds.). \bibinfo{publisher}{Springer Berlin
  Heidelberg}, \bibinfo{address}{Berlin, Heidelberg},
  \bibinfo{pages}{240--254}.
\newblock
\showISBNx{978-3-642-04027-6}
\urldef\tempurl%
\url{https://doi.org/10.1007/978-3-642-04027-6}
\showDOI{\tempurl}


\bibitem[Egger et~al\mbox{.}(2012)]%
        {egger:enriched-effect-calculus-journal}
\bibfield{author}{\bibinfo{person}{Jeff Egger}, \bibinfo{person}{Rasmus~Ejlers
  Møgelberg}, {and} \bibinfo{person}{Alex Simpson}.}
  \bibinfo{year}{2012}\natexlab{}.
\newblock \showarticletitle{{The enriched effect calculus: syntax and
  semantics}}.
\newblock \bibinfo{journal}{\emph{Journal of Logic and Computation}}
  \bibinfo{volume}{24}, \bibinfo{number}{3} (\bibinfo{date}{06}
  \bibinfo{year}{2012}), \bibinfo{pages}{615--654}.
\newblock
\showISSN{0955-792X}
\urldef\tempurl%
\url{https://doi.org/10.1093/logcom/exs025}
\showDOI{\tempurl}


\bibitem[Forster et~al\mbox{.}(2017)]%
        {forster:expressive-user-effects}
\bibfield{author}{\bibinfo{person}{Yannick Forster}, \bibinfo{person}{Ohad
  Kammar}, \bibinfo{person}{Sam Lindley}, {and} \bibinfo{person}{Matija
  Pretnar}.} \bibinfo{year}{2017}\natexlab{}.
\newblock \showarticletitle{On the Expressive Power of User-Defined Effects:
  Effect Handlers, Monadic Reflection, Delimited Control}.
\newblock \bibinfo{journal}{\emph{Proc. ACM Program. Lang.}}
  \bibinfo{volume}{1}, \bibinfo{number}{ICFP}, Article \bibinfo{articleno}{13}
  (\bibinfo{date}{aug} \bibinfo{year}{2017}), \bibinfo{numpages}{29}~pages.
\newblock
\urldef\tempurl%
\url{https://doi.org/10.1145/3110257}
\showDOI{\tempurl}


\bibitem[Forster et~al\mbox{.}(2019)]%
        {forster:cbpv}
\bibfield{author}{\bibinfo{person}{Yannick Forster}, \bibinfo{person}{Steven
  Sch\"{a}fer}, \bibinfo{person}{Simon Spies}, {and} \bibinfo{person}{Kathrin
  Stark}.} \bibinfo{year}{2019}\natexlab{}.
\newblock \showarticletitle{Call-by-Push-Value in Coq: Operational, Equational,
  and Denotational Theory}. In \bibinfo{booktitle}{\emph{Proceedings of the 8th
  ACM SIGPLAN International Conference on Certified Programs and Proofs}}
  (Cascais, Portugal) \emph{(\bibinfo{series}{CPP 2019})}.
  \bibinfo{publisher}{Association for Computing Machinery},
  \bibinfo{address}{New York, NY, USA}, \bibinfo{pages}{118–131}.
\newblock
\showISBNx{9781450362221}
\urldef\tempurl%
\url{https://doi.org/10.1145/3293880.3294097}
\showDOI{\tempurl}


\bibitem[Gaboardi et~al\mbox{.}(2016)]%
        {Gaboardi:2016}
\bibfield{author}{\bibinfo{person}{Marco Gaboardi}, \bibinfo{person}{Shin-ya
  Katsumata}, \bibinfo{person}{Dominic Orchard}, \bibinfo{person}{Flavien
  Breuvart}, {and} \bibinfo{person}{Tarmo Uustalu}.}
  \bibinfo{year}{2016}\natexlab{}.
\newblock \showarticletitle{Combining effects and coeffects via grading}. In
  \bibinfo{booktitle}{\emph{Proceedings of the 21st ACM SIGPLAN International
  Conference on Functional Programming}} (Nara, Japan)
  \emph{(\bibinfo{series}{ICFP 2016})}. \bibinfo{publisher}{Association for
  Computing Machinery}, \bibinfo{address}{New York, NY, USA},
  \bibinfo{pages}{476–489}.
\newblock
\showISBNx{9781450342193}
\urldef\tempurl%
\url{https://doi.org/10.1145/2951913.2951939}
\showDOI{\tempurl}


\bibitem[Garbuzov et~al\mbox{.}(2018)]%
        {garbuzov2018structural}
\bibfield{author}{\bibinfo{person}{Dmitri Garbuzov}, \bibinfo{person}{William
  Mansky}, \bibinfo{person}{Christine Rizkallah}, {and} \bibinfo{person}{Steve
  Zdancewic}.} \bibinfo{year}{2018}\natexlab{}.
\newblock \bibinfo{title}{Structural Operational Semantics for Control Flow
  Graph Machines}.
\newblock
\newblock
\showeprint[arxiv]{1805.05400}~[cs.PL]
\urldef\tempurl%
\url{https://arxiv.org/abs/1805.05400}
\showURL{%
\tempurl}


\bibitem[Gavazzo(2018)]%
        {gavazzo:quantitative}
\bibfield{author}{\bibinfo{person}{Francesco Gavazzo}.}
  \bibinfo{year}{2018}\natexlab{}.
\newblock \showarticletitle{Quantitative Behavioural Reasoning for Higher-order
  Effectful Programs: Applicative Distances}. In
  \bibinfo{booktitle}{\emph{Proceedings of the 33rd Annual ACM/IEEE Symposium
  on Logic in Computer Science}} (Oxford, United Kingdom)
  \emph{(\bibinfo{series}{LICS '18})}. \bibinfo{publisher}{Association for
  Computing Machinery}, \bibinfo{address}{New York, NY, USA},
  \bibinfo{pages}{452–461}.
\newblock
\showISBNx{9781450355834}
\urldef\tempurl%
\url{https://doi.org/10.1145/3209108.3209149}
\showDOI{\tempurl}


\bibitem[Ghica and Smith(2014)]%
        {Ghica:2014}
\bibfield{author}{\bibinfo{person}{Dan~R. Ghica} {and} \bibinfo{person}{Alex~I.
  Smith}.} \bibinfo{year}{2014}\natexlab{}.
\newblock \showarticletitle{Bounded Linear Types in a Resource Semiring}. In
  \bibinfo{booktitle}{\emph{Programming Languages and Systems}},
  \bibfield{editor}{\bibinfo{person}{Zhong Shao}} (Ed.).
  \bibinfo{publisher}{Springer Berlin Heidelberg}, \bibinfo{address}{Berlin,
  Heidelberg}, \bibinfo{pages}{331--350}.
\newblock
\showISBNx{978-3-642-54833-8}


\bibitem[Jang et~al\mbox{.}(2024)]%
        {jang_et_al:LIPIcs.FSCD.2024.15}
\bibfield{author}{\bibinfo{person}{Junyoung Jang}, \bibinfo{person}{Sophia
  Roshal}, \bibinfo{person}{Frank Pfenning}, {and} \bibinfo{person}{Brigitte
  Pientka}.} \bibinfo{year}{2024}\natexlab{}.
\newblock \showarticletitle{{Adjoint Natural Deduction}}. In
  \bibinfo{booktitle}{\emph{9th International Conference on Formal Structures
  for Computation and Deduction (FSCD 2024)}} \emph{(\bibinfo{series}{Leibniz
  International Proceedings in Informatics (LIPIcs)},
  Vol.~\bibinfo{volume}{299})}, \bibfield{editor}{\bibinfo{person}{Jakob
  Rehof}} (Ed.). \bibinfo{publisher}{Schloss Dagstuhl -- Leibniz-Zentrum
  f{\"u}r Informatik}, \bibinfo{address}{Dagstuhl, Germany},
  \bibinfo{pages}{15:1--15:23}.
\newblock
\showISBNx{978-3-95977-323-2}
\showISSN{1868-8969}
\urldef\tempurl%
\url{https://doi.org/10.4230/LIPIcs.FSCD.2024.15}
\showDOI{\tempurl}


\bibitem[Kammar et~al\mbox{.}(2013)]%
        {Kammar2013}
\bibfield{author}{\bibinfo{person}{Ohad Kammar}, \bibinfo{person}{Sam Lindley},
  {and} \bibinfo{person}{Nicolas Oury}.} \bibinfo{year}{2013}\natexlab{}.
\newblock \showarticletitle{Handlers in action}. In
  \bibinfo{booktitle}{\emph{{ACM} {SIGPLAN} International Conference on
  Functional Programming, ICFP'13, Boston, MA, {USA} - September 25 - 27,
  2013}}, \bibfield{editor}{\bibinfo{person}{Greg Morrisett} {and}
  \bibinfo{person}{Tarmo Uustalu}} (Eds.). \bibinfo{publisher}{{ACM}},
  \bibinfo{pages}{145--158}.
\newblock
\urldef\tempurl%
\url{https://doi.org/10.1145/2500365.2500590}
\showDOI{\tempurl}


\bibitem[Kammar and Plotkin(2012)]%
        {kammar-plotkin}
\bibfield{author}{\bibinfo{person}{Ohad Kammar} {and}
  \bibinfo{person}{Gordon~D. Plotkin}.} \bibinfo{year}{2012}\natexlab{}.
\newblock \showarticletitle{Algebraic Foundations for Effect-Dependent
  Optimisations}. In \bibinfo{booktitle}{\emph{Proceedings of the 39th Annual
  ACM SIGPLAN-SIGACT Symposium on Principles of Programming Languages}}
  (Philadelphia, PA, USA) \emph{(\bibinfo{series}{POPL '12})}.
  \bibinfo{publisher}{Association for Computing Machinery},
  \bibinfo{address}{New York, NY, USA}, \bibinfo{pages}{349–360}.
\newblock
\showISBNx{9781450310833}
\urldef\tempurl%
\url{https://doi.org/10.1145/2103656.2103698}
\showDOI{\tempurl}


\bibitem[Katsumata(2014)]%
        {katsumata:2014}
\bibfield{author}{\bibinfo{person}{Shin-ya Katsumata}.}
  \bibinfo{year}{2014}\natexlab{}.
\newblock \showarticletitle{Parametric effect monads and semantics of effect
  systems}.
\newblock \bibinfo{journal}{\emph{SIGPLAN Not.}} \bibinfo{volume}{49},
  \bibinfo{number}{1} (\bibinfo{date}{jan} \bibinfo{year}{2014}),
  \bibinfo{pages}{633–645}.
\newblock
\showISSN{0362-1340}
\urldef\tempurl%
\url{https://doi.org/10.1145/2578855.2535846}
\showDOI{\tempurl}


\bibitem[Lakhani et~al\mbox{.}(2022)]%
        {lakhani:polarized-subtyping}
\bibfield{author}{\bibinfo{person}{Zeeshan Lakhani}, \bibinfo{person}{Ankush
  Das}, \bibinfo{person}{Henry DeYoung}, \bibinfo{person}{Andreia Mordido},
  {and} \bibinfo{person}{Frank Pfenning}.} \bibinfo{year}{2022}\natexlab{}.
\newblock \showarticletitle{Polarized Subtyping}. In
  \bibinfo{booktitle}{\emph{Programming Languages and Systems}},
  \bibfield{editor}{\bibinfo{person}{Ilya Sergey}} (Ed.).
  \bibinfo{publisher}{Springer International Publishing},
  \bibinfo{address}{Cham}, \bibinfo{pages}{431--461}.
\newblock
\showISBNx{978-3-030-99336-8}


\bibitem[Leijen(2014)]%
        {leijen2014koka}
\bibfield{author}{\bibinfo{person}{Daan Leijen}.}
  \bibinfo{year}{2014}\natexlab{}.
\newblock \showarticletitle{Koka: Programming with Row Polymorphic Effect
  Types}.
\newblock \bibinfo{journal}{\emph{Electronic Proceedings in Theoretical
  Computer Science}}  \bibinfo{volume}{153} (\bibinfo{date}{June}
  \bibinfo{year}{2014}), \bibinfo{pages}{100–126}.
\newblock
\showISSN{2075-2180}
\urldef\tempurl%
\url{https://doi.org/10.4204/eptcs.153.8}
\showDOI{\tempurl}


\bibitem[Levy(2003a)]%
        {levy2003adjunction}
\bibfield{author}{\bibinfo{person}{Paul~Blain Levy}.}
  \bibinfo{year}{2003}\natexlab{a}.
\newblock \showarticletitle{Adjunction Models For Call-By-Push-Value With
  Stacks}.
\newblock \bibinfo{journal}{\emph{Electronic Notes in Theoretical Computer
  Science}}  \bibinfo{volume}{69} (\bibinfo{year}{2003}),
  \bibinfo{pages}{248--271}.
\newblock
\showISSN{1571-0661}
\urldef\tempurl%
\url{https://doi.org/10.1016/S1571-0661(04)80568-1}
\showDOI{\tempurl}
\newblock
\shownote{CTCS'02, Category Theory and Computer Science}.


\bibitem[Levy(2003b)]%
        {levy:call-by-push-value}
\bibfield{author}{\bibinfo{person}{Paul~Blain Levy}.}
  \bibinfo{year}{2003}\natexlab{b}.
\newblock \bibinfo{booktitle}{\emph{Call-by-push-value:A Functional/Imperative
  Synthesis}}.
\newblock \bibinfo{publisher}{Springer Dordrecht}.
\newblock
\urldef\tempurl%
\url{https://doi.org/10.1007/978-94-007-0954-6}
\showDOI{\tempurl}


\bibitem[Levy(2006)]%
        {levy2006}
\bibfield{author}{\bibinfo{person}{Paul~Blain Levy}.}
  \bibinfo{year}{2006}\natexlab{}.
\newblock \showarticletitle{Call-by-Push-Value: Decomposing Call-by-Value and
  Call-by-Name}.
\newblock \bibinfo{journal}{\emph{Higher Order Symbol. Comput.}}
  \bibinfo{volume}{19}, \bibinfo{number}{4} (\bibinfo{date}{dec}
  \bibinfo{year}{2006}), \bibinfo{pages}{377–414}.
\newblock
\showISSN{1388-3690}
\urldef\tempurl%
\url{https://doi.org/10.1007/s10990-006-0480-6}
\showDOI{\tempurl}


\bibitem[Levy(2022)]%
        {levy:siglog-tutorial}
\bibfield{author}{\bibinfo{person}{Paul~Blain Levy}.}
  \bibinfo{year}{2022}\natexlab{}.
\newblock \showarticletitle{Call-by-Push-Value}.
\newblock \bibinfo{journal}{\emph{ACM SIGLOG News}} \bibinfo{volume}{9},
  \bibinfo{number}{2} (\bibinfo{date}{may} \bibinfo{year}{2022}),
  \bibinfo{pages}{7–29}.
\newblock
\urldef\tempurl%
\url{https://doi.org/10.1145/3537668.3537670}
\showDOI{\tempurl}


\bibitem[Levy et~al\mbox{.}(2003)]%
        {levy:fine_grained}
\bibfield{author}{\bibinfo{person}{Paul~Blain Levy}, \bibinfo{person}{John
  Power}, {and} \bibinfo{person}{Hayo Thielecke}.}
  \bibinfo{year}{2003}\natexlab{}.
\newblock \showarticletitle{Modelling environments in call-by-value programming
  languages}.
\newblock \bibinfo{journal}{\emph{Information and Computation}}
  \bibinfo{volume}{185}, \bibinfo{number}{2} (\bibinfo{year}{2003}),
  \bibinfo{pages}{182--210}.
\newblock
\urldef\tempurl%
\url{https://doi.org/10.1016/S0890-5401(03)00088-9}
\showDOI{\tempurl}


\bibitem[Lucassen and Gifford(1988)]%
        {lucassen-gifford}
\bibfield{author}{\bibinfo{person}{J.~M. Lucassen} {and} \bibinfo{person}{D.~K.
  Gifford}.} \bibinfo{year}{1988}\natexlab{}.
\newblock \showarticletitle{Polymorphic Effect Systems}. In
  \bibinfo{booktitle}{\emph{Proceedings of the 15th ACM SIGPLAN-SIGACT
  Symposium on Principles of Programming Languages}} (San Diego, California,
  USA) \emph{(\bibinfo{series}{POPL '88})}. \bibinfo{publisher}{Association for
  Computing Machinery}, \bibinfo{address}{New York, NY, USA},
  \bibinfo{pages}{47–57}.
\newblock
\showISBNx{0897912527}
\urldef\tempurl%
\url{https://doi.org/10.1145/73560.73564}
\showDOI{\tempurl}


\bibitem[McDermott and Mycroft(2018)]%
        {McDermottMycroft+2018+93+108}
\bibfield{author}{\bibinfo{person}{Dylan McDermott} {and} \bibinfo{person}{Alan
  Mycroft}.} \bibinfo{year}{2018}\natexlab{}.
\newblock \showarticletitle{Call-by-need effects via coeffects}.
\newblock \bibinfo{journal}{\emph{Open Computer Science}} \bibinfo{volume}{8},
  \bibinfo{number}{1} (\bibinfo{year}{2018}), \bibinfo{pages}{93--108}.
\newblock
\urldef\tempurl%
\url{https://doi.org/doi:10.1515/comp-2018-0009}
\showDOI{\tempurl}


\bibitem[McDermott and Mycroft(2019)]%
        {extended-cbpv}
\bibfield{author}{\bibinfo{person}{Dylan McDermott} {and} \bibinfo{person}{Alan
  Mycroft}.} \bibinfo{year}{2019}\natexlab{}.
\newblock \showarticletitle{Extended Call-by-Push-Value: Reasoning About
  Effectful Programs and Evaluation Order}. In
  \bibinfo{booktitle}{\emph{Programming Languages and Systems - 28th European
  Symposium on Programming, {ESOP} 2019, Held as Part of the European Joint
  Conferences on Theory and Practice of Software, {ETAPS} 2019, Prague, Czech
  Republic, April 6-11, 2019, Proceedings}} \emph{(\bibinfo{series}{Lecture
  Notes in Computer Science}, Vol.~\bibinfo{volume}{11423})},
  \bibfield{editor}{\bibinfo{person}{Lu{\'{\i}}s Caires}} (Ed.).
  \bibinfo{publisher}{Springer}, \bibinfo{pages}{235--262}.
\newblock
\urldef\tempurl%
\url{https://doi.org/10.1007/978-3-030-17184-1\_9}
\showDOI{\tempurl}


\bibitem[Moggi(1989)]%
        {moggi:computational}
\bibfield{author}{\bibinfo{person}{Eugenio Moggi}.}
  \bibinfo{year}{1989}\natexlab{}.
\newblock \showarticletitle{Computational lambda-calculus and monads}. In
  \bibinfo{booktitle}{\emph{[1989] Proceedings. Fourth Annual Symposium on
  Logic in Computer Science}}. \bibinfo{pages}{14--23}.
\newblock
\urldef\tempurl%
\url{https://doi.org/10.1109/LICS.1989.39155}
\showDOI{\tempurl}


\bibitem[Nanevski(2003)]%
        {nanevski:dynamic-binding}
\bibfield{author}{\bibinfo{person}{Aleksandar Nanevski}.}
  \bibinfo{year}{2003}\natexlab{}.
\newblock \showarticletitle{From Dynamic Binding to State via Modal
  Possibility}. In \bibinfo{booktitle}{\emph{Proceedings of the 5th ACM SIGPLAN
  International Conference on Principles and Practice of Declaritive
  Programming}} (Uppsala, Sweden) \emph{(\bibinfo{series}{PPDP '03})}.
  \bibinfo{publisher}{Association for Computing Machinery},
  \bibinfo{address}{New York, NY, USA}, \bibinfo{pages}{207–218}.
\newblock
\showISBNx{1581137052}
\urldef\tempurl%
\url{https://doi.org/10.1145/888251.888271}
\showDOI{\tempurl}


\bibitem[New(2019)]%
        {new:cbpv-stal}
\bibfield{author}{\bibinfo{person}{Max~S. New}.}
  \bibinfo{year}{2019}\natexlab{}.
\newblock \bibinfo{title}{From Call-by-push-value to Stack-based TAL?}
\newblock \bibinfo{howpublished}{Presentation at LOLA 2019}.
\newblock
\urldef\tempurl%
\url{https://maxsnew.com/docs/cbpv-stal-lola-2019.pdf}
\showURL{%
\tempurl}


\bibitem[Orchard and {Eades III}(2022)]%
        {granule-project}
\bibfield{author}{\bibinfo{person}{Dominic Orchard} {and}
  \bibinfo{person}{Harley {Eades III}}.} \bibinfo{year}{2022}\natexlab{}.
\newblock \bibinfo{title}{The Granule Project}.
\newblock
\newblock
\urldef\tempurl%
\url{https://granule-project.github.io/}
\showURL{%
\tempurl}


\bibitem[Orchard et~al\mbox{.}(2019)]%
        {orchard:2019}
\bibfield{author}{\bibinfo{person}{Dominic Orchard},
  \bibinfo{person}{Vilem-Benjamin Liepelt}, {and} \bibinfo{person}{Harley
  Eades~III}.} \bibinfo{year}{2019}\natexlab{}.
\newblock \showarticletitle{Quantitative Program Reasoning with Graded Modal
  Types}.
\newblock \bibinfo{journal}{\emph{Proc. ACM Program. Lang.}}
  \bibinfo{volume}{3}, \bibinfo{number}{ICFP}, Article \bibinfo{articleno}{110}
  (\bibinfo{date}{July} \bibinfo{year}{2019}), \bibinfo{numpages}{30}~pages.
\newblock
\urldef\tempurl%
\url{https://doi.org/10.1145/3341714}
\showDOI{\tempurl}


\bibitem[Orchard and Petricek(2014)]%
        {orchard:haskell-effects}
\bibfield{author}{\bibinfo{person}{Dominic Orchard} {and}
  \bibinfo{person}{Tomas Petricek}.} \bibinfo{year}{2014}\natexlab{}.
\newblock \showarticletitle{Embedding effect systems in Haskell}. In
  \bibinfo{booktitle}{\emph{Proceedings of the 2014 ACM SIGPLAN Symposium on
  Haskell}} (Gothenburg, Sweden) \emph{(\bibinfo{series}{Haskell '14})}.
  \bibinfo{publisher}{Association for Computing Machinery},
  \bibinfo{address}{New York, NY, USA}, \bibinfo{pages}{13–24}.
\newblock
\showISBNx{9781450330411}
\urldef\tempurl%
\url{https://doi.org/10.1145/2633357.2633368}
\showDOI{\tempurl}


\bibitem[P\'{e}drot and Tabareau(2019)]%
        {pedrot:fire-triangle}
\bibfield{author}{\bibinfo{person}{Pierre-Marie P\'{e}drot} {and}
  \bibinfo{person}{Nicolas Tabareau}.} \bibinfo{year}{2019}\natexlab{}.
\newblock \showarticletitle{The Fire Triangle: How to Mix Substitution,
  Dependent Elimination, and Effects}.
\newblock \bibinfo{journal}{\emph{Proc. ACM Program. Lang.}}
  \bibinfo{volume}{4}, \bibinfo{number}{POPL}, Article \bibinfo{articleno}{58}
  (\bibinfo{date}{dec} \bibinfo{year}{2019}), \bibinfo{numpages}{28}~pages.
\newblock
\urldef\tempurl%
\url{https://doi.org/10.1145/3371126}
\showDOI{\tempurl}


\bibitem[P\'{e}drot et~al\mbox{.}(2019)]%
        {pedrot:reasonably-exceptional-type-theory}
\bibfield{author}{\bibinfo{person}{Pierre-Marie P\'{e}drot},
  \bibinfo{person}{Nicolas Tabareau}, \bibinfo{person}{Hans~Jacob Fehrmann},
  {and} \bibinfo{person}{\'{E}ric Tanter}.} \bibinfo{year}{2019}\natexlab{}.
\newblock \showarticletitle{A Reasonably Exceptional Type Theory}.
\newblock \bibinfo{journal}{\emph{Proc. ACM Program. Lang.}}
  \bibinfo{volume}{3}, \bibinfo{number}{ICFP}, Article \bibinfo{articleno}{108}
  (\bibinfo{date}{jul} \bibinfo{year}{2019}), \bibinfo{numpages}{29}~pages.
\newblock
\urldef\tempurl%
\url{https://doi.org/10.1145/3341712}
\showDOI{\tempurl}


\bibitem[Petricek et~al\mbox{.}(2014)]%
        {petricek:2014}
\bibfield{author}{\bibinfo{person}{Tomas Petricek}, \bibinfo{person}{Dominic
  Orchard}, {and} \bibinfo{person}{Alan Mycroft}.}
  \bibinfo{year}{2014}\natexlab{}.
\newblock \showarticletitle{Coeffects: A Calculus of Context-Dependent
  Computation}. In \bibinfo{booktitle}{\emph{Proceedings of the 19th ACM
  SIGPLAN International Conference on Functional Programming}} (Gothenburg,
  Sweden) \emph{(\bibinfo{series}{ICFP '14})}. \bibinfo{publisher}{Association
  for Computing Machinery}, \bibinfo{address}{New York, NY, USA},
  \bibinfo{pages}{123–135}.
\newblock
\showISBNx{9781450328739}
\urldef\tempurl%
\url{https://doi.org/10.1145/2628136.2628160}
\showDOI{\tempurl}


\bibitem[Plotkin and Pretnar(2008)]%
        {plotkin:2008}
\bibfield{author}{\bibinfo{person}{Gordon Plotkin} {and}
  \bibinfo{person}{Matija Pretnar}.} \bibinfo{year}{2008}\natexlab{}.
\newblock \showarticletitle{A Logic for Algebraic Effects}. In
  \bibinfo{booktitle}{\emph{2008 23rd Annual IEEE Symposium on Logic in
  Computer Science}}. \bibinfo{pages}{118--129}.
\newblock
\urldef\tempurl%
\url{https://doi.org/10.1109/LICS.2008.45}
\showDOI{\tempurl}


\bibitem[Rajani et~al\mbox{.}(2021)]%
        {rajani:cost-analysis}
\bibfield{author}{\bibinfo{person}{Vineet Rajani}, \bibinfo{person}{Marco
  Gaboardi}, \bibinfo{person}{Deepak Garg}, {and} \bibinfo{person}{Jan
  Hoffmann}.} \bibinfo{year}{2021}\natexlab{}.
\newblock \showarticletitle{A Unifying Type-Theory for Higher-Order (Amortized)
  Cost Analysis}.
\newblock \bibinfo{journal}{\emph{Proc. ACM Program. Lang.}}
  \bibinfo{volume}{5}, \bibinfo{number}{POPL}, Article \bibinfo{articleno}{27}
  (\bibinfo{date}{jan} \bibinfo{year}{2021}), \bibinfo{numpages}{28}~pages.
\newblock
\urldef\tempurl%
\url{https://doi.org/10.1145/3434308}
\showDOI{\tempurl}


\bibitem[Reed and Pierce(2010)]%
        {Reed:2010}
\bibfield{author}{\bibinfo{person}{Jason Reed} {and}
  \bibinfo{person}{Benjamin~C. Pierce}.} \bibinfo{year}{2010}\natexlab{}.
\newblock \showarticletitle{Distance Makes the Types Grow Stronger: A Calculus
  for Differential Privacy}. In \bibinfo{booktitle}{\emph{Proceedings of the
  15th ACM SIGPLAN International Conference on Functional Programming}}
  (Baltimore, Maryland, USA) \emph{(\bibinfo{series}{ICFP ’10})}.
  \bibinfo{publisher}{Association for Computing Machinery},
  \bibinfo{address}{New York, NY, USA}, \bibinfo{pages}{157–168}.
\newblock
\showISBNx{9781605587943}
\urldef\tempurl%
\url{https://doi.org/10.1145/1863543.1863568}
\showDOI{\tempurl}


\bibitem[Rioux and Zdancewic(2020)]%
        {rioux:computation-focusing}
\bibfield{author}{\bibinfo{person}{Nick Rioux} {and} \bibinfo{person}{Steve
  Zdancewic}.} \bibinfo{year}{2020}\natexlab{}.
\newblock \showarticletitle{Computation Focusing}.
\newblock \bibinfo{journal}{\emph{Proc. ACM Program. Lang.}}
  \bibinfo{volume}{4}, \bibinfo{number}{ICFP}, Article \bibinfo{articleno}{95}
  (\bibinfo{date}{aug} \bibinfo{year}{2020}), \bibinfo{numpages}{27}~pages.
\newblock
\urldef\tempurl%
\url{https://doi.org/10.1145/3408977}
\showDOI{\tempurl}


\bibitem[Rizkallah et~al\mbox{.}(2018)]%
        {rizkallah:cbpv}
\bibfield{author}{\bibinfo{person}{Christine Rizkallah},
  \bibinfo{person}{Dmitri Garbuzov}, {and} \bibinfo{person}{Steve Zdancewic}.}
  \bibinfo{year}{2018}\natexlab{}.
\newblock \showarticletitle{A Formal Equational Theory for Call-By-Push-Value}.
  In \bibinfo{booktitle}{\emph{Interactive Theorem Proving}},
  \bibfield{editor}{\bibinfo{person}{Jeremy Avigad} {and}
  \bibinfo{person}{Assia Mahboubi}} (Eds.). \bibinfo{publisher}{Springer
  International Publishing}, \bibinfo{address}{Cham},
  \bibinfo{pages}{523--541}.
\newblock
\showISBNx{978-3-319-94821-8}


\bibitem[Sch{\"o}pp(2015)]%
        {schopp2015computation}
\bibfield{author}{\bibinfo{person}{Ulrich Sch{\"o}pp}.}
  \bibinfo{year}{2015}\natexlab{}.
\newblock \emph{\bibinfo{title}{Computation-by-Interaction for Structuring
  Low-Level Computation}}.
\newblock \bibinfo{thesistype}{Ph.\,D. Dissertation}.
  \bibinfo{school}{Habilitation thesis, Ludwig-Maximilians-Universit{\"a}t
  M{\"u}nchen}.
\newblock


\bibitem[Smirnov(2008)]%
        {smirnov2008graded}
\bibfield{author}{\bibinfo{person}{AL Smirnov}.}
  \bibinfo{year}{2008}\natexlab{}.
\newblock \showarticletitle{Graded monads and rings of polynomials}.
\newblock \bibinfo{journal}{\emph{Journal of Mathematical Sciences}}
  \bibinfo{volume}{151}, \bibinfo{number}{3} (\bibinfo{year}{2008}),
  \bibinfo{pages}{3032--3051}.
\newblock


\bibitem[Talpin and Jouvelot(1994)]%
        {talpin-jouvelot}
\bibfield{author}{\bibinfo{person}{Jean{-}Pierre Talpin} {and}
  \bibinfo{person}{Pierre Jouvelot}.} \bibinfo{year}{1994}\natexlab{}.
\newblock \showarticletitle{The Type and Effect Discipline}.
\newblock \bibinfo{journal}{\emph{Inf. Comput.}} \bibinfo{volume}{111},
  \bibinfo{number}{2} (\bibinfo{year}{1994}), \bibinfo{pages}{245--296}.
\newblock
\urldef\tempurl%
\url{https://doi.org/10.1006/inco.1994.1046}
\showDOI{\tempurl}


\bibitem[Torczon et~al\mbox{.}(2024a)]%
        {zenodo:artifact}
\bibfield{author}{\bibinfo{person}{Cassia Torczon}, \bibinfo{person}{Emmanuel
  Suarez~Acevedo}, \bibinfo{person}{Shubh Agrawal}, \bibinfo{person}{Joey
  Velez-Ginorio}, {and} \bibinfo{person}{Stephanie Weirich}.}
  \bibinfo{year}{2024}\natexlab{a}.
\newblock \bibinfo{booktitle}{\emph{{Artifact associated with "Effects and
  Co-effects in Call-By-Push-Value"}}}.
\newblock
\urldef\tempurl%
\url{https://doi.org/10.5281/zenodo.12654518}
\showDOI{\tempurl}


\bibitem[Torczon et~al\mbox{.}(2024b)]%
        {torczon:cbpv}
\bibfield{author}{\bibinfo{person}{Cassia Torczon}, \bibinfo{person}{Emmanuel
  Suarez~Acevedo}, \bibinfo{person}{Shubh Agrawal}, \bibinfo{person}{Joey
  Velez-Ginorio}, {and} \bibinfo{person}{Stephanie Weirich}.}
  \bibinfo{year}{2024}\natexlab{b}.
\newblock \showarticletitle{Effects and Co-effects in Call-By-Push-Value}.
\newblock \bibinfo{journal}{\emph{Proc. ACM Program. Lang.}}
  \bibinfo{volume}{8}, \bibinfo{number}{OOPSLA} (\bibinfo{date}{Oct.}
  \bibinfo{year}{2024}).
\newblock
\urldef\tempurl%
\url{https://doi.org/10.1145/3689750}
\showDOI{\tempurl}


\bibitem[{Verse development team}(2023)]%
        {verse-manual}
\bibfield{author}{\bibinfo{person}{{Verse development team}}.}
  \bibinfo{year}{2023}\natexlab{}.
\newblock \bibinfo{booktitle}{\emph{Verse Language Reference}}.
\newblock Epic Games.
\newblock
\newblock
\shownote{\url{https://dev.epicgames.com/documentation/en-us/uefn/verse-language-reference}}.


\bibitem[Wadler and Thiemann(2003)]%
        {wadler-thiemann}
\bibfield{author}{\bibinfo{person}{Philip Wadler} {and} \bibinfo{person}{Peter
  Thiemann}.} \bibinfo{year}{2003}\natexlab{}.
\newblock \showarticletitle{The Marriage of Effects and Monads}.
\newblock \bibinfo{journal}{\emph{ACM Trans. Comput. Logic}}
  \bibinfo{volume}{4}, \bibinfo{number}{1} (\bibinfo{date}{jan}
  \bibinfo{year}{2003}), \bibinfo{pages}{1–32}.
\newblock
\showISSN{1529-3785}
\urldef\tempurl%
\url{https://doi.org/10.1145/601775.601776}
\showDOI{\tempurl}


\bibitem[Wuttke(2021)]%
        {wuttke:ms-thesis}
\bibfield{author}{\bibinfo{person}{Maxi Wuttke}.}
  \bibinfo{year}{2021}\natexlab{}.
\newblock \emph{\bibinfo{title}{Sound and Relatively Complete Coeffect and
  effect refinement type systems for call-by-push-value PCF}}.
\newblock \bibinfo{thesistype}{Master's\ thesis}.
  \bibinfo{school}{Universit\"at des Sarrlandes}.
\newblock


\end{thebibliography}

\ifextended
\newpage
\appendix

\section{CBPV with efffects: operational semantics}
Coq definitions in \texttt{effects/CBPV/semantics.v:EvalVal,EvalComp}.
\label{fig:eval-val}
\label{fig:eff-big-step}

\drules[eval-val]{$ \rho  \vdash  \ottnt{V} \ \Downarrow\  \ottnt{W} $}{Value closing}{var,unit,thunk,vpair}
\[ \drule{eval-val-inl} \qquad \drule{eval-val-inr}
\]

\drules[eval-eff-comp]{$ \rho  \vdash_{\mathit{eff} }  \ottnt{M}  \Downarrow  \ottnt{T}  \mathop{\#} \textcolor{\effectcolor}{ \textcolor{\effectcolor}{\phi} } $}{Computation rules}{}
\[  \drule{eval-eff-comp-abs}\ \quad \drule[width=3in]{eval-eff-comp-app-abs}\  \]

\[  \drule[width=4in]{eval-eff-comp-force-thunk}\ \drule{eval-eff-comp-return}\   \]

\[  \drule[width=4in]{eval-eff-comp-letin-ret} \]

\[  \drule{eval-eff-comp-split}\ \quad \drule{eval-eff-comp-cpair} \]

\[  \drule{eval-eff-comp-fst}\  \quad \drule{eval-eff-comp-snd} \]
\[ \drule{eval-eff-comp-sequence} \quad \drule{eval-eff-comp-case-inl} \]

\[ \drule{eval-eff-comp-case-inr} \]

\[  \drule{eval-eff-comp-tick} \]

\section{CBPV with coeffects: typing}
Coq definitions in \texttt{general/typing.v:VWt,CWt}.
\label{fig:cbpv-coeffects}

\drules[coeff]{$ \textcolor{\coeffectcolor}{ \textcolor{\coeffectcolor}{\gamma} }\! \cdot \! \Gamma   \vdash_{\mathit{coeff} }  \ottnt{V}  \ottsym{:}  \ottnt{A}$}{value coeffect typing}{}
\[ \drule[width=3in]{coeff-var}\ \drule[width=4in]{coeff-thunk} \ \drule{coeff-unit} \]

\[ \drule[width=4in]{coeff-pair} \quad \drule[width=3in]{coeff-vsub} \]

\[ \drule{coeff-inl} \quad \drule{coeff-inr} \]
  \drules[coeff]{$ \textcolor{\coeffectcolor}{ \textcolor{\coeffectcolor}{\gamma} }\! \cdot \! \Gamma   \vdash_{\mathit{coeff} }  \ottnt{M}  \ottsym{:}  \ottnt{B}$}{computation coeffect
    typing}{}
\[ \drule{coeff-abs}\ \drule[width=2in]{coeff-app}\ \drule{coeff-force}\ \]

\[ \drule[width=4in]{coeff-split} \ \drule[width=4in]{coeff-ret}\]

\[ \drule[width=3in]{coeff-letin} \ \drule[width=3in]{coeff-cpair}  \]

\[ \drule{coeff-fst}\ \drule{coeff-snd} \ \drule{coeff-csub} \]

\[ \drule{coeff-sequence}\ \drule{coeff-case} \]

\section{Combined Effects and Coeffects}

\subsection{CBPV with Effect and Coeffect Resource Typing}
Coq definitions in \texttt{full/CBPV/typing.v:VWt,CWt}.
\label{app:cbpv-full-typing}
\drules[full]{$ \textcolor{\coeffectcolor}{ \textcolor{\coeffectcolor}{\gamma} }\! \cdot \! \Gamma   \vdash_{\mathit{full} }  \ottnt{V}  \ottsym{:}  \ottnt{A}$}{value typing}{var,thunk,unit,pair,inl,inr,vsub}
\drules[full]{$  \textcolor{\coeffectcolor}{ \textcolor{\coeffectcolor}{\gamma} }\! \cdot \! \Gamma    \vdash_{\mathit{full} }   \ottnt{M}  :^{\textcolor{\effectcolor}{ \textcolor{\effectcolor}{\phi} } }  \ottnt{B} $}{computation typing}{abs,app,force,ret,letin,split,cunit,fst,snd,cpair,sequence,case,tick,csub,letin-zero}

\subsection{CBPV with Effect and Coeffect Resource Instrumented Semantics}
Coq definitions in \texttt{full/CBPV/semantics.v:EvalComp,EvalVal}.
\label{app:cbpv-full-semantics}
\drules[eval-full-val]{$  \textcolor{\coeffectcolor}{ \textcolor{\coeffectcolor}{\gamma} }\! \cdot \! \rho   \vdash_{\mathit{full} }  \ottnt{V}  \Downarrow  \ottnt{W} $}{value rules}{var,thunk,unit,vpair,inl,inr,vsub}
\drules[eval-full-comp]{$  \textcolor{\coeffectcolor}{ \textcolor{\coeffectcolor}{\gamma} }\! \cdot \! \rho   \vdash_{\mathit{full} }  \ottnt{M}  \Downarrow  \ottnt{T}  \mathop{\#} \textcolor{\effectcolor}{ \textcolor{\effectcolor}{\phi} } $}{computation rules}{abs,app,force,ret,letin,split,fst,snd,cpair,sequence,casel,caser,app-abs-zero,ret-zero,split-zero,tick,csub,letin-zero}

\subsection{CBPV with Effect and Coeffect Nondiscarding Instrumented Semantics}
Coq definitions in \texttt{full/CBPV/junk.v:G.EvalComp,G.EvalVal}.
\label{app:cbpv-gfull-semantics}
\drules[eval-gfull-comp]{$  \textcolor{\coeffectcolor}{ \textcolor{\coeffectcolor}{\gamma} }\! \cdot \! \rho   \vdash_{\mathit{gen} }  \ottnt{M}  \Downarrow  \ottnt{T}  \mathop{\#} \textcolor{\effectcolor}{ \textcolor{\effectcolor}{\phi} } $}{computation rules}{abs,app,force,ret,letin,split,fst,snd,cpair,sequence,casel,caser,tick,csub,letin-zero}

\subsection{Binary Logical Relation: General and Resource Tracking Semantics}
\label{sec:binary-rel}
Coq definitions in \texttt{full/CBPV/junk.v:LRV,LRC,LRM}.
\[
\begin{array}{lcl@{\ }l}
\multicolumn{3}{l}{\textit{Related closed values}} \\
 \mathcal{W}\llbracket   \ottkw{U}_{\color{\effectcolor}{ \textcolor{\effectcolor}{\phi} } }\;  \ottnt{B}   \rrbracket  &=& \{\  ( \mathbf{clo}(   \textcolor{\coeffectcolor}{ \textcolor{\coeffectcolor}{\gamma} }\! \cdot \! \rho  ,  \ottnt{M}  )  ,  \mathbf{clo}(   \textcolor{\coeffectcolor}{ \textcolor{\coeffectcolor}{\gamma} }\! \cdot \! \rho'  ,  \ottnt{M'}  )  )
\ |\       ( \textcolor{\coeffectcolor}{\gamma} \cdot   \rho ,  \ottnt{M} ,  \textcolor{\coeffectcolor}{\gamma} \cdot \rho' ,  \ottnt{M'}  )  \in    \mathcal{M}\llbracket  \ottnt{B} \rrbracket^{\textcolor{\effectcolor}{ \textcolor{\effectcolor}{\phi} } }  \ \} \\
 \mathcal{W}\llbracket  \ottkw{unit}  \rrbracket    &=& \{\ ()\ , () \}                               \\
 \mathcal{W}\llbracket   \ottnt{A_{{\mathrm{1}}}} \times \ottnt{A_{{\mathrm{2}}}}   \rrbracket  &=& \{\ \ottsym{(}  \ottsym{(}  \ottnt{W_{{\mathrm{1}}}}  \ottsym{,}  \ottnt{W_{{\mathrm{2}}}}  \ottsym{)}  \ottsym{,}  \ottsym{(}  \ottnt{W'_{{\mathrm{1}}}}  \ottsym{,}  \ottnt{W'_{{\mathrm{2}}}}  \ottsym{)}  \ottsym{)}\
|\                              \ottsym{(}  \ottnt{W_{{\mathrm{1}}}}  \ottsym{,}  \ottnt{W'_{{\mathrm{1}}}}  \ottsym{)} \, \in \,  \mathcal{W}\llbracket  \ottnt{A_{{\mathrm{1}}}}  \rrbracket  \ \textit{and}\  \ottsym{(}  \ottnt{W_{{\mathrm{2}}}}  \ottsym{,}  \ottnt{W'_{{\mathrm{2}}}}  \ottsym{)} \, \in \,  \mathcal{W}\llbracket  \ottnt{A_{{\mathrm{2}}}}  \rrbracket  \ \}\\
 \mathcal{W}\llbracket  \ottnt{A_{{\mathrm{1}}}}  \ottsym{+}  \ottnt{A_{{\mathrm{2}}}}  \rrbracket  &=& \{\ \ottsym{(}  \ottkw{inl} \, \ottnt{W_{{\mathrm{1}}}}  \ottsym{,}  \ottkw{inl} \, \ottnt{W'_{{\mathrm{1}}}}  \ottsym{)}\ |\ \ottsym{(}  \ottnt{W_{{\mathrm{1}}}}  \ottsym{,}  \ottnt{W'_{{\mathrm{1}}}}  \ottsym{)} \, \in \,  \mathcal{W}\llbracket  \ottnt{A_{{\mathrm{1}}}}  \rrbracket  \}\ \cup \\
                     & & \{\ \ottsym{(}  \ottkw{inr} \, \ottnt{W_{{\mathrm{2}}}}  \ottsym{,}  \ottkw{inr} \, \ottnt{W'_{{\mathrm{2}}}}  \ottsym{)}\ |\ \ottsym{(}  \ottnt{W_{{\mathrm{2}}}}  \ottsym{,}  \ottnt{W'_{{\mathrm{2}}}}  \ottsym{)} \, \in \,  \mathcal{W}\llbracket  \ottnt{A_{{\mathrm{2}}}}  \rrbracket  \}  \\
\multicolumn{3}{l}{\textit{Related closed terminals}} \\
 \mathcal{T}\llbracket   \ottkw{F}_{\color{\coeffectcolor}{  \textcolor{\coeffectcolor}{0}  } }\;  \ottnt{A}  \rrbracket^{\textcolor{\effectcolor}{  \textcolor{\effectcolor}{\varepsilon}  } }     &=&  \{\ (  \ottkw{return} _{\textcolor{\coeffectcolor}{ \textcolor{\coeffectcolor}{q} } }  \ottnt{W}  ,  \ottkw{return} _{\textcolor{\coeffectcolor}{ \textcolor{\coeffectcolor}{q} } }  \ottnt{W'}  )\ |\
                                (  \ottnt{W'}  \ottsym{=}  \ottnt{W} \ \textit{or}\  \ottnt{W'}  \ottsym{=}  \mathop{\lightning}  ) \
\mathit{and}\ \ottsym{(}  \ottnt{W}  \ottsym{,}  \ottnt{W'}  \ottsym{)} \, \in \,  \mathcal{W}\llbracket  \ottnt{A}  \rrbracket  \} \\
 \mathcal{T}\llbracket   \ottkw{F}_{\color{\coeffectcolor}{ \textcolor{\coeffectcolor}{q} } }\;  \ottnt{A}  \rrbracket^{\textcolor{\effectcolor}{ \textcolor{\effectcolor}{\phi} } }     &=&  \{\ (  \ottkw{return} _{\textcolor{\coeffectcolor}{ \textcolor{\coeffectcolor}{q} } }  \ottnt{W}  ,  \ottkw{return} _{\textcolor{\coeffectcolor}{ \textcolor{\coeffectcolor}{q} } }  \ottnt{W'}  )\ |\
   \textcolor{\coeffectcolor}{ \textcolor{\coeffectcolor}{q} }  \neq  \textcolor{\coeffectcolor}{  \textcolor{\coeffectcolor}{0}  }  \ \textit{and}\  \ottsym{(}  \ottnt{W}  \ottsym{,}  \ottnt{W}  \ottsym{)} \, \in \,  \mathcal{W}\llbracket  \ottnt{A}  \rrbracket   \\
&&\qquad \mathit{and}\ \ottsym{(}  \ottnt{W}  \ottsym{,}  \ottnt{W'}  \ottsym{)} \, \in \,  \mathcal{W}\llbracket  \ottnt{A}  \rrbracket  \ \} \\
 \mathcal{T}\llbracket   \ottnt{A} ^{\textcolor{\coeffectcolor}{  \textcolor{\coeffectcolor}{0}  } } \rightarrow  \ottnt{B}  \rrbracket^{\textcolor{\effectcolor}{ \textcolor{\effectcolor}{\phi} } }  &=&  \{\
    ( \mathbf{clo}(   \textcolor{\coeffectcolor}{ \textcolor{\coeffectcolor}{\gamma} }\! \cdot \! \rho  ,   \lambda  \ottmv{x} ^{\textcolor{\coeffectcolor}{  \textcolor{\coeffectcolor}{0}  } }. \ottnt{M}   ) ,  \mathbf{clo}(   \textcolor{\coeffectcolor}{ \textcolor{\coeffectcolor}{\gamma} }\! \cdot \! \rho'  ,   \lambda  \ottmv{x} ^{\textcolor{\coeffectcolor}{  \textcolor{\coeffectcolor}{0}  } }. \ottnt{M'}   ) )\
|\  \mathit{forall}\ \ottnt{W}, \ottsym{(}  \ottnt{W}  \ottsym{,}  \ottnt{W}  \ottsym{)} \, \in \,  \mathcal{W}\llbracket  \ottnt{A}  \rrbracket  \\
&&\qquad\ \mathit{implies}\   ( \ottsym{(}   \textcolor{\coeffectcolor}{ \textcolor{\coeffectcolor}{\gamma} \mathop{,}  \textcolor{\coeffectcolor}{0}  }   \ottsym{)} \cdot   \ottsym{(}   \rho   \mathop{,}   \ottmv{x}  \mapsto  \ottnt{W}   \ottsym{)} ,  \ottnt{M} ,  \ottsym{(}   \textcolor{\coeffectcolor}{ \textcolor{\coeffectcolor}{\gamma} \mathop{,}  \textcolor{\coeffectcolor}{0}  }   \ottsym{)} \cdot \ottsym{(}   \rho'   \mathop{,}   \ottmv{x}  \mapsto  \mathop{\lightning}   \ottsym{)} ,  \ottnt{M'}  )  \in    \mathcal{M}\llbracket  \ottnt{B} \rrbracket^{\textcolor{\effectcolor}{ \textcolor{\effectcolor}{\phi} } }  \ \} \\
 \mathcal{T}\llbracket   \ottnt{A} ^{\textcolor{\coeffectcolor}{ \textcolor{\coeffectcolor}{q} } } \rightarrow  \ottnt{B}  \rrbracket^{\textcolor{\effectcolor}{ \textcolor{\effectcolor}{\phi} } }  &=&  \{\
    ( \mathbf{clo}(   \textcolor{\coeffectcolor}{ \textcolor{\coeffectcolor}{\gamma} }\! \cdot \! \rho  ,   \lambda  \ottmv{x} ^{\textcolor{\coeffectcolor}{ \textcolor{\coeffectcolor}{q} } }. \ottnt{M}   ) ,  \mathbf{clo}(   \textcolor{\coeffectcolor}{ \textcolor{\coeffectcolor}{\gamma} }\! \cdot \! \rho'  ,   \lambda  \ottmv{x} ^{\textcolor{\coeffectcolor}{ \textcolor{\coeffectcolor}{q} } }. \ottnt{M'}   ) )
|\    \textcolor{\coeffectcolor}{ \textcolor{\coeffectcolor}{q} }  \neq  \textcolor{\coeffectcolor}{  \textcolor{\coeffectcolor}{0}  } \ \\
&&\qquad \mathit{forall}\ \ottnt{W}\ottnt{W'}, \ \ottsym{(}  \ottnt{W}  \ottsym{,}  \ottnt{W}  \ottsym{)} \, \in \,  \mathcal{W}\llbracket  \ottnt{A}  \rrbracket \ \mathit{and}\ \ottsym{(}  \ottnt{W}  \ottsym{,}  \ottnt{W'}  \ottsym{)} \, \in \,  \mathcal{W}\llbracket  \ottnt{A}  \rrbracket \  \\
&&\qquad \mathit{implies}\  ( \ottsym{(}   \textcolor{\coeffectcolor}{ \textcolor{\coeffectcolor}{\gamma} \mathop{,}  \textcolor{\coeffectcolor}{0}  }   \ottsym{)} \cdot   \ottsym{(}   \rho   \mathop{,}   \ottmv{x}  \mapsto  \ottnt{W}   \ottsym{)} ,  \ottnt{M} ,  \ottsym{(}   \textcolor{\coeffectcolor}{ \textcolor{\coeffectcolor}{\gamma} \mathop{,}  \textcolor{\coeffectcolor}{0}  }   \ottsym{)} \cdot \ottsym{(}   \rho'   \mathop{,}   \ottmv{x}  \mapsto  \mathop{\lightning}   \ottsym{)} ,  \ottnt{M'}  )  \in    \mathcal{M}\llbracket  \ottnt{B} \rrbracket^{\textcolor{\effectcolor}{ \textcolor{\effectcolor}{\phi} } }   \\
 \mathcal{T}\llbracket   \ottnt{B_{{\mathrm{1}}}}   \mathop{\&}   \ottnt{B_{{\mathrm{2}}}}  \rrbracket^{\textcolor{\effectcolor}{ \textcolor{\effectcolor}{\phi} } }  &=& \{\  \mathbf{clo}(   \textcolor{\coeffectcolor}{ \textcolor{\coeffectcolor}{\gamma} }\! \cdot \! \rho  ,   \langle  \ottnt{M_{{\mathrm{1}}}} , \ottnt{M_{{\mathrm{2}}}}  \rangle   )  ,
    \mathbf{clo}(   \textcolor{\coeffectcolor}{ \textcolor{\coeffectcolor}{\gamma} }\! \cdot \! \rho  ,   \langle  \ottnt{M'_{{\mathrm{1}}}} , \ottnt{M'_{{\mathrm{2}}}}  \rangle   ) \ |\
    (  ( \textcolor{\coeffectcolor}{\gamma} \cdot   \rho ,  \ottnt{M_{{\mathrm{1}}}} ,  \textcolor{\coeffectcolor}{\gamma} \cdot \rho ,  \ottnt{M'_{{\mathrm{1}}}}  )  \in    \mathcal{M}\llbracket  \ottnt{B_{{\mathrm{1}}}} \rrbracket^{\textcolor{\effectcolor}{ \textcolor{\effectcolor}{\phi} } }   )  \\
&&\qquad \mathit{and}  (  ( \textcolor{\coeffectcolor}{\gamma} \cdot   \rho ,  \ottnt{M_{{\mathrm{2}}}} ,  \textcolor{\coeffectcolor}{\gamma} \cdot \rho ,  \ottnt{M'_{{\mathrm{2}}}}  )  \in    \mathcal{M}\llbracket  \ottnt{B_{{\mathrm{2}}}} \rrbracket^{\textcolor{\effectcolor}{ \textcolor{\effectcolor}{\phi} } }   )  \} \\
\multicolumn{3}{l}{\textit{Related computations}}\\
 \mathcal{M}\llbracket  \ottnt{B} \rrbracket^{\textcolor{\effectcolor}{ \textcolor{\effectcolor}{\phi} } }  &=& \{\ ( \textcolor{\coeffectcolor}{ \textcolor{\coeffectcolor}{\gamma} }\! \cdot \! \rho_{{\mathrm{1}}}  , \ottnt{M_{{\mathrm{1}}}},  \textcolor{\coeffectcolor}{ \textcolor{\coeffectcolor}{\gamma} }\! \cdot \! \rho_{{\mathrm{2}}}  ,  \ottnt{M_{{\mathrm{2}}}} )
|\    \textcolor{\coeffectcolor}{ \textcolor{\coeffectcolor}{\gamma} }\! \cdot \! \rho_{{\mathrm{1}}}   \vdash_{\mathit{gen} }  \ottnt{M_{{\mathrm{1}}}}  \Downarrow  \ottnt{T_{{\mathrm{1}}}}  \mathop{\#} \textcolor{\effectcolor}{ \textcolor{\effectcolor}{\phi}_{{\mathrm{1}}} }  \ \textit{and}\    \textcolor{\coeffectcolor}{ \textcolor{\coeffectcolor}{\gamma} }\! \cdot \! \rho_{{\mathrm{2}}}   \vdash_{\mathit{full} }  \ottnt{M_{{\mathrm{2}}}}  \Downarrow  \ottnt{T_{{\mathrm{2}}}}  \mathop{\#} \textcolor{\effectcolor}{ \textcolor{\effectcolor}{\phi}_{{\mathrm{1}}} }   \\
&&\quad                 \mathit{and}\  \ottsym{(}  \ottnt{T_{{\mathrm{1}}}}  \ottsym{,}  \ottnt{T_{{\mathrm{1}}}}  \ottsym{)} \, \in \,  \mathcal{T}\llbracket  \ottnt{B} \rrbracket^{\textcolor{\effectcolor}{ \textcolor{\effectcolor}{\phi}_{{\mathrm{2}}} } }  \ \textit{and}\  \ottsym{(}  \ottnt{T_{{\mathrm{1}}}}  \ottsym{,}  \ottnt{T_{{\mathrm{2}}}}  \ottsym{)} \, \in \,  \mathcal{T}\llbracket  \ottnt{B} \rrbracket^{\textcolor{\effectcolor}{ \textcolor{\effectcolor}{\phi}_{{\mathrm{2}}} } }  \
                 \mathit{and}\  \textcolor{\effectcolor}{ \textcolor{\effectcolor}{\phi} }\; = \; \textcolor{\effectcolor}{  \textcolor{\effectcolor}{ \textcolor{\effectcolor}{\phi}_{{\mathrm{1}}}  \cdot  \textcolor{\effectcolor}{\phi}_{{\mathrm{2}}} }  } \ \} \\
\end{array}
\]

\ifproducts

\section{More products: shared values and disjoint computations}
\label{sec:products}

CBPV is an example of a polarized type system. Value types
are positive and computation types are negative. As a result, CBPV includes
both positive and negative products. The former are values, and correspond to
tuples in a call-by-value programming language. Their introduction form is
strict and their elimination is most naturally defined via pattern
matching. Negative products are computations --- they can contain effectful
computation as subcomponents --- and are most naturally eliminated via projection.

Polarization is also found in linear type systems, which are related to
coeffects.  In such systems, there are also two forms of products:
multiplicative products (also called tensor products) are positive and
additive products (also called ``with'' products) are negative.
Multiplicative products are formed from disjoint resources and must be
eliminated via pattern matching so that those resources are not discarded. In
contrast, additive products are formed from shared resources and must be
eliminated via projection so that resources are not duplicated.

By design, the polarity in CBPV allows it to model the duality between
call-by-value and call-by-name semantics. With the addition of coeffect
tracking, we can use this same structure to observe the duality between shared
and disjoint demands on resources.
However, these two forms of polarity do not have to align. We've seen that we
can have positive products that are values and have disjoint resources, and
negative products that are computations and share resources.  In this section,
we explore the other two forms of products: shared value products and disjoint
computational products. These two additions are not the same: the former adds
expressiveness, whereas the latter can already be simulated by existing
features.

\paragraph{Shared Value Products}
For shared value products, we introduce the syntax $ \ottnt{A_{{\mathrm{1}}}} \mathop{\&} \ottnt{A_{{\mathrm{2}}}} $ for this new
type, and new values $ \langle  \ottnt{V_{{\mathrm{1}}}}  ,  \ottnt{V_{{\mathrm{2}}}}  \rangle $, $\ottnt{V}  \ottsym{.}  \ottsym{1}$, and $\ottnt{V}  \ottsym{.}  \ottsym{2}$. The typing rules
for are as follows:
\[ \drule{coeff-vwith}\ \drule{coeff-vfst}\ \drule{coeff-vsnd} \]

Unlike $ \ottnt{A_{{\mathrm{1}}}} \times \ottnt{A_{{\mathrm{2}}}} $, the existing ``tensor'' product of the value language,
the components of these tuples share resources. As a result, the elimination
form must be a projection operation.  However, unlike $ \ottnt{B_{{\mathrm{1}}}}   \mathop{\&}   \ottnt{B_{{\mathrm{2}}}} $, the
components of this type are values. Because of this strictness, it is sound,
although unusual, to include the elimination form for this type in the value
language. These projections must be effect-free because they can only apply to
values.

For the operational semantics, we must introduce a new form of closure
for this type, written $ \mathbf{clo}(  \textcolor{\coeffectcolor}{ \textcolor{\coeffectcolor}{\gamma}' }\! \cdot \! \rho'  , \langle  \ottnt{V_{{\mathrm{1}}}}  ,  \ottnt{V_{{\mathrm{2}}}}  \rangle ) $, that stores
both parts of the pair with a saved environment. This environment can then be
used when evaluating the projected value.

\renewcommand{\ottdruleevalXXcoeffXXvalXXvwith}[1]{\ottdrule[#1]{%
\ottpremise{  }%
}{
  \textcolor{blue}{ \textcolor{blue}{\gamma} }\! \cdot \! \rho   \vdash_{\mathit{coeff} }   \langle  \ottnt{V_{{\mathrm{1}}}}  ,  \ottnt{V_{{\mathrm{2}}}}  \rangle \Downarrow  \\ \qquad    \mathbf{clo}(  \textcolor{blue}{ \textcolor{blue}{\gamma} }\! \cdot \! \rho  , \langle  \ottnt{V_{{\mathrm{1}}}}  ,  \ottnt{V_{{\mathrm{2}}}}  \rangle )  }{%
{\ottdrulename{eval\_coeff\_val\_vwith}}{}%
}}

\[ \drule{eval-coeff-val-vwith}\!\!\drule{eval-coeff-val-vfst}\!\!\drule{eval-coeff-val-vsnd} \]

\paragraph{Disjoint computational products}

We can also consider another form of product: nonstrict products that are
composed from distinct resources. These products are more familiar: both
\citet{abel:icfp2020} and \citet{weirich:graded-haskell} include this product
in their CBN languages.

To formalize this extension, we use the syntax $ \ottnt{B_{{\mathrm{1}}}}  \times  \ottnt{B_{{\mathrm{2}}}} $ to denote the
type of these products, introduce them as pairs of computations
and eliminate them via pattern matching.
\[ \drule[width=3in]{coeff-ctensor}\ \drule[width=3in]{coeff-csplit} \]


This type is isomorphic to the computation type $\ottkw{F} \, \ottsym{(}   \ottkw{U} \, \ottnt{B_{{\mathrm{1}}}} \times \ottkw{U} \, \ottnt{B_{{\mathrm{2}}}}   \ottsym{)}$. Indeed,
we use thunks in its typing rules and operational semantics. For example,
the variables inserted in the context by pattern matching must have value
types, so we thunk them.  Also, the evaluation rule for tensors requires a new terminal form
$\ottsym{(}  \ottnt{W_{{\mathrm{1}}}}  \ottsym{,}  \ottnt{W_{{\mathrm{2}}}}  \ottsym{)}$, where each component is a closed value thunk for each
subterm. This thunk must divide up the available resources between the two
subcomputations. When this pair is eliminated, the two thunks are added
to the environments.
\[ \drule{eval-coeff-comp-ctensor} \]
\[ \drule[width=6in]{eval-coeff-comp-csplit} \]

Together, the four product types provide insight into the nature of computation with
resources---that there are fundamental choices in how to allocate resources
between subcomputations and how their results may be used.

\fi

\fi

%
%
%
%
%

\end{document}